\documentclass[english,11pt]{article}
 
\makeatletter

\usepackage[T1]{fontenc}
\usepackage[utf8]{inputenc}
\usepackage{amsthm}
\usepackage{amsmath}
\usepackage{amssymb}
\usepackage{graphicx}
\usepackage{dsfont}
\usepackage[noadjust]{cite}
\usepackage{amsmath}
\usepackage{array}
\usepackage{multirow}
\usepackage{caption}
\usepackage{color}

\usepackage{multicol}

\theoremstyle{plain}

\theoremstyle{plain}

\theoremstyle{plain}
\newtheorem{lem}{\protect\lemmaname}
\theoremstyle{plain}
\newtheorem{thm}{\protect\theoremname}
\theoremstyle{plain}
\newtheorem{cor}{\protect\corollaryname}  
\theoremstyle{definition}

\theoremstyle{definition}

\theoremstyle{definition}

\makeatother
  
\usepackage{babel} 

\providecommand{\claimname}{Claim}
\providecommand{\lemmaname}{Lemma}
\providecommand{\propositionname}{Proposition}
\providecommand{\theoremname}{Theorem}
\providecommand{\corollaryname}{Corollary} 
\providecommand{\definitionname}{Definition}
\providecommand{\assumptionname}{Assumption}
\providecommand{\remarkname}{Remark}

\DeclareMathOperator*{\argmin}{arg\,min}

\newcommand{\Psf}{\mathsf{P}}







\newcommand{\Shat}{\widehat{S}}











%
%

%
%

%
%

%
%

%
%


\newcommand{\sdif}{s_{\mathrm{dif}}}

\newcommand{\seq}{s_{\mathrm{eq}}}

\newcommand{\Xvdif}{\mathbf{X}_{s_{\mathrm{dif}}}}
\newcommand{\Xveq}{\mathbf{X}_{s_{\mathrm{eq}}}}

\newcommand{\xveq}{\mathbf{x}_{s_{\mathrm{eq}}}}

%
%

%
%
\newcommand{\Bernoulli}{\mathrm{Bernoulli}}

%
%

%
%

\newcommand{\pe}{P_{\mathrm{e}}}

\newcommand{\xv}{\mathbf{x}}

\newcommand{\Xv}{\mathbf{X}}
\newcommand{\yv}{\mathbf{y}}
\newcommand{\Yv}{\mathbf{Y}}

\newcommand{\Zv}{\mathbf{Z}}

\newcommand{\EE}{\mathbb{E}}
\newcommand{\PP}{\mathbb{P}}

%
%

%
%

%
%

\newcommand{\Tv}{\mathbf{T}}

%
%

\usepackage{xcolor}
\usepackage{hyperref} 
\usepackage{enumitem}
\usepackage{amsthm}
\usepackage{mathrsfs}
\usepackage{float}
\usepackage{setspace}
\usepackage{bbm}
\usepackage{bm}
\providecommand{\tabularnewline}{\\}
\floatstyle{ruled}
\newfloat{algorithm}{tbp}{loa}
\providecommand{\algorithmname}{Algorithm}
\floatname{algorithm}{\protect\algorithmname}

\setenumerate[1]{label=\arabic*.}

\usepackage{geometry} 
\newcommand{\stirlingii}{\genfrac{\{}{\}}{0pt}{}}

\geometry{verbose,tmargin=0.9in,bmargin=0.9in,lmargin=0.9in,rmargin=0.9in}

\usepackage{listings}

\numberwithin{equation}{section}
\numberwithin{thm}{section}
\numberwithin{lem}{section}
\numberwithin{cor}{section}

\makeatletter
\newcommand{\manuallabel}[2]{\def\@currentlabel{#2}\label{#1}}
\makeatother

\onehalfspacing

\sloppy

\begin{document} 

\title{Exact Thresholds for Noisy Non-Adaptive Group Testing}
\author{Junren Chen \and Jonathan Scarlett}
\date{}

\maketitle
\singlespacing

\long\def\symbolfootnote[#1]#2{\begingroup\def\thefootnote{\fnsymbol{footnote}}\footnote[#1]{#2}\endgroup}

 \symbolfootnote[0]{J.~Chen is with the Department of Mathematics, The University of Hong Kong. The work was done when he was
 a visiting Ph.D.~student in the Department of Computer Science, National University of Singapore (NUS).  J.~Scarlett is with the Department of Computer Science, Department of Mathematics, and Institute of Data Science, National University of Singapore (NUS).  e-mail: \url{chenjr58@connect.hku.hk}; \url{scarlett@comp.nus.edu.sg}. This work was supported by the Singapore National Research Foundation (NRF) under grant number A-0008064-00-00.
 }

\begin{abstract}
    In recent years, the mathematical limits and algorithmic bounds for probabilistic group testing have become increasingly well-understood, with exact asymptotic thresholds now being known in general scaling regimes for the noiseless setting.  In the noisy setting where each test outcome is flipped with constant probability, there have been similar developments, but the overall understanding has lagged significantly behind the noiseless setting.  In this paper, we substantially narrow this gap by deriving exact asymptotic thresholds for the noisy setting under two widely-studied random test designs:~i.i.d.~Bernoulli and near-constant tests-per-item.  These thresholds are established by combining components of an existing information-theoretic threshold decoder with a novel analysis of maximum-likelihood decoding (upper bounds), and deriving a novel set of impossibility results by analyzing certain failure events for optimal maximum-likelihood decoding (lower bounds).   
\end{abstract}

{ \small 
\thispagestyle{empty}
\tableofcontents
\thispagestyle{empty}}

\setcounter{page}{0}
\newpage

\section{Introduction}

Group testing is a fundamental combinatorial and statistical problem with origins in medical testing and a variety of more recent applications including fault detection, DNA testing, data storage, communication protocols, and COVID-19 testing.  The problem involves a collection of $n$ items, with a smaller subset of $k$  items being defective; here we adopt widely-used terminology from fault detection applications, but other terminology may be substituted in other applications (e.g., in medical testing, replace ``items'' by ``individuals'' and ``defective'' by ``infected'').  There is a mechanism by which items can be pooled together and tested in groups, and in the noiseless scenario, a test returns positive if it contains at least one defective item, and returns negative otherwise.  The goal is to reliably identify the set of defective items using as few tests as possible.

Alongside its uses in applications, group testing has had a long history of mathematical developments characterizing the number of tests required for reliable recovery, with increased interest in recent years \cite{Du93,Ald19}.  
In particular, following a number of recent advances (which we describe in more detail in Section \ref{sec:related}), a complete characterization is now available for the asymptotic number of tests required to achieve an error probability approaching zero in the noiseless setting \cite{Ald14a,Sca15b,Ald15,Joh16,coja2020information,Coj19a,Bay22}.  Moreover, efficient algorithms are available whose performance matches these fundamental limits, with the Definite Defective (DD) algorithm \cite{Ald14,Joh16} or individual testing \cite{Ald18} sufficing in ``less sparse'' settings, and a spatial coupling approach \cite{Coj19a} closing the remaining gaps in ``sparser'' settings.

Similar advances have also been made in \emph{noisy} group testing, in which some positive tests get flipped to negative and/or vice versa.  However, the understanding of this setting has lagged significantly behind the noiseless setting.  In particular, under the widely-considered symmetric noise model, optimal thresholds on the number of tests are only known in a narrow range of (sufficiently sparse) scaling regimes \cite{Sca15b}, and the degree of (sub-)optimality of practical algorithms is unknown \cite{Sca17b,Sca20,Geb21}.

In this paper, we address this gap by obtaining \emph{exact information-theoretic thresholds} for the symmetric noise setting under two of the most commonly-considered test designs: i.i.d.~Bernoulli and near-constant tests-per-item.  Our results provide a significant improvement over the existing state-of-the-art, and demonstrate that all of the existing bounds for practical algorithms (to our knowledge) fall short of the information-theoretic thresholds.  We believe that this is a significant step forward in understanding noisy group testing, though we highlight two important challenges that we do not attempt to address: Proving a tight converse for \emph{arbitrary test designs}, and devising \emph{efficient algorithms} whose required number of tests matches the information-theoretic thresholds. 

\subsection{Problem Setup}

Let $p$ denote the number of items, $k$ the number of defective items, and $n$ the number of tests. We consider the standard probabilistic group testing setup in which the defective set $S$ is uniform over the $p \choose k$ subsets of $[p]$ of cardinality $k$, where $[p] := \{1,\dotsc,p\}$.  Each test is represented by a length-$p$ binary vector $X = (X_1,\dotsc,X_p)$, with $X_j = 1$ if item $j$ is included in the test, and $X_j = 0$ otherwise.  We adopt the standard symmetric noise model (e.g., see \cite{Cha11,Sca15b}), in which the resulting test outcome is given by
\begin{equation}
    Y = \Big(\bigvee_{j \in S} X_j\Big) \oplus Z, \label{eq:gt_symm_model}
\end{equation}
with noise term $Z \sim \Bernoulli(\rho)$ for some noise level $\rho \in \big(0,\frac{1}{2}\big)$, where $\oplus$ denotes modulo-2 addition.\footnote{It would also be of interest to extend our analyses to asymmetric binary noise models (where $1 \to 0$ and $0 \to 1$ flips have different probabilities), but we stick to the symmetric noise due to its widespread consideration in previous works, and to avoid adding another layer of complexity in an already highly technical analysis.} 
With $n$ tests, there are $n$ such test vectors $X^{(1)},\dotsc,X^{(n)}$, and we represent these via an $n \times p$ matrix $\Xv$.  We only consider non-adaptive testing, in which the entire matrix $\Xv$ must be specified prior to observing any test results; such designs are often preferred due to allowing the implementation of the tests in parallel.  The test outcomes $Y^{(1)},\dotsc,Y^{(n)}$ are represented via a length-$n$ vector $\Yv$, and we will sometimes use $\Zv$ to denote the corresponding vector of noise terms.  We assume independent noise across tests, i.e., $Y^{(1)},\dotsc,Y^{(n)}$ are conditionally independent given $\Xv$.

Given the test matrix $\Xv$ and the corresponding outcomes $\Yv$, a \emph{decoder} forms an estimate $\Shat$, and the error probability is given by
\begin{equation}
    \pe := \PP[\Shat \ne S]. \label{eq:pe}
\end{equation}
Here the probability is taken with respect to the uniformly random defective set, the possibly-randomized tests, and the noise.  As was done in prior works such as \cite{Sca15b,Coj19a}, we consider the goal of attaining $\pe \to 0$ as $p \to \infty$ (with as few tests as possible). 
We will state our theorems in terms of bounds on the number of tests $n$, but in our discussions and figures, we will instead use the notion of \emph{rate} \cite{Ald19}:
\begin{equation}
    {\rm Rate} = \lim_{p \to \infty} \frac{\log_2{p \choose k}}{n} \quad \text{(bits/test)}, \label{eq:rate}
\end{equation}
where $k$ and $n$ implicitly depend on $p$. 
Thus, a higher rate amounts to fewer tests, and the rate is in $[0,1]$ since the tests are binary-valued.  

We focus on the sublinear sparsity regime, in which $k = \Theta(p^{\theta})$ for some sparsity parameter $\theta \in (0,1)$.  While the linear regime $k = \Theta(p)$ is also of interest, there are strong hardness results in this regime showing that even in the noiseless setting, (almost) every item must be tested individually to attain $\pe \to 0$ \cite{Ald18}, or similarly even to attain $\pe \not\to1$ \cite{Bay22}.  Thus, exact recovery is overly stringent for this regime, and one needs to move to other recovery criteria (e.g., approximate recovery \cite{tan2022performance}).

Finally, we describe the two randomized test designs that we will consider in this paper:
\begin{itemize}
    \item Under the \emph{Bernoulli design} (e.g, see \cite{Mal78,Cha11,Ald14a}), each item is independently placed in each test with probability $\frac{\nu}{k}$ for some design parameter $\nu > 0$ that dictates the density of 1s in $\Xv$.
    \item Under the \emph{near-constant weight design} (e.g., see \cite{Joh16,coja2020information}), each item is independently placed in $\Delta = \frac{\nu n}{k}$ tests chosen uniformly at random with replacement, for some $\nu > 0$.  (We leave integer rounding implicit, as it does not affect our results.)  Since the same test may be chosen multiple times, the weight of a given column in $\Xv$ may be (slightly) below $\Delta$.
\end{itemize}
Below we will outline the existing results regarding these designs, and then state and discuss our new results.

{\bf Note on notation.} In most cases, capital letters (e.g., $S$ and $\Xv$) are used for random variables (or random vectors/matrices), and lower-case letters (e.g., $s$ and $\xv$) are used for specific realizations.  We use bold/non-bold symbols to distinguish multi-test/single-test quantities; for instance, even though $X^{(i)}$ is a vector, is non-bold because it corresponds to a single test.  Asymptotic notation such as $O(\cdot)$ and $o(\cdot)$ is with respect to $p \to \infty$, with quantities such as $k$ and $n$ implicitly depending on $p$.  We will often use the binary entropy function $H_2(\rho) = \rho\log\frac{1}{\rho} + (1-\rho)\log\frac{1}{1-\rho}$, the binary KL divergence $D(a\|b) = a\log\frac{a}{b} + (1-a)\log\frac{1-a}{1-b}$, and the shorthand $a\star b = ab+(1-a)(1-b)$.  The function $\log(\cdot)$ has base $e$, and information quantities are measured in nats (except in our plots, where we convert to bits).

\subsection{Related Work} \label{sec:related}

Two important defining features of the group testing problem are probabilistic vs.~combinatorial (i.e., whether the recovery guarantee is high-probability or definite) and non-adaptive vs.~adaptive (i.e., whether or not the tests may be designed sequentially based on previous outcomes).  We will mainly summarize existing work for the probabilistic non-adaptive setting with sublinear sparsity (i.e., $k = \Theta(n^{\theta})$ for some $\theta \in (0,1)$), since this is the focus of our work.  

{\bf Noiseless group testing.} Some of the main developments in the theory of noiseless probabilistic group testing in the sublinear regime are outlined as follows:
\begin{enumerate}[itemsep=0ex]
    \item Information-theoretic characterization of the very sparse regime $k=O(1)$ \cite{Mal80};
    \item Initial algorithmic bounds for Bernoulli designs via a simple algorithm that marks items in negative tests as non-defective an the rest as defective, sometimes referred to as the Combinatorial Orthogonal Matching Pursuit (COMP) algorithm \cite{Cha11};
    \item Improved algorithmic bounds for Bernoulli designs via an algorithm that declares a given item as defective if and only if it appears in any test in which all other items are known to be non-defective, referred to as the Definite Defectives (DD) algorithm \cite{Ald14};
    \item Precise information-theoretic bounds for Bernoulli designs \cite{Sca15b};
    \item Ensemble tightness\footnote{By ensemble tightness, we mean establishing that the bounds derived are the best possible for the specific random design considered, implying an exact threshold.  This leaves open the possibility that other designs may be better.} of these information-theoretic bounds for Bernoulli designs \cite{Ald15};
    \item Improved algorithmic bounds for COMP and DD via the near-constant weight design \cite{Joh16};
    \item Precise information-theoretic bounds for the near-constant weight design, including a proof of their ensemble tightness \cite{coja2020information};
    \item A proof that these bounds for the near-constant weight design are optimal among \emph{all} designs, and a computationally efficient method for attaining this optimal threshold using a spatially coupled randomized design \cite{Coj19a}.  
\end{enumerate}
These results for the noiseless setting are summarized in Figure \ref{fig:existing_noiseless} (Left).  We also briefly note that computational considerations for the standard (non spatially coupled) random designs were studied in \cite{Ili21,Coj22}, with evidence both for and against the prospect of a statistical-computational gap.

\begin{figure}
    \begin{centering}
        \includegraphics[width=0.425\columnwidth]{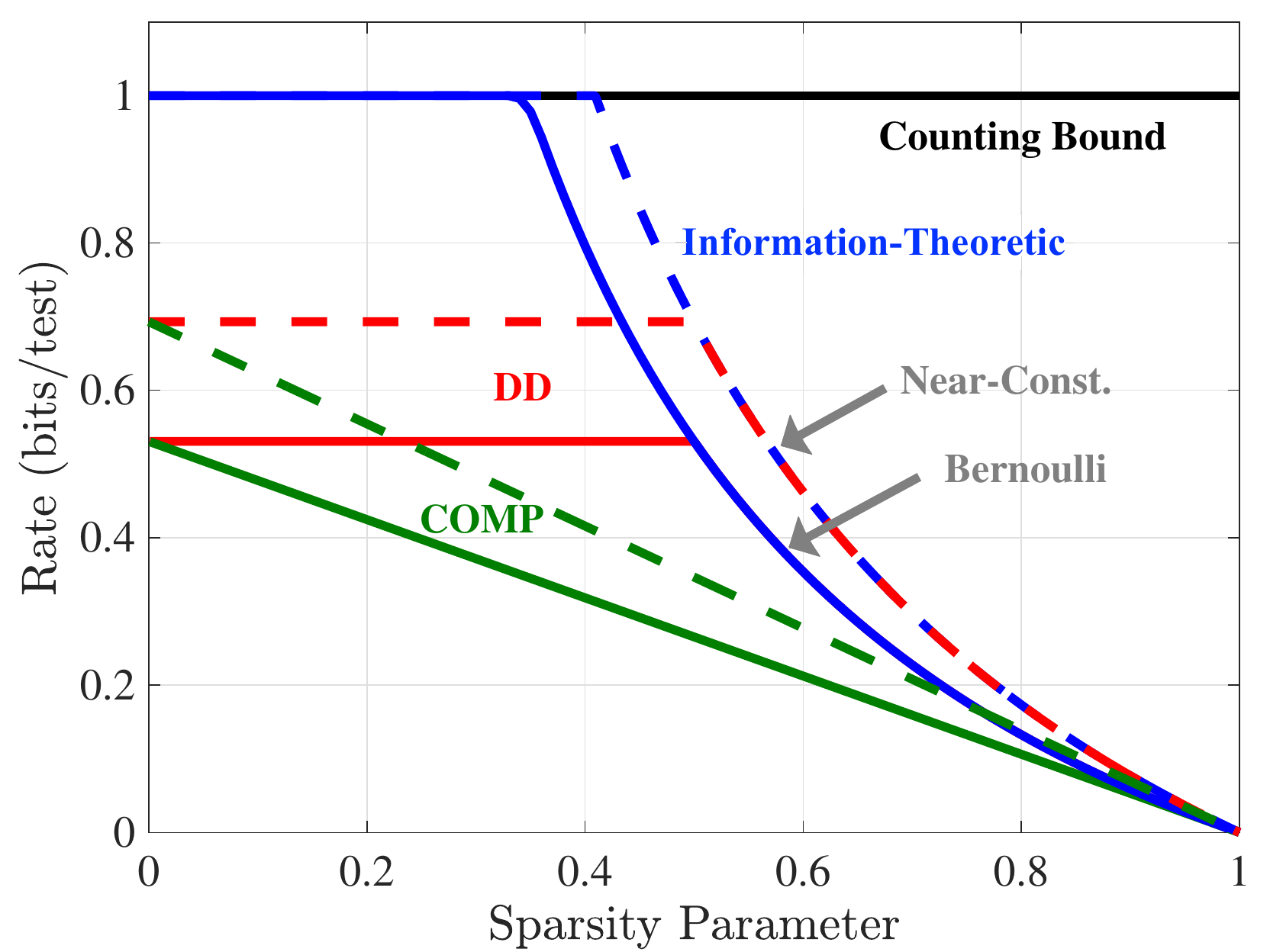} \quad \includegraphics[width=0.425\columnwidth]{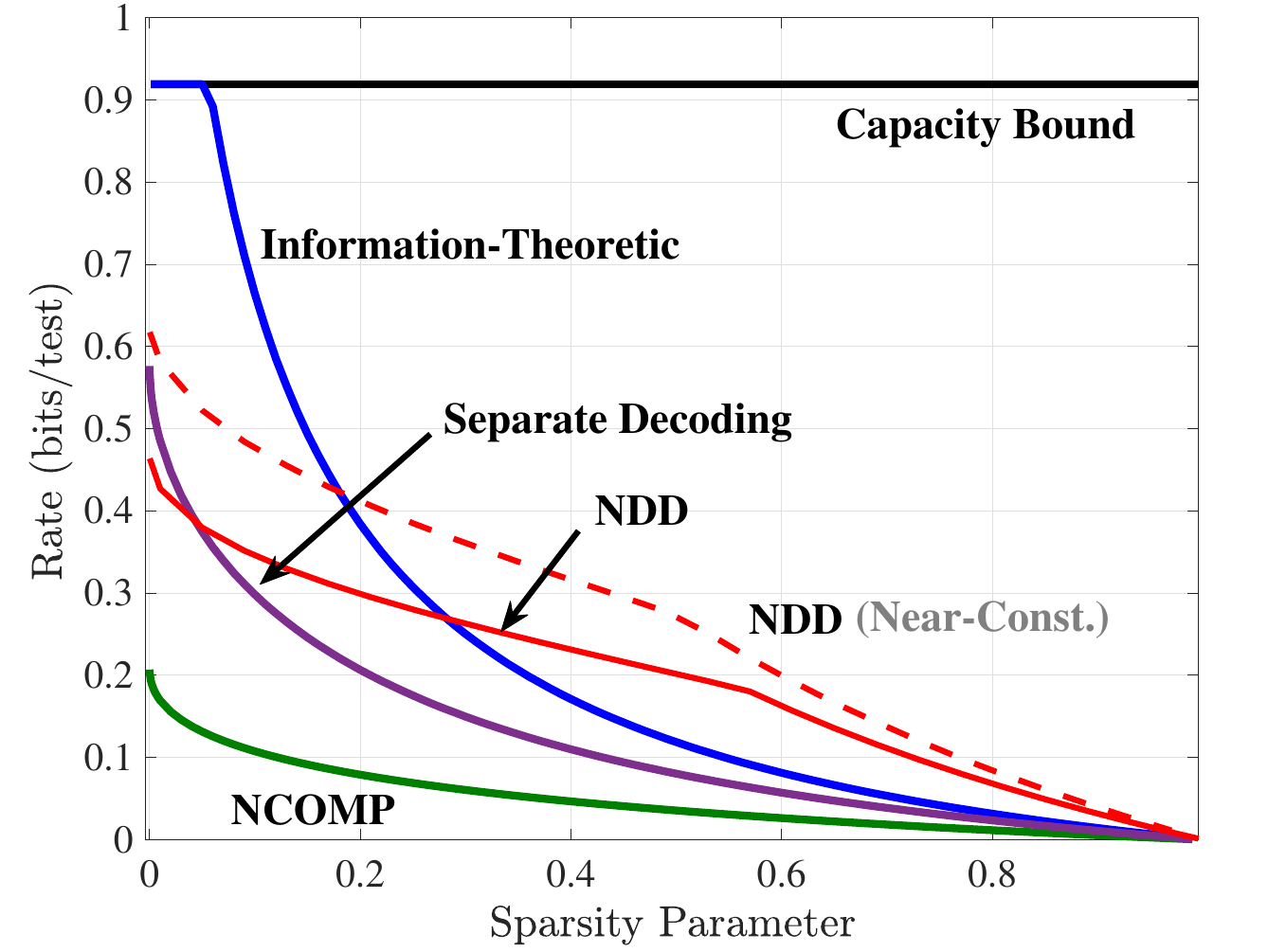}
        \par
    \end{centering}
    
    \caption{Existing bounds for the noiseless setting (Left) and the noisy setting with $\rho = 0.01$ (Right).  The sparsity parameter $\theta \in (0,1)$ is the value such that $k = \Theta(p^{\theta})$, and the rate is the asymptotic limit of $\frac{1}{n}\log_2{p \choose k}$.  In the noiseless setting, the dashed curve labeled ``Information-Theoretic'' is not only optimal for the near-constant weight design, but also for arbitrary test designs. The curves ``Counting Bound'' and ``Capacity Bound'' represent simple algorithm-independent lower bounds on $n$ for arbitrary test designs (e.g., see \cite{Mal78,Bal13}). \label{fig:existing_noiseless}}
\end{figure}

{\bf Noisy group testing.} In the noisy setting (with the symmetric noise model \eqref{eq:gt_symm_model}), several counterparts to the above developments are available, but the list is significantly less complete:
\begin{itemize}
    \item Regarding \#1, the very sparse regime $k = O(1)$ has long been well-understood \cite{Mal80}.
    \item Regarding \#2, algorithmic bounds for COMP-like algorithms were established in \cite{Cha11,Sca17b}.
    \item As an attempt towards \#4, information-theoretic upper bounds were established in \cite{Sca20}, but with tightness only shown for very small values of the sparsity parameter $\theta$.
    \item Regarding \#3, and \#6, bounds were derived for a noisy DD-like algorithm in \cite{Sca20} for the Bernoulli design, and in \cite{Geb21} for the near-constant weight design.  Exact thresholds were established in certain (dense) regimes for one-sided noise models, but not for the more commonly-considered symmetric noise model.
\end{itemize}
These existing results are summarized in Figure \ref{fig:existing_noiseless} (Right).  

We will outline our contributions below, but they can be succinctly summarized as follows:
\begin{gather*}
    \text{\emph{In this paper, we establish exact thresholds for the Bernoulli and near-constant weight designs}} \\ \text{\emph{in the noisy setting, thus establishing counterparts to items \#4, \#5, and \#7 in the above list.}}
\end{gather*}
{\bf Other settings.} We only provide a brief summary of other aspects of group testing that are less relevant to the present paper:
\begin{itemize}
    \item For adaptive designs, near-optimal algorithms have long been known for the noiseless setting \cite{Hwa72}, and analogs were recently derived for the noisy setting \cite{Sca18,Teo22} and shown to significantly improve over the best non-adaptive results in denser (high $\theta$) regimes.  We note that even following the improved non-adaptive bounds that we derive in the present paper, significant improvements via adaptivity still remain.
    \item Combinatorial non-adaptive group testing is a widely-studied problem \cite{Kau64,Dya82,Dya83,Du93}, and has also been studied under adversarial noise \cite{Che09}.  Under the condition of exact recovery, the stricter recovery criterion (with zero probability of error) increases the dependence on $k$ from linear to quadratic, meaning that it comes at a significant price in the number of tests.  To our knowledge, precise constant factors have not been sought in this line of works.
    \item If the exact recovery criterion \eqref{eq:pe} is relaxed to approximate recovery, where $\alpha k$ false positives and false negatives are allowed in the reconstruction for some $\alpha > 0$, then the study of noisy (and noiseless) probabilistic group testing becomes much simpler, with \cite{Sca15b} giving upper and lower bounds that match in the limit of small $\alpha$.  (With reference to the bound \eqref{eq:ber_threshold} below, the ``difficult'' second term disappears, and only the ``simple'' first term remains.)  Analogous findings for the near-constant weight design can be inferred from \cite[Sec.~5]{Coj19a}, and related studies of ``all-or-nothing'' phase transitions in group testing can be found in \cite{Tru20,Nil21}.
\end{itemize}

\subsection{Overview of Our Results}

In this paper, we obtain exact information-theoretic thresholds for both the Bernoulli design and the near-constant weight design.  To do so, we prove four results separately: Bernoulli achievability, Bernoulli converse, near-constant weight achievability, and near-constant-weight converse, where we use the term ``converse'' in a design-specific sense (i.e., ensemble tightness).  We emphasize that all of our achievability results are based on a computationally intractable decoder that explicitly searches over all $p \choose k$ possible defective sets; thus, we only prove achievability in an information-theoretic sense, without requiring computational efficiency. 


Our results are summarized in the four theorems stated below, whose proofs are outlined in Section \ref{sec:overview_ach} (achievability) and Section \ref{sec:overview_conv} (converse), with the formal details deferred to the appendices.  We briefly recall the notation $a \star b = ab + (1-a)(1-b)$,\footnote{This is interpreted as the probability of getting a ``1'' when a ${\rm Bern}(1-a)$ variable is generated and then flipped with probability $b$.} the binary KL divergence $D(a\|b) = a\log\frac{a}{b} + (1-a)\log\frac{1-a}{1-b}$, and  the binary entropy function $H_2(a) = a\log\frac{1}{a} + (1-a)\log\frac{1}{1-a}$.  The function $\log(\cdot)$ and the information measures are in units of nats, except where stated otherwise.

We note that both of the thresholds that we derive are somewhat complicated to state, but we will provide some intuition in Sections \ref{sec:overview_conv} and \ref{sec:overview_ach} regarding how the various terms arise.  
In all of our results, we make use of the design parameter $\nu > 0$, noise level $\rho \in \big(0,\frac{1}{2}\big)$, and sparsity parameter $\theta \in (0,1)$, which is the constant such that $k = \Theta(p^{\theta})$.  (Since $\theta \in (0,1)$, we have $k \to \infty$ and $k = o(p)$.)  We also give some brief hints on the meaning of other notation that will appear:
\begin{itemize}
    \item The thresholds both contain minimizations over two parameters denoted by $C$ and $\zeta$; we will see that $C$ is related to the number of tests a defective item is placed in without any other defectives, and $\zeta$ is related to how many of those tests are flipped.
    \item The thresholds both contain quantities denoted by $f_1$ and $f_2$, which are related to the occurrence of ``bad events'' associated with defective items and non-defective items, respectively.  Moreover, $f_2$ contains an additional optimization parameter $d$, which is related to how many ``relevant'' positive tests the ``bad'' non-defective item is placed in.
\end{itemize}
We again refer the reader to Sections \ref{sec:overview_conv} and \ref{sec:overview_ach} for more detailed intuition.

\textbf{Threshold for Bernoulli designs:} The threshold for Bernoulli design with i.i.d. ${\rm Bernoulli}(\frac{\nu}{k})$ entries is given as follows: 
\begin{align}
    n^*_{\rm Bern} = \max\left\{\frac{k\log\frac{p}{k}}{H_2(e^{-\nu}\star\rho)-H_2(\rho)},\frac{k\log\frac{p}{k}}{(1-\theta)\nu e^{-\nu}\min_{ C>0,\zeta\in(0,1) }\max\{\frac{1}{\theta}f_1^{\rm Bern}(C,\zeta,\rho),f_2^{\rm Bern}(C,\zeta,\rho)\}}\right\},\label{eq:ber_threshold}
\end{align}
where we define
\begin{gather}
    f_1^{\rm Bern}(C,\zeta,\rho)=C\log C-C+C\cdot D(\zeta\|\rho)+1, \\
    f_2^{\rm Bern}(C,\zeta,\rho)= \min_{d \ge \max\{0,C(1-2\zeta)/\rho\}} g^{\rm Bern}(C,\zeta,d,\rho), \label{eq:f2_bern}
\end{gather}
and where $g^{\rm Bern}(C,\zeta,\rho,d)$ is defined as
\begin{gather}\label{eq:g_ber}
    g^{\rm Bern}(C,\zeta,d,\rho) = \rho d\log d+\big(\rho d-C(1-2\zeta)\big)\log\Big(\frac{\rho d-C(1-2\zeta)}{1-\rho}\Big) + 1-2\rho d+C(1-2\zeta).
\end{gather}
We will also show in the proof that the minimizing choice of $d$ in \eqref{eq:f2_bern} has the following closed-form expression:
\begin{align}
    d^*_{\rm Bern}(C,\zeta,\rho) = \frac{C(1-2\zeta)+\sqrt{C^2(1-2\zeta)^2 + 4\rho(1-\rho)}}{2\rho}.
\end{align}
The achievability and converse statements are formally given as follows.

\begin{thm} \label{thm:ach_bernoulli}
    {\em (Achievability via Bernoulli Designs)} Under the Bernoulli design with i.i.d.~entries following ${\rm Bernoulli}(\frac{\nu}{k})$ for fixed $\nu > 0$, there exists a decoding strategy such that $P_e\to 0$ as $p\to\infty$ with a number of tests satisfying
    \begin{equation}\label{eq:ach_bern}
        n \leq (1+\eta)n^*_{\rm Bern}
    \end{equation}
    for an arbitrary constant $\eta>0$.
\end{thm}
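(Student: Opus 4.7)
The plan is to analyze a (possibly hybrid) maximum-likelihood decoder and bound $\PP[\widehat S \ne S]$ by splitting the error event according to $\ell := |S\setminus\overline s|$, the number of defectives that a candidate $\overline s$ with $|\overline s|=k$ misses. A union bound over such candidates involves roughly $\binom{k}{\ell}\binom{p-k}{\ell}$ sets. I expect the two terms inside the $\max$ defining $n^*_{\rm Bern}$ to correspond to the large-$\ell$ regime ($\ell = \Theta(k)$) and the small-$\ell$ regime ($\ell = O(1)$), with the intermediate regime $1 \ll \ell \ll k$ being dominated by the two endpoints.

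For the large-$\ell$ regime I would use an information-theoretic threshold decoder analysis in the spirit of \cite{Sca15b}. Under the Bernoulli design the indicator $X_S$ of whether a test contains any true defective is asymptotically $\mathrm{Bern}(1-e^{-\nu})$, so $Y \sim \mathrm{Bern}(e^{-\nu}\star\rho)$ marginally while $Y$ is a $\mathrm{Bern}(\rho)$-noisy copy of $X_S$ conditionally. A Chernoff bound on the log-likelihood ratio between $S$ and a candidate differing in $\ell$ items concentrates at the per-test rate $I(X_S;Y) = H_2(e^{-\nu}\star\rho)-H_2(\rho)$, and comparing $n$ times this exponent with the union-bound cost $\log\binom{p}{k} = k\log(p/k)(1+o(1))$ yields the first term of $n^*_{\rm Bern}$.

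The small-$\ell$ regime is the technical core; I would work it out in detail for $\ell=1$ and handle $\ell = O(1)$ analogously. An $\ell=1$ error requires $i\in S$ and $j\notin S$ such that swapping $i$ for $j$ gives at least as high a likelihood as $S$. Conditioned on $\Xv$, the log-likelihood ratio depends on (a) the number $N_i$ of \emph{unique tests} of $i$ (containing $i$ but no other defective), (b) how many of those are flipped to $0$ by noise, and (c) how many of $j$'s placements land in tests that are not already dominated by another defective. Writing $N_i = C\cdot N_0$ with $N_0 := \nu e^{-\nu} n/k$ and writing the flip fraction in $N_i$ as $\zeta$, a binomial large-deviation bound gives the defective-side exponent $N_0\, f_1^{\rm Bern}(C,\zeta,\rho) = N_0\bigl(C\log C - C + C\,D(\zeta\|\rho) + 1\bigr)$, where $C\log C - C + 1$ is the Poisson-tail cost of $N_i$ deviating from $N_0$ and $C\,D(\zeta\|\rho)$ is the cost of the noise deviating to flip fraction $\zeta$. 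A second Chernoff bound on $j$'s positive-test placements, parameterized by the rescaled count $d$, yields the exponent $N_0\, g^{\rm Bern}(C,\zeta,d,\rho)$; the constraint $d \ge \max\{0,\,C(1-2\zeta)/\rho\}$ encodes the requirement that $j$'s contributions outweigh the $C(1-2\zeta)$ surplus of unflipped unique tests of $i$, and minimizing over $d$ gives $f_2^{\rm Bern}$ (with $d^*_{\rm Bern}$ obtained by setting $\partial_d g^{\rm Bern}=0$). Finally, union-bounding over $k=p^\theta$ defectives and $p-k$ non-defectives demands $N_0 f_1^{\rm Bern} \gtrsim \theta\log p$ and $N_0 f_2^{\rm Bern} \gtrsim \log p$; combined with $\log p = \log(p/k)/(1-\theta)\cdot(1+o(1))$ and $N_0 = \nu e^{-\nu}n/k$, these rearrange to the second term of $n^*_{\rm Bern}$ after the worst-case min-max over $(C,\zeta)$.

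The hardest step will be executing this small-$\ell$ analysis with the \emph{exact} constants, since the joint event that a particular pair $(i,j)$ is simultaneously bad is coupled through the shared pool of tests. My plan is to first restrict to a high-probability typical event on which the row weights of $\Xv$ and the number of tests containing exactly one defective concentrate near their means, and then to perform approximately independent Chernoff analyses on this event. The intermediate regime $1\ll \ell\ll k$ should be handled by a routine interpolation Chernoff bound that is strictly weaker than both endpoint bounds, and the multiplicative $(1+\eta)$ slack in \eqref{eq:ach_bern} will absorb all $o(1)$ losses from the typicality restriction, Stirling approximations, and the conversion between $\log p$ and $\log(p/k)$.
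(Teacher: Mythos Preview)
Your proposal is essentially correct and follows the same architecture as the paper: a hybrid decoder, an information-theoretic (threshold) analysis for the high-$\ell$ regime giving the first term, and a $(C,\zeta,d)$-parameterized large-deviation analysis for the low-$\ell$ regime giving the second term. Your identification of the exponents $f_1^{\rm Bern}$, $g^{\rm Bern}$, and $f_2^{\rm Bern}$, and the way they combine with the union-bound costs $\log k$ and $\log p$, is exactly right.

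A few places where the paper's execution differs from (or sharpens) your sketch:

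\textbf{Split point and the ``intermediate'' regime.} The paper does not use a three-way split; it splits at $\ell = k/\log k$. The low-$\ell$ (restricted MLE) analysis is carried out for \emph{every} $\ell \in [1,k/\log k]$, with the $(C,\zeta)$ parameterization applied to size-$\ell$ sets $\mathcal J\subset S$ and $\mathcal J'\subset [p]\setminus S$. One then verifies directly that the resulting bound is maximized at $\ell=1$. Your ``routine interpolation Chernoff bound'' for $1\ll \ell\ll k$ is replaced by this uniform-in-$\ell$ version of the $\ell=1$ argument; it is not conceptually harder, but it does require the general-$\ell$ bookkeeping (union over $\binom{k}{\ell}$ choices of $\mathcal J$, Markov on $k_{\ell,C,\zeta}$, etc.) rather than an interpolation.

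\textbf{Coupling.} Rather than a global typicality event that decouples the $i$-side and $j$-side, the paper writes $G_{\mathcal J,\mathcal J',1}=\widetilde G_{\mathcal J',1}+U_{\mathcal J,\mathcal J',1}$, where $\widetilde G$ counts hits in $\mathcal N_{01}$ (independent of $\mathcal J$) and $U$ counts hits in $\mathcal M_{\mathcal J 1}$. A short Chernoff argument shows $\max_{\mathcal J,\mathcal J'} U_{\mathcal J,\mathcal J',1}=o(\ell n/k)$, which cleanly removes the coupling and reduces (C2) to a statement about $\widetilde G_{\mathcal J'}$ alone.

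\textbf{Logical structure of the min--max.} One small correction: the two inequalities $N_0 f_1 \gtrsim \theta\log p$ and $N_0 f_2 \gtrsim \log p$ are not both required for a given $(C,\zeta)$; it suffices that \emph{at least one} holds (so that either (C1) or (C2) fails). Taking the worst $(C,\zeta)$ then produces the $\min_{C,\zeta}\max\{\theta^{-1}f_1,f_2\}$ in the threshold, which is what you ultimately state.
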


\begin{thm}\label{thm2}
    {\em (Converse for Bernoulli Designs)} Under the Bernoulli design with i.i.d.~entries following ${\rm Bernoulli}(\frac{\nu}{k})$ for fixed $\nu > 0$, for any decoding strategy to have $P_e \not\to 1$ as $p\to\infty$ it must be the case that 
    \begin{equation}\label{eq:conv_bern}
        n \geq (1-\eta)n^*_{\rm Bern}
    \end{equation}
    for an arbitrary constant $\eta>0$.
\end{thm}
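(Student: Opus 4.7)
The plan is to lower-bound the error probability of the optimal decoder via two independent arguments, the stronger of which yields $n^*_{\rm Bern}$. Since $S$ is uniform over $\binom{p}{k}$ subsets, the MAP $=$ ML decoder minimizes $\pe$, so I may assume ML throughout without loss of generality.

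The first term follows from a standard Fano/mutual-information argument. Combining $H(S\,|\,\Xv,\Yv) \le 1 + \pe\log\binom{p}{k}$ with the chain-rule bound $I(S;\Yv\,|\,\Xv) \le n\cdot I(S;Y^{(1)}\,|\,X^{(1)})$ (valid by conditional independence of tests given $S$), and observing that under the Bernoulli design $\PP\bigl(\bigvee_{j\in S} X_j^{(1)} = 0\bigr) = (1-\nu/k)^k \to e^{-\nu}$ so that $I(S;Y^{(1)}\,|\,X^{(1)}) \to H_2(e^{-\nu}\star\rho) - H_2(\rho)$, then rearranging and using $\log\binom{p}{k} = (1+o(1))k\log(p/k)$ together with $\pe\not\to 1$ produces the first term in $n^*_{\rm Bern}$.

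The main work is the second term, via a bad-swap analysis on ML. For $j^*\in S$ and $j'\in[p]\setminus S$, the decoder fails on $S' = (S\setminus\{j^*\})\cup\{j'\}$ iff $\PP(\Yv\,|\,\Xv,S') \ge \PP(\Yv\,|\,\Xv,S)$. Tests containing neither $j^*$ nor $j'$, and tests containing some other defective in $S\setminus\{j^*\}$, cancel in the likelihood ratio, so the comparison reduces to flips in two sets, $T_1 = \{i: X_{j^*}^{(i)}=1,\ X_{j'}^{(i)}=0,\ X_{j''}^{(i)}=0\ \forall j''\in S\setminus\{j^*\}\}$ and $T_2 = \{i: X_{j'}^{(i)}=1,\ X_j^{(i)}=0\ \forall j\in S\}$, each with mean $(1+o(1))n\nu e^{-\nu}/k$ (hence the $\nu e^{-\nu}$ prefactor). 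Parametrizing the atypical profile at $j^*$ as $|T_1|\approx C\cdot n\nu e^{-\nu}/k$ with a $\zeta$-fraction of these tests flipped, a Poisson/binomial large-deviation computation gives the probability of this profile as $\exp\!\bigl(-(1+o(1))(n\nu e^{-\nu}/k)\,f_1^{\rm Bern}(C,\zeta,\rho)\bigr)$, where $C\log C - C + 1$ is the Poisson rate for $|T_1|$ deviating to $C$ times its mean and $C\cdot D(\zeta\|\rho)$ is the binomial rate for the flip fraction. Conditional on this profile, a swap favoring $j'$ additionally requires $|T_2|$ and its flips to be atypical in a manner that reverses the log-likelihood sign; parametrizing $|T_2|\approx d\cdot n\nu e^{-\nu}/k$ with $\rho d \ge C(1-2\zeta)$ so that enough flips are available, a Stirling/Chernoff computation yields the per-$j'$ probability $\exp\!\bigl(-(1+o(1))(n\nu e^{-\nu}/k)\,g^{\rm Bern}(C,\zeta,d,\rho)\bigr)$, and solving $\partial g^{\rm Bern}/\partial d = 0$ recovers the closed-form $d^*_{\rm Bern}$.

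To conclude, I union over $k$ choices of $j^*\in S$ and $(p-k)$ choices of $j'$: the bad-swap event has probability $\Omega(1)$ provided both $k\exp\!\bigl(-(n\nu e^{-\nu}/k)\,f_1^{\rm Bern}\bigr) \gtrsim 1$ and $(p-k)\exp\!\bigl(-(n\nu e^{-\nu}/k)\,f_2^{\rm Bern}\bigr) \gtrsim 1$. Substituting $\log k = \theta\log p$ and $\log(p/k) = (1-\theta)\log p$ and taking the tighter of the two constraints produces exactly $n \le \frac{k\log(p/k)}{(1-\theta)\nu e^{-\nu}\,\max\{f_1^{\rm Bern}/\theta,\, f_2^{\rm Bern}\}}$, with the $\min_{C>0,\zeta\in(0,1)}$ arising from the adversary's freedom to pick the profile minimizing the $\max$. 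The main obstacle I anticipate is upgrading this first-moment/union-bound heuristic into a rigorous second-moment lower bound for the bad-swap event probability: the various single-swap events share the same random $\Xv$ and are strongly correlated, so careful covariance control across $(j^*,j')$ pairs together with precise tracking of lower-order terms in the large-deviation exponents is needed to ensure the lower bound matches the achievability exponent exactly.
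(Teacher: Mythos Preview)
Your overall structure matches the paper's: the first term via an information-theoretic argument, the second via a bad-swap analysis of ML with the $(C,\zeta)$ parametrization leading to $f_1^{\rm Bern}$ and $f_2^{\rm Bern}$. Two points are worth noting.

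First, a minor issue: Fano's inequality only yields $\pe \not\to 0$, not $\pe \to 1$ as the theorem requires. The paper instead invokes a non-asymptotic lower bound of the form $\pe \ge \PP\bigl(\imath^n(\Xv_s;\Yv) \le \log(\delta_1\binom{p}{k})\bigr) - \delta_1$ and establishes one-sided concentration of the information density around its mean, which does give $\pe \to 1$ below the first threshold.

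Second, and more substantively, the obstacle you anticipate---a second-moment argument with covariance control across $(j^*,j')$ pairs---is not how the paper proceeds, and in fact is avoided entirely. The key observation is that the two sides decouple cleanly:
\begin{itemize}
\item For the $j^*$ side (condition (C1)): the vector $(M_1,\ldots,M_k)$ of ``sole-defective'' counts is multinomial, hence dependent, but restricting to the first $k^\xi$ coordinates (any $\xi<1$) and applying a multinomial-to-Poisson total-variation bound makes them independent $\mathrm{Poi}(nP_1)$ variables up to $o(1)$ error. Moreover, conditioned on $M_j = Cn\nu e^{-\nu}/k$, the flip counts $M_{j0}$ are genuinely independent across $j$ because the underlying test sets $\mathcal{M}_j$ are disjoint. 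So the probability that \emph{some} $j$ has the $(C,\zeta)$ profile is exactly $1-(1-P_2)^{k^\xi}$, no second moment needed.
\item For the $j'$ side (condition (C2)): one first \emph{conditions} on $(\Xv_s,\Yv)$ and on having found a suitable $j\in\mathcal{K}_{C,\zeta}$. Given this conditioning, the columns $\Xv_{j'}$ for $j'\in\{k+1,\ldots,p\}$ are i.i.d., so the probability that some $j'$ achieves the swap is exactly $1-(1-p_{\text{single}})^{p-k}$.
\end{itemize}
Thus both ``union bounds'' are in fact exact independence calculations after the right conditioning and the Poisson approximation, and the first-moment heuristic you describe becomes rigorous without any variance control.
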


\textbf{Threshold for Near-Constant Weight Designs:} The threshold for the near-constant weight design with $\Delta=\frac{\nu n}{k}$ placements per item is given as follows:
\begin{align}\label{eq:NC_threshold}
    n^*_{\rm NC}= \max \left\{\frac{k\log\frac{p}{k}}{H_2(e^{-\nu}\star\rho)-H_2(\rho)},\frac{k\log\frac{p}{k}}{(1-\theta)\nu e^{-\nu}\min_{C\in(0,e^\nu),\zeta\in(0,1)}\max\{\frac{1}{\theta}f^{\rm NC}_1(C,\zeta,\rho,\nu),f^{\rm NC}_2(C,\zeta,\rho,\nu)\}}\right\},
\end{align}
where 
\begin{gather}
    f_1^{\rm NC}(C,\zeta,\rho,\nu)=e^\nu D(Ce^{-\nu}\|e^{-\nu})+C\cdot D(\zeta\|\rho) \\
    f_2^{\rm NC}(C,\zeta,\rho,\nu) = \min_{d \,:\, |C(1-2\zeta)|\le d\le e^{\nu}} g^{\rm NC}(C,\zeta,d,\rho,\nu), \label{eq:f2_NC}
\end{gather}
and where $g^{\rm NC}(C,\zeta,d,\rho,\nu)$ is given by
\begin{gather}
    \label{eq:gnc}
    g^{\rm NC}(C,\zeta,d,\rho,\nu) = e^{\nu}\cdot D\big(de^{-\nu}\|e^{-\nu}\big)+d\cdot D\Big(\frac{1}{2}+\frac{C(1-2\zeta)}{2d}\big\|\rho\Big). 
\end{gather}
We will also derive a closed-form expression for the value $d^*_{\rm NC}(C,\zeta,\nu)$ achieving the minimum in \eqref{eq:f2_NC}, but it is omitted here due to being rather complicated (see (\ref{eq:d_star_pinpoint}) in Appendix \ref{app:nc_converse}).

\begin{thm}\label{thm3}
    {\em (Achievability via Near-Constant Weight Designs)} Under the near-constant weight design where each item is placed   $\Delta = \frac{\nu n}{k}$ tests uniformly at random with replacement with $\nu = \Theta(1)$, there exists a decoding strategy such that $P_e\to 0$ as $p\to\infty$ with a number of tests satisfying
    \begin{equation}\label{eq:ach_ncc}
        n \leq (1+\eta) n^*_{\rm NC}
    \end{equation}
    for an arbitrary constant $\eta>0$.
\end{thm}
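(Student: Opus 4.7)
My plan is to establish achievability by a hybrid decoder that selects between an information-theoretic threshold rule for candidate defective sets $\bar S$ with large symmetric difference from the true set $S$ and a maximum-likelihood rule for candidates with small symmetric difference. Writing $\sdif = |S \setminus \bar S|$, the two branches of the maximum in \eqref{eq:NC_threshold} correspond precisely to these two regimes: the first term $\frac{k\log(p/k)}{H_2(e^{-\nu}\star\rho)-H_2(\rho)}$ will govern errors with $\sdif = \Omega(k)$, while the second (min--max) term will govern small $\sdif$, with $\sdif=1$ being the binding case. Throughout, I would exploit that under the near-constant weight design a uniformly chosen test contains no defective with probability approaching $e^{-\nu}$, so each test outcome is marginally approximately $\mathrm{Bernoulli}(e^{-\nu}\star\rho)$ under the true defective set.

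For the large-$\sdif$ regime, I would apply a threshold rule on $\log\frac{P(\Yv|\Xv,\bar S)}{P(\Yv|\Xv,S)}$ and control the error event by a Chernoff bound for each fixed $\bar S$, followed by a union bound over the $\binom{k}{\sdif}\binom{p-k}{\sdif}$ candidates. The dependencies introduced by without-replacement placements are handled by coupling to a Poissonized with-replacement design, as in \cite{coja2020information}, and the per-test exponent is $H_2(e^{-\nu}\star\rho) - H_2(\rho)$, producing the first term. For small $\sdif$, I would condition on a representative swap of one defective $j\in S$ for a non-defective $j'\notin S$ and analyze the event that the ML likelihood of $\bar S$ exceeds that of $S$. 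The relevant statistics are: the scaled number $C$ of tests where $j$ appears without any other defective (concentrating around $ke^{-\nu}$), the fraction $\zeta$ of those tests whose noise flips them to negative, and the scaled number $d$ of ``relevant'' positive tests in which $j'$ happens to appear. Under the near-constant weight placement, the joint tail of $(C,\zeta)$ contributes the deviation exponent $e^\nu D(Ce^{-\nu}\|e^{-\nu}) + C\cdot D(\zeta\|\rho) = f_1^{\rm NC}(C,\zeta,\rho,\nu)$, while the likelihood-dominance event for $j'$ reduces to a Chernoff comparison between a $\mathrm{Bernoulli}(\rho)$ tail and a $\mathrm{Bernoulli}\big(\frac{1}{2}+\frac{C(1-2\zeta)}{2d}\big)$ target, yielding $f_2^{\rm NC}$ together with the closed-form minimizer $d^*_{\rm NC}(C,\zeta,\nu)$. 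The factor $\frac{1}{\theta}$ on $f_1^{\rm NC}$ versus the bare $f_2^{\rm NC}$ reflects that the union bound costs $\log k = \theta\log p$ for defective items but $\log p$ for non-defective items, so equating the exponents produces the inner $\max$, and optimizing over $(C,\zeta)$ produces the outer $\min$.

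The principal obstacle is the absence of column independence: the $\Delta$ draws with replacement create correlations among the per-item placement profiles, so i.i.d.\ Chernoff bounds cannot be applied directly. I would handle this by Poissonization together with truncation to items whose placement counts lie in a high-probability band, and by careful joint-law computations for $(C,\zeta)$ at a single defective coordinate. A secondary difficulty is verifying that the small-$\sdif$ bound does not degrade as $\sdif$ grows toward (but below) $k$; I plan to adapt the template of \cite{Sca15b} by showing that at intermediate $\sdif$ at least one of the two exponents in $\max\{\frac{1}{\theta}f_1^{\rm NC}, f_2^{\rm NC}\}$ scales at least linearly in $\sdif$, so that $\sdif=1$ indeed saturates the bound and governs the rate. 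Assembling the two decoder branches and the union bound over $\sdif \in \{1,\ldots,k\}$ will then deliver \eqref{eq:ach_ncc} with $n \leq (1+\eta)n^*_{\rm NC}$ for any $\eta>0$.
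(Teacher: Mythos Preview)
Your high-level architecture matches the paper's: a hybrid decoder that thresholds on an information-theoretic statistic for large $\ell=|S\setminus\bar S|$ and uses (restricted) maximum likelihood for small $\ell$, with the first branch of \eqref{eq:NC_threshold} arising from the former and the $(C,\zeta,d)$ parametrization producing $f_1^{\rm NC},f_2^{\rm NC}$ in the latter. Your identification of the $\frac{1}{\theta}$ factor as the ratio of union-bound costs ($\log k$ versus $\log p$), and of $\ell=1$ as the binding case, is exactly right.

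There are, however, several places where your technical route diverges from the paper's, and where your description is too optimistic. First, the paper's threshold rule is not on $\log\frac{P(\Yv|\Xv,\bar S)}{P(\Yv|\Xv,S)}$ but on the information density $\imath^n(\Xvdif;\Yv|\Xveq)=\log\frac{P(\Yv|\Xv_s)}{P(\Yv|\Xveq)}$ as in \cite{Sca15b}; this matters because the resulting condition \eqref{eq:final_ach} involves the conditional mutual information $I_\ell^n$, and the main new work for the near-constant weight design is to show that $I_\ell^n$ has the same asymptotics as under Bernoulli (Lemma~\ref{lem:mi_ncc}) and that $\imath^n$ concentrates around it (Lemma~\ref{nc-theta1}). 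These are proved directly via entropy/sub-additivity arguments and Hoeffding/McDiarmid, not by Poissonization or coupling. Second, in the low-$\ell$ regime the paper does not Poissonize either: the key probability $\PP(M_{\mathcal J}=m)$ is controlled by a combinatorial bound using Stirling numbers of the second kind (Lemma~\ref{lem77}), together with McDiarmid concentration for the number of tests hit by a subset of items (Lemma~\ref{lem44}); the Poisson approximation you allude to is used only in the \emph{Bernoulli converse}. Third, the cutoff is $\ell=k/\log k$, not $\ell=\Omega(k)$; the information-density analysis must handle all $\ell>k/\log k$, including $\ell=o(k)$, and the paper shows the maximum over such $\ell$ in \eqref{eq:final_ach} is attained at $\ell=k$ via a monotonicity result for $I_\ell^n/\ell$ borrowed from \cite{Mal80}. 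Your plan to ``adapt \cite{Sca15b}'' for intermediate $\ell$ is in the right spirit, but the concrete mechanism (showing the small-$\ell$ bound with the $(C,\zeta)$ exponents holds uniformly over $\ell\le k/\log k$ via Markov on $k_{\ell,C,\zeta}$ and a union bound over feasible $(C,\zeta)$ pairs) requires the specific cardinality-accounting in the paper rather than a generic linear-in-$\ell$ scaling claim.
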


\begin{thm}\label{thm:nc_convese}
    {\em (Converse for Near-Constant Weight Designs)} Under the near-constant weight design where each item is placed $\Delta = \frac{\nu n}{k}$ tests uniformly at random with replacement with $\nu = \Theta(1)$, for any decoding strategy to have $P_e \not\to 1$ as $p\to\infty$ it must be the case that 
    \begin{equation}\label{eq:conv_ncc}
        n\geq (1-\eta) n^*_{\rm NC}
    \end{equation}
    for an arbitrary constant $\eta>0$.
\end{thm}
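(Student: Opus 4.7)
The plan is to establish the two lower bounds on $n$ appearing in \eqref{eq:NC_threshold} separately, then take the maximum. The first bound follows from a standard mutual-information and Fano argument: under the near-constant weight design, a fixed item appears in any given test with probability $1-(1-1/n)^\Delta = \nu/k \cdot (1+o(1))$, so the probability a given test contains no defective item is $(1-\nu/k)^{k}(1+o(1))=e^{-\nu}+o(1)$, and the marginal entropy of each outcome satisfies $H(Y^{(i)})=H_2(e^{-\nu}\star\rho)+o(1)$ while $H(Y^{(i)}\mid S,X^{(i)})=H_2(\rho)$. Hence $I(S;\Yv\mid\Xv)\le n(H_2(e^{-\nu}\star\rho)-H_2(\rho))+o(n)$, and combining this with $\log\binom{p}{k}=(1+o(1))k\log(p/k)$ and Fano's inequality under the uniform prior on $\binom{[p]}{k}$ yields the first of the two lower bounds whenever $P_e\not\to 1$.

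For the second, tighter bound, the plan is to analyse the maximum-likelihood decoder directly and exhibit, with probability tending to $1$, a ``swap'' $S'=(S\setminus\{j_{\rm dif}\})\cup\{j_{\rm eq}\}$ satisfying $\PP(\Yv\mid\Xv,S')\ge\PP(\Yv\mid\Xv,S)$; the existence of such an $S'$ forces even the optimal ML decoder to err on at least one of $S,S'$. I will parametrize the bad event exactly as the theorem statement suggests: $C$ scales the number of ``private'' tests (those in which $j_{\rm dif}$ is the only defective), $\zeta$ scales the fraction of those private tests whose noise flips a $1$ to a $0$, and $d$ scales the number of positive tests in which $j_{\rm eq}$ appears. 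A short computation of the log-likelihood ratio will show it is non-negative precisely when $|C(1-2\zeta)|\le d$, matching the constraint in \eqref{eq:f2_NC}. The rate $f_1^{\rm NC}$ is then the cost that a given defective attains the $(C,\zeta)$ profile: $e^\nu D(Ce^{-\nu}\|e^{-\nu})$ is the binomial cost of an abnormal private-test count under the column-weight-$\Delta$ ensemble and $C\cdot D(\zeta\|\rho)$ is the Sanov cost of the noise pattern, while $f_2^{\rm NC}$ is the parallel cost that a given non-defective attains the $(C,\zeta,d)$-dependent likelihood advantage, with the two terms of $g^{\rm NC}$ having analogous interpretations.

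With these rates in hand, the plan is to union-bound over defective choices ($k=\Theta(p^\theta)$, contributing $\log k\sim\theta\log p$) and non-defective choices ($p-k\sim p$, contributing $\log p$), and to upgrade the resulting first-moment estimate to a high-probability statement via a second-moment computation on the number of bad $(j_{\rm dif},j_{\rm eq})$ pairs. Optimising over $(C,\zeta,d)$ and balancing the two union-bound contributions will produce the ``$\max\{\tfrac{1}{\theta}f_1^{\rm NC},f_2^{\rm NC}\}$'' form in \eqref{eq:NC_threshold}, while the outer minimum over $(C,\zeta)$ arises because the adversary only needs one profile to induce an ML error. Because distinct items place their $\Delta$ balls into tests independently under the with-replacement NC design, the off-diagonal second-moment terms factor cleanly and asymptotically match the square of the first moment; a local-limit argument then converts ``expected count diverges'' into ``count is nonzero with probability $1-o(1)$''.

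The main obstacle, and the principal point of departure from the Bernoulli converse (Theorem \ref{thm2}), will be the sharp large-deviation analysis of column-weight statistics under the NC ensemble. In the Bernoulli case each column is a sum of independent Bernoullis and private-test/overlap counts are naturally Poisson in the limit, yielding the $C\log C-C+1$ rate visible in $f_1^{\rm Bern}$. Under NC, each item instead lands in exactly $\Delta$ tests (with replacement), so the number of those placements falling in any prescribed subset of tests is $\mathrm{Binomial}(\Delta,q)$ rather than Poisson; this replaces the Poisson rate by $e^{\nu}D(\,\cdot\,e^{-\nu}\|e^{-\nu})$ throughout and forces a careful local CLT / Laplace-type analysis in place of off-the-shelf Poisson tail bounds. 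A secondary technical point will be to verify that the unconstrained minimiser of $g^{\rm NC}$ in $d$ lies in the admissible interval $[|C(1-2\zeta)|,e^\nu]$, so that the closed-form $d^*_{\rm NC}$ stated in (\ref{eq:d_star_pinpoint}) is indeed the one attained; this reduces to a short KKT analysis with the two boundary regimes handled separately.
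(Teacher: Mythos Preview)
Your plan for the second term is broadly aligned with the paper's approach, but there are two places where it diverges or falls short.

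\textbf{First term: Fano is too weak.} The theorem asserts that $P_e\to 1$ whenever $n<(1-\eta)n^*_{\rm NC}$, not merely $P_e\not\to 0$. A standard Fano argument gives only $P_e\ge 1-\frac{I(S;\Yv\mid\Xv)+\log 2}{\log\binom{p}{k}}$, which under your mutual-information bound yields $P_e\ge\eta-o(1)$, i.e.\ bounded away from $0$ but not tending to $1$. The paper explicitly flags this (see the footnote in Section~\ref{sec:overview_conv}) and instead uses a non-asymptotic information-density lower bound on $P_e$ from \cite{Sca15b}, namely $\PP({\rm err})\ge\PP\bigl(\imath^n(\Xv_s;\Yv)\le\log(\delta_1\binom{p}{k})\bigr)-\delta_1$, together with a one-sided concentration result for $\imath^n$ under the near-constant weight design (Lemmas~\ref{lem:conc_V} and~\ref{lem:P_V}). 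This is where the ``non-minor additional technical effort'' that the paper mentions actually lives; your proposal skips it.

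\textbf{Second term: the paper does not use a second-moment method for (C1).} You propose to show that some defective $j$ attains the $(C,\zeta)$ profile via a second-moment computation on the count of bad pairs. The obstacle is that the events $\{j\in\mathcal{K}_{C,\zeta}\}$ for different defectives $j$ are \emph{not} close to independent in any obvious way: $M_j$ (the number of tests where $j$ is the sole defective) depends on the placements of \emph{all} $k$ defectives, so the off-diagonal covariances do not factor simply by appealing to the independence of column placements. The paper sidesteps this by conditioning on the unordered multiset $\Tv_s$ of all $k\Delta$ defective placements, then using symmetry to deduce that $M_j'$ (a slight variant of $M_j$) is hypergeometric, and finally allocating placements sequentially to the first $k^\xi$ items (for $\xi<1$ close to $1$) so that the conditional distributions are nearly identical and the resulting indicators stochastically dominate an i.i.d.\ sequence. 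This is closer in spirit to a coupling argument than to a second-moment method. For (C2), on the other hand, the paper does exploit exactly the independence across non-defectives that you describe, with no second moment needed.

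A smaller point: your interpretation of $d$ is off. In the paper, $d$ parametrizes the total number of tests in $\mathcal{N}_0\cup\mathcal{M}_j$ that contain $j'$ (call it $R_{j'}$), not the number of \emph{positive} such tests. The constraint $|C(1-2\zeta)|\le d\le e^\nu$ is a feasibility constraint for the anti-concentration argument (one needs $0\le G_{j,j',1}\le R_{j'}$), not a characterization of when the likelihood ratio is non-negative. The second KL term $d\cdot D\bigl(\tfrac12+\tfrac{C(1-2\zeta)}{2d}\,\big\|\,\rho\bigr)$ then arises from the Chernoff cost of $G_{j,j',1}\sim\mathrm{Bin}(R_{j'},\rho)$ exceeding the threshold in~\eqref{eq:con_failure_ell1}.
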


We note that the subscripts/superscripts ``Bern'' and ``NC'' (e.g., on $f_1$ and $f_2$) will be omitted throughout the proofs in the appendices, since the choice of design will be clear from the context.

\begin{figure}
    \begin{centering}
        \includegraphics[width=0.425\columnwidth]{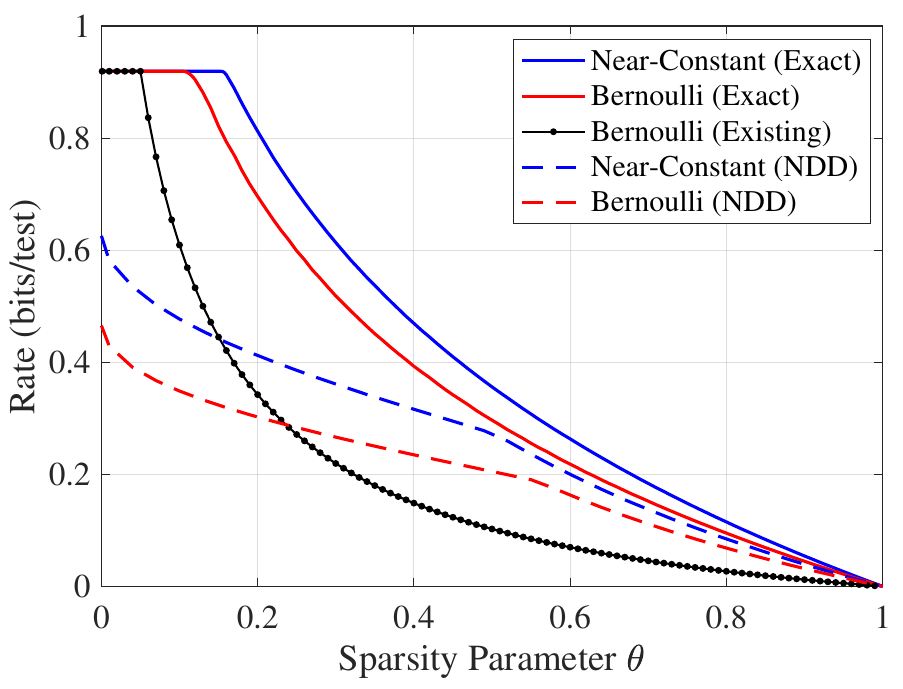} \quad \includegraphics[width=0.425\columnwidth]{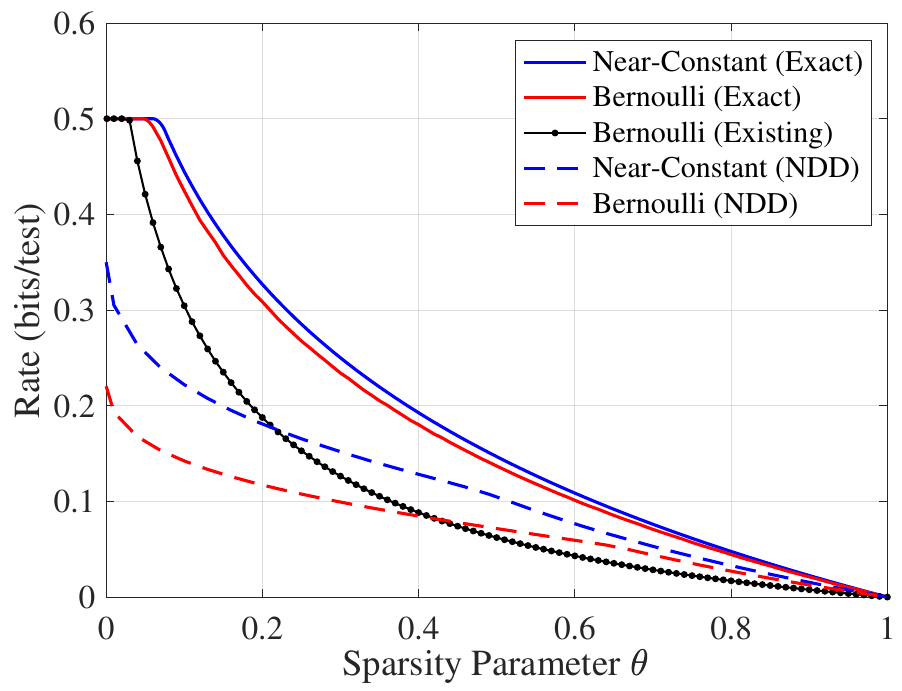}
        \par
    \end{centering}
    
    \caption{Our exact thresholds and the state-of-the-art existing results for noise levels $\rho = 0.01$ (Left) and $\rho = 0.11$ (Right).  The existing Bernoulli achievability result is from \cite{Sca15b}, and the existing noisy definite defectives (NDD) curves are from \cite{Sca20,Geb21}.  All curves are optimized over $\nu$, except that of \cite{Sca15b} which was only proved for $\nu=\ln 2$. \label{fig:comparison}}
\end{figure}

{\bf Discussion.} We illustrate our results numerically using the notion of rate (measured in bits/test), as defined in \eqref{eq:rate}. 
The bounds for both designs are illustrated in Figure \ref{fig:comparison}, where we observe a substantial improvement over the best known existing bounds.  
The horizontal part at small $\theta$ in Figure \ref{fig:comparison} corresponds to the capacity of the binary symmetric channel (namely, $\log 2 - H_2(\rho)$), and the corresponding impossibility result holds for arbitrary test designs and even adaptive algorithms (e.g., see \cite{Mal78,Bal13}).  Thus, in this regime, we have particularly strong guarantees of asymptotic optimality.  
By comparison, for higher $\theta$, we have not established optimality with respect to arbitrary designs, but we have at least established ensemble tightness (and hence exact thresholds) for the two random designs under consideration.

Among the practical algorithms, the most promising one for being asymptotically optimal in some regimes (namely, $\theta$ close to one) is noisy DD, since DD is known to exhibit such optimality in the noiseless setting \cite{Ald14a} and under the one-sided ``Z'' and ``reverse Z'' noise models \cite{Sca20,Geb21}.  However, our results indicate that under symmetric noise, the existing noisy DD bounds fall short of the information-theoretic threshold for all $\theta \in (0,1)$ (at least for the $\rho$ values we considered).  

Another interesting implication of our results concerns the optimal choice of $\nu$ for the two designs.  In the noiseless setting, the following is well known \cite{Ald15,Joh16}:
\begin{itemize}
    \item For the Bernoulli design, $\nu = \log 2$ is optimal for $\theta \le \frac{1}{3}$, and $\nu = 1$ is optimal for almost all higher $\theta$ except for a narrow ``transition region''.
    \item For the near-constant weight design, $\nu = \log 2$ is optimal for all $\theta$.
\end{itemize}
In the noisy case, under both designs, the choice $\nu = \log 2$ is still optimal for sufficiently small $\theta$ such that the capacity bound $n = \frac{k\log\frac{p}{k}}{\log 2 - H_2(\rho)} (1+o(1))$ is achieved.  This has the natural interpretation that in this regime, $\nu$ should be chosen to make the tests maximally  in informative sense of attaining $H(Y) = \log 2$ (i.e., each test is equally likely to be positive or negative).

On the other hand, when $\theta$ is large enough, the division by $1-\theta$ makes the second term in each bound become dominant, and optimizing $\nu$ eventually amounts to optimizing that term.  (There is also a ``transition region'' where the optimal $\nu$ is chosen to equate the two terms.)  In the Bernoulli design, the only $\nu$ dependence is via $\nu e^{-\nu}$, which is the same term that enters in the noiseless setting, so we again have $\nu = 1$ being optimal for large enough $\theta$ (see Figure \ref{fig:nu_plot} (Right) for an illustration).  However, for the near-constant weight design, the choice of $\nu$ becomes more complicated even at high $\theta$ values.  In particular, we found that in contrast to the noiseless setting, the choice $\nu = \log 2$ can be strictly suboptimal (again see Figure \ref{fig:nu_plot}).  Having said this, the degree of suboptimality is small, and we found that $\nu = \log 2$ is still a generally good choice.

For the near-constant weight design, the unusual shape of the optimal $\nu$ curve in Figure \ref{fig:nu_plot} (Right) is explained as follows:
\begin{itemize}
    \item For small enough $\theta$, the first branch of the maximum in \eqref{eq:NC_threshold} is strictly dominant, and it is maximized by $\nu = \ln 2$, which corresponds to having half positive and half negative test outcomes.
    \item For large enough $\theta$, the second branch of the maximum in \eqref{eq:NC_threshold} is strictly dominant, and the corresponding maximizer turns out to be a slowly decreasing function;
    \item In between these two regimes, there is a narrow ``transition region'' in which $\nu$ is chosen to equate the two branches of the maximum.
\end{itemize}
A similar kind of transition region is observed for the Bernoulli design (analogous to the noiseless setting \cite{Ald15}), but in that case the maximizer remains at $\nu = 1$ after the transition region.

\begin{figure}
    \begin{centering}
        \includegraphics[width=0.445\columnwidth]{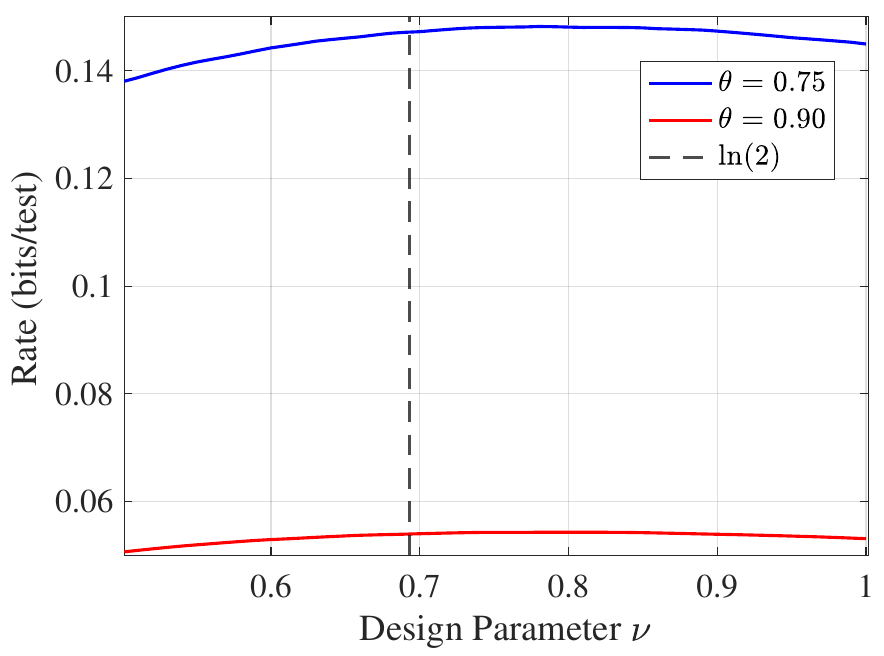} \quad \includegraphics[width=0.45\columnwidth]{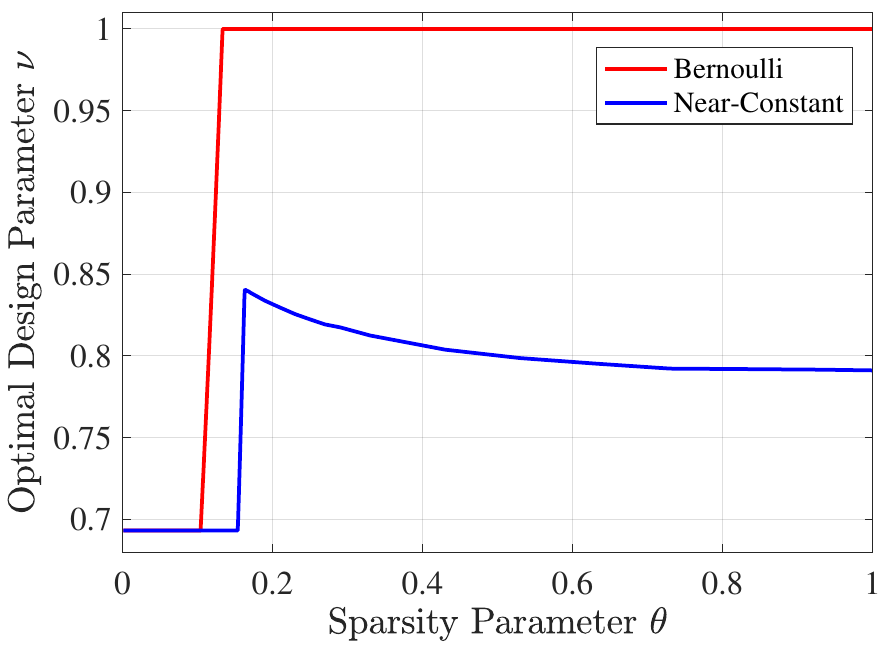}
        \par
    \end{centering}
    
    \caption{ Illustrations regarding the optimal design parameter $\nu$ (varying with the sparsity parameter $\theta$) with noise level $\rho = 0.01$. (Left) Example plots of the rate as a function of $\nu$ under the near-constant weight design.  (Right) Optimal $\nu$ as a function of the sparsity parameter $\theta$ under both designs. We note that the sudden increases after the horizontal parts are not instantaneous; both curves are continuous with respect to $\theta$. \label{fig:nu_plot}}
\end{figure}

While the proofs of our main results all have highly technical components and are quite lengthy, the general approaches taken are conceptually simple.  We provide conceptual outlines in Sections \ref{sec:overview_conv} and \ref{sec:overview_ach}, and provide the technical details in the appendices, starting with a roadmap in Appendix \ref{app:roadmap}.  The outlines will mostly be the same for the two designs, despite the differing technical details.

{\bf Note on concurrent work.} The concurrent work of Coja-Oghlan \emph{et al.}~\cite{Coj24} established the same achievability threshold as that of Theorem \ref{thm3} (stated in a different form) using a computationally efficient spatially coupled design, and also established Theorem \ref{thm:nc_convese}.  Thus, their work makes the major contribution of showing that $n^*_{\rm NC}$ is also attainable using a polynomial-time algorithm when spatial coupling is incorporated.  Despite attaining the same threshold, the two sets of achievability analyses are very different, and we believe that there is significant interest in understanding the fundamental limits of each distinct test design; in this sense, their achievability result does not imply Theorem \ref{thm3} nor vice versa, and the results for Bernoulli designs are unique to the present paper.  We also note that asymmetric binary noise models are additionally handled in \cite{Coj24}, whereas we focus on symmetric noise.

\section{Outline of Converse Proofs} \label{sec:overview_conv}

We first outline the proofs of our converse results (i.e., lower bounds on the number of tests).

{\bf Discussion of the first term.} The thresholds in \eqref{eq:ber_threshold} and \eqref{eq:NC_threshold} consist of two terms.  The converse for the first term \emph{maximized over $\nu$} is known even for \emph{arbitrary test designs} via a simple channel capacity argument \cite{Mal78,Sca16b}, and for arbitrary $\nu$, the converse for Bernoulli designs is also known via \cite{Sca15b}.  We required non-minor additional technical effort to handle the near-constant weight design with arbitrary $\nu$,\footnote{A simpler approach based on Fano's inequality (e.g., see \cite{Mal78}) could also be used, but would only give the weaker statement $\PP({\rm err}) \not\to 0$ rather than $\PP({\rm err}) \to 1$. } but the intuition behind the bound is simple: With parameter $\nu$, each test has probability roughly $e^{-\nu}$ of containing no defectives, and thus a probability roughly $e^{-\nu} \star \rho$ of being positive.  A standard information-theoretic argument then shows that each test can only reveal $H_2(e^{-\nu} \star \rho) - H_2(\rho)$ bits of information.  On the other hand, with ${p \choose k}$ possible defective sets, we require roughly $\log{p \choose k} = \big(k \log_2\frac{p}{k}\big)(1+o(1))$ bits to identify the correct one.  Comparing these quantities leads to the first term in \eqref{eq:NC_threshold}.  This part of the analysis will be detailed in Appendix \ref{app:high_conv}.


{\bf Discussion of the second term.} The second terms in \eqref{eq:conv_bern} and \eqref{eq:conv_ncc} are dominant at high values of $\theta$.  Terms of this kind (albeit much simpler ones) are widely known in the noiseless setting \cite{Ald14a,Ald15,Joh16,Coj19a}, and are based on the idea of that if a defective item is \emph{masked} (i.e., every test it is in also contains at least one other defective), then even an optimal decoder will be unable to identify it.  Intuitively, this is because the tests results are unchanged when that item is changed to be non-defective.

In the noisy setting, \emph{complete} masking is no longer the dominant error event, but we can use a similar idea.  Informally, an optimal decoder will fail when the following events hold simultaneously (see Corollary \ref{cor:restate} in Appendix \ref{app:mle} for the formal version):
\begin{itemize}
    \item[(i)] There exists a defective item $j$ that has \emph{relatively few tests} in which it is the only defective, and not too few of those tests are flipped by the noise.
    \item[(ii)] There exists a non-defective item $j'$ appearing in \emph{sufficiently many tests that don't contain any defective items}, and not too few of those tests are flipped by the noise.
\end{itemize}
The idea is that when both of these occur with suitable notions of ``few'' and ``many'', the set $(S \setminus \{j\})\cup \{j'\}$ will be favored to $S$ itself.  We proceed with further details towards making this intuition rigorous, and we will see that the formal version of conditions (i) and (ii) above lead to the terms $f_1$ and $f_2$ in the final results.  
This part of the analysis will be detailed in Appendix \ref{app:ber_con} (Bernoulli design) and Appendix \ref{app:nc_converse} (near-constant weight design), and in the following, we provide a more detailed outline of some of the main ideas and analysis techniques that we use.

{\bf Maximum-likelihood decoding.} Under a uniform prior on the defective set $S$, it is well known that the optimal decoding rule is \emph{maximum-likelihood decoding}.  Under our symmetric noise model, this is equivalent to choosing an estimate $\hat{S}$ such that as many tests as possible match what the noiseless outcome would be.  Thus, referring back to the above outline, we seek to find $j \in S$ and $j' \notin S$ such that $(S \setminus \{j\}) \cup \{j'\}$ has more such tests than $S$ itself. 

{\bf Identifying a suitable defective item.} Suppose without loss of generality that $S = \{1,\dotsc,k\}$.  We fix parameters $C >0$ and $\zeta \in (0,1)$ to be optimized at the end of our analysis (thus appearing as optimization parameters in our theorems), and consider the following ``bad event'':\footnote{The parametrization $\frac{C n \nu e^{-\nu}}{k}$ turns out to be more notationally convenient than $\frac{C' n}{k}$, so we adopt the former.} \emph{A given defective item has $\frac{C n \nu e^{-\nu}}{k}$ tests in which it is the only defective, and a fraction $\zeta$ of those tests are flipped.}  We show that if $n$ is below the stated threshold depending on $f_1$, then some $j \in \{1,\dotsc,k\}$ satisfies this with high probability.

For the Bernoulli design, the number of tests where an item is the only defective follows a multinomial distribution, which is tricky to study directly.  However, by restricting attention to $j \in \{1,\dotsc,k^{\xi}\}$ for $\xi$ close to one, we are able to use a (rigorous) Poisson approximation \cite{arenbaev1977asymptotic} that is simpler to analyze due to having independence across the $k^{\xi}$ items.  For the near-constant design, some additional thought is needed, and we adopt an idea from \cite{Coj19a}, outlined as follows:
\begin{itemize}[itemsep=0ex]
    \item Interpret the placements of items into tests as edges in a bipartite graph (or more precisely a multi-graph, since a given item can be placed in the same test more than once).
    \item Establish that among the $k\Delta$ edges connecting defective items to tests, roughly $e^{-\nu} k \Delta$ of them are connected to tests containing exactly one defective item.
    \item Use a symmetry argument to show that for a given defective $j$, the number of tests in which it is the unique defective roughly follows a hypergeometric distribution, ${\rm Hg}(k\Delta,e^{-\nu} k \Delta,\Delta)$.
\end{itemize}
We need to take this idea further by considering the \emph{conditional} distribution for defective item $j$ given the placements of defective items $1,\dotsc,j-1$,
but the conditioning turns out to have a minimal effect unless $j = \Theta(k)$.  To avoid such cases, we use the same trick as the Bernoulli design, and only run up to $j=k^{\xi}$ for $\xi < 1$ very close to one.

We omit any further details here, but re-iterate that this part of the analysis leads to the $f_1$ term in the final bound.  

{\bf Identifying a suitable non-defective item.} For both designs, the test placements from one item to the next are independent, and also independent of the noise.  As a result, we can condition on the defective placements and the noise, subject to the above ``bad event'' with parameters $(C,\zeta)$ occurring for some $j \in S$.  Even after this conditioning, each non-defective item is placed into tests independently of the others, which is convenient for the analysis.

Consider any fixed non-defective $j' \in \{k+1,\dotsc,p\}$.  When comparing the likelihoods of $(S  \setminus \{j\})\cup \{j'\}$ and $S$, there are only two types of tests that ultimately matter:
\begin{enumerate}
    \item Tests including $j$ but no item from $S \setminus \{j\}$;
    \item Tests containing no item from $S$ at all.
\end{enumerate}
This is because all other tests contribute the same amount to the likelihood even after removing $j$ and adding $j'$ (due to the ``OR'' operation in \eqref{eq:gt_symm_model}).  We are interested in counting how many tests of the above kind $j'$ is placed in, and moreover, how many of those tests are negative vs.~positive (both are possible due to the noise).  We note that the parameter $d$ (appearing in $f_2$) is related to the number of such tests that are positive.  
This part of the analysis comes down to the behavior of various binomial distributions, though it is complicated by delicate dependencies.  We show that tests of the second kind above are dominant, whereas in this part of the analysis, the tests of the first kind above only contribute to lower-order asymptotic terms. 

We omit any further details here, but re-iterate that this part of the analysis leads to the $f_2$ term in the final bound.

\section{Outline of Achievability Proofs} \label{sec:overview_ach}

In this section, we outline the proofs of our achievability results (i.e., upper bounds on the required number of tests), which come with additional challenges compared to the converse results.

The analysis will be split into error events in which the final estimate $\hat{S}$ has ``low overlap'' with $S$ or ``high overlap''.  The main novelty is in the high overlap part, whereas for the low overlap part we will build on the information-theoretic framework of \cite{Sca15b}, which we now proceed to introduce.
 
\subsection{Information-Theoretic Tools} \label{sec:it_tools}

Let $\Xv \in \{0,1\}^{n \times p}$ denote the test matrix, $\Yv \in \{0,1\}^n$ the test results, and $S \subseteq [p]$ the defective set.  Since both test designs are symmetric with respect to re-ordering items, we may focus on a specific choice of $S = s$ in our achievability analysis, say $s = [k] := \{1,\dotsc,k\}$.  Let $\Xv_s$ denote the resulting $n \times k$ sub-matrix of $\Xv$ obtained by taking the columns indexed by $s$, and similarly when $s$ is replaced by any other subset of $[p] := \{1,\dotsc,p\}$. 

An error occurs when some $\bar{s} \ne s$ is favored by the decoder (to be defined below), and the contribution to the error probability can differ significantly depending on the amount of overlap between $\bar{s}$ and $s$.  For instance, there are only relatively few sets $\bar{s}$ with $|s \setminus \bar{s}| = 1$, but the individual probability of favoring such $\bar{s}$ is relatively high due to the low overlap.  To capture this, we consider partitioning $s$ into $(\sdif,\seq)$ with $\sdif \ne \emptyset$, and we denote $\ell = |\sdif| \in \{1,\dotsc,k\}$ (so that $|\seq| = k-\ell$).  Intuitively, $\seq$ represents where an incorrect estimate overlaps with $s$, and $\sdif$ represents the differing part, and $\ell = |\sdif|$ represents the amount of deviation.

For each such $\sdif$, we introduce the quantity
\begin{gather}
    \label{12.2}\imath^n(\Xvdif;\Yv|\Xveq): = \log \frac{\PP (\Yv|\Xvdif,\Xveq)}{\PP(\Yv|\Xveq)}, 
\end{gather}
which is the log-likelihood ratio of $\Yv$ given the full test sub-matrix $\Xv_s$ vs.~the smaller sub-matrix $\Xveq$ alone.  Following the extensive literature on similar techniques for channel coding \cite{Han03}, we refer to this quantity as the \emph{information density}. Notice that the average of (\ref{12.2}) with respect to $(\Xv,\Yv)$ is the following conditional mutual information:
\begin{equation} \label{eq:mi}
    I_\ell^n:= I(\Xvdif;\Yv|\Xveq), 
\end{equation}
which depends on $(\sdif,\seq)$ only through $\ell:=|s_{\text{dif}}|$ by the symmetry of the test designs. 

The information-theoretic threshold decoder introduced in \cite{Sca15b} can be described as follows:  Fix the constants $\{\gamma_{\ell}\}_{\ell = 1}^{k}$, and search for a set $s$ of cardinality $k$ such that 
\begin{equation}
    \imath^n(\Xvdif;\Yv|\Xveq) \ge \gamma_{|\sdif|}, \quad \forall (\sdif,\seq)\text{ such that }|\sdif| \ne 0. \label{eq:dec}
\end{equation}
If multiple such $s$ exist, or if none exist, then an error is declared.  This decoder is inspired by analogous thresholding techniques from the channel coding literature \cite{Fei54,Han03}, with the rough idea being that the numerator in \eqref{12.2} (i.e., the overall likelihood) tends to be much higher than the denominator for the correct set $s$, whereas for an incorrect $s$ that has overlap $\seq$ with the correct one, such behavior is highly unlikely.

{\bf Limitation of existing approach.} While the analysis of the decoder \eqref{eq:dec} in \cite{Sca15b} leads to optimal thresholds in the noiseless setting, we found that even a sharpened analysis under their framework leads to a suboptimal result in the noisy setting.  In more detail, their analysis leads to three terms: (i) a mutual information based term for $\ell = k$; (ii) a mutual information based term for $\ell =1$; and (iii) a term capturing the concentration behavior of the information density.  We found that a tight concentration analysis can lead to improvements in the third of these.\footnote{Namely, in the denominator of the final result we get $\nu e^{-\nu}(1-e^{-D(1/2 \| \rho)})$ for the Bernoulli design, and $\nu^2 - \nu \log(e^{\nu} - 1 + e^{-D(1/2\|\rho)})$ for the near-constant weight design.  We state these without proof because they would significantly lengthen the paper but still give a suboptimal final result.}  However, there still exists a parameter (called $\delta_2$ in \cite{Sca15b}) that trades off the second and third terms and leaves the result significantly suboptimal, and we were unable to identify any promising route to avoiding this limitation when using the framework of \cite{Sca15b}.

\subsection{A Hybrid Decoding Rule} \label{sec:hybrid}

In view of the above limitations, we introduce a hybrid decoding rule that allows us to use a novel maximum-likelihood analysis in the high-overlap regime, while still relying on the techniques of \cite{Sca15b} in the low-overlap regime.  The decoder is as follows: \emph{Search for a set $s$ of cardinality $k$ such that both of the following are true:
\begin{itemize}
    \item[(i)] It holds that 
    \begin{equation}
        \PP(\Yv|\Xv_{s}) > \PP(\Yv|\Xv_{s'}), \quad \forall s' \text{ such that }1\le |s \setminus s'| \le \frac{k}{\log k}, \label{eq:dec_restricted0}
    \end{equation}
    where we implicitly also constrain $s'$ to have cardinality $k$.
    \item[(ii)] It holds (for suitably chosen $\{\gamma_{\ell}\}_{\frac{k}{\log k} < \ell \le k}$) that
    \begin{equation}
        \imath^n(\Xvdif;\Yv|\Xveq) \ge \gamma_{|\sdif|}, \quad \forall (\sdif,\seq)\text{ such that }|\sdif| > \frac{k}{\log k}, \label{eq:dec_restricted}
    \end{equation}
    where we implicitly also constrain $(\sdif,\seq)$ to be a disjoint partition of $s$.
\end{itemize}}
\noindent If no unique $s$ exists satisfying both of these conditions, then an error is declared.  Since we are using $\ell$ to represent the size of the set difference between the true defective set and an incorrect estimate (i.e., $\ell = |S \setminus \hat{S}| = |\hat{S} \setminus S|$), we will refer to the above cases as the low-$\ell$ (high overlap) and high-$\ell$ (low overlap) regimes respectively.

\subsection{Analysis of the Low Overlap (High $\ell$) Regime}

For the Bernoulli design, the analysis of the threshold decoder \eqref{eq:dec_restricted} for $\ell > \frac{k}{\log k}$ will be taken directly from \cite{Sca15b}, and no change is needed.  However, most of the analysis in \cite{Sca15b} relied heavily on the test matrix having independent rows, so it is not applicable to the near-constant weight design.  Thus, we need to analyze the information density and mutual information to fill in these gaps, with the main steps being as follows:
\begin{itemize}
    \item We show that the mutual information $I_{\ell}^n$ has the same asymptotic behavior as that of the Bernoulli design;
    \item We establish a suitable concentration bound for the information density $\imath^n(\Xvdif;\Yv|\Xveq)$.
\end{itemize}
Due to the lack of independence across tests, both of these require more effort than their counterpart for the Bernoulli design.  On the other hand, for the concentration bound we are in the fortunate position of not needing precise constant factors, as those are only needed in the low-$\ell$ regime which we will handle using different methods.

This part of the analysis will be detailed in Appendix \ref{app:high_both} (common analysis for both designs) and Appendix \ref{app:high_ncc} (proofs of additional technical lemmas for the near-constant weight design).

\subsection{Analysis of the High Overlap (Low $\ell$) Regime}

For the low-$\ell$ regime, we follow a similar structure to the converse bound, but we now need to consider all $\ell=1,\dotsc,\frac{k}{\log k}$ instead of only $\ell=1$.  Specifically, we again consider (optimal) maximum-likelihood decoding, and accordingly, we seek to show that $S$ is preferred to any other set of the form $(S \setminus \mathcal{J}) \cup \mathcal{J}'$ with $\mathcal{J} \subset S$ and $\mathcal{J}' \subset \{1,\dotsc,p\} \setminus S$ both having cardinality $\ell$ for some $\ell\in \big\{1,\dotsc,\frac{k}{\log k}\big\}$.  Recall that ``being preferred'' is equivalent to having more tests that match what the noiseless outcome would have been.

Generalizing the idea from the converse proof, we consider the event that a given defective subset $\mathcal{J} \subset S$ of size $\ell$ has $\frac{C n \nu e^{-\nu} \ell}{k}$ tests where at least one of its items is included but none from $S \setminus \mathcal{J}$ are included, and a fraction $\zeta$ of those tests are flipped.  As well as generalizing beyond $\ell = 1$, we now also need to simultaneously consider \emph{all} possible choices of $(C,\zeta)$ that can occur. 

We let $k_{\ell,C,\zeta}$ denote the (random) number of size-$\ell$ subsets of $S$ such that the above event holds with parameters $(C,\zeta)$, and the first step is to rule out many $(\ell,C,\zeta)$ triplets by showing that $k_{\ell,C,\zeta}=0$ with high probability.  This is done by directly studying $\EE[k_{\ell,C,\zeta}]$ under the respective design, and then applying Markov's inequality and a union bound.  This part of the analysis leads to the term $f_1$ in the final bound.

For the $(\ell,C,\zeta)$ triplets for which we cannot guarantee $k_{\ell,C,\zeta}=0$, we need to consider the placements of non-defectives.  We again follow a similar argument to the converse, but with more general $\ell \ge 1$ and all relevant $(C,\zeta)$ pairs being considered simultaneously, with the help of high-probability events and union bounds.  
Moving from $\ell=1$ to $\ell > 1$ adds particular technical challenges, but conceptually we still follow similar steps as in the converse analysis, and this part of the analysis leads to the term $f_2$ in the final bound.

This part of the analysis will be detailed in Appendix \ref{app:ber_achi} (Bernoulli design) and Appendix \ref{app:nc_achi} (near-constant weight design).

\section{Conclusion}

We have derived exact asymptotic thresholds on the number of tests required for probabilistic group testing under binary symmetry noise, under both the Bernoulli design and the near-constant weight design.  Our results bring the degree of understanding of this noisy group testing model significantly closer to the noiseless setting.  With a spatial coupling design concurrently being shown to attain the same threshold as the near-constant weight design in a computationally efficient manner \cite{Coj24}, we believe that the main remaining open problem is as follows: {\em Is this threshold also asymptotically optimal among the entire class of non-adaptive designs?}  The analogous question is indeed true in the noiseless case \cite{Coj19a}, but answering it in the noisy case may be significantly more challenging.  While an affirmative answer may be expected based on the noiseless setting, a negative answer also seems highly plausible given that different error events are dominant here; in fact, we have already observed notable differences to the noiseless setting, such as the design parameter $\nu = \log 2$ not always being optimal.


\newpage
{\centering \huge \bf Appendix \par}

\begin{appendix}

    \section{Roadmap of the Proofs} \label{app:roadmap}

    In all four of our main results, the threshold consists of a maximum of two terms, the second of which itself consists of an optimization of the parameters $(C,\zeta)$.  In each case, these two terms will arise either from two separate analyses, or from two different error events within the same analysis.  Forward-references to the relevant subsequent appendices will be given in Section \ref{app:overview} below.

    {\bf Converse results.} For the converse analysis, the ``max'' operation comes from simply proving two different converse bounds and taking the stronger of the two.  While we do not explicitly split the analysis into $\ell=1,\dotsc,k$ for the converse, we still refer to the first term as the ``high-$\ell$'' analysis, and the second term as the ``low-$\ell$'' analysis.  This is because:
    \begin{itemize}
        \item The starting point for the first term is a lower bound on the error probability from \cite{Sca15b} that depends on $\ell$ but we specialize to $\ell = k$, which amounts to considering error events where the incorrect estimate is completely disjoint from the true defective set.
        \item For the second term, we consider error events under which an estimate of the form $\hat{S} = (S  \setminus \{j\})\cup \{j'\}$ is favored over $S$ by the decoder, thus corresponding to $\ell=1$ since we interpret $\ell = |S \setminus \hat{S}| = |\hat{S} \setminus S|$.
    \end{itemize}

    {\bf Achievability results.} For the achievability results, we recall the hybrid decoder introduced in  \eqref{eq:dec_restricted0}--\eqref{eq:dec_restricted} in Section \ref{sec:hybrid}, which for a given candidate set $s$, does two different checks that correspond to considering high-$\ell$ and low-$\ell$ errors separately:
    \begin{itemize}
        \item The first check (corresponding to low $\ell$) is simply whether $s$ has a higher likelihood than all $s'$ within distance $\frac{k}{\log k}$, so our analysis comes down to that of maximum-likelihood decoding.
        \item The second check (corresponding to high $\ell$) is based on thresholding the information density (as defined in \eqref{12.2}), and we study its conditions for success by following the framework of \cite{Sca15b} (but with significant gaps to fill in the case of the near-constant weight design).
    \end{itemize}
    We claim that in order for our decoder to succeed, the following conditions are sufficient:
    \begin{enumerate}
        \item The true defective set (say $s = \{1,\dotsc,k\}$ without loss of generality) satisfies \eqref{eq:dec_restricted0}--\eqref{eq:dec_restricted};
        \item For any other set $\tilde{s}$ of cardinality $k$ with $|s \setminus \tilde{s}| > \frac{k}{\log k}$, it holds that $\imath^n(\Xv_{\tilde{s} \setminus s}; \Yv | \Xv_{\tilde{s} \cap s}) < \gamma_{\ell}$, where $\ell = |\tilde{s} \setminus s| = |s \setminus \tilde{s}|$.
    \end{enumerate}
    To see that these conditions are sufficient, first consider some incorrect estimate $\tilde{s}$ with $|s \setminus \tilde{s}| \le \frac{k}{\log k}$.  Such $\tilde{s}$ is ruled out by the low-$\ell$ part of our decoder (i.e., \eqref{eq:dec_restricted0}), because the correct $s$ is within distance $\frac{k}{\log k}$ and has a higher likelihood due to condition 1 above.  On the other hand, for any incorrect estimate $\tilde{s}$ with $|s \setminus \tilde{s}| > \frac{k}{\log k}$, condition 2 above immediately implies that the high-$\ell$ condition of the decoder (i.e., \eqref{eq:dec_restricted}) is not satisfied for $\tilde{s}$.  Thus, we conclude that the true defective set $s$ satisfies \eqref{eq:dec_restricted0}--\eqref{eq:dec_restricted} but none of the incorrect sets $\tilde{s}$ do so, as desired.

    In the subsequent appendices, we will write $\PP({\rm err})$ to denote the failure probabilities associated with the above two sufficient conditions, where it will be clear from context whether we mean condition 1 or condition 2.  We will show that attaining $\PP({\rm err}) \to 0$ for condition 1 leads to the first term in each achievability result's requirement on $n$, and attaining $\PP({\rm err}) \to 0$ for condition 2 leads to the second term.  The overall requirement on $n$ is then the stricter of these two requirements.

    \subsection{Overview of the Appendices} \label{app:overview}

    We outline the structure of the remaining appendices as follows: 
    \begin{itemize}
        \item In Appendix \ref{app:mle}, we present necessary and sufficient conditions for maximum-likelihood decoding to fail, which will be used in all four of our low-$\ell$ proofs.
        \item In Appendix \ref{app:ber_con}, we establish the low-$\ell$ converse result for the Bernoulli design.
        \item In Appendix \ref{app:ber_achi}, we establish the low-$\ell$ achievability result for the Bernoulli design.
        \item In Appendix \ref{app:nc_converse}, we establish the low-$\ell$ converse result for the near-constant weight design.
        \item In Appendix \ref{app:nc_achi}, we establish the low-$\ell$ achievability result for the near-constant weight design.
        \item In Appendix \ref{app:nc_lemms}, we state and prove various technical lemmas that are used throughout Appendices \ref{app:nc_converse} and \ref{app:nc_achi}.
        \item In Appendix \ref{app:high_both}, we establish the high-$\ell$ achievability result for both designs (with the Bernoulli design coming from \cite{Sca15b} with only minor changes).
        \item In Appendix \ref{app:high_conv}, we establish the high-$\ell$ converse result for both designs (with the Bernoulli design coming directly from \cite{Sca15b}).
        \item In Appendix \ref{app:high_ncc}, we prove two technical lemmas used for the high-$\ell$ results under the near-constant weight design.
    \end{itemize}
    Before proceeding, we also state a useful observation regarding the scaling of the number of tests $n$, and some useful technical lemmas.

    \subsection{Note on the Scaling of $n$} \label{sec:n_scaling}

    Both our achievability and converse results are stated in terms of a threshold $n^*$ scaling as $\Theta(k \log p)$.  Recall that throughout the entire paper, we consider the regime $k = \Theta(p^{\theta})$ with $\theta \in (0,1)$, which implies that $\log k = \Theta(\log p)$.  This implies that $n^* = \Theta(k \log p) = \Theta(k \log k)$ in all of our results.

    In view of this observation, we will assume throughout the analysis that the number of tests $n$ itself also scales as $\Theta(k \log p) = \Theta(k \log k)$.  For the achievability part, this is justified by the fact that the theorems state that we can achieve $\pe \to 0$ while satisfying $n \le (1+\eta)n^*$, and we are free to let this inequality hold with equality or ``near-equality'' (say $n \in [n^*,(1+\eta)n^*]$). For the converse part, the theorems state that attaining $\pe \not\to 1$ requires $n\geq (1-\eta) n^*$, and we will prove the contrapositive statement that $\pe \to 1$ whenever $n < (1-\eta) n^*$.  The assumption $n = \Theta(k \log p)$ is then justified by noting that if $n = o(k \log p)$, then existing results already show that $\pe \to 1$ even in the case of noiseless tests \cite{Coj19a}.
    

    \subsection{Useful Technical Lemmas}

    The following (anti)-concentration bounds for binomial random variables will be used frequently in our analysis.  Recall the notation $D(a\|b) = a\log\frac{a}{b} + (1-a)\log\frac{1-a}{1-b}$ for binary KL divergence.

\begin{lem}\label{binoconcen}
    {\rm((Anti-)Concentration of Binomial Random Variables, e.g., \cite[Sec.~4.7]{ash1990information}, \cite{arratia1989tutorial})} For $X\sim \mathrm{Bin}(N,q)$, we have the following:
    \begin{itemize}
        \item {\rm (Chernoff bound)} If 
        $k\leq Nq$, then we have 
    \begin{equation}
        \PP(X\leq k)\leq \exp\Big(-N \cdot D\Big(\frac{k}{N}\big\|q\Big)\Big),\label{eq:chernoff1}
    \end{equation}      
    which remains true when replacing $\PP(X\le k)$ with $\PP(X\ge k)$ for $k\ge Nq$. This implies 
          \begin{equation}
              \PP(X\leq k)\leq \exp \Big(-Nq\Big(\frac{k}{Nq} \log\frac{k}{Nq}+1-\frac{k}{Nq}\Big)\Big)\label{eq:chernoff2}
          \end{equation}
for $k\le Nq$, which remains true when replacing $\mathbbm{P}(X\le k)$ with $\PP(X\ge k)$ for $k\ge Nq$.  
        \item {\rm (Anti-concentration)} For any $k \in \{1,\dotsc,N-1\}$, we have
        \begin{equation}
            \label{eq:anti1}
            \PP(X = k) \ge \frac{1}{2\sqrt{2k(1-\frac{k}{N})}}\exp\Big(-N\cdot  D\Big(\frac{k}{N}\big\|q\Big)\Big),
        \end{equation}
        Moreover, for $k \in \{0,1,\dotsc,N\}$, we have  
        \begin{equation}
            \label{eq:anti2}
            \PP(X=k) \geq \frac{1}{\sqrt{2N}}\exp\Big(-N \cdot D\Big(\frac{k}{N}\big\|q\Big)\Big).
        \end{equation}
    \end{itemize}
\end{lem}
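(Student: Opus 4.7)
The plan is to treat the two parts of the lemma separately, noting that both are classical and can be proved by standard arguments; since the citations \cite{ash1990information,arratia1989tutorial} indicate that we are essentially recording known facts, the proposal focuses on how I would assemble the proof rather than on novel ideas.

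For the Chernoff bound, I would begin with the standard moment-generating-function argument: write $\PP(X\le k)\le \EE[e^{-\lambda X}]e^{\lambda k}$ for any $\lambda>0$, compute $\EE[e^{-\lambda X}]=(1-q+qe^{-\lambda})^N$ using the independence of the underlying Bernoullis, and then optimize in $\lambda$. The optimizing choice satisfies $qe^{-\lambda}/(1-q+qe^{-\lambda})=k/N$, and substituting back turns the exponent into exactly $-N\cdot D(k/N\|q)$, yielding \eqref{eq:chernoff1}. The statement for the upper tail $\PP(X\ge k)$ with $k\ge Nq$ is symmetric: either rerun the argument with $\lambda>0$ applied to $e^{\lambda X}$, or use the identity $N-X\sim\mathrm{Bin}(N,1-q)$ and the symmetry $D(k/N\|q)=D(1-k/N\|1-q)$ to reduce to the lower-tail case.

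To pass from \eqref{eq:chernoff1} to \eqref{eq:chernoff2}, I would establish the elementary pointwise inequality
\begin{equation*}
    D(p\|q)\;\ge\; p\log\tfrac{p}{q}+q-p \qquad\text{for all }p,q\in(0,1),
\end{equation*}
which after multiplying by $N$ and setting $p=k/N$ is precisely the simpler bound in the lemma. The cleanest way to verify this is to fix $q$, set $f(p)=D(p\|q)-p\log(p/q)-(q-p)$, compute $f(q)=0$ and $f'(p)=-\log\frac{1-p}{1-q}$, and observe that $f$ decreases on $(0,q)$ and increases on $(q,1)$, so $f\ge 0$ throughout.

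For the anti-concentration bounds, I would start from the exact expression $\PP(X=k)=\binom{N}{k}q^k(1-q)^{N-k}$ and rewrite it as
\begin{equation*}
    \PP(X=k)=\binom{N}{k}\Big(\tfrac{k}{N}\Big)^k\!\Big(1-\tfrac{k}{N}\Big)^{N-k}\cdot\exp\!\big(-N\cdot D(k/N\|q)\big),
\end{equation*}
so the problem reduces to lower-bounding the prefactor. I would apply the two-sided Stirling bound $\sqrt{2\pi n}\,(n/e)^n\le n!\le \sqrt{2\pi n}\,(n/e)^n e^{1/(12n)}$ to each of $N!$, $k!$, and $(N-k)!$ inside $\binom{N}{k}$; after cancellation the dominant factor becomes $\big(2\pi k(1-k/N)\big)^{-1/2}$, and one absorbs the residual constants $e^{1/(12k)+1/(12(N-k))}$ to recover the displayed constant $1/(2\sqrt{2})$ in \eqref{eq:anti1}. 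The weaker, uniform bound \eqref{eq:anti2} I would obtain either by the same Stirling estimate after replacing $k(1-k/N)$ with the trivial upper bound $N$, or by a direct argument using the mode of the binomial and the fact that $\PP(X=k_0)\ge 1/(N+1)$ at the mode together with a comparison to the peak. The only nontrivial bookkeeping is the constant in \eqref{eq:anti1}, which requires checking the inequality $\sqrt{2\pi}\,e^{1/(12k)+1/(12(N-k))}\le 2\sqrt{2}$ carefully at the endpoints $k=1$ and $k=N-1$; this is where I would expect the main (minor) obstacle to lie, and it can be handled by a direct computation for small $N$ and the monotonicity of the exponential factor for $N$ large.
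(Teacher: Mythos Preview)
The paper does not prove this lemma; it is stated with citations to \cite{ash1990information,arratia1989tutorial} and used as a black box, so there is no ``paper's proof'' to compare against. Your proposal is the standard route and is essentially correct.

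One small point worth flagging: the constant $1/(2\sqrt{2})$ in \eqref{eq:anti1} is actually \emph{sharp}, being attained with equality at $N=2$, $k=1$ (both sides equal $2q(1-q)$). Consequently the Stirling-only inequality you write, $\sqrt{2\pi}\,e^{1/(12k)+1/(12(N-k))}\le 2\sqrt{2}$, genuinely fails at $k=N-k=1$ (the left side is $\sqrt{2\pi}\,e^{1/6}\approx 2.96>2\sqrt{2}$), so the ``direct computation for small $N$'' you mention is not optional but necessary. Once \eqref{eq:anti1} is in hand for $k\in\{1,\dots,N-1\}$, your derivation of \eqref{eq:anti2} via $k(1-k/N)\le N/4$ gives exactly $1/\sqrt{2N}$, and the endpoints $k\in\{0,N\}$ follow since $\exp(-N D(0\|q))=(1-q)^N=\PP(X=0)$ and $\sqrt{2N}\ge 1$. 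The Chernoff portion and the passage from \eqref{eq:chernoff1} to \eqref{eq:chernoff2} via $D(p\|q)\ge p\log(p/q)+q-p$ are correct as written.
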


We also state the following related lemma regarding binomial coefficients.

\begin{lem} \label{lem:coeff_bounds}
    {\em (Bounds on Binomial Coefficients, e.g., \cite[Sec.~4.7]{ash1990information})} Given positive integers $N$ and $k$ with $k \in \{1,\dotsc,N-1\}$, we have
    \begin{equation}
        \frac{\sqrt{\pi}}{2} \cdot \frac{\exp(N H_2(k/N))}{\sqrt{2\pi k(1-k/N)}} \leq \binom{N}{k}\leq \frac{\exp(N H_2(k/N))}{\sqrt{2\pi k(1-k/N)}}.
    \end{equation}
    Moreover, for $k \in \{0,1,\dotsc,N\}$, we have
    \begin{equation}
         \frac{\exp(N H_2(k/N))}{\sqrt{2N}} \leq \binom{N}{k}\leq \exp(N H_2(k/N)).
    \end{equation}
\end{lem}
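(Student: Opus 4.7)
The plan is to derive both pairs of inequalities from Stirling's two-sided approximation
\begin{equation*}
    \sqrt{2\pi n}\,(n/e)^n \leq n! \leq \sqrt{2\pi n}\,(n/e)^n\,e^{1/(12n)}
\end{equation*}
combined with the algebraic identity
\begin{equation*}
    \exp(NH_2(k/N)) = \frac{N^N}{k^k (N-k)^{N-k}},
\end{equation*}
which follows by expanding $NH_2(k/N) = k\log(N/k) + (N-k)\log(N/(N-k))$. This identity is what converts factorial bounds on $\binom{N}{k} = N!/(k!(N-k)!)$ into the entropy exponent appearing in the lemma.

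For the tighter bounds, valid for $k \in \{1,\dots,N-1\}$, I substitute the Stirling bounds into the ratio $N!/(k!(N-k)!)$. The $(n/e)^n$ pieces collapse via the identity above to $\exp(NH_2(k/N))$, while the $\sqrt{2\pi n}$ pieces combine to $\sqrt{N/(2\pi k(N-k))} = 1/\sqrt{2\pi k(1-k/N)}$. For the upper bound on $\binom{N}{k}$, I apply the upper Stirling bound to $N!$ and the lower Stirling bounds to $k!$ and $(N-k)!$; for the lower bound, I reverse the roles. The residual factor of the form $\exp\bigl(O(1/N) + O(1/k) + O(1/(N-k))\bigr)$ is then tracked carefully to extract the stated constants $1$ and $\sqrt{\pi}/2$. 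This bookkeeping is standard and is carried out in detail in the cited reference, so there is no need to re-derive it.

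For the looser bounds valid for all $k \in \{0,1,\dots,N\}$, I argue by cases. At $k \in \{0,N\}$, one has $\binom{N}{k} = 1$ and $\exp(NH_2(k/N)) = 1$, so both inequalities reduce to $1/\sqrt{2N} \leq 1 \leq 1$. For $k \in \{1,\dots,N-1\}$, the upper bound follows from the tight version together with the observation that $\sqrt{2\pi k(1-k/N)} \geq \sqrt{\pi} > 1$ (the minimum being attained at $N=2, k=1$). For the lower bound, it suffices to verify
\begin{equation*}
    \frac{\sqrt{\pi}/2}{\sqrt{2\pi k(1-k/N)}} \geq \frac{1}{\sqrt{2N}},
\end{equation*}
which upon squaring and clearing denominators reduces to $N^2 \geq 4k(N-k)$, i.e., the trivial inequality $(N-2k)^2 \geq 0$.

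The only delicate step is tracking Stirling's error term precisely enough to extract the sharp constant $\sqrt{\pi}/2$ rather than a slightly worse one; this is mechanical but finicky, because the naive two-sided Stirling bound produces a residual $\exp(-1/(12k) - 1/(12(N-k)))$ in the lower bound that already falls below $\sqrt{\pi}/2$ at the extremal case $N=2,k=1$, so a tighter form of Stirling (or a direct verification of boundary cases) is needed. Since the lemma is stated as a textbook fact and used only as a black box in the group testing analysis, the cleanest approach is to invoke it as cited rather than reprove the sharp constants from scratch.
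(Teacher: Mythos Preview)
The paper does not supply a proof of this lemma at all; it is stated with a citation to \cite{ash1990information} and used purely as a black box. Your proposal correctly identifies this and outlines the standard Stirling-based derivation, including the elementary verification that the looser bounds follow from the tighter ones via $(N-2k)^2 \ge 0$ and the boundary cases $k\in\{0,N\}$; this is exactly the expected approach, and your closing remark that the cleanest route is to invoke the cited reference matches what the paper actually does.
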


\subsection{Table of Notation}

The main recurring notation used throughout the appendices is summarized in Table \ref{tbl:notation}.

\begin{table}

\caption{List of recurring notation. \label{tbl:notation}}

{\small
\begin{tabular}{|c|c|}
\hline 
\multicolumn{2}{|c|}{\textbf{Introduced in the main text}}\tabularnewline
\hline 
$p,k,n$ & Number of items, defectives, and tests\tabularnewline
\hline 
$\rho,\nu,\theta$ & Noise level, design parameter, sparsity parameter $(k=\Theta(p^{\theta})$)\tabularnewline
\hline 
$\Delta$ & Number of placements in near-constant weight design ($\Delta=\frac{\nu n}{k}$)\tabularnewline
\hline 
$S,s$ & Defective set (random variable, specific realization)\tabularnewline
\hline 
$\mathbf{X},\mathbf{Y},\mathbf{Z}$ & Test matrix, test results, noise variables\tabularnewline
\hline 
$(s_{{\rm dif}},s_{{\rm eq}})$ & Partition of $s$ into two disjoint sets\tabularnewline
\hline 
$\ell$ & Size of $s_{{\rm dif}}$ and/or size of $s\setminus s^{\prime}$ for
true $s$ and estimated $s^{\prime}$\tabularnewline
\hline 
$C$ & Parameter associated with number of non-masked tests for defectives\tabularnewline
\hline 
$\zeta$ & Parameter associated with fraction of non-masked tests that are flipped\tabularnewline
\hline 
$f_{1}$ & Function associated with some defective satisfying a ``bad'' event\tabularnewline
\hline 
$f_{2},g$ & Functions associated with some non-defective satisfying a ``bad''
event\tabularnewline
\hline 
$d,d^{*}$ & Parameter associated with number of ``relevant'' positive tests a non-defective is in
\tabularnewline
\hline 
$\imath^{n},I_{\ell}^{n}$ & Information density, conditional mutual information\tabularnewline
\hline 
$j,\mathcal{J}$ & Generic defective item / size-$\ell$ set of defective items \tabularnewline
\hline 
$j^{\prime},\mathcal{J}^{\prime}$ & Generic non-defective item / size-$\ell$ set of non-defective items\tabularnewline
\hline 
\multicolumn{2}{|c|}{\textbf{Introduced in the appendices}}\tabularnewline
\hline 
$\mathcal{L}$ & Likelihood function\tabularnewline
\hline 
$N_{\mathbf{X},\mathbf{Y}}(S)$ & Number of tests that are correct under $S$\tabularnewline
\hline 
$\mathcal{M}_{\mathcal{J}}$ & Tests that contain defectives from $\mathcal{J}$ but not other
defectives\tabularnewline
\hline 
$\mathcal{M}_{\mathcal{J}0},\mathcal{M}_{\mathcal{J}1}$ & Negative (resp.~positive) tests in $\mathcal{M}_{\mathcal{J}}$\tabularnewline
\hline 
$\mathcal{N}_{0}$ & Tests that contain no defectives\tabularnewline
\hline 
$\mathcal{N}_{00},\mathcal{N}_{01}$ & Negative (resp.~positive) tests in $\mathcal{N}_{0}$\tabularnewline
\hline 
$M_{(\cdot)},N_{(\cdot)}$ & Cardinalities of $\mathcal{M}_{(\cdot)}$ and $\mathcal{N}_{(\cdot)}$\tabularnewline
\hline 
$\mathcal{K}_{\ell,C,\zeta}$,$k_{\ell,C,\zeta}$ & Set of size-$\ell$ subsets of $s$ satisfying $(C,\zeta)$ conditions,
its cardinality\tabularnewline
\hline 
$G_{\mathcal{J},\mathcal{J}^{\prime},1},G_{\mathcal{J},\mathcal{J}^{\prime},2}$ & Tests in $\mathcal{N}_{01}\cup\mathcal{M}_{\mathcal{J}1}$ (resp.~$\mathcal{N}_{00}\cup\mathcal{M}_{\mathcal{J}0}$)
containing an item from $\mathcal{J}$\tabularnewline
\hline 
$\mathcal{M}_{j}^{\prime}$, $M_{j}^{\prime}$, $\mathcal{K}_{C,\zeta}^{\prime}$,$k_{C,\zeta}^{\prime}$ & Variants that exclude multiple identical placements (near-const.~design)\tabularnewline
\hline 
$\mathbf{T}_{s}$ & Unordered multi-set indicating test placements (near-const.~design)\tabularnewline
\hline 
$\mathcal{M}_{\mathcal{J},\mathcal{J}^{\prime}}$,$\mathcal{N}_{0,\mathcal{J}^{\prime}}$ & Tests in $\mathcal{M}_{\mathcal{J}}$ (resp.~$\mathcal{N}_{0}$)
containing an item from $\mathcal{J}^{\prime}$\tabularnewline
\hline 
$\text{\ensuremath{\mathscr{A}_{1},\mathscr{A}_{2},}}\,{\rm etc.}$ & Used for high-probability events
\tabularnewline
\hline 
$A$ & Equal to $C(1-2\zeta)$, captures the dependence of $g$ on $(C,\zeta)$
\tabularnewline
\hline 
$\hat{f}_{2}$ & Variant of $f_{2}$ equaling 0 outside a certain range\tabularnewline
\hline 
$\xi$ & Generic parameter taken close to $1$\tabularnewline
\hline 
$\psi_{\ell}$ & Function representing concentration bound for $\imath^{n}$\tabularnewline
\hline 
$\alpha$ & Limiting value of $\frac{\ell}{k}$\tabularnewline
\hline 
$n_{1},n_{2}$ & Number of tests with (resp.~without) an item from $s_{{\rm eq}}$\tabularnewline
\hline 
$\mathbf{Y}_{1},\mathbf{Y_{2}}$ & Results of tests with (resp.~without) an item from $s_{{\rm eq}}$\tabularnewline
\hline 
$M$ & Number of tests with an item from $s_{{\rm dif}}$ but not $s_{{\rm eq}}$\tabularnewline
\hline 
$V$ & Number of positive tests in $\mathbf{Y}_{2}$\tabularnewline
\hline 
$\mathsf{P}$ & Probabilities implicitly conditioned on $\mathbf{X}_{s_{{\rm eq}}}$\tabularnewline
\hline 
\end{tabular}
}
\end{table}



\section{Necessary and Sufficient Conditions for Maximum-Likelihood Decoding} \label{app:mle}

As noted in Section \ref{sec:overview_conv}, the optimal decoding rule for minimizing the overall error probability (under a uniform prior on the defective set $S$) is maximum-likelihood decoding:
\begin{equation}\label{eq:mle}
    \widehat{S} = \text{arg~max}_{s'\in \mathcal{S}_k} \mathcal{L}(s'), \text{   where }\mathcal{L}(S):=\PP(\Yv|S,\Xv),
\end{equation}
with $\mathcal{S}_k$ being the set of all $p \choose k$ subsets of $[p]$ of size $k$.  Our converse analysis will characterize conditions under which MLE has $\pe \to 1$, which implies the same for any decoder.

In our achievability analysis, in view of the two sufficient conditions identified in Appendix \ref{app:roadmap}, we will consider a \emph{restricted} MLE decoder such that the argmax in \eqref{eq:mle} is restricted to sets within distance $\frac{k}{\log k}$ of the true one.  Note that this does not imply that our decoder (introduced in Section \ref{sec:hybrid}) actually has knowledge of the true defective set; rather, restricted MLE is a hypothetical decoder whose success coincides with one of the sufficient conditions identified in Appendix \ref{app:roadmap}.  Thus, when the true defective set is $s$, we are interested in the following:
\begin{equation}\label{eq:mle_restricted}
    \widehat{S}' = \text{arg~max}_{s' \in \mathcal{S}_k \,:\, |s \setminus s'| \le \frac{k}{\log k}} \mathcal{L}(s').
\end{equation}
In fact, it suffices to also consider this restricted problem in our converse analysis, observing that if $\widehat{S}' \ne s$ in \eqref{eq:mle_restricted} then we clearly also have $\widehat{S} \ne s$ in \eqref{eq:mle}.


For a given set $s'\subset [p]$ with $|s'|=k$, we say that a test is correct (with respect to $s'$) if its result matches what the noiseless result would be under defective set $s'$.  Given the test matrix and outcomes, $(\Xv,\Yv)$, we define $N_{\Xv,\Yv}(s')$ as the number of correct tests. Observe that \begin{align}\label{eq:likeli_via_corr}
    \mathcal{L}(s')=\rho^{n-N_{\Xv,\Yv}(s')}(1-\rho)^{N_{\Xv,\Yv}(s')},
\end{align} 
so due to the fact that $\rho \in \big(0,\frac{1}{2}\big)$, the MLE decoder finds $\widehat{S}$ with the most correct tests, i.e., $N_{\Xv,\Yv}(\widehat{S})\geq N_{\Xv,\Yv}(s')$ holds for any $s' \in \mathcal{S}_k$ (or similarly for $\widehat{S}'$ in the restricted version).

We introduce some notation before proceeding.
Let $\nu>0$ be the parameter associated with the Bernoulli designs or near-constant weight designs.   We say that the pair $(C,\zeta)\in [0,\infty)\times [0,1]$ is \emph{feasible} with respect to some given $\ell\in[1,\frac{k}{\log k}]$ if $\frac{Cn\nu e^{-\nu}\ell}{k}$ and $\frac{\zeta\cdot Cn\nu e^{-\nu}\ell}{k}$ are integers; when $C=0$, we only view $(C,\zeta)=(0,0)$ as feasible (rather than all $\zeta\in[0,1]$).
 Given $\mathcal{J}\subset s$ with $|\mathcal{J}|=\ell$ for some specific $ \ell\in[1,\frac{k}{\log k}]$, we define the following useful quantities:
 \begin{itemize}
    \item $\mathcal{M}_{\mathcal{J}}\subset [n]$ indexes the tests that include some defective from $\mathcal{J}$ but no defective from $s\setminus \mathcal{J}$;
    \item $\mathcal{M}_{\mathcal{J}0}$ indexes the negative tests in $\mathcal{M}_{\mathcal{J}}$ (i.e., the tests in $\mathcal{M}_{\mathcal{J}}$ that are flipped by noise);
    \item $\mathcal{M}_{\mathcal{J}1}$ indexes the positive tests in $\mathcal{M}_{\mathcal{J}}$;
    \item We let their cardinalities be $M_{\mathcal{J}}:=|\mathcal{M}_{\mathcal{J}}|$, $M_{\mathcal{J}0}:=|\mathcal{M}_{\mathcal{J}0}|$ and $M_{\mathcal{J}1}:=|\mathcal{M}_{\mathcal{J}1}|$.
 \end{itemize}
 Similarly,
 \begin{itemize}
     \item $\mathcal{N}_0\subset [n]$ indexes the tests that include no defectives;
     \item $\mathcal{N}_{00}$ indexes the negative tests in $\mathcal{N}_0$;
     \item $\mathcal{N}_{01}$ indexes the positive tests in $\mathcal{N}_0$ (i.e., the tests in $\mathcal{N}_0$ that are flipped by noise);
     \item We denote their cardinalities by $|\mathcal{N}_0|=N_0$, $|\mathcal{N}_{00}|=N_{00}$ and $|\mathcal{N}_{01}|=N_{01}$. 
 \end{itemize}
In the following lemma, we provide deterministic conditions for the failure of the restricted MLE decoder.  

\begin{lem}\label{lem:restate}
    {\rm (Conditions for the Failure of MLE)} Let $s\subset [p]$ with cardinality $k$ be the actual defective set, and consider the restricted MLE decoder in \eqref{eq:mle_restricted}. Given $\ell\in [1,\frac{k}{\log k}]$, for a   feasible pair $(C,\zeta)\in [0,\infty) \times [0,1]$ such that $\frac{Cn\nu e^{-\nu}\ell}{k}$ and $\frac{\zeta\cdot Cn\nu e^{-\nu}\ell}{k}$ are integers, we define 
    \begin{align}
        \label{eq:defi_KellCzeta}\mathcal{K}_{\ell,C,\zeta} = \Big\{\mathcal{J}\subset s \,:\, |\mathcal{J}|=\ell, M_{\mathcal{J}}= \frac{Cn\nu e^{-\nu}\ell}{k}, ~M_{\mathcal{J}0}=\frac{\zeta\cdot Cn\nu e^{-\nu}\ell}{k}\Big\}
    \end{align}
    and let $k_{\ell,C,\zeta}:=|\mathcal{K}_{\ell,C,\zeta}|$. Given $\mathcal{J}\in \mathcal{K}_{\ell,C,\zeta}$ and $\mathcal{J}'\subset [p]\setminus s$ with $|\mathcal{J}'|=\ell$, let $G_{\mathcal{J},\mathcal{J}',1}$ be the number of tests in $\mathcal{N}_{01}\cup \mathcal{M}_{\mathcal{J}1}$ that contain some item from $\mathcal{J}'$, and let $G_{\mathcal{J},\mathcal{J}',2}$ be the number of tests in $\mathcal{N}_{00}\cup \mathcal{M}_{\mathcal{J}0}$ that contain some item from $\mathcal{J}'$. Then we have the following two statements: 

    \noindent{(a)} {\rm(Sufficient condition for failure)} If for some feasible pair $(C,\zeta)\in[0,\infty)\times [0,1]$ with respect to some $\ell\in [1,\frac{k}{\log k}]$, the set $\mathcal{K}_{\ell,C,\zeta}$ is non-empty, and there exist some $\mathcal{J}\in \mathcal{K}_{\ell,C,\zeta}$ and some $\mathcal{J}'\subset [p]\setminus s$ with $|\mathcal{J}'|=\ell$ such that
    \begin{align}\label{eq:greater}
        G_{\mathcal{J},\mathcal{J}',1} - G_{\mathcal{J},\mathcal{J}',2} > (1-2\zeta) \frac{Cn\nu e^{-\nu}\ell}{k},
    \end{align}
    then $\mathcal{L}(s')>\mathcal{L}(s)$ with $s':=(s\setminus \mathcal{J})\cup \mathcal{J}'$, which implies the failure of the restricted MLE decoder (\ref{eq:mle_restricted}).

    \noindent{(b)} {\rm(Necessary condition for failure)} If the restricted MLE decoder (\ref{eq:mle_restricted}) fails (i.e., it returns $\widehat{S} = s'$ for some $s'\neq s$ with $|s\setminus s'|\in [1,\frac{k}{\log k}]$), then   for some feasible pair $(C,\zeta)\in[0,\infty)\times [0,1]$ with respect to $\ell:=|s\setminus \widehat{S}|$, the set $\mathcal{K}_{\ell,C,\zeta}$ is non-empty, and there exist some $\mathcal{J}\in \mathcal{K}_{\ell,C,\zeta}$ and some $\mathcal{J}'\subset [p]\setminus s$ with $|\mathcal{J}'|=\ell$ such that \begin{align}\label{eq:greater_equal}
        G_{\mathcal{J},\mathcal{J}',1} - G_{\mathcal{J},\mathcal{J}',2} \ge (1-2\zeta) \frac{Cn\nu e^{-\nu}\ell}{k}.
    \end{align}
\end{lem}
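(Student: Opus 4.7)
My approach is to convert the likelihood inequality into a test-counting inequality, compute the difference $\Delta N := N_{\Xv,\Yv}(s') - N_{\Xv,\Yv}(s)$ explicitly for $s' := (s\setminus\mathcal{J})\cup\mathcal{J}'$, and then match the resulting expression to (\ref{eq:greater}) and (\ref{eq:greater_equal}). The first step uses (\ref{eq:likeli_via_corr}) together with $\rho\in(0,\tfrac12)$: the inequalities $\mathcal{L}(s')>\mathcal{L}(s)$ and $\mathcal{L}(s')\ge\mathcal{L}(s)$ are equivalent to $\Delta N > 0$ and $\Delta N \ge 0$, respectively. Moreover, the restricted MLE (\ref{eq:mle_restricted}) selects some $s'\ne s$ only if $\Delta N \ge 0$, and conversely, whenever $\Delta N > 0$ we have $\mathcal{L}(s')>\mathcal{L}(s)$, ruling out $s$.

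Next, I will identify which tests contribute to $\Delta N$. The noiseless outcomes under $s$ and $s'$ can only disagree on tests that contain no item from $s\setminus\mathcal{J}$, because otherwise the OR in (\ref{eq:gt_symm_model}) is $1$ under both sets. These relevant tests split cleanly into two families: tests in $\mathcal{M}_{\mathcal{J}}$, where the noiseless outcome is $1$ under $s$ but becomes $0$ under $s'$ unless some item of $\mathcal{J}'$ is present; and tests in $\mathcal{N}_0$, where the outcome is $0$ under $s$ but becomes $1$ under $s'$ exactly when some item of $\mathcal{J}'$ is present. Subdividing by the observed test value $Y\in\{0,1\}$ gives four active groups, and assigning a $\pm 1$ contribution to $\Delta N$ based on whether the prediction becomes correct or incorrect yields, after straightforward cancellation,
\begin{equation*}
\Delta N \;=\; G_{\mathcal{J},\mathcal{J}',1} - G_{\mathcal{J},\mathcal{J}',2} - (M_{\mathcal{J}1} - M_{\mathcal{J}0}),
\end{equation*}
since the $+1$ contributions arise from $\mathcal{J}'$-hits in $\mathcal{N}_{01}\cup\mathcal{M}_{\mathcal{J}1}$ (matching $G_{\mathcal{J},\mathcal{J}',1}$) and the $-1$ contributions from $\mathcal{J}'$-hits in $\mathcal{N}_{00}\cup\mathcal{M}_{\mathcal{J}0}$ (matching $G_{\mathcal{J},\mathcal{J}',2}$), with the constant offset $M_{\mathcal{J}1}-M_{\mathcal{J}0}$ coming from the tests in $\mathcal{M}_{\mathcal{J}}$ missed by $\mathcal{J}'$.

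For part (a), the hypothesis $\mathcal{J}\in\mathcal{K}_{\ell,C,\zeta}$ supplies $M_{\mathcal{J}1}-M_{\mathcal{J}0} = (1-2\zeta)\tfrac{Cn\nu e^{-\nu}\ell}{k}$, so (\ref{eq:greater}) is exactly $\Delta N > 0$, giving $\mathcal{L}(s')>\mathcal{L}(s)$ and thus failure of (\ref{eq:mle_restricted}). For part (b), if $\widehat{S}=s'\neq s$ with $\ell := |s\setminus s'|\in[1,\tfrac{k}{\log k}]$, I take $\mathcal{J} := s\setminus s'$, $\mathcal{J}' := s'\setminus s$, and read off $C,\zeta$ from $M_{\mathcal{J}} = \tfrac{Cn\nu e^{-\nu}\ell}{k}$ and $M_{\mathcal{J}0} = \zeta M_{\mathcal{J}}$ (using the convention $(C,\zeta)=(0,0)$ when $M_{\mathcal{J}}=0$); by construction $(C,\zeta)$ is feasible, $\mathcal{J}\in\mathcal{K}_{\ell,C,\zeta}$, and the MLE condition $\Delta N\ge 0$ rearranges to (\ref{eq:greater_equal}). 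The only real subtlety is the boundary case $M_{\mathcal{J}}=0$ (complete masking of $\mathcal{J}$), which the definition of feasibility handles by design; beyond that, the argument is a deterministic four-way case split with no substantive obstacle.
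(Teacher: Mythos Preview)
Your proposal is correct and follows essentially the same approach as the paper: reduce the likelihood comparison to counting correct tests via \eqref{eq:likeli_via_corr}, compute $N_{\Xv,\Yv}(s')-N_{\Xv,\Yv}(s)$ by the two-step move ``remove $\mathcal{J}$, then add $\mathcal{J}'$'' (yielding the offset $-(M_{\mathcal{J}1}-M_{\mathcal{J}0})$ and the gain $G_{\mathcal{J},\mathcal{J}',1}-G_{\mathcal{J},\mathcal{J}',2}$), and for part~(b) read off $(C,\zeta)$ from $(M_{\mathcal{J}},M_{\mathcal{J}0})$ with the $(0,0)$ convention when $M_{\mathcal{J}}=0$. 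One small wording quibble: the offset $M_{\mathcal{J}1}-M_{\mathcal{J}0}$ is not literally ``from the tests in $\mathcal{M}_{\mathcal{J}}$ missed by $\mathcal{J}'$'' in a direct $s$-vs-$s'$ comparison (a $\mathcal{J}'$-hit in $\mathcal{M}_{\mathcal{J}1}$ contributes $0$, not $+1$); it emerges cleanly only after the cancellation you mention, or equivalently via the two-step viewpoint---but your formula and conclusions are correct.
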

\begin{proof}Recall that we let $N_{\Xv,\Yv}(s)$ be the number of correct tests under the defective set $s$, which enters the likelihood via (\ref{eq:likeli_via_corr}). We separately prove the statements (a) and (b) by analyzing the number of correct tests.

    \textbf{Proof of (a):} For $\mathcal{J}\in \mathcal{K}_{\ell,C,\zeta}$ and $\mathcal{J}'\subset [p]\setminus s$ with $|\mathcal{J}'|=\ell$, we note that the size-$k$ set $s':=(s\setminus \mathcal{J})\cup \mathcal{J}'$ satisfies $|s\setminus s'|=\ell$, and $\mathcal{L}(s')>\mathcal{L}(s)$ evidently implies the failure of MLE. Thus we only need to prove  $\mathcal{L}(s')>\mathcal{L}(s)$, and it suffices to show   $N_{\Xv,\Yv}(s')> N_{\Xv,\Yv}(s)$. To this end, we compare the number of correct tests under $s$ and $s'$ as follows: \begin{itemize}
    \item We start with the true defective set $s$.  By removing $\mathcal{J}$ from the defective set, the correct tests in $\mathcal{M}_{\mathcal{J}1}$ become incorrect, while the
    incorrect tests in $\mathcal{M}_{\mathcal{J}0}$ become correct. Thus, there is a loss of $|\mathcal{M}_{\mathcal{J}1}|-|\mathcal{M}_{\mathcal{J}0}|=(1-\zeta)M_{\mathcal{J}}-\zeta M_{\mathcal{J}}=(1-2\zeta)M_{\mathcal{J}}= (1-2\zeta)\frac{Cn\nu e^{-\nu}\ell}{k}$ correct tests. 
    \item Next, we add $\mathcal{J}'$ to $s\setminus \mathcal{J}$ to arrive at $s'=(s\setminus\mathcal{J})\cup \mathcal{J}'$. Then, the incorrect tests in $\mathcal{N}_{01}\cup \mathcal{M}_{\mathcal{J}1}$ become correct if they contain some item from $\mathcal{J}'$, while the correct tests in $\mathcal{N}_{00}\cup \mathcal{M}_{j0}$ become incorrect if they contain some item from $\mathcal{J}'$. This leads to a gain of $G_{\mathcal{J},\mathcal{J}',1}-G_{\mathcal{J},\mathcal{J}',2}$ correct tests. 
\end{itemize}  
Therefore, by (\ref{eq:greater}) we have 
$$N_{\Xv,\Yv}(s')-N_{\Xv,\Yv}(s) = G_{\mathcal{J},\mathcal{J}',1}-G_{\mathcal{J},\mathcal{J}',2}-(1-2\zeta)\frac{Cn\nu e^{-\nu}\ell}{k}>0,$$ 
which leads to $\mathcal{L}(s')>\mathcal{L}(s)$ as claimed.

\textbf{Proof of (b):} If restricted MLE fails and returns a set $s'$ with  $|s\setminus s'|=\ell\in [1,\frac{k}{\log k}]$, then we note from the definition   (\ref{eq:mle_restricted}) that $\mathcal{L}(s')\geq \mathcal{L}(s)$, and thus $N_{\Xv,\Yv}(s')\ge N_{\Xv,\Yv}(s)$. Based on $s'$, we construct the feasible $(C,\zeta)$ for which the conclusion of (b) holds. Specifically, we let $\mathcal{J}=s\setminus s'$ with $|\mathcal{J}|=\ell$ and recall the definitions of $\mathcal{M}_{\mathcal{J}},\mathcal{M}_{\mathcal{J}0},\mathcal{M}_{\mathcal{J}1}$ and their cardinalities $M_{\mathcal{J}},M_{\mathcal{J}0},M_{\mathcal{J}1}$; then, we set $(C,\zeta)= (\frac{M_{\mathcal{J}}}{k^{-1}n\nu e^{-\nu}\ell},\frac{M_{\mathcal{J} 0}}{M_{\mathcal{J}}})$ if $M_{\mathcal{J}}>0$, or $(C,\zeta)=(0,0)$ if $M_{\mathcal{J}}=0$. By construction   $(C,\zeta)\in [0,\infty)\times [0,1]$ is feasible, and by (\ref{eq:defi_KellCzeta}) it is easy to see that $\mathcal{J}\in \mathcal{K}_{\ell,C,\zeta}$. Moreover, we let $\mathcal{J}'=s'\setminus s\subset [p]\setminus  s$   satisfying $|\mathcal{J}'|=\ell$, and note that by the definition of $(\mathcal{J},\mathcal{J}')$ we have $s'=(s\setminus \mathcal{J})\cup\mathcal{J}'$. It remains to show that (\ref{eq:greater_equal}) holds for the $(\mathcal{J},\mathcal{J}')$ we are considering. To this end, we compare the number of correct tests under $s$ and $s'$ exactly the same as the above two dot points in the proof of (a): Starting with the true defective set $s$, removing $\mathcal{J}$ from the defective set leads to a loss of $(1-2\zeta)\frac{Cn\nu e^{-\nu}\ell}{k}$ correct tests, while adding $\mathcal{J}'$ to the current $s\setminus \mathcal{J}$  leads to a gain of $G_{\mathcal{J},\mathcal{J}',1}-G_{\mathcal{J},\mathcal{J}',2}$ correct tests. Overall, we obtain $N_{\Xv,\Yv}(s')-N_{\Xv,\Yv}(s)=G_{\mathcal{J},\mathcal{J}',1}-G_{\mathcal{J},\mathcal{J}',2}-(1-2\zeta)\frac{Cn\nu e^{-\nu}\ell}{k}$, and thus $N_{\Xv,\Yv}(s')\ge N_{\Xv,\Yv}(s)$ yields (\ref{eq:greater_equal}) as claimed. 
\end{proof}

In the achievability analysis, we seek to establish the threshold for $n$ above which the MLE fails with $o(1)$ probability. Since \ref{lem:restate}(b) provides a necessary condition for the failure of MLE, we can instead   bound the probability of the events stated in Lemma \ref{lem:restate}(b).


Regarding the converse part, while we stated Lemma \ref{lem:restate}(a) for all $\ell \in \big[1,\frac{k}{\log k}\big]$ for consistency with part (b), it will turn out to be sufficient to restrict attention to $\ell=1$ with a single feasible $(C,\zeta)\in (0,\infty)\times (0,1)$.  We thus specialize Lemma \ref{lem:restate}(a) to $\ell=1 $ with a simpler set of notation. 
Specifically, if $\mathcal{J}=\{j\}$ for some $j\in s$, then we will simply write $\mathcal{M}_{\mathcal{J}},\mathcal{M}_{\mathcal{J}1},\mathcal{M}_{\mathcal{J}0}$ as $\mathcal{M}_j,\mathcal{M}_{j1},\mathcal{M}_{j0}$ respectively, and their cardinalities are denoted by $M_j,M_{j1},M_{j0}$ respectively.  Similarly, when it further holds that $\mathcal{J}'=\{j'\}$ for some $j'\in [p]\setminus s$, we write $G_{\mathcal{J},\mathcal{J}',1},G_{\mathcal{J},\mathcal{J}',2},\mathcal{K}_{\ell,C,\zeta},k_{\ell,C,\zeta}$ in Lemma \ref{lem:restate} as $G_{j,j',1},G_{j,j',2},\mathcal{K}_{C,\zeta},k_{C,\zeta}$, respectively. Then, Lemma 8(a) specializes to the following. 

\begin{cor}\label{cor:restate}
    {\rm (Simplified Sufficient Condition for the   Failure of MLE)} Let $s\subset [p]$ with cardinality $k$ be the actual defective set, and consider the MLE decoder and $\ell=1$. For a feasible pair $(C,\zeta)\in(0,\infty)\times (0,1)$ such that $\frac{Cn\nu e^{-\nu}}{k}$ and $\frac{\zeta\cdot Cn\nu e^{-\nu}}{k}$ are integers, we define 
    \begin{align}\label{eq:defi_KCzeta}
        \mathcal{K}_{C,\zeta} = \Big\{j\in s:M_j = \frac{Cn\nu e^{-\nu}}{k},~M_{j0}=\frac{\zeta\cdot Cn\nu e^{-\nu}}{k}\Big\}
    \end{align}
    and let $k_{C,\zeta}:=|\mathcal{K}_{C,\zeta}|$. Given $j\in \mathcal{K}_{C,\zeta}$ and $j'\in [p]\setminus s$, we let $G_{j,j',1}$ be the number of tests in $\mathcal{N}_{01}\cup \mathcal{M}_{j1}$ that contain item $j'$, and $G_{j,j',2}$ be the number of tests in $\mathcal{N}_{00}\cup \mathcal{M}_{j0}$ that contain item $j'$. Then if for some feasible $(C,\zeta)\in (0,\infty)\times (0,1)$ we have that  $\mathcal{K}_{C,\zeta}$ is non-empty, and that there exist some $j\in\mathcal{K}_{C,\zeta}$ and some $j'\in[p]\setminus s$ such that 
    \begin{align}\label{eq:con_failure_ell1}
        G_{j,j',1}-G_{j,j',2} >(1-2\zeta)\frac{Cn\nu e^{-\nu}}{k},
    \end{align}
    then $\mathcal{L}(s')>\mathcal{L}(s)$ with $s'=(s\setminus \{j\})\cup\{j'\}$, which implies the failure of the MLE decoder (\ref{eq:mle}) . 
\end{cor}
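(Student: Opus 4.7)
The plan is to derive Corollary \ref{cor:restate} as a direct specialization of Lemma \ref{lem:restate}(a) to the case $\ell=1$, so the proof should amount to a careful translation of notation rather than any new argument. First, I would observe that any size-$1$ subset of $s$ is of the form $\mathcal{J}=\{j\}$ with $j\in s$, and any size-$1$ subset of $[p]\setminus s$ is of the form $\mathcal{J}'=\{j'\}$ with $j'\in[p]\setminus s$. Under this identification, the bookkeeping quantities line up verbatim with the simplified notation introduced just before the corollary: $M_{\{j\}}=M_j$ and $M_{\{j\}0}=M_{j0}$, so the definition of $\mathcal{K}_{1,C,\zeta}$ in \eqref{eq:defi_KellCzeta} reduces to the set of singletons $\{j\}$ with $j\in\mathcal{K}_{C,\zeta}$ as defined in \eqref{eq:defi_KCzeta}; and the quantities $G_{\{j\},\{j'\},1}$, $G_{\{j\},\{j'\},2}$ from the lemma coincide with $G_{j,j',1}$, $G_{j,j',2}$ in the corollary.

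Next, I would check that the sufficient condition \eqref{eq:con_failure_ell1} is exactly inequality \eqref{eq:greater} with $\ell=1$, since the factor $\frac{Cn\nu e^{-\nu}\ell}{k}$ collapses to $\frac{Cn\nu e^{-\nu}}{k}$. Hence Lemma \ref{lem:restate}(a) applied to $\ell=1$ directly yields $\mathcal{L}(s')>\mathcal{L}(s)$ with $s'=(s\setminus\{j\})\cup\{j'\}$, which implies the failure of MLE (since $s'\neq s$ has strictly larger likelihood than the true $s$, the maximizer in \eqref{eq:mle} cannot be $s$).

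Two minor technical points are worth flagging. First, the corollary restricts $(C,\zeta)$ to $(0,\infty)\times(0,1)$, which is a strict subset of the feasibility range $[0,\infty)\times[0,1]$ from the lemma, so the corollary's hypotheses trivially imply those of the lemma and no extra verification is needed at the boundary (in particular, the degenerate $(C,\zeta)=(0,0)$ case handled separately in the lemma does not even arise here). Second, the lemma was stated for the restricted MLE decoder \eqref{eq:mle_restricted}, whereas the corollary concerns the unrestricted MLE decoder \eqref{eq:mle}; this causes no issue because once $\mathcal{L}(s')>\mathcal{L}(s)$ is established for a specific $s'\neq s$, the true set $s$ cannot maximize the likelihood over $\mathcal{S}_k$, and the failure conclusion only strengthens when the decoder is allowed to range over the larger set (indeed, the $s'$ used here lies within distance $1\le\frac{k}{\log k}$ of $s$, so it would already defeat the restricted decoder as well).

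There is no substantive obstacle to overcome; the corollary is essentially a change-of-notation instance of the lemma, and the only thing requiring care is ensuring that each symbol in \eqref{eq:defi_KCzeta}--\eqref{eq:con_failure_ell1} is matched correctly with its counterpart in \eqref{eq:defi_KellCzeta}--\eqref{eq:greater} under the singleton identification.
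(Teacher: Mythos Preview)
Your proposal is correct and matches the paper's approach exactly: the paper also presents Corollary~\ref{cor:restate} as a direct specialization of Lemma~\ref{lem:restate}(a) to $\ell=1$ under the singleton identification $\mathcal{J}=\{j\}$, $\mathcal{J}'=\{j'\}$, with no separate proof given. Your additional remarks on the feasibility range and the restricted-versus-unrestricted MLE distinction are accurate and, if anything, more explicit than the paper itself.
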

\section{Low-$\ell$ Converse Analysis for Bernoulli Designs} \label{app:ber_con}

Let $\mathcal{S}_k=\{S'\subset [p]:|S'|=k\}$ and recall that we consider the prior $S\sim \mathrm{Unif}(\mathcal{S}_k)$.  As noted in Appendix \ref{app:mle}, it suffices to prove the converse bound for the optimal decoder, that is, maximum likelihood estimation (MLE) given in (\ref{eq:mle}).
By symmetry with respect to re-ordering items, we can simply assume that the actual defective set is $s=[k]$,  i.e., items $1,\dotsc,k$ are defective, and items $k+1,\dotsc,p$ are non-defective.


\subsubsection*{Reduction to two conditions}
For $j\in s$ and $j'\in [p]\setminus s$ we will use the notation $\mathcal{M}_j,\mathcal{M}_{j0},\mathcal{M}_{j1},\mathcal{N}_0,\mathcal{N}_{00},\mathcal{N}_{01}$ and the corresponding cardinalities  from Corollary \ref{cor:restate}. Note that these sets and their cardinalities are deterministic given $\Xv_s$ and $\Yv$.  We consider $\ell=1$, and say that $(C,\zeta)\in(0,\infty)\times(0,1)$ is \emph{feasible} if $\frac{Cn\nu e^{-\nu}}{k}$ and $\frac{\zeta\cdot Cn\nu e^{-\nu}}{k}$ are integers. 
 Given some feasible pair $(C,\zeta)$, we will consider $\mathcal{K}_{C,\zeta}=\{j\in s:M_j = \frac{Cn\nu e^{-\nu}}{k},~M_{j0}=\frac{\zeta\cdot Cn\nu e^{-\nu}}{k}\}$ with cardinality $k_{C,\zeta}=|\mathcal{K}_{C,\zeta}|$ as in (\ref{eq:defi_KCzeta}).  For a given pair $(j,j')\in \mathcal{K}_{C,\zeta}\times([p]\setminus s)$, recall that 
     $G_{j,j',1}$ denotes the number of tests in $\mathcal{N}_{01}\cup \mathcal{M}_{j1}$ that contain the   non-defective item $j'$, and $G_{j,j',2}$ denotes number of tests in $\mathcal{N}_{00}\cup \mathcal{M}_{j0}$ that contain the non-defective item $j'$; see Corollary \ref{cor:restate}.

We first invoke Corollary \ref{cor:restate} to identify sufficient conditions for the failure of MLE. In particular, MLE fails  if for some feasible   $(C,\zeta)\in(0,\infty)\times (0,1)$, 
the following two conditions \textbf{(C1)} and \textbf{(C2)} simultaneously hold:\begin{itemize}
    \item \textbf{(C1)} $k_{C,\zeta}\geq 1$ (i.e., $\mathcal{K}_{C,\zeta}\neq \varnothing$); 
    \item \textbf{(C2)} There exist $j\in \mathcal{K}_{C,\zeta}$ and $j'\in \{k+1,\dotsc,p\}$ such that 
    \begin{equation}\label{eq:failure}
        G_{j,j',1}-G_{j,j',2}>(1-2\zeta)\frac{Cn\nu e^{-\nu}}{k}.
    \end{equation} 
\end{itemize}
In the rest of the proof, we identify explicit conditions in the form of sufficient conditions on $n$ for ensuring \textbf{(C1)} and \textbf{(C2)}. We proceed to study the two conditions under a fixed feasible pair $(C,\zeta)$.

\subsubsection*{Condition for (C1) under the given $(C,\zeta)$}
We first derive a condition for ensuring $k_{C,\zeta}\geq 1$ with $1-o(1)$ probability. Fix a constant $\xi<1$ (to be chosen  close to $1$ later) and suppose $k^{\xi}$ is an integer for notational convenience. For a given $j\in s$ and any test (say, with index $i$), we define the probability of test $i$ containing only one defective $j$ as  
\begin{equation}\label{eq:P1}
    P_1 = \mathbb{P}(i\in \mathcal{M}_j) = \frac{\nu}{k}\Big(1-\frac{\nu}{k}\Big)^{k-1} = \frac{\nu e^{-\nu}(1+o(1))}{k}.
\end{equation}
Due to the i.i.d.~nature of the Bernoulli design, $\bm{M}_{\xi}:=[M_1,M_2,...,M_{k^{\xi}}]$ is determined by a multinomial distribution with $n$ trials in which each $M_i$ has probability $P_1$ (and another variable, say $M_0$, captures the remaining probability of $1-k^{\xi}P_1$).  This appears to be difficult to analyze directly; to overcome this, we let $\hat{\bm{M}}_{\xi}:=[\hat{M}_1,\hat{M}_2,...,\hat{M}_{k^{\xi}}]$ have independent components following $\mathrm{Poi}(nP_1)$, and use the following lemma (stated using generic notation).

\begin{lem}{\em (Poisson Approximation of Multinomial, \cite[Thm.~1]{arenbaev1977asymptotic})}
    \label{lem:pos_apro}
    Consider a multinomial random vector $\bm{S}=[S_0,S_1,...,S_k]\sim \mathrm{Mn}(N;[p_0,p_1,...,p_k])$ (for some $p_j\geq 0$ satisfying $\sum_{j=0}^k p_j=1$), its sub-vector $\bm{S}'=[S_1,...,S_k]$, and a Poisson vector   $\bm{T} = [T_1,T_2,...,T_k]$ with independent Poisson components $T_i\sim \mathrm{Poi}(Np_j),j\in[k]$. If $\sum_{j=1}^kp_j>0$, then as $N\to\infty$ we have 
    \begin{equation}
        \|\bm{S}'-\bm{T}\|_{\mathrm{TV}} = \left(\sum_{j=1}^k p_j\right)\left(\frac{1}{\sqrt{2\pi e}}+O\left(\frac{1}{\sqrt{N\sum_{j=1}^k p_j}}\right)\right). \label{eq:TV_bound}
    \end{equation}
\end{lem}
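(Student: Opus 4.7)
Write $q := \sum_{j=1}^k p_j > 0$. The strategy is to collapse the $k$-dimensional total-variation comparison to a one-dimensional comparison of the sums, and then evaluate that one-dimensional TV via a local central limit theorem.

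For the reduction step, I would use two classical conditioning facts. First, $|\bm{T}| := \sum_{j=1}^k T_j \sim \mathrm{Poi}(Nq)$, and conditionally on $|\bm{T}| = m$ the vector $\bm{T}$ is $\mathrm{Mn}(m;[p_1/q,\dots,p_k/q])$. Second, $|\bm{S}'| := \sum_{j=1}^k S_j \sim \Binomial(N,q)$, and conditionally on $|\bm{S}'| = m$ the sub-vector $\bm{S}'$ is the \emph{same} multinomial $\mathrm{Mn}(m;[p_1/q,\dots,p_k/q])$. Because these two conditional laws coincide, applying the factorization $\PP(\bm{S}'=\bm{x}) = \PP(|\bm{S}'|=\sum_i x_i)\,\PP(\bm{S}'=\bm{x}\mid|\bm{S}'|=\sum_i x_i)$ (and analogously for $\bm{T}$) inside the definition of total variation allows the inner sum over $\bm{x}$ with fixed total $m$ to collapse, yielding
\[
\|\bm{S}' - \bm{T}\|_{\TV} \;=\; \tfrac{1}{2}\sum_{m\ge 0} \bigl|\PP(|\bm{S}'|=m) - \PP(|\bm{T}|=m)\bigr| \;=\; \dTV\bigl(\Binomial(N,q),\,\mathrm{Poi}(Nq)\bigr).
\]

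For the evaluation step, I would apply a local CLT to both pmf's to obtain Gaussian approximations with common mean $Nq$ and variances $\sigma_1^2 = Nq(1-q)$ (Binomial) and $\sigma_2^2 = Nq$ (Poisson), each with relative error $O(1/\sqrt{Nq})$ uniformly on an $O(\sqrt{Nq\log(Nq)})$-window around the mean. Since $\sigma_1 < \sigma_2$, the two Gaussian densities cross at exactly $\pm x_0$ solving $x_0^2 = N(1-q)\log\frac{1}{1-q}$, and the TV between these two Gaussians evaluates cleanly to $2[\Phi(x_0/\sigma_1) - \Phi(x_0/\sigma_2)]$. Expanding $x_0/\sigma_1 = 1 + q/4 + O(q^2)$ and $x_0/\sigma_2 = 1 - q/4 + O(q^2)$ for small $q$, the mean value theorem applied to $\Phi$ produces a leading contribution of $\phi(1)\cdot q = q/\sqrt{2\pi e}$, matching the stated constant.

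The main obstacle will be the error bookkeeping. The local-CLT relative error of $O(1/\sqrt{Nq})$ aggregates over the $O(1)$ bulk mass to contribute an additive TV term of $O(q/\sqrt{Nq}) = O(\sqrt{q/N})$, precisely the claimed $q\cdot O(1/\sqrt{Nq})$ remainder; the tails beyond $Nq \pm C\sqrt{Nq\log(Nq)}$ are handled by Chernoff bounds for both Binomial and Poisson and contribute negligibly. The most delicate step is justifying the Taylor expansion of the Gaussian TV in $q$ while absorbing the higher-order-in-$q$ terms into the remainder, which implicitly restricts the statement to the regime $Nq \to \infty$ with $q \to 0$ --- precisely the regime in which the lemma is invoked later in Appendix \ref{app:ber_con}, where the relevant parameters give $q = \nu e^{-\nu} k^{\xi-1}$ with $\xi < 1$, so that $q\to 0$ as $k\to\infty$.
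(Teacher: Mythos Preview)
The paper does not prove this lemma; it is quoted verbatim from \cite[Thm.~1]{arenbaev1977asymptotic} and used as a black box. So there is no ``paper's proof'' to compare against, and your proposal is a reconstruction of a cited external result.

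On the merits: your reduction step is correct and is the standard trick---the conditional laws of $\bm{S}'$ given $|\bm{S}'|=m$ and of $\bm{T}$ given $|\bm{T}|=m$ are indeed the same multinomial, so the total-variation collapses exactly to $\dTV(\Binomial(N,q),\mathrm{Poi}(Nq))$. Your local-CLT evaluation also correctly identifies the leading constant $1/\sqrt{2\pi e}$ via the Gaussian TV computation. The self-identified restriction to $q\to 0$, $Nq\to\infty$ is honest and appropriate: in fact the leading term $q/\sqrt{2\pi e}$ is only exact in this regime (for fixed $q$ bounded away from zero the limiting Gaussian TV is $2[\Phi(\sqrt{-\log(1-q)/q})-\Phi(\sqrt{-(1-q)\log(1-q)/q})]$, which does not equal $q/\sqrt{2\pi e}$), so the lemma as stated in the paper is itself implicitly in this regime, consistent with its only use at \eqref{tverror} where $q=k^{\xi}P_1=\Theta(k^{\xi-1})\to 0$.

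One small gap worth flagging: the replacement of the discrete pmf's by Gaussian densities introduces an $O(1/\sqrt{Nq})$ \emph{relative} error in the local CLT, but when you subtract two such approximations to compute the TV integrand $|f_{\mathrm{Bin}}-f_{\mathrm{Poi}}|$, the leading Gaussian terms partially cancel (their difference is itself $O(q)$), so you must check that the local-CLT error does not swamp this cancellation. This is where the Edgeworth-type refinement in Arenbaev's original proof does real work; your sketch glosses over it by asserting the aggregate error is $O(q/\sqrt{Nq})$ without tracking the cancellation. For the purposes of the paper's application only the upper bound $\dTV=O(q)$ is ever used (see \eqref{tverror}), and that cruder bound follows already from Le Cam's inequality without any local CLT.
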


By Lemma \ref{lem:pos_apro} and the fact that $n,k \to \infty$ as $p \to \infty$, we have \begin{equation}\begin{aligned}\label{tverror}
    d_{\mathrm{TV}}(\bm{M}_\zeta,\hat{\bm{M}}_{\zeta})&=k^{\xi} P_1 \Big(\frac{1}{\sqrt{2\pi e}}+O\Big(\frac{1}{\sqrt{nk^\zeta P_1}}\Big)\Big)\\
    &= O\Big(\frac{1}{k^{1-\xi}}+\sqrt{\frac{1}{nk^{1-\xi}}}\Big)=o(1).
    \end{aligned}
\end{equation} 
In addition, we have 
\begin{align}
    k_{C,\zeta}&=|\mathcal{K}_{C,\zeta}|\geq |\mathcal{K}_{C,\zeta}\cap [k^{\xi}]|=\sum_{j=1}^{k^{\xi}}\mathbbm{1}\Big(M_j = \frac{Cn\nu e^{-\nu}}{k},~M_{j0}=\frac{\zeta Cn\nu e^{-\nu}}{k}\Big).
\end{align}
By the definition of $d_{\text{TV}}(\cdot,\cdot)$, we can further replace $M_j$ by the more convenient independent Poisson variables $\{\hat{M}_j:j\in [k^{\xi}]\}$ with $o(1)$ difference in probability: 
\begin{align}
  \label{eq:313}  
  \mathbb{P}(k_{C,\zeta}=0) &\leq \mathbb{P}\left(\sum_{j=1}^{k^{\xi}}\mathbbm{1}\Big(M_j= \frac{Cn\nu e^{-\nu}}{k},~M_{j0} = \frac{\zeta Cn\nu e^{-\nu}}{k}\Big) = 0\right)\\\label{eq:poi_approx}
    &= \mathbb{P}\left(\sum_{j=1}^{k^{\xi}}\mathbbm{1}\Big(\hat{M}_j= \frac{Cn\nu e^{-\nu}}{k},~M_{j0} = \frac{\zeta Cn\nu e^{-\nu}}{k}\Big) = 0\right) +o(1)\\ \label{eq:inde1}
    &= \left[1-\mathbb{P}\left(\hat{M}_j= \frac{Cn\nu e^{-\nu}}{k},~M_{j0} = \frac{\zeta Cn\nu e^{-\nu}}{k}\right)\right]^{k^{\xi}}+o(1)\\\label{eq:inde2}
    &=\left[1-\mathbb{P}\left(\hat{M}_j= \frac{Cn\nu e^{-\nu}}{k}\right)\mathbb{P}\left( \text{Bin}\Big(\frac{Cn\nu e^{-\nu}}{k},\rho\Big)=\frac{\zeta Cn\nu e^{-\nu}}{k}\right)\right]^{k^{\xi}}+o(1), 
\end{align}
where  (\ref{eq:inde1}) holds because the $\hat{M}_j$ are independent, and the $\frac{Cn\nu e^{-\nu}}{k}$ tests in $\mathcal{M}_{j}$ are  disjoint for different $j\in [k^{\xi}]$ (by the definition of $\mathcal{M}_j$), so the $k^{\xi}$ summands in (\ref{eq:poi_approx}) are independent; then, (\ref{eq:inde2}) holds because $\hat{M}_j$ is only related to the randomness of $\Xv_s$, while given $M_j = \frac{Cn\nu e^{-\nu}}{k}$, $M_{j0}\sim \text{Bin}(\frac{Cn\nu e^{-\nu}}{k},\rho)$ is only related to the randomness of $\Zv$. Since $\hat{M}_j\sim\text{Poi}(nP_1)$, we have 
\begin{align}
  \label{eq:substi1}  \mathbb{P}\left(\hat{M}_j = \frac{Cn\nu e^{-\nu}}{k}\right)  &= e^{-nP_1}\frac{(nP_1)^{\frac{Cn\nu e^{-\nu}}{k}}}{\big(\frac{Cn\nu e^{-\nu}}{k}\big)!}\\&= \frac{\exp\big(-\frac{n\nu e^{-\nu}(1+o(1))}{k}\big)}{(C(1+o(1)))^{\frac{Cn\nu e^{-\nu}}{k}}}\cdot \frac{\big(\frac{Cn\nu e^{-\nu}}{k}\big)^{\frac{Cn\nu e^{-\nu}}{k}}}{\big(\frac{Cn\nu e^{-\nu}}{k}\big)!}\\
  &\geq \exp \left(-\frac{n\nu e^{-\nu}}{k}\big(C\log C +1+o(1)\big)\right)\cdot \exp\left(\frac{Cn\nu e^{-\nu}}{k}(1+o(1))\right) \label{eq:stirling1}\\
  & = \exp\left(-\frac{n\nu e^{-\nu}}{k}\big(C\log C -C+1+o(1)\big)\right),\label{eq:bound_need1}
\end{align}
where (\ref{eq:substi1}) follows from (\ref{eq:P1}), and (\ref{eq:stirling1}) follows from Stirling's approximation (specifically, $\frac{w^w}{w!}\geq \frac{e^{w-1}}{w}=e^{w(1+o(1))}$ for $w\to \infty$, with $w=\frac{Cn\nu e^{-\nu}}{k}$). Moreover, by anti-concentration ((\ref{eq:anti2}) in Lemma \ref{binoconcen}), we have 
\begin{align}
    \mathbb{P}\left(\text{Bin}\Big(\frac{Cn\nu e^{-\nu}}{k},\rho\Big) = \frac{\zeta Cn\nu e^{-\nu}}{k}\right)&\geq \frac{1}{\sqrt{\frac{2Cn\nu e^{-\nu}}{k}}}\exp\left(-\frac{Cn\nu e^{-\nu}}{k}D(\zeta\|\rho)\right)\\&=\exp\left(-\frac{n\nu e^{-\nu}}{k}\big(C\cdot D(\zeta\|\rho)+o(1)\big)\right),\label{eq:absorb1}
\end{align}
where (\ref{eq:absorb1}) holds because $\frac{n}{k}=\Theta(\log k)\to \infty$ allows us to incorporate the leading factor into the $o(1)$ term in the exponent; analogous simplifications will be used in the subsequent analysis. Combining the bounds in (\ref{eq:bound_need1}) and (\ref{eq:absorb1}), we obtain \begin{align}
    &\mathbb{P}\left(\hat{M}_j= \frac{Cn\nu e^{-\nu}}{k}\right)\mathbb{P}\left(\text{Bin}\Big(\frac{Cn\nu e^{-\nu}}{k},\rho\Big) = \frac{\zeta Cn\nu e^{-\nu}}{k}\right)\\
    &~~~~~~~~~~~~~~~~~~~\geq \exp\left(-\frac{n\nu e^{-\nu}}{k}\big[C\log C-C+C\cdot D(\zeta\|\rho)+1+o(1)\big]\right):=P_2. \label{eq:P2_bernoulli}
\end{align}
Substituting this into  (\ref{eq:inde2}), we obtain \begin{align}
    \mathbb{P}(k_{C,\zeta}=0) &\leq (1-P_2)^{k^{\xi}}+o(1)
   \\&\leq \exp\big(-k^{\xi}P_2\big)+o(1).
\end{align}
Thus, to guarantee $\mathbb{P}(k_{C,\zeta}=0)=o(1)$, or equivalently $\mathbb{P}(k_{C,\zeta}> 0)=1-o(1)$, it suffices to have $k^{\xi}P_2\to \infty$. Defining the shorthand \begin{equation}\label{eq:f1_ber_con}
    f_1(C,\zeta,\rho):= C\log C-C+C\cdot D(\zeta\|\rho)+1,
\end{equation}
and noting that $\eta'_1 \log\frac{p}{k}\to\infty $ for any given $\eta'_1>0$, the condition 
$k^{\xi}P_2\to \infty$ can be ensured by  
\begin{equation}
    \label{eq:Bconver_con1}
    \xi \log k- \frac{n\nu e^{-\nu}}{k}\big(f_1(C,\zeta,\rho)+o(1)\big)\geq \eta'_1 \log\frac{p}{k}.
\end{equation}
Moreover, by $k=\Theta(p^\theta)$ we have $\log k \sim \frac{\theta}{1-\theta}\log\frac{p}{k}$. Substituting this into (\ref{eq:Bconver_con1}), letting $\xi$ be arbitrarily close to $1$ and rearranging, we find that \textbf{(C1)} holds with $1-o(1)$ probability if the following holds for some $\eta_1>0$:
\begin{equation}\label{eq:ber_fail_1}
    n\leq  \frac{(1-\eta_1)\frac{\theta}{1-\theta}k\log\frac{p}{k}}{\nu e^{-\nu}\big(f_1(C,\zeta,\rho)+o(1)\big)}.
\end{equation}

\subsubsection*{Condition for (C2) under the given $(C,\zeta)$}

Given   $k_{C,\zeta}\geq 1$ and fixing an arbitrary index $j\in\mathcal{K}_{C,\zeta}$, in this part we   identify a sufficient condition for \textbf{(C2)}. 
Most of our analysis will be conditioned on $(\Xv_s,\Yv)$ and only based on the randomness of $\Xv_{[p]\setminus s}$, but we first deduce the high-probability behavior of the quantities  $N_{01}=|\mathcal{N}_{01}|$ and $N_{00}=|\mathcal{N}_{00}|$ depending on randomness of $(\Xv_s,\Yv)$. Specifically, $\mathcal{N}_{01}$ (resp., $\mathcal{N}_{00}$) consists of the positive (resp., negative) tests containing no defective, and since the probability of a specific test containing no defective is given by $(1-\frac{\nu}{k})^k= e^{-\nu}(1+o(1))$, we have $N_{01}\sim \text{Bin}(n,\rho e^{-\nu}(1+o(1)))$  and $N_{00}\sim \text{Bin}(n,(1-\rho)e^{-\nu}(1+o(1)))$. Then, by Hoeffding's inequality, we have that
\begin{equation}\label{eq:N00N01}
   \mathscr{A}_1:=\big\{ N_{01}= \rho ne^{-\nu}(1+o(1)),~N_{00}=(1-\rho)ne^{-\nu}(1+o(1))\big\} 
\end{equation}
holds with $1-o(1)$ probability. We will subsequently suppose that we are on the event $\mathscr{A}_1$, which is justified by a simple union bound.

For the specific $j\in\mathcal{K}_{C,\zeta}$ we have $M_{j0}=\frac{\zeta \cdot Cn\nu e^{-\nu}}{k}=o(n)$ and $M_{j1}=\frac{(1-\zeta)\cdot Cn\nu e^{-\nu}}{k}=o(n)$, and thus they are negligible compared to $N_{01},N_{00}$ specified in $\mathscr{A}_1$ in  (\ref{eq:N00N01}): 
\begin{gather}
    |\mathcal{N}_{01}\cup \mathcal{M}_{j1}|=N_{01}+M_{j1} = \rho ne^{-\nu}(1+o(1)),\\
     |\mathcal{N}_{00}\cup \mathcal{M}_{j0}|=N_{00}+M_{j0}=(1-\rho)ne^{-\nu}(1+o(1)).
\end{gather}
Note that these cardinalities correspond to counts of ``relevant'' tests with positive and negative outcomes respectively, and recall from Corollary \ref{cor:restate} that $G_{j,j',1}$ and $G_{j,j',2}$ count how many such tests include the non-defective $j'$ (and thus impact the likelihood function of $(S \setminus \{j\}) \cup \{j'\}$ relative to $S$).

For a specific non-defective $j' \in \{k+1,\dotsc,n\}$, conditioning on $(\Xv_s,\Yv)$ satisfying $\mathscr{A}_1$, the randomness of the $j'$-th column of $\Xv$ (i.e., the placements of the $j'$-th item) gives
\begin{align}
    &G_{j,j',1}\sim \text{Bin}\Big(\rho n e^{-\nu}(1+o(1)),\frac{\nu}{k}\Big),\\
    &G_{j,j',2}\sim \text{Bin}\Big((1-\rho)ne^{-\nu}(1+o(1)),\frac{\nu}{k}\Big).
\end{align}
In order to establish a lower bound on the probability of the ``bad event'' (\ref{eq:failure}), we introduce a parameter $d$ (to be chosen later) satisfying 
\begin{align}
    \label{eq:ber_con_d_range}
    d > \max \Big\{\frac{C(1-2\zeta)}{\rho},0\Big\}
\end{align}
and consider the event $G_{j,j',1}=\frac{d\rho \nu e^{-\nu}n}{k}$ (assumed to be positive integer).  Thus, $d$ parametrizes the number of ``relevant'' positive tests that $j'$ is placed in.  The idea is that while the condition \eqref{eq:failure} could hold for a wide range of $d$ values, it suffices for our purposes to identify a single choice that dominates the error probability.

Accordingly, we use the anti-concentration bound (\ref{eq:anti1}) from Lemma \ref{binoconcen} to obtain the following:
\begin{align}
    &\mathbb{P}\left(G_{j,j',1}-G_{j,j',2}>(1-2\zeta)\frac{Cn\nu e^{-\nu}}{k}\right)\\
    &\geq\mathbb{P}\left(G_{j,j',1}=\frac{d\rho \nu e^{-\nu}n}{k}\right)\mathbb{P}\left(G_{j,j',2}<\frac{n\nu e^{-\nu}}{k}\big(d\rho - (1-2\zeta)C\big)\right)\\
    &\geq  \Theta \Big(\frac{1}{\sqrt{n/k}}\Big)\exp \left(-\rho ne^{-\nu}(1+o(1))\cdot D\Big(\frac{d(1+o(1))\nu}{k}\big\|\frac{\nu}{k}\Big)\right)
    \\ &\quad\times\Theta \Big(\frac{1}{\sqrt{n/k}}\Big)\exp\left(-(1-\rho)ne^{-\nu}(1+o(1))\cdot D\Big(\frac{(d\rho - (1-2\zeta)C)(1+o(1))}{1-\rho}\frac{\nu}{k}\big\|\frac{\nu}{k}\Big)\right)\label{eq:problem1}\\ 
    &=\exp \left(-\frac{n\nu e^{-\nu}}{k}\big[g(C,\zeta,d,\rho)+o(1)\big]\right), \label{eq:problem2}
\end{align}
where in (\ref{eq:problem2}) we define $g(C,\zeta,d,\rho)$ that depends on $(C,\zeta)$ only through $A:=C(1-2\zeta)$ as follows: 
\begin{equation}\label{eq:g_ber_con}
    g(C,\zeta,d,\rho) = \rho d\log d+\big(\rho d-A\big)\log\Big(\frac{\rho d-A}{1-\rho}\Big) + 1-2\rho d+A,
\end{equation}
and (\ref{eq:problem2}) itself follows from $D(\frac{a\nu}{k}\|\frac{\nu}{k})=\frac{\nu}{k}(a\log a-a+1+o(1))$ and writing the pre-factors $\Theta((\frac{n}{k})^{-1/2})$ as $\exp(\frac{o(1)n}{k})$.

Next, we seek to find the choice of $d$ that gives the strongest bound.  Differentiating $g(C,\zeta,d,\rho)$ with respect to $d$, we obtain \begin{equation}\label{eq:338}
    \frac{\partial g}{\partial d} = \rho \log \Big(\frac{d(\rho d-A)}{1-\rho}\Big),
\end{equation}
which is zero if and only if $\rho d^2-Ad -(1-\rho) = 0$.  Thus, setting $\frac{\partial g}{\partial d}=0$, we readily obtain that
\begin{align}
\label{eq:f2_ber_con}&\min_{d>\max\{0, \frac{A}{\rho}\}}~g(C,\zeta,d,\rho)=g(C,\zeta,d^*,\rho):= f_2(C,\zeta,\rho),\\
    &\text{where }d^* = \frac{A+\sqrt{A^2+4\rho(1-\rho)}}{2\rho}.\label{eq:340}
\end{align}
Therefore, we let $d= d^*+o(1)$ (here the $o(1)$ term only serves to ensure that $\frac{d\rho \nu e^{-\nu}n}{k}$ is a positive integer) to get from (\ref{eq:problem2}) the strongest bound: 
\begin{equation}
    \mathbb{P}\left(G_{j,j',1}-G_{j,j',2}>(1-2\zeta)\frac{Cn\nu e^{-\nu}}{k}\right) \geq \exp\left(-\frac{n\nu e^{-\nu}}{k}\big(f_2(C,\zeta,\rho)+o(1)\big)\right).
\end{equation}
Recall  that to ensure the failure of MLE, with a chosen $j\in\mathcal{K}_{C,\zeta}$ at hand, the condition (\ref{eq:failure}) should hold for some $k+1\leq j'\leq p$. The placements of the different non-defective items are independent, and by implicitly conditioning on $(\Xv_s,\Yv)$ we have
\begin{align}
    \label{eq:ber_conclu_c2_start}
    &\mathbb{P}\left(\exists j \in \{k+1,\dotsc, p\}\text{ s.t. }G_{j,j',1}-G_{j,j',2}>(1-2\zeta)\frac{Cn\nu e^{-\nu}}{k}\right)\\
    &\quad =1- \mathbb{P}\left(\forall j \in \{k+1,\dotsc, p\},~G_{j,j',1}-G_{j,j',2}\leq (1-2\zeta)\frac{Cn\nu e^{-\nu}}{k}\right)\\
    &\quad =1-\left[\mathbb{P}\left(G_{j,j',1}-G_{j,j',2}\leq (1-2\zeta)\frac{Cn\nu e^{-\nu}}{k}\right)\right]^{p-k}\\
    &\quad \geq 1-\left[1-\exp\left(-\frac{n\nu e^{-\nu}}{k}\big(f_2(C,\zeta,\rho)+o(1)\big)\right)\right]^{p-k}.\label{eq:345}
\end{align}
Moreover, using $1-a \le e^{-a}$, we have 
\begin{align}
    \left[1-\exp\left(-\frac{n\nu e^{-\nu}}{k}\big(f_2(C,\zeta,\rho)+o(1)\big)\right)\right]^{p-k}
    \leq\exp\left(-(p-k)\exp\Big(-\frac{n\nu e^{-\nu}}{k}\big(f_2(C,\zeta,\rho)+o(1)\big)\Big)\right).
\end{align}
Thus, to ensure that (\ref{eq:345}) behaves as $1-o(1)$, it suffices to have $\log(p-k)-\frac{n\nu e^{-\nu}}{k}(f_2(C,\zeta,\rho)+o(1))\to\infty$. Substituting $\log(p-k)\sim \log p\sim \frac{1}{1-\theta}\log\frac{p}{k}$, we obtain that if     \begin{equation}\label{eq:ber_con_c2_first}
    \frac{1}{1-\theta}\log\frac{p}{k}-\frac{n\nu e^{-\nu}}{k}\big(f_2(C,\zeta,\rho)+o(1)\big)\geq \eta'_2\log\frac{p}{k} 
\end{equation}
holds for some $ \eta'_2>0$, then \textbf{(C2)} holds with $1-o(1)$ probability.  Re-arranging, we obtain the following condition on $n$ for arbitrarily small $\eta_2 > 0$:
\begin{equation}\label{eq:ber_fail_2}
    n\leq  \frac{(1-\eta_2)\frac{1}{1-\theta}k\log\frac{p}{k}}{\nu e^{-\nu}\big(f_2(C,\zeta,\rho)+o(1)\big)}.
\end{equation}

\subsubsection*{Wrapping up}
By merging $\eta_1,~\eta_2$ and $o(1)$ into a single parameter $\eta$,
  the statement  ``(\ref{eq:ber_fail_1}) and (\ref{eq:ber_fail_2}) simultaneously hold'' can be written as 
\begin{equation}
   n\leq (1-\eta)\frac{k\log\frac{p}{k}}{(1-\theta)\nu e^{-\nu}}\frac{1}{\max\{\frac{1}{\theta}f_1(C,\zeta,\rho) ,f_2(C,\zeta,\rho)\}},
\end{equation}
where $\eta>0$ is arbitrarily small.   
Under this condition, the optimal decoder MLE fails with $1-o(1)$ probability. Further optimizing over $(C,\zeta)$  to render the tightest converse bound, we arrive at the threshold
\begin{equation}\label{eq:352}
   \frac{k\log\frac{p}{k}}{(1-\theta)\nu e^{-\nu}}\frac{1}{\min_{C>0,\zeta\in (0,1)}\max\{\frac{1}{\theta}f_1(C,\zeta,\rho) ,f_2(C,\zeta,\rho)\}},
\end{equation}
where $f_1(C,\zeta,\rho)$ is given in (\ref{eq:f1_ber_con}), and $f_2(C,\zeta,\rho)$ is defined in (\ref{eq:f2_ber_con}) and (\ref{eq:g_ber_con}).  This establishes the second term in \eqref{eq:ber_threshold}.
\section{Low-$\ell$ Achievability Analysis for Bernoulli Designs} \label{app:ber_achi}

Recall that in this part of the analysis, we restrict our attention to  $\ell\in[1,\frac{k}{\log k}]$. Our strategy is to show that there is $o(1)$ probability of the restricted MLE decoder (\ref{eq:mle_restricted}) failing. This is more involved than the converse analysis because we need to consider the failure events associated with {\it all}   $\ell\in [1,\frac{k}{\log k}]$ and the corresponding feasible $(C,\zeta)$ pairs, and finally bound the overall failure probability.  Due to the symmetry of the design with respect to re-ordering items, it suffices to again consider the fixed defective set $s=[k]$, meaning that items $1,2,\dotsc,k$ are defective and items $k+1,k+2,\dotsc,p$ are non-defective.

\subsubsection*{Reduction to two conditions}
 
 For $\mathcal{J}\subset s$ with $|\mathcal{J}|=\ell$, we will use the notations $\mathcal{M}_{\mathcal{J}},\mathcal{M}_{\mathcal{J}0},\mathcal{M}_{\mathcal{J}1},\mathcal{N}_0,\mathcal{N}_{00},\mathcal{N}_{01}$ and their corresponding cardinalities (e.g., $M_{\mathcal{J}}=|\mathcal{M}_{\mathcal{J}}|$, $N_0=|\mathcal{N}_0|$) from Lemma \ref{lem:restate}. 
 Conditioned on $\Xv_s$ and $\Yv$,  these sets and their cardinalities are deterministic. Given $\ell\in[1,\frac{k}{\log k}]$, we say $(C,\zeta)\in[0,\infty)\times[0,1]$ is \emph{feasible} if $\frac{Cn\nu e^{-\nu}\ell}{k}$ and $\frac{\zeta\cdot Cn\nu e^{-\nu}\ell}{k}$ are integers, subject to the restriction that $(C,\zeta)=(0,0)$ is the only feasible pair with $C=0$.   Given a feasible pair $(C,\zeta)$, we define 
    $$\mathcal{K}_{\ell,C,\zeta}=\Big\{\mathcal{J}\subset s,|\mathcal{J}|=\ell:M_{\mathcal{J}}=\frac{Cn\nu e^{-\nu}\ell}{k},~M_{\mathcal{J}0}=\frac{\zeta\cdot Cn\nu e^{-\nu}\ell}{k}\Big\}$$ 
 as in (\ref{eq:defi_KellCzeta}).     For given $(\mathcal{J},\mathcal{J}')\in \mathcal{K}_{\ell,C,\zeta}\times ([p]\setminus s)$ with $|\mathcal{J}'|=\ell$, recall that  $G_{\mathcal{J},\mathcal{J}',1}$ denotes the number of tests in $\mathcal{N}_{01}\cup \mathcal{M}_{\mathcal{J}1}$ that contain some item from $\mathcal{J}'$, and ${G}_{\mathcal{J},\mathcal{J}',2}$ denotes the number of tests in $\mathcal{N}_{00}\cup \mathcal{M}_{\mathcal{J}0}$ that contain some item from $\mathcal{J}'$. All of these definitions are consistent with Lemma \ref{lem:restate}.

To bound the probability that restricted MLE fails, we first invoke Lemma \ref{lem:restate}(b) to obtain 
a necessary condition for failure. 
In particular, Lemma \ref{lem:restate}(b) implies the following: If restricted MLE fails and returns some $s'$ satisfying $|s\setminus s'|\in [1,\frac{k}{\log k}]$, then there exists some feasible pair $(C,\zeta)\in [0,\infty)\times [0,1]$ with respect to the specific $\ell:=|s\setminus s'|$,  
such that \textbf{(C1)} and \textbf{(C2)} below
simultaneously hold:
\begin{itemize}
    \item \textbf{(C1)} $k_{\ell,C,\zeta}\ge 1$ (i.e., $\mathcal{K}_{\ell,C,\zeta}\neq \varnothing$); 
    \item \textbf{(C2)} There exist some $\mathcal{J}\in\mathcal{K}_{\ell,C,\zeta}$ and some  $\mathcal{J}'\subset [p]\setminus s$ with $|\mathcal{J}'|=\ell$ such that
    \begin{equation}
        \label{eq:35411}G_{\mathcal{J},\mathcal{J}',1}-G_{\mathcal{J},\mathcal{J}',2} \ge  (1-2\zeta)\frac{Cn\nu e^{-\nu}\ell}{k}.
    \end{equation}    
\end{itemize}
Therefore, it remains to identify the thresholds for $n$ above which there is $o(1)$ probability of \textbf{(C1)} and \textbf{(C2)} simultaneously holding. In the following, we will consider a fixed  $\ell\in[1,\frac{k}{\log k}]$, and we will later apply a union bound to cover all $\ell$. 

\subsubsection*{The case of $(C,\zeta)\in (C_0,\infty)\times [0,1]$ under a given $\ell$}

In general, we can have integer-valued $\frac{Cn\nu e^{-\nu}\ell}{k}$ with arbitrarily large $C$. In this part, we first address the cases of $C> C_0$ for some sufficiently large absolute constant $C_0$, so that we can restrict to $C=O(1)$ later.  For specific $\mathcal{J}\subset s$ with $|\mathcal{J}|=\ell$,
we let the probability of a test   belonging to $\mathcal{M}_{\mathcal{J}}$ be
\begin{equation}
    P_1:= \Big(1-\big(1-\frac{\nu}{k}\big)^\ell\Big)\Big(1-\frac{\nu}{k}\Big)^{k-\ell} = \frac{\ell \nu e^{-\nu}(1+o(1))}{k},
\end{equation}
where the last equality follows from $\ell=o(k)$. We thus have  $M_{\mathcal{J}}\sim\text{Bin}(n,P_1)$, and under the given $\ell$ we can bound the probability 
\begin{align}\label{eq:boundk1}
&\mathbb{P}\Big(\textbf{(C1)}\text{ and }\textbf{(C2)} \text{ hold for some feasible }(C,\zeta)\in (C_0,\infty)\times [0,1]\Big)
\\ \label{eq:C1_only}
&\le \mathbb{P}\Big(\mathcal{K}_{\ell,C,\zeta}\neq \varnothing \text{ holds for some feasible } (C,\zeta)\in (C_0,\infty)\times [0,1]\Big)\\\label{eq:K_first_con}
    &\leq\mathbb{P}\Big(\exists~ \mathcal{J}\subset s\text{ with }|\mathcal{J}|=\ell,~\text{s.t. }M_{\mathcal{J}}> \frac{C_0n\nu e^{-\nu}\ell}{k}\Big)\\
    &\leq \binom{k}{\ell}\mathbb{P}\left(\text{Bin}\Big(n,\frac{\ell \nu e^{-\nu}(1+o(1))}{k}\Big)>\frac{C_0n\nu e^{-\nu}\ell}{k}\right)\label{eq:union_calJ}\\
    &\leq\exp\left((1+o(1))\ell \log k - \frac{\ell n}{k}\nu e^{-\nu}\big(C_0\log C_0-C_0+1+o(1)\big)\right)\label{eq:361}\\ &\leq  k^{-5}, \label{eq:boundk2}
\end{align} 
where in (\ref{eq:C1_only}) we get an upper bound by only considering \textbf{(C1)} and not \textbf{(C2)}, in (\ref{eq:K_first_con}) we further relax $\mathcal{K}_{\ell,C,\zeta}\neq \varnothing$ to the first condition of $\mathcal{K}_{\ell,C,\zeta}$ (namely $M_{\mathcal{J}}=\frac{Cn\nu e^{-\nu}\ell}{k}$; see (\ref{eq:defi_KellCzeta})), in (\ref{eq:union_calJ}) we apply a union bound over all $\mathcal{J}\subset s$ with $|\mathcal{J}|=\ell$,
in (\ref{eq:361}) we use the Chernoff bound (see (\ref{eq:chernoff2}) in Lemma \ref{binoconcen}), and \eqref{eq:boundk2} holds by letting $C_0=\Theta(1)$ be sufficiently large (recall that we have $\frac{n}{k}=\Theta(\log k)$). Therefore,   for a given $\ell\in[1,\frac{k}{\log k}]$, \textbf{(C1)} and  \textbf{(C2)} simultaneously hold for some feasible $(C,\zeta)\in (C_0,\infty)\times [0,1]$ with probability no higher than $k^{-5}$.

\subsubsection*{Condition for (C1) under a given $(\ell,C,\zeta)$}
In this part, we identify the more explicit condition for \textbf{(C1)} under the given $\ell$ and a specific feasible $(C,\zeta)$ with $C\leq C_0$. 
We seek to bound $k_{\ell,C,\zeta}$. Our strategy is to first calculate $\mathbbm{E}k_{\ell,C,\zeta}$ and then deduce the high-probability behavior of $k_{\ell,C,\zeta}$ via Markov's inequality. 
Since $k_{\ell,C,\zeta}=\sum_{\mathcal{J}\subset s,|\mathcal{J}|=\ell}\mathbbm{1}(M_{\mathcal{J}}=\frac{Cn\nu e^{-\nu}\ell}{k},~M_{\mathcal{J}0}=\zeta M_{\mathcal{J}})$, we have \begin{align}
    \mathbbm{E}k_{\ell,C,\zeta}&= \binom{k}{\ell}\mathbb{P}\left(\text{for fixed }\mathcal{J}\subset s\text{ with }|\mathcal{J}|=\ell,~M_{\mathcal{J}}=\frac{Cn\nu e^{-\nu}\ell}{k},~M_{\mathcal{J}0}=\zeta M_{\mathcal{J}}\right)\\
    &\label{eq:ber_Mj}=\binom{k}{\ell}\mathbb{P}\left(\text{Bin}\big(n,P_1\big)=\frac{Cn\nu e^{-\nu}\ell}{k}\right)\mathbb{P}\left(\text{Bin}\Big(\frac{Cn\nu e^{-\nu}\ell}{k},\rho\Big)=\frac{\zeta\cdot Cn\nu e^{-\nu}\ell}{k}\right)\\\label{eq:3655}
    &\leq \binom{k}{\ell}\exp\left(-n\cdot D\Big(\frac{C \nu e^{-\nu}\ell}{k}\big\|\frac{\nu e^{-\nu}\ell(1+o(1))}{k}\Big)-\frac{Cn\nu e^{-\nu}\ell}{k}D(\zeta\|\rho)\right)\\
    &=\binom{k}{\ell}\exp \Big(-\frac{n\nu e^{-\nu}\ell}{k}\Big[f_1(C,\zeta,\rho)+o(1)\Big]\Big), \label{eq:364}
\end{align}
where in (\ref{eq:3655}) we use the Chernoff bound (see (\ref{eq:chernoff1}) in Lemma \ref{binoconcen}), and in (\ref{eq:364}) we define 
\begin{equation}\label{eq:f1_ber_achi}
    f_1(C,\zeta,\rho):= C\log C-C+C\cdot D(\zeta\|\rho)+1,
\end{equation}
and then use $D(\frac{C\nu e^{-\nu}\ell}{k}\|\frac{\nu e^{-\nu}\ell(1+o(1))}{k})=\frac{\nu e^{-\nu}\ell}{k}(C\log C-C+1+o(1))$.
We note that $f_1(C,\zeta,\rho)$ coincides with (\ref{eq:f1_ber_con}) in the proof of the converse bound for Bernoulli designs.

We now apply Markov's inequality to bound  $k_{\ell,C,\zeta}$. We divide this step into two cases, namely $1\leq \ell\leq \log k$  and  $\log k<\ell\leq \frac{k}{\log k}$.

\textbf{The case of $1\leq \ell\leq \log k$:} For specific $(\ell,C,\zeta)$, Markov's inequality gives   \begin{equation}\label{eq:365}
    \mathbb{P}\Big(k_{\ell,C,\zeta}\geq (\ell\log k)^5 \mathbbm{E}[k_{\ell,C,\zeta}]\Big) \leq (\ell\log k )^{-5}.
\end{equation}
Using the bound on $\mathbbm{E}k_{\ell,C,\zeta}$ from (\ref{eq:364}), we know that   with probability at least $1-(\ell\log k)^{-5}$, it holds that
\begin{align}
    k_{\ell,C,\zeta}< (\ell\log k)^5\mathbbm{E}k_{\ell,C,\zeta}\leq & (\ell\log k)^5
    \binom{k}{\ell}\exp \Big(-\frac{n\nu e^{-\nu}\ell}{k}\Big[f_1(C,\zeta,\rho)+o(1)\Big]\Big)
    \\\leq & \exp\left((1+o(1))\ell \log\frac{k}{\ell}- \frac{n\nu e^{-\nu}\ell}{k}\Big[f_1(C,\zeta,\rho)+o(1)\Big]\right).
\end{align}

\textbf{The case  of $\log k<\ell\leq \frac{k}{\log k}$:}
  For specific $(\ell,C,\zeta)$, Markov's inequality gives 
    \begin{equation}\label{eq:367}
    \mathbb{P}\Big(k_{\ell,C,\zeta}\geq   k^5 \mathbbm{E}[k_{\ell,C,\zeta}]\Big) \leq k^{-5}.
\end{equation}
Combining with (\ref{eq:364}), we know that  with probability at least $1-k^{-5}$, it holds that  
\begin{align}
    k_{\ell,C,\zeta}<k^5 \mathbbm{E}k_{\ell,C,\zeta} \leq &k^5\binom{k}{\ell}\exp \Big(-\frac{n\nu e^{-\nu}\ell}{k}\Big[f_1(C,\zeta,\rho)+o(1)\Big]\Big)\\
    \leq & \exp \Big((1+o(1))\ell\log\frac{k}{\ell}-\frac{n\nu e^{-\nu}\ell}{k}\Big[f_1(C,\zeta,\rho)+o(1)\Big]\Big).
\end{align}
Combining the above two cases, we obtain the following implication for given $(\ell,C,\zeta)$ for any $\eta_1 > 0$ and sufficiently large $k$:
\begin{align}\label{eq:C1_threshold}
    &\frac{n\nu e^{-\nu}\ell}{k}\big[f_1(C,\zeta,\rho)+o(1)\big]\geq (1+\eta_1)\ell\log\frac{k}{\ell} \\ 
    &\Longrightarrow ~
    \mathbb{P}\big(k_{\ell,C,\zeta}\ge 1\big)\le \hat{P}_\ell,~~\text{where }\hat{P}_\ell = \begin{cases}
        (\ell\log k)^{-5},~~\text{if }1\le \ell\le \log k\\
        ~~~~~~k^{-5},~~~~\text{if }\log k<\ell\le \frac{k}{\log k}.
    \end{cases}\label{eq:c1_condition}
\end{align}
Therefore, if (\ref{eq:C1_threshold}) holds, then \textbf{(C1)} holds for some given $(\ell,C,\zeta)$ with probability at most $\hat{P}_\ell$.

\subsubsection*{Condition for (C2) for a given $(\ell,C,\zeta)$}

We again consider a given $\ell$ and a specific feasible pair $(C,\zeta)$ with $C\leq C_0$, and now switch to analyzing \textbf{(C2)}.
For $\mathcal{J}'\subset [p]$ with $|\mathcal{J}'|=\ell$, the probability of a test containing some item from $\mathcal{J}'$ is: \begin{equation}\label{eq:ber_achi_defi_P2}
    P_2:= 1- \big(1-\frac{\nu}{k}\big)^\ell = \frac{\nu\ell(1+o(1))}{k}
.\end{equation}
Given $(\ell,C,\zeta)$,     \textbf{(C2)} states that (\ref{eq:35411}) holds for some $\mathcal{J}\in \mathcal{K}_{\ell,C,\zeta}$ and some $\mathcal{J}'\subset [p]\setminus s$ with $|\mathcal{J}'|=\ell$. In (\ref{eq:35411}),
recall that $G_{\mathcal{J},\mathcal{J}',1}$ (resp., $G_{\mathcal{J},\mathcal{J}',2}$) is the number of tests in $\mathcal{N}_{01}\cup \mathcal{M}_{\mathcal{J}1}$ (resp., $\mathcal{N}_{00}\cup \mathcal{M}_{\mathcal{J}0}$) that contain some item from $\mathcal{J}'$, and it will be convenient to further decompose these into 
\begin{equation}
G_{\mathcal{J},\mathcal{J}',1}=\widetilde{G}_{\mathcal{J}',1}+U_{\mathcal{J},\mathcal{J}',1},~G_{\mathcal{J},\mathcal{J}',2}=\widetilde{G}_{\mathcal{J}',2}+U_{\mathcal{J},\mathcal{J}',2},\label{eq:decompose}
\end{equation}
where we define $\widetilde{G}_{\mathcal{J}',1}$  (resp., $\widetilde{G}_{\mathcal{J}',2}$)   as the number of tests in $\mathcal{N}_{01}$ (resp., $\mathcal{N}_{00}$) that contain some item from $\mathcal{J}'$ (thus $\widetilde{G}_{\mathcal{J}',1}$ and $\widetilde{G}_{\mathcal{J}',2}$ have no dependence on $\mathcal{J}$, as also reflected by our notation), and $U_{\mathcal{J},\mathcal{J}',1}$ (resp., $U_{\mathcal{J},\mathcal{J}',2}$) as the number of tests in $\mathcal{M}_{\mathcal{J}1}$ (resp., $\mathcal{M}_{\mathcal{J}0}$) that contain some item from $\mathcal{J}'$. Given  $(\Xv_s, \Yv)$ and for specific $\mathcal{J}$ and $\mathcal{J}'$, it follows that
\begin{gather}\label{eq:tildeG_distri}
    \widetilde{G}_{\mathcal{J}',1}\sim \text{Bin}(N_{01},P_2),~\widetilde{G}_{\mathcal{J}',2}\sim \text{Bin}(N_{00},P_2)\\
    U_{\mathcal{J},\mathcal{J}',1}\sim \text{Bin}(M_{\mathcal{J}1},P_2),~ U_{\mathcal{J},\mathcal{J}',2}\sim \text{Bin}(M_{\mathcal{J}0},P_2)
\end{gather}
are independent, as the tests comprising $\mathcal{N}_{01},\mathcal{N}_{00},\mathcal{M}_{\mathcal{J}0},\mathcal{M}_{\mathcal{J}1}$ are disjoint. Under these definitions, (\ref{eq:35411}) can be written as 
\begin{equation}\label{eq:376}
    \underbrace{\widetilde{G}_{\mathcal{J}',1}-\widetilde{G}_{\mathcal{J}',2}}_{:=\widetilde{G}_{\mathcal{J}'}}+U_{\mathcal{J},\mathcal{J}',1}-U_{\mathcal{J},\mathcal{J}',2}\ge (1-2\zeta)\frac{Cn\nu e^{-\nu}\ell}{k},
\end{equation}
which   further implies $\widetilde{G}_{\mathcal{J}'}\ge  (1-2\zeta)\frac{Cn\nu e^{-\nu}\ell}{k}-U_{\mathcal{J},\mathcal{J}',1}$ since $U_{\mathcal{J},\mathcal{J}',2}\ge 0$. Thus,  if \textbf{(C2)} holds, then we have  $\widetilde{G}_{\mathcal{J}'}\ge (1-2\zeta)\frac{Cn\nu e^{-\nu}\ell}{k}-U_{\mathcal{J},\mathcal{J}',1}$ for some $\mathcal{J}\in \mathcal{K}_{\ell,C,\zeta}$ and some $\mathcal{J}'\subset [p]\setminus s$ with $|\mathcal{J}'|=\ell$, which further implies
\begin{equation}\label{eq:suff_C2}
    \max_{\substack{\mathcal{J}'\subset [p]\setminus s\\|\mathcal{J}'|=\ell}}\widetilde{G}_{\mathcal{J}'} \ge  (1-2\zeta)\frac{Cn\nu e^{-\nu}\ell}{k}- \max_{\substack{\mathcal{J}\in\mathcal{K}_{\ell,C,\zeta}\\\mathcal{J}'\subset [p]\setminus s,|\mathcal{J}'|=\ell}}~ U_{\mathcal{J},\mathcal{J}',1}.
\end{equation}
Thus, to bound the probability of \textbf{(C2)}, it suffices to bound the probability of its necessary condition given in (\ref{eq:suff_C2}).

\textbf{The effect of $U_{\mathcal{J},\mathcal{J}',1}$}: For convenience, we write the last term in (\ref{eq:suff_C2}) using the shorthand $\max_{\mathcal{J},\mathcal{J}'}U_{\mathcal{J},\mathcal{J}',1}$ with the constraints left implicit. 
We first show that the term  $\max_{\mathcal{J},\mathcal{J}'}U_{\mathcal{J},\mathcal{J}',1}$  in the right-hand side of (\ref{eq:suff_C2}) 
only has a minimal impact, in the sense that scales as $o(\frac{\ell n}{k})=o(\ell\log k)$. In fact, by $U_{\mathcal{J},\mathcal{J}',1}\sim \text{Bin}(M_{\mathcal{J}1},P_2)$ and $M_{\mathcal{J}1}=\frac{(1-\zeta)Cn\nu e^{-\nu}\ell}{k}\leq \frac{C_0n \ell}{k}$ (recall that we are now considering $C\leq C_0=O(1)$ and  $\nu e^{-\nu}\leq 1$ holds trivially), for fixed $\mathcal{J},\mathcal{J}'$ we have   
\begin{align}
    &\mathbb{P}\left(U_{\mathcal{J},\mathcal{J}',1}\geq \frac{\ell n}{k (\log\frac{k}{\ell})^{1/2}}\right) \nonumber \\ \label{eq:nc_small_effect}&\leq \mathbb{P}\left(\text{Bin}\Big(\frac{C_0n\ell}{k},\frac{\nu\ell(1+o(1))}{k}\Big)\geq \frac{\ell n}{k (\log\frac{k}{\ell})^{1/2}}\right)\\\label{eq:379}
    &\leq\exp\left(-\frac{\ell n}{k}\Big(\Big[\log\frac{k}{\ell}\Big]^{-1/2}\Big[\log\frac{k}{C_0\nu \ell(\log\frac{k}{\ell})^{1/2}}- 1\Big]+\frac{C_0\nu\ell}{k}\Big)\right)\\&=\exp\Big(-\Omega\Big(\sqrt{\log\frac{k}{\ell}}\Big)\cdot \ell \log k\Big), \label{eq:nc_small_end}
\end{align}
where in (\ref{eq:379}) we use the Chernoff bound (see (\ref{eq:chernoff2}) in Lemma \ref{binoconcen}), and   (\ref{eq:nc_small_end}) holds since $\frac{\ell n}{k}=\Theta(\ell\log k)$. 
By a union bound over at most $\binom{k}{\ell}\times \binom{p}{\ell}$ choices of $(\mathcal{J},\mathcal{J}')$, we obtain that 
\begin{align}
\mathbb{P}\left(\max_{\mathcal{J},\mathcal{J}'}U_{\mathcal{J},\mathcal{J}',1}\geq \frac{\ell n}{k(\log\frac{k}{\ell})^{1/2}}\right)\leq \exp\Big(\ell\log\frac{ek}{\ell}+ \ell\log\frac{ep}{\ell}-\Omega\Big(\sqrt{\log\frac{k}{\ell}}\Big)\cdot \ell\log k\Big)\leq k^{-10\ell}, \label{eq:k-10ell}
\end{align}
where the last inequality holds for large enough $k$ because $\frac{k}{\ell}\to \infty$ and $k=\Theta(p^\theta)$ for some $\theta\in(0,1)$. Combining with $\frac{\ell n}{k(\log\frac{k}{\ell})^{1/2}}=o(\frac{\ell n}{k})$, with probability at least $1-k^{-10\ell}$, we have
\begin{align}
    \label{eq:ber_minimal_U}
\max_{\mathcal{J},\mathcal{J}'}~U_{\mathcal{J},\mathcal{J}',1} =o\Big(\frac{\ell n}{k}\Big), 
\end{align}
under which the sufficient condition for (\textbf{C2}) given in (\ref{eq:suff_C2}) can be further simplified to 
\begin{equation}\label{eq:suff_C2_simplify}
     \max_{\substack{\mathcal{J}'\subset [p]\setminus s\\|\mathcal{J}'|=\ell}}\widetilde{G}_{\mathcal{J}'} \ge (1-2\zeta-o(1))\frac{Cn\nu e^{-\nu}\ell}{k}.
\end{equation}

\textbf{Decomposing the event (\ref{eq:suff_C2_simplify}):} With respect to the randomness of $\Xv_s$ and $\Yv$, we have already shown that (\ref{eq:N00N01}) holds with $1-o(1)$ probability, and thus we can proceed on the event
\begin{align}
    \mathscr{A}_1= \big\{N_{01}=\rho n e^{-\nu}(1+o(1)),~N_{00}=(1-\rho)ne^{-\nu}(1+o(1)) \big\}
    \label{eq:ber_achi_n0}
\end{align}
by a simple union bound. 
In the following, for a generic event $\mathscr{E}$, an upper bound on $\mathbb{P}(\mathscr{E})$ should be more precisely understood as a bound on $\mathbb{P}(\mathscr{E}\cap\mathscr{A}_1)$, but to avoid cumbersome notation we will only make this explicit in the final stage (see (\ref{eq:ber_c2_prob_boundhh}) below).

Given $\mathcal{J}'\subset [p]\setminus s $ with $|\mathcal{J}'|=\ell$, by conditioning on $(\Xv_s,\Yv)$, the randomness of $\Xv_{\mathcal{J}'}$ gives \begin{align}
    \widetilde{G}_{\mathcal{J}'}=\widetilde{G}_{\mathcal{J}',1}-\widetilde{G}_{\mathcal{J}',2}\sim \text{Bin}(\rho n e^{-\nu}(1+o(1)),P_2)-\text{Bin}((1-\rho)ne^{-\nu}(1+o(1)),P_2), \label{eq:ber_achi_dis1}
\end{align}
where $P_2$ is the probability defined in (\ref{eq:ber_achi_defi_P2}). 
 We can now introduce 
 \begin{align}
     d_0 := \frac{C(1-2\zeta)+1-\rho}{\rho} 
 \end{align}
 and proceed as follows:
\begin{align}
    \label{eq:393}&\mathbb{P}\Big(\widetilde{G}_{\mathcal{J}'}\geq (1-2\zeta-o(1))\frac{Cn\nu e^{-\nu}\ell}{k}\Big)\leq \mathbb{P}\Big(\widetilde{G}_{\mathcal{J}',1}\geq \frac{d_0\cdot \rho \nu \ell n e^{-\nu}}{k}\Big)\\
    &\quad\quad+ \sum_{\substack{0\leq d<d_0\\\frac{d\rho\nu e^{-\nu}\ell n}{k}\in\mathbb{Z}}}\mathbb{P}\Big(\widetilde{G}_{\mathcal{J}',1}=\frac{d\cdot \rho \nu \ell n e^{-\nu}}{k}\Big)\mathbb{P}\Big(\widetilde{G}_{\mathcal{J}',2}\leq \big[\rho d-(1-2\zeta-o(1))C\big]\frac{n\nu e^{-\nu}\ell}{k}\Big).\label{eq:387}
\end{align}
To   bound the terms in (\ref{eq:393}) and (\ref{eq:387}), we consider the following two cases.

\textbf{Case 1: $(1-2\zeta)C> 2\rho -1$.} In this case we have $d_0>1.$

\underline{Bounding the   term in (\ref{eq:393}):}
Consider conditioning on $(\Xv_s,\Yv)$ and utilizing the randomness of $\Xv_{\mathcal{J}'}$, 
and recall from (\ref{eq:ber_achi_dis1}) and (\ref{eq:ber_achi_defi_P2}) that  we have $\widetilde{G}_{\mathcal{J}',1}\sim \text{Bin}(\rho ne^{-\nu}(1+o(1)),\frac{\nu\ell(1+o(1))}{k})$. 
Since $d_0>1$, we  can apply Chernoff bound ((\ref{eq:chernoff1}) in Lemma \ref{binoconcen}) to bound the term in (\ref{eq:393}) as  
\begin{align}\label{eq:minorbound}
\mathbb{P}\Big(\widetilde{G}_{\mathcal{J}',1}\geq \frac{d_0\cdot \rho \nu \ell n e^{-\nu}}{k}\Big)&\leq \exp\left(-\rho ne^{-\nu}(1+o(1))D\Big(\frac{d_0\ell\nu }{k}\big\|\frac{\ell\nu}{k}\Big)\right).
\end{align}

\underline{Bounding the term in (\ref{eq:387}):} First observe that $\mathbb{P}(\widetilde{G}_{\mathcal{J}',2}\le [\rho d-(1-2\zeta-o(1))C]\frac{n\nu e^{-\nu}\ell}{k})=0$ if $d< \frac{C(1-2\zeta)}{\rho}-o(1)$, and so we can further restrict the summation over $d$ in (\ref{eq:387}) to the range \begin{align} \label{eq:ber_achi_drange}
    \mathcal{D}:=\Big\{ \max\Big\{0,\frac{C(1-2\zeta)}{\rho}\Big\}\le d<d_0: \frac{d\rho \nu e^{-\nu}\ell n}{k}\in \mathbb{Z}\Big\} 
\end{align}
up to a $o(1)$ term that has no impact on our subsequent analysis. Since $d_0=O(1)$, 
the term in (\ref{eq:387}) involves no more than $O(\frac{\ell n}{k})$ summands, so using the Chernoff bound (see (\ref{eq:chernoff1}) in Lemma \ref{binoconcen}) we can bound it as
\begin{align}
 \label{eq:390}   &O\Big(\frac{\ell n}{k}\Big)\max_{d\in \mathcal{D}}\Big\{\exp\Big(-\rho n e^{-\nu}(1+o(1))\cdot D\Big(\frac{d\nu\ell(1+o(1))}{k}\big\|\frac{\nu\ell(1+o(1))}{k}\Big)\Big)\\&\quad\quad \quad\cdot \exp\Big(-(1-\rho)ne^{-\nu}(1+o(1))\cdot D\Big(\frac{[\rho d-(1-2\zeta)C]\nu\ell(1+o(1))}{(1-\rho)k}\big\|\frac{\nu\ell(1+o(1))}{k}\Big)\Big)\Big\}\label{eq:chernoff_justi} \\&=\exp\Big(-ne^{-\nu}(1+o(1))\min_{d\in\mathcal{D}}\Big[\rho\cdot D\Big(\frac{d\nu\ell}{k}\big\|\frac{\nu\ell}{k}\Big)+(1-\rho) D\Big(\frac{[\rho d-(1-2\zeta)C]\nu\ell}{(1-\rho)k}\big\|\frac{\nu\ell}{k}\Big)\Big]\Big)\label{eq:399}
 \\
    &\le \exp \Big(-\frac{\ell n\nu e^{-\nu}}{k} \big[\min_{d\ge \max\{0,\frac{A}{\rho}\}} g(C,\zeta,d,\rho)+o(1)\big]\Big),\label{eq:392}
\end{align}
where:
\begin{itemize}
    \item the Chernoff bound in (\ref{eq:chernoff_justi}) is justified by $\widetilde{G}_{\mathcal{J}',2}\sim\text{Bin}\big((1-\rho)ne^{-\nu}(1+o(1)),\frac{\nu\ell(1+o(1))}{k}\big)$ (note also that $d<d_0$ implies $\rho d-(1-2\zeta)C<1-\rho$);
    \item to obtain (\ref{eq:392}), we define $A=C(1-2\zeta)$ and relax the range of $d$ from $d\in \mathcal{D}$ to $d \ge \max\big\{0,\frac{A}{\rho}\big\}$, and we introduce the shorthand 
    \begin{equation}\label{eq:g_ber_achi}
        g(C,\zeta,d,\rho) = \rho d\log d+\big(\rho d-A\big)\log\Big(\frac{\rho d-A}{1-\rho}\Big) + 1-2\rho d+A.
    \end{equation}
     Then, (\ref{eq:392}) follows from (\ref{eq:399}), by using $D(\frac{a\nu\ell}{k}\|\frac{\nu\ell}{k})=\frac{\nu\ell}{k}(a\log a-a+1+o(1))$, and noting that $g(C,\zeta,d,\rho)$ defined here coincides with (\ref{eq:g_ber_con}) in the proof of converse bound for Bernoulli designs.
\end{itemize}
Thus, by the same reasoning as (\ref{eq:338})--(\ref{eq:340}), we know that \begin{equation}
     \min_{d\ge \max\{0,\frac{A}{\rho}\}}g(C,\zeta,d,\rho) =  g(C,\zeta,d^*,\rho):=f_2(C,\zeta,\rho),
\end{equation}
with $d^*$ being the same as in (\ref{eq:340}):
\begin{equation}\label{eq:ber_achi_dstar}
    d^* = \frac{A+\sqrt{A^2+4\rho(1-\rho)}}{2\rho}. 
\end{equation}
Therefore, we obtain a bound on the term in (\ref{eq:387}) as 
\begin{equation}\label{eq:bigbound}
    \exp\Big(-\frac{\ell n \nu e^{-\nu}}{k}\big(f_2(C,\zeta,\rho)+o(1)\big)\Big),
\end{equation}
 with $f_2(C,\zeta,\rho)$ being the same as in the proof of Bernoulli design converse bound; see (\ref{eq:f2_ber_con}).

\underline{Comparing the two bounds (\ref{eq:bigbound}) and (\ref{eq:minorbound}):} We return to (\ref{eq:399})--(\ref{eq:392}), in which we used $D(\frac{a\nu\ell}{k}\| \frac{\nu\ell}{k})= \frac{\nu\ell}{k}(a\log a - a + 1+o(1))$ and established that 
\begin{align}
    \rho \cdot D\Big(\frac{d\nu\ell}{k}\big\|\frac{\nu\ell}{k}\Big)+(1-\rho)\cdot D\Big(\frac{[\rho d-(1-2\zeta)C]\nu\ell}{(1-\rho)k}\big\|\frac{\nu\ell}{k}\Big) = \frac{\nu\ell}{k}[g(C,\zeta,d,\rho)+o(1)].\label{eq:divergence_form_g}
\end{align}
This observation, along with $d_0=\frac{C(1-2\zeta)+1-\rho}{\rho}>1$ (recall that we are considering  $C(1-2\zeta)>2\rho-1$), allows us to compare the relevant terms as follows:
\begin{align}
   \frac{\ell n\nu e^{-\nu}}{k}f_2(C,\zeta,\rho) &= \frac{\ell n\nu e^{-\nu}}{k}\min_{d\ge\max\{0,\frac{A}{\rho}\}}g(C,\zeta,d,\rho) \\ 
   &\le \frac{\ell n\nu e^{-\nu}}{k} g(C,\zeta,d_0,\rho)\\
    &= ne^{-\nu}\Big(\rho D\Big(\frac{d_0\nu\ell}{k}\big\|\frac{\nu\ell}{k}\Big)+(1-\rho)D\Big(\frac{[\rho d_0-(1-2\zeta)C]\nu\ell}{(1-\rho)k}\big\|\frac{\nu\ell}{k}\Big)\Big) - o\Big(\frac{\ell n}{k}\Big)\label{eq:use_g_divergence}\\
    & = \rho n e^{-\nu}D\Big(\frac{d_0\nu\ell}{k}\big\|\frac{\nu\ell}{k}\Big) -o\Big(\frac{\ell n}{k}\Big) \label{eq:substi_d0_0} \\
    & = \rho n e^{-\nu}(1+o(1))D\Big(\frac{d_0\nu\ell}{k}\big\|\frac{\nu\ell}{k}\Big),\label{eq:substi_d0}
\end{align}
where in (\ref{eq:use_g_divergence}) we substitute (\ref{eq:divergence_form_g}), and in (\ref{eq:substi_d0_0}) we substitute $d_0 = \frac{(1-2\zeta)C+1-\rho}{\rho}$ and observe that the second term in (\ref{eq:use_g_divergence}) vanishes.  
Therefore, the bound (\ref{eq:bigbound}) dominates the one in (\ref{eq:minorbound}), and it follows that 
\begin{equation}
    \label{eq:bound_diff_ber}\mathbb{P}\Big(\widetilde{G}_{\mathcal{J}'}\geq (1-2\zeta-o(1))\frac{Cn\nu e^{-\nu}\ell}{k}\Big) \leq \exp\Big(-\frac{\ell n \nu e^{-\nu}}{k}\big(f_2(C,\zeta,\rho)+o(1)\big)\Big).
\end{equation}

\textbf{Case 2: $(1-2\zeta)C\leq 2\rho -1$.}  In this case, it suffices to apply the trivial bound \begin{align}
    \label{eq:ber_achi_trivial}\mathbb{P}\Big(\widetilde{G}_{\mathcal{J}'}\geq (1-2\zeta-o(1))\frac{Cn\nu e^{-\nu}\ell}{k}\Big)\leq 1,
\end{align}
which in turn can trivially be written as $\exp(-\frac{\ell n\nu e^{-\nu}}{k}o(1))$.  


\textbf{Combining Cases 1-2:}
Some algebra verifies that $f_2(C,\zeta,\rho)=0$ when $A=(1-2\zeta)C=2\rho-1$: Substituting this into  (\ref{eq:ber_achi_dstar}) gives $d^*=1$, and further we have $f_2(C,\zeta,\rho)=g(C,\zeta,1,\rho)=0$ by using (\ref{eq:g_ber_achi}) and $A=2\rho-1$.  Thus, we can define the continuous function 
\begin{equation}\label{eq:hatf2}
    \hat{f}_2(C,\zeta,\rho) := \begin{cases}
        f_2(C,\zeta,\rho) & \text{if }(1-2\zeta)C\ge 2\rho -1\\
        0 & \text{if }(1-2\zeta)C<2\rho -1.
    \end{cases}
\end{equation}
 Combining these two cases, we arrive at 
\begin{equation}
\label{eq:bound_diff_ber_1}\mathbb{P}\Big(\widetilde{G}_{\mathcal{J}'}\geq (1-2\zeta-o(1))\frac{Cn\nu e^{-\nu}\ell}{k}\Big) \leq \exp\Big(-\frac{\ell n \nu e^{-\nu}}{k}\big(\hat{f}_2(C,\zeta,\rho)+o(1)\big)\Big)
\end{equation}
for any feasible $(C,\zeta)$ with $C\leq C_0$.

\textbf{The condition for (C2):} We now bound the probability of \textbf{(C2)} holding for some $(\ell,C,\zeta)$ as 
\begin{align}
   &\mathbb{P}\Big(\textbf{(C2)}\text{   holds for }(\ell,C,\zeta)\Big) \nonumber \\
    &\leq \mathbb{P}\Big((\ref{eq:suff_C2_simplify})\text{   holds for }(\ell,C,\zeta)\Big)+ \mathbb{P}\Big((\ref{eq:ber_minimal_U}) \text{ does not hold}\Big)\label{eq:add_explain1}\\
    &\leq \binom{p}{\ell}\cdot \mathbb{P}\Big(\widetilde{G}_{\mathcal{J}'}\geq (1-2\zeta-o(1))\frac{Cn\nu e^{-\nu}\ell}{k}\Big) +k^{-10\ell}\label{eq:add_explain2}\\
    &\leq \exp\Big((1+o(1))\ell \log\frac{p}{\ell}-\frac{\ell n \nu e^{-\nu}}{k}\big(\hat{f}_2(C,\zeta,\rho)+o(1)\big)\Big) +k^{-10\ell},\label{eq:413}
\end{align}
where (\ref{eq:add_explain1}) follows because under the condition (\ref{eq:ber_minimal_U}) we have that (\ref{eq:suff_C2_simplify}) is a necessary condition for ``\textbf{(C2)} holds for the given $(\ell,C,\zeta)$'', and in (\ref{eq:add_explain2}) we apply a  union bound to account for the maximum over $\mathcal{J}'$ in (\ref{eq:suff_C2_simplify}), also recalling that (\ref{eq:ber_minimal_U}) holds with probability at least $1-k^{-10\ell}$. 

Therefore, for given $(\ell,C,\zeta)$, we have shown the following implication for any $\eta_2 > 0$ and sufficiently large $p$ (recall that our analysis of \textbf{(C2)} is on the event $\mathscr{A}_1$ (\ref{eq:ber_achi_n0}), and in our conclusion below we make this explicit):
\begin{align}\label{eq:c2_threshold}
    \frac{\ell n \nu e^{-\nu}}{k}\big[\hat{f}_2(C,\zeta,\rho)+o(1)\big]\geq (1+\eta_2)\ell\log\frac{p}{\ell} \\ 
    \Longrightarrow ~ \mathbb{P}\Big(\big\{\textbf{(C2)}\text{ holds for }(\ell,C,\zeta)\big\}\cap \mathscr{A}_1\Big) &\leq \exp\Big(-\frac{\eta_2}{2}\ell\log\frac{p}{\ell}\Big)+k^{-10\ell} \\ &\le \hat{P}_{\ell}: =\begin{cases}
        (\ell\log k)^{-5},~\text{if }1\le \ell \le \log k\\
        ~~~k^{-5},~~~~~~\text{if }\log k<\ell \le \frac{k}{\log k}, \label{eq:ber_c2_prob_boundhh}
    \end{cases} 
\end{align}
since $\exp(-\frac{\eta_2}{2}\ell\log\frac{p}{\ell})+k^{-10\ell}$ is an upper bound on (\ref{eq:413}) when \eqref{eq:c2_threshold} holds.  Recall also that $\mathbb{P}(\mathscr{A}_1) = 1-o(1)$. 

\subsubsection*{Establishing the threshold}

  We are now ready to establish the threshold for $n$ above which restricted MLE has $o(1)$ probability of failing. We first pause to review our previous developments:
\begin{itemize}
    \item For given $\ell$, from (\ref{eq:boundk1})--(\ref{eq:boundk2}),
     \textbf{(C1)} and \textbf{(C2)} hold for some feasible $(C,\zeta)\in (C_0,\infty)\times [0,1]$ with probability at most $k^{-5}$.     
    \item For given $\ell$ and any feasible $(C,\zeta)$ with $C\leq C_0$, 
    from (\ref{eq:C1_threshold})--(\ref{eq:c1_condition}), if
\begin{equation}\label{eq:C1_final}
      n \geq (1+\eta_1)\frac{k\log\frac{k}{\ell}}{\nu e^{-\nu}[f_1(C,\zeta,\rho)+o(1)]} 
\end{equation}
holds for some $\eta_1>0$, then $\mathbb{P}\big(\{$\textbf{(C1)} holds for the given $(\ell,C,\zeta)\}\cap\mathscr{A}_1\big)\le \hat{P}_\ell.$
    \item For given $\ell$ and any feasible $(C,\zeta)$ with $C\leq C_0$, 
    by (\ref{eq:c2_threshold})--(\ref{eq:ber_c2_prob_boundhh}), if
    \begin{equation}\label{eq:C2_final}
    n\geq (1+\eta_2)\frac{k\log\frac{p}{\ell}}{\nu e^{-\nu}[\hat{f}_2(C,\zeta,\rho)+o(1)]}  
\end{equation}
holds for some $\eta_2>0$,\footnote{Note that $\hat{f}_2(C,\zeta,\rho)$ in the denominator equals to $0$ in some cases (see (\ref{eq:hatf2})), making the condition (\ref{eq:C2_final}) vacuous.  To show $o(1)$ failure probability for MLE it suffices to show $o(1)$ probability of conditions \textbf{(C1)} and \textbf{(C2)} \emph{simultaneously} holding, so when (\ref{eq:C2_final}) becomes vacuous, condition \textbf{(C1)} becomes the relevant one. }
then $\mathbb{P}\big(\{$\textbf{(C2)} holds for the given $(\ell,C,\zeta)\}\cap\mathscr{A}_1\big)\le \hat{P}_\ell.$  
\end{itemize} 

\underline{Bounding the failure probability for a fixed $\ell$:} 
For a fixed $\ell$, we first consider   the feasible $(C,\zeta)$ with $C\leq C_0$, and by $\frac{Cn\nu e^{-\nu}\ell}{k},\frac{\zeta\cdot Cn\nu e^{-\nu}\ell}{k}\in\mathbb{Z}$ there are no more than $O(\frac{\ell n}{k})\times O(\frac{\ell n}{k})=O((\ell\log k)^2)$ possible choices of such $(C,\zeta)$.  (Recall from Section \ref{sec:n_scaling} that $n = \Theta(k \log k)$.) If it holds for some $\eta_1>0$ that
\begin{equation}
    n\geq (1+\eta_1)\frac{k}{\nu e^{-\nu}}\max_{\substack{0<C<C_0\\0<\zeta<1}}\min\Big\{\frac{\log\frac{k}{\ell}}{f_1(C,\zeta,\rho)},\frac{\log \frac{p}{\ell}}{\hat{f}_2(C,\zeta,\rho)}\Big\}\label{eq:con_small_C},
\end{equation}
then by the last two dot points reviewed above, for any  feasible $(C,\zeta)$ with $C\leq C_0$,   
$$\mathbb{P}\big((\textbf{C1}) \text{ and } (\textbf{C2})\text{ hold for }(C,\zeta) \,\cap\,\mathscr{A}_1\text{ holds}\big)\le \hat{P}_\ell.$$ 
Moreover, by a union bound over the $O((\ell\log k)^2)$ possibilities of $(C,\zeta)$, 
$$\mathbb{P}\big(\text{\textbf{(C1)} and \textbf{(C2)} hold for some feasible $(C,\zeta)$ with $C\leq C_0$} \,\cap\, \mathscr{A}_1 \text{ holds}\big)\le O((\ell \log k)^2)\hat{P}_\ell.$$
On the other hand,   the first dot point reviewed above gives 
$$\mathbb{P}\big(\text{\textbf{(C1)} and \textbf{(C2)} hold for some feasible $(C,\zeta)$ with $C>C_0$} \,\cap\, \mathscr{A}_1 \text{ holds}\big)\le k^{-5}.$$
 Overall, for a fixed $\ell\in [1,\frac{k}{\log k}]$, if (\ref{eq:con_small_C}) holds for some $\eta_1>0$, then for the restricted MLE output $\widehat{S}'$ (see (\ref{eq:mle_restricted})) we have 
\begin{align}
 & \mathbb{P}\Big(\big\{|s\setminus\widehat{S}'|=\ell\big\}\cap\mathscr{A}_1\Big) \nonumber 
 \\\label{eq:fail_necc} &\le
  \mathbb{P}\Big(\textbf{(C1)}\text{ and }\textbf{(C2)}~\text{hold for some feasible }(C,\zeta) \cap \,\mathscr{A}_1\, \text{ holds}\Big)\\ &\le  k^{-5}+O\big((\ell\log k)^2\big)\hat{P}_\ell,\label{eq:success_fixed_ell}
\end{align}
where (\ref{eq:fail_necc}) holds because ``\textbf{(C1)} and \textbf{(C2)} hold for some feasible $(C,\zeta)$ (under the given $\ell$)'' is a necessary condition for ``$|s\setminus \widehat{S}'|=\ell$'' due to Lemma \ref{lem:restate}(b).

\underline{Bounding the overall failure probability for $\ell\in [1,\frac{k}{\log k}]$:} We further take a union bound over $\ell\in[1,\frac{k}{\log k}]$.  Suppose that the condition (\ref{eq:con_small_C}) holds for all $\ell\in [1,\frac{k}{\log k}]$, i.e., 
\begin{align}
     \label{eq:ber_uni_ell}&n\geq (1+\eta_1)\frac{k}{\nu e^{-\nu}}\max_{1\leq \ell \leq \frac{k}{\log k}}\max_{\substack{0<C<C_0\\0<\zeta<1}}\min\Big\{\frac{\log\frac{k}{\ell}}{f_1(C,\zeta,\rho)},\frac{\log \frac{p}{\ell}}{\hat{f}_2(C,\zeta,\rho)}\Big\}\\
     \label{eq:411}\iff & n\geq \frac{(1+\eta_1)k\log\frac{p}{k}}{(1-\theta)\nu e^{-\nu}}\frac{1}{\min _{C\in(0,C_0),\zeta\in(0,1)}\max\{\frac{1}{\theta}f_1(C,\zeta,\rho),\hat{f}_2(C,\zeta,\rho)\}},
\end{align}
where in (\ref{eq:411})  we observe that the maximum over $\ell$ is attained at $\ell=1$ and use $\log p\sim \frac{1}{1-\theta}\log\frac{p}{k},~\log k \sim \frac{\theta}{1-\theta}\log\frac{p}{k}$. 
Then, we can bound the probability that the restricted MLE decoder fails as
\begin{align}
    \mathbb{P}({\rm err}) \label{eq:ber_achi_account} &\le  \mathbb{P}(\mathscr{A}_1^c) + \sum_{\ell=1}^{k/\log k}  \mathbb{P}\Big(\big\{\big|s\setminus \widehat{S}'\big|=\ell\big\}\cap \mathscr{A}_1\Big)\\
    &\le  o(1)+  \sum_{\ell=1}^{k/\log k}\Big(k^{-5} + O((\ell\log k)^2)\hat{P}_\ell\Big)   \label{eq:ber_prob_cal} \\
    &\le o(1)+ \Big[\sum_{\ell=1}^{\log k} (\ell\log k)^{-3} + \sum_{\ell=\log k}^{k/\log k}k^{-3}\Big] = o (1), \label{eq:substi_Pell}
\end{align}
where (\ref{eq:ber_achi_account}) follows from the union bound, (\ref{eq:ber_prob_cal}) follows from (\ref{eq:success_fixed_ell}), and (\ref{eq:substi_Pell}) follows by substituting $\hat{P}_\ell$ (given in  (\ref{eq:ber_c2_prob_boundhh})). In conclusion, if (\ref{eq:411}) holds for some $\eta>0$, then the restricted MLE decoder fails to recover $s=[k]
$ with $o(1)$ probability.

\textbf{Simplifying $\hat{f}_2(C,\zeta,\rho)$ to $f_2(C,\zeta,\rho)$:}
Recall that $\hat{f}_2(C,\zeta,\rho)$ is given in (\ref{eq:hatf2}). The aim of this step is to show that the minimum over $(C,\zeta)$ in (\ref{eq:411}) is not attained in the domain of $(1-2\zeta)C<2\rho -1$, and thus we can safely simplify $\hat{f}_2(C,\zeta,\rho)$ to $f_2(C,\zeta,\rho)$ in the threshold (\ref{eq:411}).    Consider $(C,\zeta)$ satisfying $(1-2\zeta)C\leq 2\rho -1$, i.e., \begin{align}
    \zeta\geq \zeta':= \frac{1}{2}+\frac{1-2\rho}{2C}\ge \frac{1}{2},
\end{align}
in which we have $\hat{f}_2(C,\zeta,\rho)=0$ (see (\ref{eq:hatf2}) and recall that $\hat{f}_2(C,\zeta,\rho)$ is continuous), and that $f_1(C,\zeta,\rho)$ is monotonically increasing with respect to $\zeta$ in $[\frac{1}{2},1)$ (see (\ref{eq:f1_ber_achi})). Therefore, we can compare the values of $f_1(C,\zeta,\rho)$ and $f_2(C,\zeta,\rho)$ over $(C,\zeta)$ and $(C,\zeta')$ as \begin{align}
    f_1(C,\zeta,\rho)\geq f_1(C,\zeta',\rho),~\hat{f}_2(C,\zeta,\rho)=\hat{f}_2(C,\zeta',\rho)=0,
\end{align} yielding that \begin{align}
    \label{eq:ber_use_conti}
    \max\Big\{\frac{1}{\theta}f_1(C,\zeta,\rho),\hat{f}_2(C,\zeta,\rho)\Big\}\geq \max\Big\{\frac{1}{\theta}f_1(C,\zeta',\rho),\hat{f}_2(C,\zeta',\rho)\Big\}. 
\end{align} Observe that $(C,\zeta')$ satisfies $(1-2\zeta)C\ge 2\rho -1$, in which $\hat{f}_2(C,\zeta,\rho)=f_2(C,\zeta,\rho)$, thus we can proceed as
\begin{align}
    \label{eq:ber_coincide1}&\min_{C\in(0,C_0),\zeta\in(0,1)}\max\Big\{\frac{1}{\theta}f_1(C,\zeta,\rho),\hat{f}_2(C,\zeta,\rho)\Big\}\\
    =\, & \min_{\substack{C\in (0,C_0),\zeta\in(0,1)\\\zeta\leq \frac{1}{2}+\frac{1-2\rho}{2C}}}\max \Big\{\frac{1}{\theta}f_1(C,\zeta,\rho),\hat{f}_2(C,\zeta,\rho)\Big\}\\
    =\,& \min_{\substack{C\in (0,C_0),\zeta\in(0,1)\\(1-2\zeta)C\geq 2\rho -1}}\max \Big\{\frac{1}{\theta}f_1(C,\zeta,\rho),f_2(C,\zeta,\rho)\Big\}\\
    \geq\, & \min_{\substack{C>0 \\\zeta\in(0,1)}}\max  \Big\{\frac{1}{\theta}f_1(C,\zeta,\rho),f_2(C,\zeta,\rho)\Big\}.\label{eq:ber_coincide2}
\end{align}
Therefore, to ensure (\ref{eq:411}), it suffices to have 
\begin{equation}
    n\geq \frac{(1+\eta)k\log\frac{p}{k}}{(1-\theta)\nu e^{-\nu}}\frac{1}{\min_{C>0,\zeta\in(0,1)}\max\{\frac{1}{\theta}f_1(C,\zeta,\rho),f_2(C,\zeta,\rho)\}}
\end{equation}
for some $\eta>0$.  This establishes the second term in \eqref{eq:ber_threshold}.

\section{Low-$\ell$ Converse Analysis for Near-Constant Weight Designs} \label{app:nc_converse}
In the near-constant weight design, each item is independently and uniformly placed at $\Delta = \frac{\nu n}{k}$ tests with replacement. As with the Bernoulli design, it suffices to analyze the MLE decoder (\ref{eq:mle}) that finds $\widehat{S}$ having the most correct test outcomes (i.e., outcomes that would be obtained if there were no noise). By the symmetry of the test design with respect to re-ordering items, we can again consider  $s=[k]$ as the underlying defective set without loss of generality. 



 \subsubsection*{Notation} 
 For $j\in s$ and $j'\in [p]\setminus s$ we will use the notation $\mathcal{M}_j,\mathcal{M}_{j0},\mathcal{M}_{j1},\mathcal{N}_0,\mathcal{N}_{00},\mathcal{N}_{01}$ and the corresponding cardinalities (e.g., $N_{0}=|\mathcal{N}_0|$, $M_j=|\mathcal{M}_j|$) from Corollary \ref{cor:restate}. These sets and their cardinalities are deterministic given $\Xv_s$ and $\Yv$. We consider $\ell=1$, and say that $(C,\zeta)\in(0,\infty)\times(0,1)$ is feasible if $\frac{Cn\nu e^{-\nu}}{k}$ and $\frac{\zeta\cdot Cn\nu e^{-\nu}}{k}$ are integers. Given a feasible pair $(C,\zeta)$, we will consider $\mathcal{K}_{C,\zeta}=\{j\in s:M_j = \frac{Cn\nu e^{-\nu}}{k},~M_{j0}=\frac{\zeta\cdot Cn\nu e^{-\nu}}{k}\}$ with cardinality $k_{C,\zeta}=|\mathcal{K}_{C,\zeta}|$ as in (\ref{eq:defi_KCzeta}).  For given $(j,j')\in \mathcal{K}_{C,\zeta}\times([p]\setminus s)$, recall that 
     $G_{j,j',1}$ denotes the number of tests in $\mathcal{N}_{01}\cup \mathcal{M}_{j1}$ that contain the   non-defective item $j'$, and $G_{j,j',2}$ denotes number of tests in $\mathcal{N}_{00}\cup \mathcal{M}_{j0}$ that contain the non-defective item $j'$; see Corollary \ref{cor:restate}.

 We also introduce some additional definitions for analyzing the near-constant weight design. For $j\in s$ and $j'\in [p]\setminus s$, we let $\mathcal{M}_{j,j'}$ index the tests in $\mathcal{M}_j$  that contain item $j'$, and $\mathcal{N}_{0,j'}$ index the tests in $\mathcal{N}_0$ that contain item $j'$. Then we note the equivalent  interpretations of $G_{j,j',1}$ and $G_{j,j',2}$: $G_{j,j',1}$ represents the number of positive tests in $\mathcal{N}_{0,j'}\cup \mathcal{M}_{j,j'}$, and $G_{j,j',2}$ represents the number of negative tests in $\mathcal{N}_{0,j'}\cup \mathcal{M}_{j,j'}$. 

 In addition, for $j\in s$, we let $\mathcal{M}_j'$ index the tests where item $j$ not only appears as the only defective but is also placed precisely once, and let its cardinality be $M_j'=|\mathcal{M}_j'|$. Compared to Bernoulli designs, this is a new ingredient for handling near-constant weight designs
whose random placements are done with replacement. Evidently, it holds that  $M_{j}'\leq M_j$, and $M_{j}'<M_j$ occurs when item $j$ is placed in some test more than once but is still the only defective.   This situation only occurs for a small number of defective items, as we will formalize via event $\mathscr{A}_1$ below.

\subsubsection*{Reduction to two conditions}
 Similarly to the converse analysis for Bernoulli design,  by Corollary \ref{cor:restate}, MLE fails  if for some feasible  $(C,\zeta)\in(0,e^\nu)\times (0,1)$, 
the following two conditions \textbf{(C1)} and \textbf{(C2)} simultaneously hold:\begin{itemize}
    \item \textbf{(C1)} $k_{C,\zeta}\geq 1$ (i.e., $\mathcal{K}_{C,\zeta}\neq \varnothing$); 
    \item \textbf{(C2)} There exist $j\in \mathcal{K}_{C,\zeta}$ and   $j' \in \{k+1,\dotsc,p\}$ such that 
    \begin{equation}\label{eq:failurenc}
        G_{j,j',1}-G_{j,j',2}>(1-2\zeta)\frac{Cn\nu e^{-\nu}}{k}.
    \end{equation}
\end{itemize}

\subsubsection*{Some useful events}
Before proceeding, we first construct three high-probability events, deferring the proofs that they have high probability to Appendix \ref{app:nc_lemms}.  By Lemma \ref{lem3}(a) in Appendix \ref{app:nc_lemms}, for some sufficiently large constant $C_0 > 0$, the event   
\begin{align}
    \mathscr{A}_1=\Big\{|\{j\in [k]:M_j'<M_j\}|\le C_{0}\log k\Big\}. \label{eq:nc_A1}
\end{align}  
holds with $1-o(1)$ probability. 
In addition, recalling that $M_j'$ is the number of tests in which item $j$ is placed precisely once and is the only defective, 
we further define \begin{align}
    \widetilde{M}:=\sum_{j=1}^k M_j', \label{eq:def_tildeM}
\end{align} 
which represents the total number of tests having exactly one defective placement (and no repeated placements of that defective item).  Lemma \ref{lem3}(b) in Appendix \ref{app:nc_lemms} gives that the event \begin{align}
    \mathscr{A}_2=\Big\{\widetilde{M}=(1+o(1))e^{-\nu}k\Delta\Big\} \label{eq:nc_A2}
\end{align} holds with probability $1-o(1)$. 

Lastly, we deduce the high-probability behavior of $N_0$.  For $\mathcal{M}\subset s$, letting $W^{(\mathcal{M})}$ denote the number of tests in which some item from $\mathcal{M}$ is placed, Lemma \ref{lem44} in Appendix \ref{app:nc_lemms} gives that $\mathbbm{E}W^{(s)}= (1-e^{-\nu}+o(1))n$, and that for any $\delta>0$, it holds with probability at least $1-2\exp(-2\nu^{-1}\delta^2n)$ that $W^{(s)}=\mathbbm{E}W^{(s)}+\delta' n \sqrt{\frac{\ell}{k}}$ for some $|\delta'|<\delta$.   Therefore, by letting $\delta =\Theta(\frac{1}{\sqrt{k}})$, we obtain  $W^{(s)}=(1-e^{-\nu}+o(1))n$, which implies $N_0 = n-W^{(s)} = ne^{-\nu}(1+o(1))$, holds with $1-o(1)$ probability.
Thus, the event \begin{align}
    \mathscr{A}_3:=\big\{N_0 = ne^{-\nu}(1+o(1))\big\}\label{eq:nc_count_N0}
\end{align} holds with $1-o(1)$ probability. Throughout the proof, we will often apply the conditions in $\mathscr{A}_1\cap \mathscr{A}_2\cap\mathscr{A}_3$, which is justified via simple union bound. 
\subsubsection*{Condition for (C1)}

Let $\xi\in (\frac{1}{2},1)$ be such that $k^{\xi}$ is an integer. 
 Given $s_0\subset s=[k]$, we let $\Tv_{s_0}$ be an \emph{unordered multi-set} of length $|s_0|\Delta$ whose entries are in $[n]$, with the overall multi-set representing the $|s_0|\Delta$ placements from the items in $s_0$ in an unordered manner.  If $s_0=\{j\}$ for some $j\in [k]$, then we simply write $\Tv_{\{j\}}=\Tv_j$ for the $\Delta$ placements of item $j$. 

Compared to $M_j$,
it is more convenient to deal with $M_j'$. Therefore, instead of directly studying $\mathcal{K}_{C,\zeta}$, we let $M_{j0}'$ be the number of negative tests (with results   flipped by noise) in $\mathcal{M}_j'$ and
define \begin{align}
    \mathcal{K}'_{C,\zeta}  = \Big\{j\in [k]:M_j' = \frac{Cn\nu e^{-\nu}}{k},M_{j0}' =\frac{\zeta\cdot Cn\nu e^{-\nu}}{k} \Big\}, \label{eq:calK_prime}
\end{align} 
and let $k_{C,\zeta}'=|\mathcal{K}'_{C,\zeta}|$.
To quantify the difference between $\mathcal{K}_{C,\zeta}$ and $\mathcal{K}'_{C,\zeta}$, we observe that
\begin{align}
    \mathcal{K}'_{C,\zeta} \subset \mathcal{K}_{C,\zeta} \cup \big\{j\in[k]:M_j'<M_j\big\}. \label{eq:Kprime_NC}
\end{align}
Combining with the event $\mathscr{A}_1$ in (\ref{eq:nc_A1}), this implies 
\begin{align}\label{eq:diff_k_kpai}
     k'_{C,\zeta} \le k_{C,\zeta}+ C_0\log k. 
\end{align}
Given $M_j'=\frac{Cn\nu e^{-\nu}}{k}$, we have $M_{j0}'\sim \text{Bin}(\frac{Cn\nu e^{-\nu}}{k},\rho)$ based on the randomness of $\Zv$, so the event $M_{j0}'=\frac{\zeta\cdot Cn\nu e^{-\nu}}{k}$ can be easily analyzed.  The more significant challenge lies in analyzing the event $M_j'=\frac{Cn\nu e^{-\nu}}{k}$, and we now proceed to study the distribution of $M_j'$.

 \textbf{The distribution of $M_j'$:} We follow the idea in \cite[Lemma 3.3]{coja2020information} of conditioning on $\Tv_s$ (i.e., $\Tv_{[k]}$) and using the symmetry of the test design to deduce that the conditional assignments of tests to items are uniform, i.e., every partition of $\Tv_s$ into $k$ sets of size-$\Delta$ is equally likely.  In particular, note that $\widetilde{M}$ defined in (\ref{eq:def_tildeM}) is a known quantity from the given $\Tv_s$, and given $\widetilde{M}$,  
 the number of such tests assigned to any given $j \in s$, which has been defined as $M_j'$, follows a hypergeometric distribution: There are $\Delta$ draws (without replacement) from $k\Delta$ objects (that represent the overall $k\Delta$ placements in $\Tv_s$), $\widetilde{M}$ of which are of a special type {(namely, the placement is to a test that is connected to precisely one defective item)}. For instance, regarding the allocation of these $k\Delta$ placements to item 1, we have $(M_1'|\Tv_{s})\sim \mathrm{Hg}(k\Delta,\widetilde{M},\Delta)$. To understand the behavior of the first $k^{\zeta}$ items simultaneously, we interpret the above-mentioned uniform allocation from $\Tv_s$ as being done sequentially:
\begin{itemize}
    \item Given $\Tv_s$, we allocate a random size-$\Delta$ subset to item $1$, which yields $(M_1'|\Tv_{s})\sim \mathrm{Hg}(k\Delta,\widetilde{M},\Delta)$ as described above.
    \item Given $\Tv_s$ and $\Tv_1$ (the placements for item $1$), we allocate a random size-$\Delta$ subset of the \emph{reduced multi-set $\Tv_s \setminus \Tv_1$} to item $2$, which gives  $M_2'|(\Tv_s,\Tv_1)\sim \mathrm{Hg}\big((k-1)\Delta, \widetilde{M}-M_1',\Delta\big)$ (where $M'_1$ is determined via $\Tv_s$ and $\Tv_1$).
    \item $\dotsc$
    \item Given $\Tv_s,\Tv_1,...,\Tv_{k^{\xi}-1}$, the quantities $M_1',M_2',...,M'_{k^{\xi}-1}$ are known, and an analogous argument gives $M'_{k^{\xi}}|(\Tv_s,\{\Tv_j\}_{j=1}^{k^{\xi}-1})\sim \mathrm{Hg}\big((k-k^{\xi}+1)\Delta,\widetilde{M}-\sum_{j=1}^{k^{\xi}-1}M_j',\Delta\big)$.
\end{itemize}
  Similarly to the analysis of the Bernoulli design, we only consider this procedure up to index $k^{\xi}$, since going all the way up to $k$ would introduce non-negligible dependencies.  For $\xi\in (\frac{1}{2},1)$ we have $k-j = (1+o(1))k$ for any $0\leq j\leq k^{\xi}$. Then, we utilize the event $\mathscr{A}_2$ in (\ref{eq:nc_A2}) that specifies  $\widetilde{M}=(1+o(1))e^{-\nu}k\Delta$, and write 
  \begin{align}
       (1+o(1))e^{-\nu}k\Delta=\widetilde{M}\geq \widetilde{M}-\sum_{j=1}^i M_j'\geq \widetilde{M}- k^{\xi}\Delta= (1+o(1))e^{-\nu}k\Delta
  \end{align}
  for any $0\leq i\leq k^{\xi}$. 
  Therefore, the following holds for any $1\leq j\leq k^{\xi}$: 
  \begin{equation}
    M_j'|(\Tv_s,\Tv_1,...,\Tv_{j-1})\sim \mathrm{Hg}((1+o(1))k\Delta,(1+o(1))e^{-\nu}k\Delta,\Delta). 
\end{equation}
The Chernoff bound and matching anti-concentration for the hypergeometric distribution (see Lemma \ref{lem4} in Appendix \ref{app:nc_lemms}) then gives that
\begin{equation}
    \PP\Big(M_j'=\frac{Cn\nu e^{-\nu}}{k}\big|(\Tv_s,\Tv_1,...,\Tv_{j-1})\Big)=\exp\Big(-\frac{\nu n}{k}\big[D(Ce^{-\nu}\|e^{-\nu})+o(1)\big]\Big)  \label{eq:Hg_summary}
\end{equation}
for any $1\le j\le k^{\xi}$.

 \textbf{Bounding $\mathbb{P}(j\in \mathcal{K}'_{C,\zeta})$ from below:}
Under $ M_j'=\frac{Cn\nu e^{-\nu}}{k}$, with respect to the randomness of $\Zv$ we have $M_{j0}'\sim \text{Bin}\big(\frac{Cn e^{-\nu}\nu}{k},\rho\big)$, which should equal $\frac{\zeta \cdot Cn\nu e^{-\nu}}{k}$ to ensure $j\in\mathcal{K}'_{C,\zeta}$. Accordingly,  for $1\le j\le k^{\xi}$, we proceed as follows:
\begin{align}\label{eq:lower_kczetapai1}
    &\PP\big(j\in\mathcal{K}'_{C,\zeta}\big|(\Tv_s,\Tv_1,...,\Tv_{j-1})\big)\\
    &\quad =  \PP\Big(M_j'=\frac{Ce^{-\nu}\nu n}{k},M_{j0}'=\frac{\zeta\cdot Cn\nu e^{-\nu}}{k}\big|(\Tv_s,\Tv_1,...,\Tv_{j-1})\Big)\\
    &\quad=\PP\Big(M_j'=\frac{Cn\nu e^{-\nu}}{k}\big|(\Tv_s,\Tv_1,...,\Tv_{j-1})\Big)\cdot \mathbb{P}\Big(\text{Bin}\Big(\frac{C\nu e^{-\nu}n}{k},\rho\Big)= \frac{\zeta\cdot C\nu e^{-\nu}n}{k}\Big)\\ \label{eq:nc_substi}
    &\quad \geq  \exp\left(-\frac{\nu n e^{-\nu}}{k}\Big[e^\nu D\big(Ce^{-\nu}\|e^{-\nu}\big)+C\cdot D(\zeta\|\rho)+o(1)\Big]\right)\\
    &\quad :=\exp\Big(-\frac{\nu n e^{-\nu}}{k}\big[f_1(C,\zeta,\rho,\nu)+o(1)\big]\Big):=P_{\text{in}},\label{eq:lower_kczetapai2}
\end{align}
where in (\ref{eq:nc_substi}) we substitute (\ref{eq:Hg_summary}) and apply  (\ref{eq:anti2}) in Lemma \ref{binoconcen} (with leading factor absorbed into the exponent), and in \eqref{eq:lower_kczetapai2} we define 
\begin{align}\label{eq:nc_f1}
    f_1(C,\zeta,\rho,\nu)= e^\nu  D\big(Ce^{-\nu}\|e^{-\nu}\big)+ C\cdot D(\zeta\|\rho).
\end{align}

\textbf{Deriving the condition for (C1):} Noting that 
\begin{align}
    k'_{C,\zeta}= \sum_{j=1}^k \mathbbm{1}(j\in \mathcal{K}'_{C,\zeta})\ge \sum_{j=1}^{k^{\xi}}\mathbbm{1}(j\in \mathcal{K}'_{C,\zeta}):=k'_{C,\zeta,\xi}, \label{eq:stochasti_domi}
\end{align}
and from (\ref{eq:lower_kczetapai2}) we have that $(k'_{C,\zeta,\xi}|\Tv_s)$ stochastically dominates $\text{Bin}(k^{\xi},P_{\text{in}})$, and thus
\begin{align}
    \label{eq:stat_domi_1}
    \mathbb{P}\big(k'_{C,\zeta,\xi}\le z \,\big|\, \Tv_s\big) \le \mathbb{P}\big(\text{Bin}(k^{\xi},P_{\text{in}})\le z\big)
\end{align}
for any $z$. Combining this property with  (\ref{eq:diff_k_kpai}) and (\ref{eq:stochasti_domi}), we can proceed as follows:
\begin{align}
    \PP\big(k_{C,\zeta}=0 \,|\, \Tv_s\big) &\le \mathbb{P}\Big(k_{C,\zeta}'\le C_0\log k\big|\Tv_s\Big)\\&\le \mathbb{P}\Big(k_{C,\zeta,\xi}'\le C_0\log k\big|\Tv_s\Big)\\
   &\le\mathbb{P}\Big(\text{Bin}(k^{\xi},P_{\text{in}})\le C_0\log k\Big),
\end{align}
and since this holds regardless of the conditioning variable, we obtain 
\begin{align}\label{eq:nc_con_c1}
    \mathbb{P}\big(k_{C,\zeta}=0\big)\le \mathbb{P}\Big(\text{Bin}(k^{\xi},P_{\text{in}})\le C_0\log k\Big).
\end{align}
We now claim that to ensure that \textbf{(C1)} holds with $1-o(1)$ probability, it suffices to ensure that
\begin{align}
\label{eq:nc_converse_con1}k^{\xi}P_{\text{in}}=\exp\Big(\xi\log k-\frac{\nu n e^{-\nu}[f_1(C,\zeta,\rho,\nu)+o(1)]}{k}\Big)\ge \exp\Big(\eta'_1 \log\frac{p}{k}\Big) \to \infty 
\end{align}
for some $\eta'_1>0$. Indeed, if (\ref{eq:nc_converse_con1}) holds, then by the Chernoff bound (see (\ref{eq:chernoff2}) in Lemma \ref{binoconcen}), we have 
\begin{align}
    \mathbb{P}\Big(\text{Bin}(k^{\xi},P_{\text{in}})\le C_0\log k\Big)&\le \exp \Big(-k^{\xi}P_{\text{in}}\Big[\frac{C_0\log k}{k^{\xi}P_{\text{in}}}\log\Big(\frac{C_0\log k}{k^{\xi}P_{\text{in}}}\Big) + 1-\frac{C_0\log k}{k^{\xi}P_{\text{in}}}\Big]\Big)\\
    &\le \exp\Big(-\frac{1}{2}k^{\xi}P_{\text{in}}\Big),
\end{align}
where the second line holds since $\frac{C_0\log k}{k^{\xi}P_{\text{in}}} \le \frac{C_0\log k}{\exp(\eta'_1\log\frac{p}{k})}=o(1)$ for large enough $k$. The claim thus follows by combining with (\ref{eq:nc_con_c1}).

It remains to deduce the threshold from (\ref{eq:nc_converse_con1}). By $k=\Theta(p^\theta)$ we have $\log k\sim\frac{\theta}{1-\theta}\log\frac{p}{k}$, and then by choosing $\xi$ arbitrarily close to $1$ we obtain (\ref{eq:nc_converse_con1}) provided that the following holds for some $\eta_1 > 0$:
\begin{equation}\label{eq:nc_c1_bound}
    n\leq (1-\eta_1)\frac{\frac{\theta}{1-\theta}k\log\frac{p}{k}}{\nu e^{-\nu}f_1(C,\zeta,\rho,\nu)}.
\end{equation}

\subsubsection*{Condition for (C2)}

The subsequent analysis is built on condition \textbf{(C1)}, which ensures that $k_{C,\zeta}\ge 1$, and we consider a single specific index $j\in \mathcal{K}_{C,\zeta}$. 
We will study the placement of each non-defective item $j'$ for $j' \in \{k+1, \dotsc, p\}$. For the given $j\in\mathcal{K}_{C,\zeta}$ and a fixed $j' \in \{k+1, \dotsc, p\}$, we define $R_{j'}=|\mathcal{N}_{0,j'}\cup\mathcal{M}_{j,j'}|$ as the number of tests in $\mathcal{N}_0\cup \mathcal{M}_j$  that contain item $j'$, and define $R_{j'}'$ as the number of placements of item $j'$ that are in $\mathcal{N}_0\cup \mathcal{M}_j$. It is evident that $R_{j'}'\ge R_{j'}$ always holds, and $R_{j'}'>R_{j'}$   when item $j'$ is placed more than once into some test in $\mathcal{N}_0 \cup \mathcal{M}_j$. To formulate condition \textbf{(C2)}, we are primarily interested in $R_{j'}$, but it is more convenient to first study $R_{j'}'$. Recall that the event $\mathscr{A}_3$ in  (\ref{eq:nc_count_N0})  states that
\begin{align}
    ne^{-\nu}(1+o(1))=N_0\le |\mathcal{N}_0 \cup \mathcal{M}_j|\le N_0+\ell\Delta=ne^{-\nu}(1+o(1)).
\end{align} 
Under this event, a placement of item $j'$ increments $R_{j'}'$ by $1$ with probability $e^{-\nu}(1+o(1))$, and we thus have $R_{j'}'\sim\text{Bin}(\Delta,e^{-\nu}(1+o(1)))$ with respect to the randomness in placing item $j'$ into tests.

\textbf{Bounding the difference between $R_{j'}$ and $R_{j'}'$:} We are ultimately interested in $R_{j'}$, but we have characterized $R_{j'}'$ as an intermediate step.  Accordingly, we proceed to quantify the difference between $R_{j'}$ and $R_{j'}'$. 
We envision that the $\Delta$ placements of  item $j'$ are done sequentially, and let $D_{j'}$ be the number of ``collisions'', defined as  
\begin{align}
    D_{j'} = \sum_{i=1}^\Delta \mathbbm{1}\big(\text{the }i\text{-th placement coincides with the }i_1\text{-th placement for some }i_1<i\big).
\end{align}
Then we have
 \begin{equation}\label{eq:441}
    R'_{j'}-D_{j'}\le R_{j'}\leq R'_{j'},
\end{equation}
and we will show that $D_{j'}$ is asymptotically negligible. 
Specifically, observe that the $i$-th placement increments $D_{j'}$ by $1$ with probability less than $\frac{\Delta}{n}$, so for any $D_0>0$, a union bound (over the $\binom{\Delta}{D_0}$ possibilities of the $D_0$ placements that increment $D_{j'}$) 
gives 
\begin{align}\label{eq:Dj'bound}
    \mathbb{P}\Big(D_{j'} \geq D_0\Big) \le \binom{\Delta}{D_0}\Big(\frac{\Delta}{n}\Big)^{D_0} \leq \Big(\frac{e\Delta^2}{D_0n}\Big)^{D_0}.
\end{align}
We take $D_0 = \frac{10}{\theta}$ in (\ref{eq:Dj'bound}) and apply a union bound over $j' \in \{k+1,\dotsc,p\}$, yielding 
\begin{align}
    \mathbb{P}\Big(\max_{j' \in \{k+1,\dotsc,p\}}D_{j'} \ge \frac{10}{\theta}\Big) &\le  p\cdot\Big(\frac{e\theta\Delta^2}{10n}\Big)^{\frac{10}{\theta}}\\&= \exp \Big(\log p + \frac{10}{\theta}\log\Big(\frac{e\theta \nu^2n}{10k}\Big)-\frac{10}{\theta}\log k\Big) \\
    &\le k^{-5}, \label{eq:colli_event}
\end{align}
where (\ref{eq:colli_event}) holds when $k$ is large enough because $p=\Theta(k^{1/\theta})$ and $n=\Theta(k\log k)$ (see Section \ref{sec:n_scaling}). In the subsequent analysis, we implicitly suppose (via a union bound) that we are on the event   \begin{align}\label{eq:collision_event}
    D_{j'}< \frac{10}{\theta},~\forall j' \in \{k+1,\dotsc,p\},
\end{align} 
which holds with probability at least $1-k^{-5}$.

\textbf{Studying the condition for (\ref{eq:failurenc}):}  Given $R_{j'}$,  the randomness of the noise $\Zv$ gives
\begin{align}
    G_{j,j',1}\sim \text{Bin}(R_{j'},\rho),~~G_{j,j',2}=R_{j'} - G_{j,j',1}.
\end{align}
Thus, (\ref{eq:failurenc}) can be expressed as $2G_{j,j',1}-R_{j'}> (1-2\zeta)\frac{Cn\nu e^{-\nu}}{k}$, or equivalently
\begin{align}\label{eq:nc_con_c2_equal}
 G_{j,j',1} > \Big(\frac{1}{2}-\zeta\Big)\frac{Cn\nu e^{-\nu}}{k}+\frac{1}{2}R_{j'}. 
\end{align}
By introducing a parameter $d$ (to be chosen later) that satisfies
\begin{align}\label{eq:d_great_A}
     |C(1-2\zeta)|\le d\le e^\nu,
\end{align}
we consider the event $R_{j'}'=\frac{dn\nu e^{-\nu}}{k}$ and proceed as follows:
\begin{align}
     &\mathbb{P}\Big(G_{j,j',1} > \big(\frac{1}{2}-\zeta\big)\frac{Cn\nu e^{-\nu}}{k}+\frac{1}{2}R_{j'}\Big)\\
     &\ge  \mathbb{P}\Big(R_{j'}' = \frac{dn\nu e^{-\nu}}{k}\Big) \mathbb{P}\Big(\text{Bin}(R_{j'},\rho)> \big(\frac{1}{2}-\zeta\big)\frac{Cn\nu e^{-\nu}}{k}+\frac{1}{2}R_{j'}\Big|R_{j'}' = \frac{dn\nu e^{-\nu}}{k}\Big)\\\nonumber
     &\ge   \mathbb{P}\Big(\text{Bin}\big(\frac{\nu n}{k},e^{-\nu}+o(1)\big) = \frac{dn\nu e^{-\nu}}{k}\Big)\\\label{eq:nc_turn_bin}&\quad\quad\quad\quad\cdot \mathbb{P}\Big(\text{Bin}\big(\frac{(1+o(1))dn \nu e^{-\nu}}{k},\rho\big)> \frac{d+C(1-2\zeta)}{2}\frac{n\nu e^{-\nu}}{k}\Big)\\
     &\ge \exp\left(-\frac{\nu n e^{-\nu}}{k}\Big[e^\nu \cdot D(de^{-\nu}\|e^{-\nu})+d\cdot D\Big(\frac{1}{2}+\frac{C(1-2\zeta)}{2d}\big\|\rho\Big)+o(1)\Big]\right)\label{eq:nc_use_anti}\\
     &:=\exp\Big(-\frac{\nu n e^{-\nu}}{k}\big[g(C,\zeta,d,\rho,\nu)+o(1)\big]\Big),\label{eq:ncc2_lower}
\end{align}
where in (\ref{eq:nc_turn_bin}) we use $R_{j'}'\sim\text{Bin}(\Delta,e^{-\nu}+o(1))$ along with the event (\ref{eq:collision_event}) and $R_{j'}'=\frac{dn\nu e^{-\nu}}{k}$, from which (\ref{eq:441}) gives $R_{j'}=\frac{dn\nu e^{-\nu}(1+o(1))}{k}$ (since $\frac{n}{k}\to\infty$ and $\frac{10}{\theta}$ is   a constant); then in (\ref{eq:nc_use_anti}) we apply (\ref{eq:anti2}) in Lemma \ref{binoconcen} to the two probability terms (with leading factors absorbed into the exponent), and in \eqref{eq:ncc2_lower} we introduce the following shorthand:
\begin{align}\label{eq:nc_con_def_g}
    g(C,\zeta,d,\rho,\nu)=e^{\nu}\cdot D\big(de^{-\nu}\|e^{-\nu}\big)+d\cdot D\Big(\frac{1}{2}+\frac{C(1-2\zeta)}{2d}\big\|\rho\Big).
\end{align}
This depends on $(C,\zeta)$ only through $A:=C(1-2\zeta)$, and can be expanded as  
\begin{align}
    g(C,\zeta,d,\rho,\nu)&=d\log d +(e^\nu-d)\log \frac{1-de^{-\nu}}{1-e^{-\nu}}+\frac{d+A}{2}\log\frac{d+A}{2d\rho}+\frac{d-A}{2}\log\frac{d-A}{2d(1-\rho)} .
\end{align}

  \textbf{Optimizing $d$:} Our next step is to specify $d$ that minimizes $g(C,\zeta,d,\rho,\nu)$ to render the tightest  lower bound  (\ref{eq:ncc2_lower}). Note that range of $d$ is given by $|A|\le d\le e^\nu$ in  (\ref{eq:d_great_A}), 
and we define 
\begin{align}\label{eq:dstar_nc_con}
    d^* = \text{arg}\min~g(C,\zeta,d,\rho,\nu),~~\text{subject to }|A|\le d\le e^{\nu}. 
\end{align}
To pinpoint $d^*$, we differentiate $g(C,\zeta,d,\rho,\nu)$ with respect to $d$ to obtain
\begin{align}\label{eq:nc_deri_g}
    \frac{\partial g}{\partial d} = \log\Big(\frac{\sqrt{d^2-A^2}}{2\sqrt{\rho(1-\rho)}}\Big)-\log\Big(\frac{e^\nu -d}{e^\nu -1}\Big).
\end{align}
Note that $\frac{\partial g}{\partial d}$ has the same sign as $g^{(1)}(d) := \frac{d^2-A^2}{4\rho(1-\rho)}-\frac{(e^\nu-d)^2}{(e^\nu-1)^2}$, which can be expanded as
\begin{align}
  g^{(1)}(d) = \Big(\frac{1}{4\rho(1-\rho)}-\frac{1}{(e^\nu-1)^2}\Big)d^2 +\frac{2e^\nu}{(e^\nu-1)^2}d -\Big(\frac{A^2}{4\rho(1-\rho)}+\frac{e^{2\nu}}{(e^\nu-1)^2}\Big), \label{eq:g1_def}
\end{align}
where we only regard $d$ as the variable. We observe that $g^{(1)}(0)<0$, and perform some algebra to find that  \begin{align}\label{eq:nc_zero_obser}
    g^{(1)}(|A|)=-\Big(\frac{e^\nu-A}{e^\nu-1}\Big)^2\le 0,~g^{(1)}(e^\nu)=\frac{e^{2\nu}-A^2}{4\rho(1-\rho)}\ge 0.
\end{align}
Based on these observations, we determine $d^*$ by considering the following cases: 
\begin{itemize}
    \item When $4\rho(1-\rho)< (e^\nu-1)^2$, $g^{(1)}(d)$ has a unique zero in $d>0$. By (\ref{eq:nc_zero_obser}), this zero falls in the range of feasible $d$ (i.e., $[|A|,e^\nu]$), so it is precisely the desired $d^*$, yielding   
    \begin{align}
    d^* 
    &=\frac{-e^\nu +\sqrt{e^{2\nu}+(\frac{(e^\nu-1)^2}{4\rho(1-\rho)}-1)(\frac{A^2(e^{\nu}-1)^2}{4\rho(1-\rho)}+e^{2\nu})}}{\frac{(e^\nu -1)^2}{4\rho(1-\rho)}-1},~~\text{if }4\rho(1-\rho)< (e^\nu-1)^2.\label{eq:nc_dstar}
\end{align}
\item When $4\rho(1-\rho)=(e^\nu-1)^2$, the right-hand side of (\ref{eq:g1_def}) becomes linear in $d$, and solving gives
 \begin{align}
    d^* 
    &= \frac{A^2+e^{2\nu}}{2e^\nu},~~\text{if }4\rho(1-\rho)= (e^\nu-1)^2.\label{eq:nc_dstar_case2}
\end{align}
\item When $ 4\rho(1-\rho)>(e^\nu-1)^2$, we can verify that the two zeros of $g^{(1)}(d)$ both fall  in $(0,\infty)$. Combining with (\ref{eq:nc_zero_obser}), we find that $d^*$ equals the smaller zero of $g^{(1)}(d)$:
\begin{align}
    d^* 
    &=\frac{-e^\nu +\sqrt{e^{2\nu}+(\frac{(e^\nu-1)^2}{4\rho(1-\rho)}-1)(\frac{A^2(e^{\nu}-1)^2}{4\rho(1-\rho)}+e^{2\nu})}}{\frac{(e^\nu -1)^2}{4\rho(1-\rho)}-1},~~\text{if }4\rho(1-\rho)> (e^\nu-1)^2.\label{eq:nc_dstar_case3}
\end{align}
\end{itemize}
Therefore, $d^*$ defined in (\ref{eq:dstar_nc_con}) is given by 
\begin{align}
    d^* = \begin{cases}
       ~~~ (\ref{eq:nc_dstar}),~~~~\text{when }4\rho(1-\rho)\neq (e^\nu -1)^2 \\
        ~~~\frac{A^2+e^{2\nu}}{2e^\nu},~~\text{when }4\rho(1-\rho)= (e^\nu -1)^2 
    \end{cases}.\label{eq:d_star_pinpoint}
\end{align}
A simple limiting argument verifies that $d^*$ is continuous with respect to $(\rho,\nu)$: When $a:=\frac{(e^\nu-1)^2}{4\rho(1-\rho)}\to 1$, we have
\begin{align}
    &\lim_{a\to 1} \frac{-e^\nu+\sqrt{e^{2\nu}+(a-1)(A^2a+e^{2\nu})}}{a-1}\\
    =& \lim_{a\to 1}\frac{(a-1)(A^2a+e^{2\nu})}{(a-1)\big(\sqrt{e^{2\nu}+(a-1)(A^2a+e^{2\nu})}+e^\nu\big)}= \frac{A^2+e^{2\nu}}{2e^\nu}.
\end{align}
In summary, we can introduce the shorthand
\begin{align}\label{eq:f2_con_nc}
    f_2(C,\zeta,\rho,\nu) = g(C,\zeta,d^*,\rho,\nu)
\end{align}
and arrive at the following conclusion: For fixed $j\in\mathcal{K}_{C,\zeta}$ and $j'\in [p]\setminus s$ we have
\begin{align}
   \mathbb{P}\Big(G_{j,j',1} > \big(\frac{1}{2}-\zeta\big)\frac{Cn\nu e^{-\nu}}{k}+\frac{1}{2}R_{j'}\Big)\geq \exp \Big(-\frac{\nu n e^{-\nu}}{k}[f_2(C,\zeta,\rho,\nu)+o(1)]\Big).
\end{align}
Recall that \textbf{(C2)} requires that (\ref{eq:nc_con_c2_equal}) holds for some $j'\in [p]\setminus s$, and the placements of the $p-k$ non-defective items are independent. Therefore, by arguments analogous to (\ref{eq:ber_conclu_c2_start})--(\ref{eq:ber_fail_2}), we obtain the threshold
\begin{equation}\label{eq:nc_c2_bound}
    n\leq \frac{(1-\eta_2)\frac{1}{1-\theta}k\log\frac{p}{k}}{\nu e^{-\nu}\big(f_2(C,\zeta,\rho,\nu)+o(1)\big)}
\end{equation}
for some $\eta_2>0$, which suffices for ensuring \textbf{(C2)} holds with $1-o(1)$ probability. 

\subsubsection*{Wrapping up}
Recall that for MLE to fail, it suffices that \textbf{(C1)} and \textbf{(C2)} simultaneously hold for some feasible pair $(C,\zeta)$. By the above analyses, for given $(C,\zeta)$ we have \textbf{(C1)} and \textbf{(C2)} with $1-o(1)$ probability if both 
(\ref{eq:nc_c1_bound}) and (\ref{eq:nc_c2_bound}) hold, and by merging $\eta_1,\eta_2$ into a single parameter $\eta$, this can be written as   
\begin{align}
    n\le \frac{(1-\eta)k\log\frac{p}{k}}{(1-\theta)\nu e^{-\nu}}\frac{1}{\max\{\frac{1}{\theta}f_1(C,\zeta,\rho,\nu),f_2(C,\zeta,\rho,\nu)\}}
\end{align}
for arbitrarily small $\eta>0$.   Optimizing over $(C,\zeta)$ establishes the second term in \eqref{eq:NC_threshold}.  


\section{Low-$\ell$ Achievability Analysis for Near-Constant Weight Designs}  \label{app:nc_achi}

Recall that in this part of the analysis we only need to consider $\ell\in[1,\frac{k}{\log k}]$. We will establish the threshold for $n$ above which the restricted MLE decoder (\ref{eq:mle}) has $o(1)$ probability of failing.  As with the Bernoulli design, this is more challenging than the converse.  We suppose that the true defective set is $s=[k]$ without loss of generality.


\subsubsection*{Notation}

We first recap some notation that we also used when studying the Bernoulli design.  
For $\mathcal{J}\subset s$ with $|\mathcal{J}|=\ell$ we will use the sets $\mathcal{M}_{\mathcal{J}},\mathcal{M}_{\mathcal{J}0},\mathcal{M}_{\mathcal{J}1},\mathcal{N}_0,\mathcal{N}_{00},\mathcal{N}_{01}$ and their corresponding cardinalities from Lemma \ref{lem:restate}. 
 Conditioned on $\Xv_s$ and $\Yv$, these sets and   their cardinalities are deterministic. For $\ell\in [1,\frac{k}{\log k}]$, we say that $(C,\zeta)\in [0,\infty)\times[0,1]$ is \emph{feasible} if $\frac{Cn\nu e^{-\nu}\ell}{k}$ and $\frac{\zeta\cdot Cn\nu e^{-\nu}\ell}{k}$ are integers, subject to the restriction that $(C,\zeta)=(0,0)$ is the only feasible pair with $C=0$. For a feasible pair $(C,\zeta)$, we define the set 
    $$\mathcal{K}_{\ell,C,\zeta}=\Big\{\mathcal{J}\subset s, \,:\, |\mathcal{J}|=\ell, M_{\mathcal{J}}=\frac{Cn\nu e^{-\nu}\ell}{k},~M_{\mathcal{J}0}=\frac{\zeta\cdot Cn\nu e^{-\nu}\ell}{k}\Big\}$$ 
 with cardinality $k_{\ell,C,\zeta}=|\mathcal{K}_{\ell,C,\zeta}|$, as defined in (\ref{eq:defi_KellCzeta}). For given $\mathcal{J}\in \mathcal{K}_{\ell,C,\zeta}$ and $\mathcal{J}'\subset [p]\setminus s$ with $|\mathcal{J}'|=\ell$, recall that $G_{\mathcal{J},\mathcal{J}',1}$ denotes the number of tests in $\mathcal{N}_{01}\cup \mathcal{M}_{\mathcal{J}1}$ that contain at least one item from $\mathcal{J}'$, and ${G}_{\mathcal{J},\mathcal{J}',2}$ denotes the number of tests in $\mathcal{N}_{00}\cup \mathcal{M}_{\mathcal{J}0}$ that contain at least one item from $\mathcal{J}'$.

We also need some additional notation when studying the near-constant weight design.
For $\mathcal{J}'\subset [p]\setminus s$ with $|\mathcal{J}'|=\ell$, we let $\mathcal{M}_{\mathcal{J},\mathcal{J}'}$ index the tests in $\mathcal{M}_{\mathcal{J}}$ that contain some item from $\mathcal{J}'$, and $\mathcal{N}_{0,\mathcal{J}'}$ index the tests in $\mathcal{N}_0$ that contain some item in $\mathcal{J}'$.  With these notations, we can equivalently interpret $G_{\mathcal{J},\mathcal{J}',1}$ as the number of positive tests 
in $\mathcal{M}_{\mathcal{J},\mathcal{J}'}\cup \mathcal{N}_{0,\mathcal{J}'}$, and   $G_{\mathcal{J},\mathcal{J}',2}$ as the number of negative tests in $\mathcal{M}_{\mathcal{J},\mathcal{J}'}\cup \mathcal{N}_{0,\mathcal{J}'}$. Furthermore, we note that
\begin{align}
\label{eq:nega_test_J'}G_{\mathcal{J},\mathcal{J}',2}=M_{\mathcal{J},\mathcal{J'}}+N_{0,\mathcal{J}'}-G_{\mathcal{J},\mathcal{J}',1}.
\end{align}

\subsubsection*{Reduction to two conditions}

We seek to bound the probability that restricted MLE fails, first considering each $\ell\in[1,\frac{k}{\log k}]$ separately (before later applying a union bound).  As before, we  observe the following via Lemma \ref{lem:restate}(b): If restricted MLE fails and returns some $s'$ with $|s\setminus s'|\in [1,\frac{k}{\log k}]$, then for $\ell:= |s\setminus s'|$, there exists some feasible $(C,\zeta)\in [0,\infty)\times [0,1]$ with respect to the specific $\ell$  
such that \textbf{(C1)} and \textbf{(C2)} below
simultaneously hold:
\begin{itemize}
    \item \textbf{(C1)} $k_{\ell,C,\zeta}\ge 1$ (i.e., $\mathcal{K}_{\ell,C,\zeta}\neq \varnothing$); 
    \item \textbf{(C2)} There exist $\mathcal{J}\in\mathcal{K}_{\ell,C,\zeta}$ and some  $\mathcal{J}'\subset [p]\setminus s$ with $|\mathcal{J}'|=\ell$ such that
    \begin{equation}
        \label{eq:354}G_{\mathcal{J},\mathcal{J}',1}-G_{\mathcal{J},\mathcal{J}',2} \ge  (1-2\zeta)\frac{Cn\nu e^{-\nu}\ell}{k}.
    \end{equation}    
     By substituting (\ref{eq:nega_test_J'}), this can be equivalently formulated as 
   \begin{align}
        \label{eq:use_nega_test}G_{\mathcal{J},\mathcal{J}',1}\ge\Big(\frac{1}{2}-\zeta\Big)\frac{Cn\nu e^{-\nu}\ell}{k}+\frac{1}{2}\big(N_{0,\mathcal{J}'}+M_{\mathcal{J},\mathcal{J}'}\big).
   \end{align}
\end{itemize}
In order to show that the overall failure probability is $o(1)$, we will first show that the probability of both $\textbf{(C1)}$ and $\textbf{(C2)}$ holding for some $(\ell,C,\zeta)$ is suitably small. Observe that for any length-$\ell$ $\mathcal{J}\subset [p]$, we have $M_{\mathcal{J}}\le \ell\Delta=\frac{\ell\nu n}{k}$, which implies that $k_{\ell,C,\zeta}=0$ holds for any $C>e^{\nu}$. Thus, \textbf{(C1)} does not hold when $C>e^{-\nu}$, so from now on  we concentrate on a specific $ \ell \in [1,\frac{k}{\log k}]$ and a given feasible pair $(C,\zeta)\in [0,e^\nu]\times [0,1]$.




\subsubsection*{Condition for (C1) under the given $(\ell,C,\zeta)$}

We first bound $\mathbbm{E}k_{\ell,C,\zeta}$ and then deduce the high-probability behavior of $k_{\ell,C,\zeta}$ via Markov's inequality. Since $k_{\ell,C,\zeta}=\sum_{\mathcal{J}\subset s,|\mathcal{J}|=\ell}\mathbbm{1}(M_{\mathcal{J}}=\frac{Cn\nu e^{-\nu}\ell}{k},~M_{\mathcal{J}0}=\zeta M_{\mathcal{J}})$, we have 
\begin{align}
    \mathbbm{E}k_{\ell,C,\zeta}&= \binom{k}{\ell}\mathbb{P}\left(\text{for fixed }\mathcal{J}\subset s\text{ with }|\mathcal{J}|=\ell,~M_{\mathcal{J}}=\frac{Cn\nu e^{-\nu}\ell}{k},~M_{\mathcal{J}0}=\zeta M_{\mathcal{J}}\right)\\
    &=\binom{k}{\ell}\mathbb{P}\left(M_{\mathcal{J}}=\frac{Cn\nu e^{-\nu}\ell}{k}\right)\mathbb{P}\left(\text{Bin}\Big(\frac{Cn\nu e^{-\nu}\ell}{k},\rho\Big)=\frac{\zeta\cdot Cn\nu e^{-\nu}\ell}{k}\right)\label{eq:nc_bound_exp}
\end{align}
since $\Xv_s$ (which determines $M_{\mathcal{J}}$) and $\Zv$ (which determines $M_{\mathcal{J}0}$ given $M_{\mathcal{J}}$) are independent.

\textbf{Bounding $\mathbb{P}(M_{\mathcal{J}}=\frac{Cn\nu e^{-\nu}\ell}{k})$:} We make use of Lemma \ref{lem44} in Appendix \ref{app:nc_lemms}, in which $\mathcal{W}^{(\mathcal{J})}$ is defined to index the tests that contain some item from $\mathcal{J}$, and its cardinality is defined as $W^{(\mathcal{J})}:=|\mathcal{W}^{(\mathcal{J})}|$. We thus have that $i\in\mathcal{M}_{\mathcal{J}}$ holds if and only if $i\in \mathcal{W}^{(\mathcal{J})}\setminus \mathcal{W}^{(s\setminus \mathcal{J})}$.

Recall also the following notation introduced above \eqref{eq:calK_prime}: Given $s_0\subset s=[k]$, we let $\Tv_{s_0}$ be an unordered multi-set of length $|s_0|\Delta$ whose entries are in $[n]$, with the overall multi-set representing the $|s_0|\Delta$ placements from the items in $s_0$ in an unordered manner.  It follows that $\mathcal{W}^{(\mathcal{J})}$ and $\mathcal{W}^{(s\setminus \mathcal{J})}$ are determined by the randomness of $\Tv_{\mathcal{J}}$ and $\Tv_{s\setminus \mathcal{J}}$, respectively. Given $\mathcal{J}$ satisfying $|\mathcal{J}|=\ell$ with $\frac{\ell}{k}=o(1)$, (\ref{9262}) and (\ref{9263}) in Lemma \ref{lem44} give that for any $\delta>0$ we have $W^{(s\setminus \mathcal{J})}=\big(1-e^{-\nu}+\delta'  \sqrt{\frac{\ell}{k}}+o(1)\big)n$ for some $|\delta'|<\delta$, with probability at least $1-2\exp(-2\nu^{-1}\delta^2n)$. Therefore, we set $\delta=\sqrt{C_*\frac{\nu\ell}{k}}$ (thus $\delta\sqrt{\frac{\ell}{k}}=o(1)$) for some sufficiently large constant $C_*$ to obtain that the event $W^{(s\setminus \mathcal{J})}= (1-e^{-\nu}+o(1))n$, which implies $n-W^{(s\setminus\mathcal{J})} = ne^{-\nu}(1+o(1))$, holds with probability at least $1-2\exp(-\frac{C_*\ell n}{k})$. Thus, the event
\begin{align}\label{eq:nc_achi_AJ}
    \mathscr{A}_{\mathcal{J}} = \Big\{n-W^{(s\setminus \mathcal{J})} = ne^{-\nu}(1+o(1))\Big\}
\end{align}
holds with probability at least $1-2\exp(-\frac{C_*\ell n}{k})$, where $C_*$ can be made arbitrarily large. Therefore, we can proceed as follows:
\begin{align}\label{eq:rhsrhs}
    \mathbb{P}\Big(M_{\mathcal{J}}=\frac{Cn\nu e^{-\nu}\ell}{k}\Big)& \le \mathbb{P}\Big(M_{\mathcal{J}}=\frac{Cn\nu e^{-\nu}\ell}{k}\Big| \mathscr{A}_{\mathcal{J}}\Big)   +  \mathbb{P}\big(\mathscr{A}_{\mathcal{J}}^c\big)\le \tilde{P}_1+2\exp\Big(-\frac{C_*\ell n}{k}\Big) ,
\end{align}
where we introduce the shorthand $\tilde{P}_1:= \mathbb{P}\big(M_{\mathcal{J}}=\frac{Cn\nu e^{-\nu}\ell}{k}|\mathscr{A}_{\mathcal{J}}\big)$.

To bound $\tilde{P}_1$, we divide the $n$ tests into two parts $\mathcal{W}^{(s\setminus\mathcal{J})}$ and $[n]\setminus\mathcal{W}^{(s\setminus\mathcal{J})}$.  Then, conditioning on $\Tv_{s\setminus\mathcal{J}}$ and using the randomness of $\Tv_{\mathcal{J}}$, we can identify $M_{\mathcal{J}}$ with the number of tests in $[n]\setminus \mathcal{W}^{(s\setminus \mathcal{J})}$ that contain item from $\mathcal{J}$. 
With this perspective,
Lemma \ref{lem77} in Appendix \ref{app:nc_lemms} states that 
\begin{align} 
    \tilde{P}_1 &\le \exp\Big(-\frac{\ell\nu n}{k}\Big[D\Big(Ce^{-\nu}\big\|e^{-\nu}(1+o(1))\Big)+o(1)\Big]\Big)\\&=\exp \Big(-\frac{\ell\nu n}{k}\big[D(Ce^{-\nu}\|e^{-\nu})+o(1)\big]\Big).\label{eq:hatP1}
\end{align}
  Note that $D(Ce^{-\nu}\|e^{-\nu})$, as a function of $C\in(0,e^{-\nu}]$, is uniformly bounded from above.  Hence, we can choose $C_*$ large enough so that the bound on $\tilde{P}_1$ in (\ref{eq:hatP1}) dominates $2\exp(-\frac{C_*\ell n}{k})$ in the right-hand side of (\ref{eq:rhsrhs}), and thus obtain  
  \begin{align}\label{eq:nc_achi_exp1}
      \mathbb{P}\Big(M_{\mathcal{J}}=\frac{Cn\nu e^{-\nu}\ell}{k}\Big)\le \exp \Big(-\frac{\ell\nu n}{k}\big[D(Ce^{-\nu}\|e^{-\nu})+o(1)\big]\Big).
  \end{align}

  \textbf{Bounding} $\mathbb{P}\big(\text{Bin}(\frac{Cn\nu e^{-\nu}\ell}{k},\rho)=\frac{\zeta\cdot Cn\nu e^{-\nu}\ell}{k}\big)$\textbf{ and combining:}  We apply the Chernoff bound  (see (\ref{eq:chernoff1}) in Lemma \ref{binoconcen}) to obtain 
  \begin{align}
      \label{eq:nc_achi_exp2}
     \mathbb{P}\left(\text{Bin}\Big(\frac{Cn\nu e^{-\nu}\ell}{k},\rho\Big)=\frac{\zeta\cdot Cn\nu e^{-\nu}\ell}{k}\right) \le \exp \Big(-\frac{Cn\nu e^{-\nu}\ell}{k}\cdot D\big(\zeta\|\rho\big)\Big).
  \end{align}
Substituting (\ref{eq:nc_achi_exp1}) and (\ref{eq:nc_achi_exp2}) into (\ref{eq:nc_bound_exp}) yields
\begin{align}
    \mathbbm{E}k_{\ell,C,\zeta} \le \binom{k}{\ell}\exp\Big(-\frac{\ell\nu ne^{-\nu}}{k}\underbrace{\Big[e^\nu D(Ce^{-\nu}\|e^{-\nu})+C\cdot D(\zeta\|\rho)+o(1)\Big]}_{:=f_1(C,\zeta,\rho,\nu)}\Big).\label{eq:nc_achi_defi_f1}
\end{align}
Note that $f_1(C,\zeta,\rho,\nu)$ here has appeared in the proof of converse bound; see (\ref{eq:nc_f1}).

We separately deal with the two cases $1\le \ell\le \log k$  and  $\log k <\ell \le \frac{k}{\log k}$   using Markov's inequality.  Specifically, following exactly the same arguments as the Bernoulli design (see (\ref{eq:365})--(\ref{eq:c1_condition})), we conclude the following for a given $(\ell,C,\zeta)$: 
\begin{align}
    \label{eq:nc_c1_thre}& \frac{n\nu e^{-\nu}\ell}{k}\big[f_1(C,\zeta,\rho,\nu)+o(1)\big] \ge (1+\eta_1)\ell\log\frac{k}{\ell},~\text{for some }\eta_1>0\\
    \Longrightarrow& ~
    \mathbb{P}\big(k_{\ell,C,\zeta}\ge 1\big)\le \hat{P}_\ell,~~\text{where }\hat{P}_\ell = \begin{cases}
        (\ell\log k)^{-5},~~\text{if }1\le \ell\le \log k\\
        ~~~~~~k^{-5},~~~~\text{if }\log k<\ell\le \frac{k}{\log k}
    \end{cases}.\label{eq:nc_c1_prob}
\end{align}
Therefore, if (\ref{eq:nc_c1_thre}) holds, then \textbf{(C1)} holds for some given $(\ell,C,\zeta)$ with probability at most $\hat{P}_\ell$.

\subsubsection*{Condition for (C2) for a given $(\ell,C,\zeta)$}
As in (\ref{eq:nc_count_N0}), with respect to the randomness of $\Tv_s$, the event  \begin{align}\label{eq:nc_achi_A1}
    \mathscr{A}_1=\big\{N_0=(1+o(1))e^{-\nu}n\big\}
\end{align} 
holds with $1-o(1)$ probability.  In the following, for a generic event $\mathscr{E}$, an upper bound on $\mathbb{P}(\mathscr{E})$ in this part should be understood as a bound on $\mathbb{P}(\mathscr{E}\cap\mathscr{A}_1)$, but we will only make this explicit in the concluding stage; see (\ref{eq:ber_c2_prob_bound}) below.

We consider a fixed $(\ell,C,\zeta)$ and seek to establish the condition for \textbf{(C2)}, i.e.,  (\ref{eq:use_nega_test}) holding for some $\mathcal{J}\in \mathcal{K}_{\ell,C,\zeta}$ and some $\mathcal{J}' \subset [p]\setminus s$ with $|\mathcal{J}'|=\ell$.
Note that 
  $G_{\mathcal{J},\mathcal{J}',1}$ in (\ref{eq:use_nega_test}) is the number of positive tests in $\mathcal{N}_{0,\mathcal{J}'}\cup \mathcal{M}_{\mathcal{J},\mathcal{J}'}$, and thus we can further decompose $G_{\mathcal{J},\mathcal{J}',1}$ into \begin{equation}\label{eq:decom_GJJ}
    G_{\mathcal{J},\mathcal{J}',1} = \widetilde{G}_{\mathcal{J}',1} + U_{\mathcal{J},\mathcal{J}',1},
\end{equation}
where $\widetilde{G}_{\mathcal{J}',1}$ is the number of positive tests in $\mathcal{N}_{0,\mathcal{J}'}$ (with no dependence on $\mathcal{J}$, as reflected by the notation), and $U_{\mathcal{J},\mathcal{J}',1}$ is the number of positive tests in $\mathcal{M}_{\mathcal{J},\mathcal{J}'}$. Substituting (\ref{eq:decom_GJJ}) into (\ref{eq:use_nega_test}), we can restate \textbf{(C2)} as follows: 
\begin{align}
    &\widetilde{G}_{\mathcal{J}',1}\ge \Big(\frac{1}{2}-\zeta\Big)\frac{Cn\nu e^{-\nu}\ell}{k}+\frac{1}{2}N_{0,\mathcal{J}'}+\Big(\frac{1}{2}M_{\mathcal{J},\mathcal{J}'}-U_{\mathcal{J},\mathcal{J}',1}\Big) 
\end{align}
for some $\mathcal{J}\in\mathcal{K}_{\ell,C,\zeta}$ and some $\mathcal{J}'\subset[p]\setminus s$ with $|\mathcal{J}'|=\ell$. 
Therefore, if \textbf{(C2)} holds, then the following also necessarily holds:  
\begin{align}\label{eq:nc_suff_c2}
    \max_{\substack{\mathcal{J}'\subset [p]\setminus s\\|\mathcal{J}'|=\ell}}\Big(\widetilde{G}_{\mathcal{J}',1}-\frac{1}{2}N_{0,\mathcal{J}'}\Big) \ge \Big(\frac{1}{2}-\zeta\Big)\frac{Cn\nu e^{-\nu}\ell}{k}-\max_{\substack{\mathcal{J}\in \mathcal{K}_{\ell,C,\zeta}\\\mathcal{J}'\subset[p]\setminus s,|\mathcal{J}'|=\ell}}M_{\mathcal{J},\mathcal{J}'},
\end{align}
where we express ``there exist'' (or ``for some'') in \textbf{(C2)} via the ``max'' operation, and apply the trivial inequality $\frac{1}{2}M_{\mathcal{J},\mathcal{J}'}-U_{\mathcal{J},\mathcal{J}',1}\geq -M_{\mathcal{J},\mathcal{J}'}$. 

{\textbf{The effect of $M_{\mathcal{J},\mathcal{J}'}$}:} We show that the term  $\max_{\mathcal{J},\mathcal{J}'}M_{\mathcal{J},\mathcal{J}'}$  indeed has minimal effect.  
Given specific $(\mathcal{J},\mathcal{J}')$, recall that $M_{\mathcal{J},\mathcal{J}'}$ is the number of tests in $\mathcal{M}_{\mathcal{J}}$ that contain some item from $\mathcal{J}'$; we now further let $\widetilde{M}_{\mathcal{J},\mathcal{J}'}$ be the number of placements from the items in $\mathcal{J}'$ that fall in the $\frac{Cn\nu e^{-\nu}\ell}{k}$ tests in $\mathcal{M}_{\mathcal{J}}$. Then, note that we always have $M_{\mathcal{J},\mathcal{J}'}\le \widetilde{M}_{\mathcal{J},\mathcal{J}'}$, and $M_{\mathcal{J},\mathcal{J}'}< \widetilde{M}_{\mathcal{J},\mathcal{J}'}$ happens if some test in $\mathcal{M}_{\mathcal{J}}$ receives more than one placement from the items in $\mathcal{J}'$. Therefore, we have 
\begin{align}
\max_{\mathcal{J},\mathcal{J}'}M_{\mathcal{J},\mathcal{J}'}\le\max_{\mathcal{J},\mathcal{J}'}\widetilde{M}_{\mathcal{J},\mathcal{J}'},
\end{align}
where $(\mathcal{J},\mathcal{J}')$ are implicitly subject to the same constraints as the last term in (\ref{eq:nc_suff_c2}), and similarly in the developments below. We consider fixed $\mathcal{J}\in \mathcal{K}_{\ell,C,\zeta}$ and $\mathcal{J}'\subset[p]\setminus s$ with $|\mathcal{J}'|=\ell$. Since $\mathcal{M}_{\mathcal{J}}=\frac{Cn\nu e^{-\nu}\ell}{k}$ and the test placements are uniform  in $[n]$, a placement from items in $\mathcal{J}'$ falls in $\mathcal{M}_{\mathcal{J}}$ with probability $\frac{C\nu e^{-\nu}\ell}{k}$, thus
the randomness of $\Tv_{\mathcal{J}'}$ gives $\widetilde{M}_{\mathcal{J},\mathcal{J}'}\sim\text{Bin}(\frac{\ell\nu n}{k},\frac{C\nu e^{-\nu}\ell}{k})$.
Then we can proceed analogously to the calculations in (\ref{eq:nc_small_effect})--(\ref{eq:k-10ell}). In particular, our assumption $C \le e^{\nu}$ is equivalent to $Ce^{-\nu}\le 1$, and thus
for given $(\mathcal{J},\mathcal{J}')$ we have
\begin{align}
    \mathbb{P}\Big(\widetilde{M}_{\mathcal{J},\mathcal{J}'}\ge \frac{\ell n}{k(\log\frac{k}{\ell})^{1/2}}\Big) &\leq \mathbb{P}\Big(\text{Bin}\big(\frac{\nu \ell n}{k},\frac{\nu\ell}{k}\big)\ge \frac{\ell n}{k(\log\frac{k}{\ell})^{1/2}}\Big)\\
    &\le \exp \Big(-\Omega\Big(\sqrt{\log\frac{k}{\ell}}\Big)\cdot\ell\log k\Big), \label{eq:second_ine_fol}
\end{align}
where (\ref{eq:second_ine_fol}) follows from the same argument as the Bernoulli design (see (\ref{eq:nc_small_end})), with only the constant factors changing. Moreover, by repeating the union bound over $(\mathcal{J},\mathcal{J}')$ as in (\ref{eq:k-10ell}), we obtain 
\begin{align}
    \label{eq:nc_minimal}\mathbb{P}\Big(\max_{\mathcal{J},\mathcal{J}'}\widetilde{M}_{\mathcal{J},\mathcal{J}'}\ge \frac{\ell n}{k(\log\frac{k}{\ell})^{1/2}}\Big) \leq k^{-10\ell}
\end{align}
for sufficiently large $k$, and 
thus it holds with probability at least 
$1-k^{-10\ell}$ that
\begin{align}\label{eq:minimal_MJJpai}
\max_{\mathcal{J},\mathcal{J}'}M_{\mathcal{J},\mathcal{J}'}\le \frac{\ell n}{k(\log\frac{k}{\ell})^{1/2}}=o\Big(\frac{\ell n}{k}\Big).
\end{align} 
On this high-probability event,  the necessary condition for \textbf{(C2)} given in (\ref{eq:nc_suff_c2}) can be written as
\begin{align}\label{eq:nc_suff2_c2}
    \max_{\substack{\mathcal{J}'\subset [p]\setminus s\\|\mathcal{J}'|=\ell}}\Big(\widetilde{G}_{\mathcal{J}',1}-\frac{1}{2}N_{0,\mathcal{J}'}\Big) \ge \Big(\frac{1}{2}-\zeta-o(1)\Big)\frac{Cn\nu e^{-\nu}\ell}{k}.
\end{align}
Thus, to bound the probability of \textbf{(C2)}, it suffices to bound the probability of its necessary condition (\ref{eq:nc_suff2_c2}). 
We proceed to study (\ref{eq:nc_suff2_c2}) by first studying a fixed $\mathcal{J}'$ and then applying a union bound to account for the maximum.

   For given $\mathcal{J}'$, we define 
   \begin{align}
       d_0:=\frac{C(2\zeta -1)}{1-2\rho} \label{eq:nc_achi_d0}
   \end{align}
   and write  
\begin{align}
   \label{eq:nc_decom_1} &\mathbb{P}\Big(\widetilde{G}_{\mathcal{J}',1}-\frac{1}{2}N_{0,\mathcal{J}'}\ge \big(\frac{1}{2}-\zeta-o(1)\big)\frac{Cn\nu e^{-\nu}\ell}{k}\Big)\le  \mathbb{P}\Big(N_{0,\mathcal{J}'}\le \frac{d_0\nu e^{-\nu} \ell n}{k}\Big)\\
    &+\sum_{\substack{d_0< d\le e^{\nu}\\\frac{d\nu e^{-\nu}\ell n}{k}\in\mathbb{Z}}}\mathbb{P}\Big(N_{0,\mathcal{J'}}=\frac{d\nu e^{-\nu}\ell n}{k}\Big)\mathbb{P}\Big(\widetilde{G}_{\mathcal{J}',1}\ge\frac{d+C(1-2\zeta)}{2}\frac{\nu e^{-\nu}\ell n}{k}\Big|N_{0,\mathcal{J}'}=\frac{d\nu e^{-\nu}\ell n}{k}\Big).\label{eq:nc_decom_2}
\end{align}


\textbf{Case 1: $d_0\in [0,1]$ (i.e., $0\le C(2\zeta-1)\le 1-2\rho$).} We will separately bound the probability terms in the right hand side of (\ref{eq:nc_decom_1}) and in (\ref{eq:nc_decom_2}).

\underline{Bounding the term in   (\ref{eq:nc_decom_1}):} We first bound the term in the right-hand side of (\ref{eq:nc_decom_1}).  
By definition, $ {N}_{0,\mathcal{J}'}$ is the number of tests in $\mathcal{N}_0$ that contain some item from $\mathcal{J}'$, thus on the event of  $N_0=(1+o(1))ne^{-\nu}$ (as explained in the text below (\ref{eq:nc_achi_A1})),  Lemma \ref{lem77} gives 
\begin{align}\label{eq:apply_lem77_1}    \mathbb{P}\Big(N_{0,\mathcal{J}'}= \frac{d\nu e^{-\nu}\ell n}{k}\Big)&\le \exp\Big(-\frac{\ell\nu n}{k}\big[D\big(de^{-\nu}\|e^{-\nu}+o(1)\big)+o(1
)\big]\Big)  
    \\&\le \exp\Big(-\frac{\ell\nu n}{k}\big[D\big(d_0e^{-\nu}\|e^{-\nu}\big)+o(1)\big]\Big) \label{eq:second_ine}
\end{align}
for any $d\le d_0$ such that $\frac{d\nu e^{-\nu}\ell n}{k}\in \mathbb{Z}$, where (\ref{eq:second_ine}) follows from $d_0\le 1$ and the monotonicity of $D(de^{-\nu}\|e^{-\nu})$ with respect to $d$. Therefore, the term in the right-hand side of (\ref{eq:nc_decom_1}) can be bounded as 
\begin{align}
     \PP\Big(N_{0,\mathcal{J}'}\le\frac{d_0\nu e^{-\nu}\ell n}{k}\Big) &=\sum_{0\le d\le d_0:\frac{d\nu e^{-\nu}
    \ell n}{k}\in \mathbb{Z}} \mathbb{P}\Big(N_{0,\mathcal{J}'}=\frac{d\nu e^{-\nu}\ell n}{k}\Big) \label{eq:count_summand}\\&\le O\big(\frac{\ell n}{k}\big)\exp \Big(-\frac{\ell\nu n}{k}\big[D\big(d_0e^{-\nu}\|e^{-\nu}\big)+o(1)\big]\Big) \\&\le \exp \Big(-\frac{\ell\nu n}{k}\big[D\big(d_0e^{-\nu}\|e^{-\nu}\big)+o(1)\big]\Big), \label{eq:first_bound_nc} 
\end{align}
where (\ref{eq:count_summand}) follows from $d_0=O(1)$ (see (\ref{eq:nc_achi_d0})) and hence there are at most $O(\frac{\ell n}{k})$ summands, and in  (\ref{eq:first_bound_nc}) the leading factor $O(\frac{\ell n}{k})$ is absorbed into the $o(1)$ term in the exponent.

\underline{Bounding the term in (\ref{eq:nc_decom_2}):} Analogously to  (\ref{eq:apply_lem77_1}), on the event of $N_0=(1+o(1))ne^{-\nu}$,   Lemma \ref{lem77} gives 
\begin{align}\label{eq:first_nc}
    \mathbb{P}\Big(N_{0,\mathcal{J}'}=\frac{d\nu e^{-\nu}\ell n}{k}\Big) \le \exp\Big(-\frac{\ell \nu n}{k}\cdot \big[D \big(de^{-\nu}\|e^{-\nu}\big)+o(1)\big]\Big).
\end{align}
Then,   conditioned on $N_{0,\mathcal{J}'}=\frac{d\nu e^{-\nu}\ell n}{k}$ we have $\widetilde{G}_{\mathcal{J}',1}\sim \text{Bin}(\frac{d\nu e^{-\nu}\ell n}{k},\rho)$ via the randomness of $\Zv$, and thus the condition $d> d_0$ (which implies $\frac{d+C(1-2\zeta)}{2}>d\rho$) allows us to apply the Chernoff bound (see (\ref{eq:chernoff1}) in Lemma \ref{binoconcen}) to obtain  
\begin{align}
   & \mathbb{P}\Big(\text{Bin}\Big(\frac{d\nu e^{-\nu}\ell n}{k},\rho\Big)\ge\frac{d+C(1-2\zeta)}{2}\frac{\nu e^{-\nu}\ell n}{k}\Big)\le \exp \left(-\frac{d\nu e^{-\nu}
    \ell n}{k}\cdot D\Big(\frac{1}{2}+\frac{C(1-2\zeta)}{2d}\big\|\rho\Big)\right). \label{eq:second_nc}
\end{align}
Combining the bounds in (\ref{eq:first_nc}) and (\ref{eq:second_nc}), without accounting for the summation over $d$, a single summand in (\ref{eq:nc_decom_2}) corresponding to a specific $d$ can be bounded by
\begin{align}
    \exp\Big(-\frac{\ell \nu n e^{-\nu}}{k}\Big[ \underbrace{e^{\nu}\cdot D(de^{-\nu}\|e^{-\nu})+d\cdot D\Big(\frac{1}{2}+\frac{C(1-2\zeta)}{2d}\big\|\rho\Big)}_{:=g(C,\zeta,d,\rho,\nu)} +o(1)\Big]\Big).
    \label{eq:second_bound_nc}
\end{align}
Note that $g(C,\zeta,d,\rho,\nu)$ here has appeared in the   proof of converse bound; see (\ref{eq:nc_con_def_g}).

To further account for the summation over $\{  d_0 <d\le e^\nu:\frac{d\nu e^{-\nu}\ell n}{k}\in \mathbb{Z}\}$, which contains at most $O(\frac{\ell n}{k})$ elements, we bound all these summands by their common upper bound \begin{align}\label{eq:nc_common_bound}
    \exp\Big(-\frac{\ell\nu ne^{-\nu}}{k}\Big[g(C,\zeta,d^*,\rho,\nu)+o(1)\Big]\Big), 
\end{align} 
where $d^*$ is given by 
\begin{align}\label{eq:dstar_nc_achi}
    d^* =\argmin_{d}~g(C,\zeta,d,\rho,\nu),~~\text{subject to }|C(2\zeta-1)|<d\le e^\nu,
\end{align}
and where we relax the range of $d$ from $\frac{C(2\zeta-1)}{1-2\rho}=d_0\le d\le e^{\nu}$ to $|C(2\zeta-1)|\le d\le e^\nu$ so that it matches the corresponding development in the converse proof (see (\ref{eq:dstar_nc_con})). We again write $A:=C(1-2\zeta)$, and note that $d^*$
in (\ref{eq:dstar_nc_achi}) has been pinpointed in (\ref{eq:d_star_pinpoint}). 
Therefore, the term in (\ref{eq:nc_decom_2}) is bounded by 
\begin{align}\label{eq:bound_332}
    O\Big(\frac{\ell n}{k}\Big)\exp\Big(-\frac{\ell\nu n e^{-\nu}}{k} [g(C,\zeta,d^*,\rho,\nu)+o(1)]\Big)=\exp\Big(-\frac{\ell\nu n e^{-\nu}}{k} [\underbrace{g(C,\zeta,d^*,\rho,\nu)}_{:=f_{2}(C,\zeta,\rho,\nu)}+o(1)]\Big),
\end{align}
where $f_2(C,\zeta,\rho,\nu)$ coincides with the one in the proof of near-constant weight design converse bound; see (\ref{eq:f2_con_nc}) in Appendix \ref{app:nc_converse}.

 \underline{Comparing the bounds in  (\ref{eq:first_bound_nc}) and (\ref{eq:bound_332}):} Observe that (\ref{eq:dstar_nc_achi}) and $C(2\zeta-1)\le d_0\le e^\nu$ give
\begin{align}
    f_{2}(C,\zeta,\rho,\nu)=g(C,\zeta,d^*,\rho,\nu)\le g(C,\zeta,d_0,\rho,\nu)=e^\nu D(d_0e^{-\nu}\|e^{-\nu}), 
\end{align}
where the last equality follows from $D(\frac{1}{2}+\frac{C(1-2\zeta)}{2d}\|\rho)=0$ when $d=d_0$. Thus, the bound in (\ref{eq:bound_332}) dominates the one in (\ref{eq:first_bound_nc}), and by substituting into (\ref{eq:nc_decom_1})--(\ref{eq:nc_decom_2}), we obtain 
\begin{align} \label{eq:nc_achi_maincasebound}
    \mathbb{P}\Big(\widetilde{G}_{\mathcal{J}',1}-\frac{1}{2}N_{0,\mathcal{J}'}\ge\big(\frac{1}{2}-\zeta-o(1)\big)\frac{Cn\nu e^{-\nu}\ell}{k}\Big)\le \exp\Big(-\frac{\ell\nu ne^{-\nu}}{k}\big[f_{2}(C,\zeta,\rho,\nu)+o(1)\big]\Big)
\end{align}
for the case $C(2\zeta-1)\le1-2\rho$.

\textbf{Case 2: $d_0<0$ (i.e., $ C(2\zeta-1)<0$).} In this case, the term in (\ref{eq:nc_decom_1}) vanishes, and we focus on bounding the term in (\ref{eq:nc_decom_2}). As in Case 1, given $N_{0,\mathcal{J}'}=\frac{d\nu e^{-\nu}\ell n}{k}$ we have $\widetilde{G}_{\mathcal{J}',1}\sim \text{Bin}(\frac{d\nu e^{-\nu}\ell n}{k},\rho)$, and so we have the summand being $0$ if $\frac{d+C(1-2\zeta)}{2}>d$, i.e., $d< C(1-2\zeta)$. Therefore, we can restrict the summation over $d$ in (\ref{eq:nc_decom_2}) to 
\begin{align}
    \mathcal{D}= \Big\{C(1-2\zeta)\le d\le e^{\nu}: \frac{d\nu e^{-\nu}\ell n}{k}\in\mathbb{Z}\Big\}. 
\end{align}
Then similarly to Case 1, we bound the $O(\frac{\ell n}{k})$ summands by their common upper bound as per (\ref{eq:nc_common_bound}) with $d^*$ given in (\ref{eq:dstar_nc_achi}), and obtain the same bound as (\ref{eq:nc_achi_maincasebound}).

\textbf{Case 3: $d_0>1$ (i.e., $C(2\zeta-1)> 1-2\rho$).}  In this case,   we simply apply the trivial bound 
\begin{align}\label{eq:nc_achi_triv}
    \mathbb{P}\Big(\widetilde{G}_{\mathcal{J}',1}-\frac{1}{2}N_{0,\mathcal{J}'}\ge\big(\frac{1}{2}-\zeta-o(1)\big)\frac{Cn\nu e^{-\nu}\ell}{k}\Big)  \leq 1,
\end{align}
which in turn trivially behaves as $\exp(-\frac{\ell \nu n}{k}o(1))$.

\textbf{Combining the Cases 1-3:}  We recall that $f_2(C,\zeta,\rho,\nu)$ is given in (\ref{eq:bound_332}) and  define  
\begin{align}\label{eq:nc_achi_hatf}
    \hat{f}_2(C,\zeta,\rho,\nu)=\begin{cases}
        f_2(C,\zeta,\rho,\nu) & \text{when }C(2\zeta-1)\le 1-2\rho \\
        0 & \text{when }C(2\zeta-1)> 1-2\rho.
    \end{cases},
\end{align}
Combining the three cases discussed above, we obtain
\begin{align}
      \mathbb{P}\Big(\widetilde{G}_{\mathcal{J}',1}-\frac{1}{2}N_{0,\mathcal{J}'}\ge\big(\frac{1}{2}-\zeta-o(1)\big)\frac{Cn\nu e^{-\nu}\ell}{k}\Big)\le \exp\Big(-\frac{\ell\nu ne^{-\nu}}{k}\big[\hat{f}_2(C,\zeta,\rho,\nu)+o(1)\big]\Big).
\end{align}
Additionally, we can verify that $f_2(C,\zeta,\rho,\nu)=0$ when $C(2\zeta-1)=1-2\rho$: Using (\ref{eq:d_star_pinpoint}) and some simple algebra, we find that $d^*=1$ holds when $A=C(1-2\zeta)=2\rho -1$; substituting this into $g(C,\zeta,d,\rho,\nu)$ in (\ref{eq:second_bound_nc}) yields \begin{align}
    g(C,\zeta,1,\rho,\nu) = D\Big(\frac{1+C(1-2\zeta)}{2}\big\|\rho\Big) =0. 
\end{align} Thus,  $\hat{f}_2(C,\zeta,\rho,\nu)$ is continuous. 

\textbf{The condition for (C2):}
We now proceed as follows:
\begin{align}
    &\mathbb{P}\Big(\text{\textbf{(C2)}   holds for }(\ell,C,\zeta)\Big)\\
    &\quad\le \mathbb{P}\Big((\ref{eq:nc_suff2_c2})\text{   hold for }(\ell,C,\zeta)\Big) + \mathbb{P}\Big((\ref{eq:minimal_MJJpai})\text{ does not hold}\Big)  \label{eq:add_explain3}
    \\ &\quad\le \binom{p}{\ell} \exp\Big(-\frac{\ell\nu ne^{-\nu}}{k}\big[\hat{f}_2(C,\zeta,\rho,\nu)+o(1)\big]\Big) +k^{-10\ell}\label{eq:add_explain4}\\
    &\quad\le \exp\Big((1+o(1))\ell\log\frac{p}{\ell}-\frac{\ell\nu n e^{-\nu}}{k}\big[\hat{f}_2(C,\zeta,\rho,\nu)+o(1)\big]\Big)+k^{-10\ell},\label{eq:nc_c2_prob}
\end{align}
where (\ref{eq:add_explain3}) holds because under the condition (\ref{eq:minimal_MJJpai}) we have that (\ref{eq:nc_suff2_c2}) is a necessary condition for ``\textbf{(C2)} holds for $(\ell,C,\zeta)$'', and in (\ref{eq:add_explain4}) we apply a union bound over no more than $\binom{p}{\ell}$ possibilities of $\mathcal{J}'$ to account for the maximum in (\ref{eq:nc_suff2_c2}), and recall that (\ref{eq:minimal_MJJpai}) holds with probability at least $1-k^{-10\ell}$.  Combining these developments, we deduce that the following implication holds for any $\eta_2 > 0$ and sufficiently large $p$ (where we now make the role of $\mathscr{A}_1$ explicit):
\begin{align}
   \label{eq:nc_threshold2}\frac{\ell\nu n e^{-\nu}}{k}\hat{f}_2(C,\zeta,\rho,\nu)\ge (1+\eta_2)\ell\log\frac{p}{\ell} \\ 
    \Longrightarrow~ \mathbb{P}\Big(\big\{\textbf{(C2)}\text{ holds for }(\ell,C,\zeta)\big\}\cap \mathscr{A}_1\Big) 
    &\le \exp\Big(-\frac{\eta_2}{2}\ell\log\frac{p}{\ell}\Big)+k^{-10\ell} \\ &\le  \hat{P}_{\ell}: =\begin{cases}
        (\ell\log k)^{-5},~\text{if }1\le \ell \le \log k\\
        ~~~k^{-5},~~~~~~\text{if }\log k<\ell \le \frac{k}{\log k},\label{eq:ber_c2_prob_bound}
    \end{cases} 
\end{align}
since $\exp(-\frac{\eta^2}{2}\ell \log\frac{p}{\ell})$ is an upper bound on (\ref{eq:nc_c2_prob}) when  (\ref{eq:nc_threshold2}) holds. 

\subsubsection*{Establishing the threshold}

We are now in a position to derive the threshold for $n$ above which restricted MLE has $o(1)$ probability of failing.  We pause to review our previous developments:
\begin{itemize}
    \item For given $\ell$ and any feasible $(C,\zeta)$, by (\ref{eq:nc_c1_thre})--(\ref{eq:nc_c1_prob}), if
    \begin{align}
        \label{eq:nc_c1_final_threshold}
        n\ge \frac{(1+\eta_1)k\log\frac{k}{\ell}}{\nu e^{-\nu}f_1(C,\zeta,\rho,\nu)} 
    \end{align}
    holds for some $\eta_1>0$ with $f_1(C,\zeta,\rho,\nu)$ being defined in (\ref{eq:nc_achi_defi_f1}), then \textbf{(C1)} holds for the given $(\ell,C,\zeta)$ with probability at most $\hat{P}_\ell$.
    
    \item For given $\ell$ and any feasible $(C,\zeta)$, by  (\ref{eq:nc_threshold2})--(\ref{eq:ber_c2_prob_bound}), if 
    \begin{align}
        \label{eq:nc_c2_final_threshold}
        n\ge \frac{(1+\eta_2)k\log\frac{p}{\ell}}{\nu e^{-
        \nu}[\hat{f}_2(C,\zeta,\rho,\nu)+o(1)]} 
    \end{align}
    holds for some $\eta_2>0$ with  $\hat{f}_2(C,\zeta,\rho,\nu)$ being given in (\ref{eq:nc_achi_hatf}) and (\ref{eq:bound_332}), then the probability of ``\textbf{(C2)} holds for the given $(\ell,C,\zeta)$ and $\mathscr{A}_1$ holds'' is no higher than $\hat{P}_\ell$.
\end{itemize}

\underline{Bounding the failure probability for a fixed $\ell$:} 
For fixed $\ell\in[1,\frac{k}{\log k}]$, we first consider the feasible $(C,\zeta)\in [0,e^\nu]\times[0,1]$ such that $\frac{Ce^{-\nu}\ell\nu n}{k},\frac{\zeta\cdot Ce^{-\nu}\ell\nu n}{k}$ are integers, and note that   there are at most $O((\frac{\ell n}{k})^2)=O((\ell\log k)^2)$ possibilities of feasible pairs of $(C,\zeta)$. If it holds for some $\eta>0$ that
\begin{align}\label{eq:nc_con_fixedell}
    n\ge (1+\eta)\frac{k}{\nu e^{-\nu}}\max_{\substack{C\in(0,e^\nu)\\\zeta\in(0,1)}}\min \Big\{\frac{\log\frac{k}{\ell}}{f_1(C,\zeta,\rho,\nu)},\frac{\log\frac{p}{\ell}}{\hat{f}_2(C,\zeta,\rho,\nu)}\Big\}
\end{align}
 then   the two dot points reviewed above give that, 
for any feasible $(C,\zeta)\in [0,e^\nu]\times [0,1]$ (with respect to the given $\ell $) we have 
$$\mathbb{P}\big( \{ \text{(\textbf{C1}) and (\textbf{C2})  simultaneously hold for $(\ell,C,\zeta)\}$} \,\cap\, \mathscr{A}_1\big)\le \hat{P}_\ell.$$
  Since it is impossible for \textbf{(C1)} to hold when $C>e^\nu$, for the restricted MLE decoder in (\ref{eq:mle_restricted}) we obtain that 
\begin{align}
& \mathbb{P}\Big(\big\{\big|s\setminus\widehat{S}'\big|=\ell\big\}\cap\mathscr{A}_1\Big) \nonumber \\\label{eq:fail_necc_nc_1} & \le\mathbb{P}\Big(\{\textbf{(C1)}\text{ and }\textbf{(C2)}~\text{hold for some feasible }(C,\zeta)\in [0,e^\nu]\times [0,1]\} \cap\mathscr{A}_1 \text{ holds}\Big) \\ & \le O((\ell\log k)^2)\hat{P}_\ell, \label{eq:bound_single_ell}
\end{align}
where (\ref{eq:fail_necc_nc_1}) holds due to Lemma \ref{lem:restate}(b), and (\ref{eq:bound_single_ell}) follows from a union bound over all feasible $(C,\zeta)$ under the given $\ell\in[1,\frac{k}{\log k}]$.

\underline{Bounding the overall failure probability for $\ell\in [1,\frac{k}{\log k}]$:} We further take a union bound over all  $\ell\in[1,\frac{k}{\log k}]$.  Suppose that the condition (\ref{eq:nc_con_fixedell}) holds for all $\ell\in [1,\frac{k}{\log k}]$; by  the same reasoning as (\ref{eq:ber_uni_ell})--(\ref{eq:411}),   this amounts to  \begin{align}\label{eq:nc_smallell_thre}
    n\geq \frac{(1+\eta)k\log\frac{p}{k}}{(1-\theta)\nu e^{-\nu}}\frac{1}{\min _{C\in(0,e^\nu),\zeta\in(0,1)}\max\{\frac{1}{\theta}f_1(C,\zeta,\rho,\nu),\hat{f}_2(C,\zeta,\rho,\nu)\}}
\end{align}
  for some $\eta>0$. Then, if (\ref{eq:nc_smallell_thre}) holds, we can bound the overall failure probability as
  \begin{align}
    \label{eq:union_bound_hh} \mathbb{P}({\rm err}) &\le \mathbb{P}(\mathscr{A}_1^c) +   \sum_{\ell=1}^{k/\log k}\mathbb{P}\Big(\Big\{ |s\setminus\widehat{S}'|=\ell\Big\}\cap \mathscr{A}_1\Big)\\
    &\le o(1)+  \sum_{\ell=1}^{k/\log k} O((\ell\log k)^2)\hat{P}_\ell \label{eq:sub_bound}\\ &\le  o(1)+ \Big[\sum_{\ell=1}^{\log k} (\ell\log k)^{-3} + \sum_{\ell=\log k}^{k/\log k}k^{-3}\Big] = o (1),\label{eq:sub_Pell}
    \end{align}
 where (\ref{eq:union_bound_hh}) follows from the union bound, (\ref{eq:sub_bound}) follows from (\ref{eq:bound_single_ell}), and in (\ref{eq:sub_Pell}) we substitute $\hat{P}_\ell$ (given by (\ref{eq:ber_c2_prob_bound})). In conclusion, if (\ref{eq:nc_smallell_thre}) holds for some $\eta>0$, then restricted MLE has $o(1)$ probability of failure.

\textbf{Simplifying $\hat{f}_2(C,\zeta,\rho,\nu)$ to $f_2(C,\zeta,\rho,\nu)$:}
Recall that $\hat{f}_2(C,\zeta,\rho,\nu)$ is given in (\ref{eq:nc_achi_hatf}). In this step, we show that the minimum over $(C,\zeta)$ in (\ref{eq:nc_smallell_thre}) is not attained in the domain of $C(2\zeta-1)> 1-2\rho$, thus we can safely simplify $\hat{f}_2(C,\zeta,\rho,\nu)$ to $f_2(C,\zeta,\rho,\nu)$ in the threshold (\ref{eq:nc_smallell_thre}). 
   To this end,  note that $C(2\zeta-1)>1-2\rho$ is equivalent to  $$\zeta > \zeta':=\frac{1}{2}+\frac{1-2\rho}{2C}\ge\frac{1}{2},$$ and we note that $\hat{f}_2(C,\zeta,\rho,\nu)=0$ holds for any $\zeta\ge\zeta'$ (see (\ref{eq:nc_achi_hatf}) and recall that $\hat{f}_2(C,\zeta,\rho,\nu)$ is continuous), and that $f_1(C,\zeta,\rho,\nu)$ is monotonically increasing with respect to $\zeta$ in $[\frac{1}{2},1)$ (see (\ref{eq:nc_achi_defi_f1})). Therefore, for any given $(C,\zeta)$ with $C(2\zeta-1)>1-2\rho$ we can also consider $(C,\zeta')$  that 
  falls in the case of $C(2\zeta-1)\le 1-2\rho$, and we can compare the values of $f_1(C,\zeta,\rho,\nu)$ and $\hat{f}_2(C,\zeta,\rho,\nu)$ as 
  $$f_1(C,\zeta,\rho,\nu)\geq f_1(C,\zeta',\rho,\nu),~\hat{f}_2(C,\zeta',\rho,\nu)=\hat{f}_2(C,\zeta,\rho,\nu)=0,$$ 
  yielding that \begin{align}
    \label{eq:nc_use_conti}
    \max\Big\{\frac{1}{\theta}f_1(C,\zeta,\rho,\nu),\hat{f}_2(C,\zeta,\rho,\nu)\Big\}\geq \max\Big\{\frac{1}{\theta}f_1(C,\zeta',\rho,\nu),\hat{f}_2(C,\zeta',\rho,\nu)\Big\}. 
\end{align}
Thus,     the minimum over $(C,\zeta)$ in (\ref{eq:nc_smallell_thre}) is attained at the case $C(2\zeta-1)\le 1-2\rho$, in which $\hat{f}_2(C,\zeta,\rho,\nu)=f_2(C,\zeta,\rho,\nu)$. Then similarly to (\ref{eq:ber_coincide1})--(\ref{eq:ber_coincide2}) we have
\begin{align}
    &\min_{C\in(0,e^\nu),\zeta\in(0,1)}\max\Big\{\frac{1}{\theta}f_1(C,\zeta,\rho,\nu),\hat{f}_2(C,\zeta,\rho,\nu)\Big\}\\
    \ge & \min_{C\in(0,e^\nu),\zeta\in(0,1)}\max\Big\{\frac{1}{\theta}f_1(C,\zeta,\rho,\nu),f_2(C,\zeta,\rho,\nu)\Big\},
\end{align}
implying that to ensure (\ref{eq:nc_smallell_thre}) it suffices to have 
\begin{align}\label{eq:nc_achi_final_thre}
    n\geq \frac{(1+\eta)k\log\frac{p}{k}}{(1-\theta)\nu e^{-\nu}}\frac{1}{\min _{C\in(0,e^\nu),\zeta\in(0,1)}\max\{\frac{1}{\theta}f_1(C,\zeta,\rho,\nu),f_2(C,\zeta,\rho,\nu)\}}. 
\end{align}
This establishes the second term in \eqref{eq:NC_threshold}.

\section{Technical Lemmas for Near-Constant Weight Designs (Low-$\ell$)}\label{app:nc_lemms}

Compared to the Bernoulli design, an additional challenge in the near-constant weight design is that a test may receive multiple placements from a single item. We present two high-probability events that are useful in overcoming this difficulty.  

\begin{lem}\label{lem3}
    {\em (\cite[Prop. II.3]{coja2020information} and \cite[Eq. (40)]{coja2020information})}
    Under the near-constant design where each item is uniformly placed (with replacement) to $\Delta= \frac{\nu n}{k}$ tests, we have the following:

    \noindent
    (a) With $1-o(1)$ probability, the number of defective items appearing in any test more than once scales at most as $O(\log k)$.

    \noindent
    (b) With probability at least $1-p^{-7}$, the number of tests assigned to precisely one defective item (referred to as a degree-1 test in \cite{coja2020information}) is given by $(1+o(1))e^{-\nu}k\Delta$.
\end{lem}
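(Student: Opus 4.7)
The plan is to dispatch both parts by first computing the relevant first moments (both of which reduce to elementary Poisson-type calculations) and then establishing concentration by exploiting the fact that the $k\Delta$ defective placements are mutually independent: placements across different items are independent by the design, and within a single item the $\Delta$ placements are i.i.d.\ uniform on $[n]$ because they are done with replacement.

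For part (a), I would introduce the per-item indicators $Z_j := \mathbbm{1}\{\text{item } j \text{ has at least two placements in a common test}\}$ for $j \in [k]$, and bound $\PP(Z_j = 1) \le \binom{\Delta}{2}/n$ via a union bound over the $\binom{\Delta}{2}$ pairs of placements of item $j$. Since $\Delta = \nu n/k$ and $n = \Theta(k \log k)$ (as noted in Appendix \ref{sec:n_scaling}), this gives $\PP(Z_j = 1) = O(\log k /k)$, so $\EE[\sum_j Z_j] = O(\log k)$. Crucially, placements of different items are independent, so the $\{Z_j\}_{j\in[k]}$ are i.i.d.\ Bernoullis, and a standard Chernoff bound on their sum yields $\sum_{j=1}^k Z_j \le C_0 \log k$ with probability $1 - o(1)$ for a sufficiently large constant $C_0$.

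For part (b), I would regard the $k\Delta$ defective placements as i.i.d.\ uniform draws from $[n]$, so that the number $T_i$ of defective placements landing in test $i$ is $\mathrm{Bin}(k\Delta, 1/n)$. Then $\PP(T_i = 1) = k\Delta \cdot (1/n)(1-1/n)^{k\Delta-1} = \nu(1-1/n)^{\nu n - 1} = \nu e^{-\nu}(1+o(1))$, so summing over the $n$ tests gives $\EE[\widetilde M] = (1+o(1)) e^{-\nu} k\Delta$. To establish concentration I would view $\widetilde M$ as a function of the $k\Delta$ independent placement variables and observe that altering a single placement changes $\widetilde M$ by at most $2$ (at most one test can leave the degree-$1$ class and at most one can join it). McDiarmid's inequality with bounded differences $c_i = 2$ then yields
\begin{equation*}
    \PP\bigl(|\widetilde M - \EE \widetilde M| \ge t\bigr) \le 2\exp\bigl(-t^2/(2 k\Delta)\bigr),
\end{equation*}
and choosing $t = \delta \cdot e^{-\nu} k\Delta$ for any vanishing sequence $\delta \to 0$ (e.g.\ $\delta = 1/\log\log p$) gives the claimed $(1+o(1))e^{-\nu} k\Delta$ approximation with probability at least $1 - p^{-7}$, since $\delta^2 k\Delta = \Theta(\delta^2 k\log p)$ grows much faster than $\log p$.

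The only technical point requiring attention is that part (b) demands a high-probability bound of $1 - p^{-7}$ rather than merely $1 - o(1)$, but this is easily met: $\EE[\widetilde M] = \Theta(n)$ while the McDiarmid standard-deviation scale is $\sqrt{k\Delta} = \Theta(\sqrt{n})$, so the mean exceeds the fluctuation scale by a factor $\Theta(\sqrt{n}) = \Theta(\sqrt{k \log p})$, comfortably absorbing any polynomial-in-$p$ tail. Part (a) reduces to a classical birthday-style first-moment calculation combined with Chernoff concentration across independent items, and is the more straightforward of the two.
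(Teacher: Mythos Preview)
The paper does not supply its own proof of this lemma; it simply cites \cite[Prop.~II.3]{coja2020information} and \cite[Eq.~(40)]{coja2020information} and moves on. So there is no in-paper argument to compare against, and your task reduces to giving a self-contained proof.

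Your proposal does this correctly. For part (a), the per-item collision indicators $Z_j$ are indeed i.i.d.\ across $j$ (independence of different items' placements is built into the design), the union-bound estimate $\PP(Z_j=1)\le \binom{\Delta}{2}/n = O(\log k/k)$ is correct under $n=\Theta(k\log k)$, and a Chernoff/Binomial tail bound on $\mathrm{Bin}(k,q)$ with $kq=O(\log k)$ gives $\sum_j Z_j \le C_0\log k$ with probability $1-k^{-\Omega(1)}$ once $C_0$ exceeds a constant multiple of the mean bound. For part (b), your reading of $\widetilde{M}$ as the number of tests receiving \emph{exactly one defective placement} matches the paper's definition $\widetilde{M}=\sum_j M_j'$ (a test contributes to some $M_j'$ iff it receives a single defective placement), so the first-moment computation $\EE[\widetilde{M}]=n\cdot k\Delta\cdot\tfrac{1}{n}(1-\tfrac{1}{n})^{k\Delta-1}=(1+o(1))e^{-\nu}k\Delta$ is on target. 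The bounded-difference observation (changing one placement alters at most two tests' degree-$1$ status) is exactly right, and the McDiarmid tail $2\exp(-t^2/(2k\Delta))$ with $t=\delta\cdot e^{-\nu}k\Delta$ gives failure probability $\exp(-\Theta(\delta^2 k\Delta))=\exp(-\Theta(\delta^2 k\log p))$, which is comfortably below $p^{-7}$ for any $\delta$ with $k^{-1/2}\ll\delta\ll 1$.

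In short: your argument is sound and entirely elementary. The cited reference \cite{coja2020information} proves essentially the same statements via the same mechanisms (first moment plus concentration of a function of independent placements), so you are not taking a materially different route---you are just reconstructing what the paper chose to quote rather than reprove.
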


As another useful technical result, we present the (anti-)concentration of the  hypergeometric distribution.  The Chernoff bound (upper bound) can be found in  \cite{hoeffding1963hyper}, and we expect that the matching lower bound is also known, but we provide a short proof for completeness. 

\begin{lem}
    \label{lem4}
    If $U\sim \mathrm{Hg}((1+o(1))k\Delta,(1+o(1))e^{-\nu}k\Delta,\Delta)$ with $k \to \infty$, $\Delta = \frac{\nu n}{k}$, and $n=\Theta(k\log k)$, then for a given $C\in (0,1)$ such that $C\Delta$ is an integer, we have  \begin{align}
        \PP\big(U=C\Delta\big)=\exp\big(-\Delta\big[D(C\|e^{-\nu})+o(1)\big]\big). \label{eq:hyper_diws}
    \end{align}
\end{lem}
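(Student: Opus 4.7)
\textbf{Proof proposal for Lemma \ref{lem4}.}

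The plan is to compute the hypergeometric PMF directly and show that both the upper (Chernoff) and lower (anti-concentration) bounds agree up to $\exp(o(\Delta))$. Writing $N=(1+o(1))k\Delta$, $K=(1+o(1))e^{-\nu}k\Delta$, we have
\begin{equation*}
   \PP(U=C\Delta) \;=\; \frac{\binom{K}{C\Delta}\binom{N-K}{(1-C)\Delta}}{\binom{N}{\Delta}}.
\end{equation*}
Applying the two-sided bounds of Lemma \ref{lem:coeff_bounds} to each of the three binomial coefficients, the ratio takes the form
\begin{equation*}
   \PP(U=C\Delta) \;=\; \Lambda \cdot \exp\!\Big( K\cdot H_2\!\Big(\tfrac{C\Delta}{K}\Big) + (N-K)\cdot H_2\!\Big(\tfrac{(1-C)\Delta}{N-K}\Big) - N\cdot H_2\!\Big(\tfrac{\Delta}{N}\Big)\Big),
\end{equation*}
where $\Lambda$ is a polynomial factor of order $\Theta\big(1/\sqrt{C(1-C)\Delta}\big)$ (after the $\sqrt{\cdot}$ factors from the three coefficients combine). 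Since $\Delta = \Theta(\log k)$, this polynomial factor is $\exp(o(\Delta))$ and can be absorbed into the $o(1)$ term in the target exponent.

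The next step is to evaluate the exponent. The three arguments $C\Delta/K$, $(1-C)\Delta/(N-K)$, and $\Delta/N$ are all of order $1/k = o(1)$, so I would Taylor-expand via $H_2(x) = -x\log x + x + O(x^2)$ for small $x$. Using $(K-C\Delta)\log\frac{K}{K-C\Delta} = C\Delta + O(\Delta/k)$ and its analogs, one obtains
\begin{align*}
   K\, H_2\!\Big(\tfrac{C\Delta}{K}\Big) &= C\Delta\big(\log k - \nu - \log C + 1\big) + o(\Delta), \\
   (N-K)\, H_2\!\Big(\tfrac{(1-C)\Delta}{N-K}\Big) &= (1-C)\Delta\big(\log k + \log(1-e^{-\nu}) - \log(1-C) + 1\big) + o(\Delta), \\
   N\, H_2\!\Big(\tfrac{\Delta}{N}\Big) &= \Delta\big(\log k + 1\big) + o(\Delta).
\end{align*}
Subtracting, the $\log k$ terms and constant $+1$ terms cancel cleanly, and what remains collapses to precisely $-\Delta\big[C\log\frac{C}{e^{-\nu}} + (1-C)\log\frac{1-C}{1-e^{-\nu}}\big] + o(\Delta) = -\Delta\big[D(C\|e^{-\nu}) + o(1)\big]$, yielding the claimed identity \eqref{eq:hyper_diws}.

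The main technical point (though not truly an obstacle) is bookkeeping the propagation of the two $(1+o(1))$ factors in $N$ and $K$ through the logarithms. Each such factor contributes an additive $\log(1+o(1)) = o(1)$ inside a $\Theta(\Delta)$-multiplied log term, giving a total contribution of $o(\Delta)$, which is harmless. A second minor concern is ensuring $C\Delta, (1-C)\Delta, \Delta \in \{1,\dots,N-1\}$ so that the sharper bound in Lemma \ref{lem:coeff_bounds} applies; this follows because $C\in(0,1)$ is fixed and $\Delta\to\infty$, and the edge case where $C\Delta$ fails to be an integer is ruled out by the hypothesis. Overall, the argument is a direct Stirling-type calculation, with the integer/$o(1)$ tracking being the only subtlety.
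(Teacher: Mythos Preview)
Your proposal is correct and follows essentially the same approach as the paper's proof: both express the hypergeometric PMF via binomial coefficients, apply the Stirling-type bounds of Lemma \ref{lem:coeff_bounds} to extract a $\Theta(1/\sqrt{C(1-C)\Delta})$ prefactor times an exponential, Taylor-expand $H_2$ at arguments of order $1/k$, and simplify the exponent to $-\Delta D(C\|e^{-\nu})+o(\Delta)$. The only cosmetic difference is that you expand each of the three entropy terms separately (displaying the $\log k$ and constant cancellations explicitly), whereas the paper collects them into a single quantity $\mathcal{T}$ and simplifies using $H_2(t)=-t(\log t-1+o(1))$; the substance is identical.
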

\begin{proof}
    By letting $k_0=(1+o(1))k$ and $k_\nu = (1+o(1))e^{-\nu}k$ 
    we can write $U\sim \mathrm{Hg}(k_0\Delta,k_\nu\Delta,\Delta)$, which implies that
 \begin{equation}\label{poi}
     \PP(U=C\Delta)=\frac{\binom{k_\nu\Delta}{C\Delta}\binom{(k_0-k_\nu)\Delta}{(1-C)\Delta}}{\binom{k_0\Delta}{\Delta}}.
 \end{equation} 
 Given positive integers $m$ and $\alpha m$ for some $\alpha\in (0,1)$, we have $\frac{\sqrt{\pi}}{2}G\leq \binom{m}{\alpha m}\leq G$ where $G=\frac{\exp(mH_2(\alpha))}{\sqrt{2\pi m\alpha(1-\alpha)}}$ (see Lemma \ref{lem:coeff_bounds}), which gives $$\binom{m}{\alpha m}=\Theta(1) \frac{\exp(mH_2(\alpha))}{\sqrt{m\alpha(1-\alpha)}}.$$ Substituting this into (\ref{poi}), we obtain 
  \begin{align}\label{eq:prob_exp}
 \PP\big(U=C\Delta\big)
 &=\Theta(1) \frac{\frac{\exp(k_\nu\Delta H_2(\frac{C}{k_\nu}))}{\sqrt{C\Delta (1-\frac{C}{k_\nu})}}\cdot \frac{\exp((k_0-k_\nu)\Delta H_2(\frac{1-C}{k_0-k_\nu}))}{\sqrt{(1-C)\Delta(1-\frac{1-C}{k_0-k_\nu})}}}{\frac{\exp(k_0\Delta H_2(\frac{1}{k_0}))}{\sqrt{\Delta(1-\frac{1}{k_0})}}}
 \\&=\frac{\Theta(1)}{\sqrt{C(1-C)\Delta}} \exp\Big(\Delta\Big[\underbrace{k_\nu H_2\Big(\frac{C}{k_\nu}\Big)+(k_0-k_\nu)  H_2\Big(\frac{1-C}{k_0-k_\nu}\Big)-k_0H_2\Big(\frac{1}{k_0}\Big)}_{:=\mathcal{T}}\Big]\Big), \label{eq:useTheta1}
 \end{align}
 where (\ref{eq:useTheta1}) follows from $1-\frac{C}{k_\nu}$, $1-\frac{1-C}{k_0-k_\nu}$, and $1-\frac{1}{k_0}$ all behaving as $\Theta(1)$. Now we seek to further simplify $\mathcal{T}$.
 Note that $\frac{C}{k_{\nu}}$, $\frac{1-C}{k_0-k_{\nu}}$, and $\frac{1}{k_0}$ all scale as $o(1)$, and it holds for any $t = o(1)$ that
 \begin{equation}\begin{aligned}
     H_2(t)&=-t\log t-(1-t)\log(1-t)=-t(\log t-1+o(1)).
     \end{aligned}
 \end{equation}
  This yields
     \begin{align} 
         \mathcal{T}&=k_\nu \Big(-\frac{C}{k_\nu}\Big[\log\Big(\frac{C}{k_\nu}\Big)-1+o(1)\Big]\Big)-(k_0-k_\nu)\cdot\frac{1-C}{k_0-k_\nu}\cdot \Big[\log \Big(\frac{1-C}{k_0-k_\nu}\Big)-1+o(1)\Big]\\
         &\quad\quad \quad ~~~~~~~~~~~~~+k_0\cdot\frac{1}{k_0}\cdot \Big[\log\Big(\frac{1}{k_0}\Big)-1+o(1)\Big]\\
         &=-C\log\Big(\frac{C}{k_\nu}\Big)-(1-C)\log\Big(\frac{1-C}{k_0-k_\nu}\Big)+\log\Big(\frac{1}{k_0}\Big)+o(1)\\
         &= -C \log\Big(\frac{C}{e^{-\nu}}\Big)-(1-C)\log\Big(\frac{1-C}{1-e^{-\nu}}\Big)+o(1) \label{eq:subst_k0_knu_0} \\
         &=-D(C\|e^{-\nu})+o(1), \label{eq:subst_k0_knu}
     \end{align}
 where in (\ref{eq:subst_k0_knu_0}) we substitute $k_0=(1+o(1))k$ and $k_\nu=(1+o(1))e^{-\nu}k$. To complete the proof, we substitute (\ref{eq:subst_k0_knu}) into (\ref{eq:useTheta1}), and then note that the leading factor $\frac{\Theta(1)}{\sqrt{C(1-C)\Delta}}$ with fixed $C\in [\Delta^{-1},1-\Delta^{-1}]$ (this follows from $C\in(0,1)$ and hence  $1\le C\Delta \le\Delta-1$) can be written as $\exp(- o(\Delta))$ since $\Delta=\frac{\nu n}{k}=\Theta(\log k)\to \infty$.
\end{proof}
Next, we present a useful concentration bound   analogously to \cite[Lem. 1]{Joh16}.  We consider $\mathcal{J}\subset [p]$ with  $|\mathcal{J}|=\ell$ and use $\mathcal{W}^{(\mathcal{J})}$ to index the tests that contain some item from $\mathcal{J}$, with cardinality given by $W^{(\mathcal{J})}:=|\mathcal{W}^{(\mathcal{J})}|.$
\begin{lem}\label{lem44}
    {\em (Concentration of $W^{(\mathcal{J})}$)}
    Given $\mathcal{J}\subset [p]$ with $|\mathcal{J}|=\ell$, 
    for any $\varepsilon>0$  we have \begin{equation}\label{2083}
         \PP\Big(\big|W^{(\mathcal{J})}-\mathbbm{E}W^{(\mathcal{J})}\big|\geq \varepsilon\Big)\leq 2\exp\Big(-\frac{2\varepsilon^2k}{\nu\ell n}\Big),
     \end{equation} 
    where the expectation satisfies
    \begin{align}\label{9262} 
        \mathbbm{E} W^{(\mathcal{J})} &=n \Big(1-\exp\Big(-\frac{\ell\nu(1+o(1))}{k}\Big)\Big)\\&= \begin{cases}
            (1-e^{-\nu\alpha}+o(1))n, ~~\text{if }\frac{\ell}{k}\to \alpha\in (0,1]\\
~~~~~ \frac{n\nu\ell}{k}(1+o(1)),~~~~\text{if }\frac{\ell}{k}\to 0.
        \end{cases} \label{eq:simpli9262}
    \end{align}
    Moreover, given any $\delta>0$, with probability at least $1-2\exp(-2\nu^{-1}\delta^2n)$, we have 
    \begin{equation}\label{9263}
        W^{(\mathcal{J})}= \mathbbm{E}W^{(\mathcal{J})}+\delta' n \sqrt{\frac{\ell}{k}},\text{  for some }|\delta'|<\delta.
    \end{equation}
\end{lem}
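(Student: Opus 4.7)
The plan is to prove the three claims in sequence: first compute $\mathbb{E}[W^{(\mathcal{J})}]$ directly, then obtain the general concentration bound (\ref{2083}) from McDiarmid's inequality, and finally derive (\ref{9263}) as an immediate specialization.

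For the expectation, I would write $W^{(\mathcal{J})} = \sum_{i=1}^n \mathbbm{1}\{i \in \mathcal{W}^{(\mathcal{J})}\}$ and compute, for any fixed test $i$, the probability that it receives none of the $\ell\Delta$ independent uniform placements coming from items in $\mathcal{J}$. This probability is $(1-1/n)^{\ell\Delta}$, so $\mathbb{E}[W^{(\mathcal{J})}] = n\bigl(1 - (1-1/n)^{\ell\Delta}\bigr)$. Using $\log(1-1/n) = -1/n + O(1/n^2)$ together with $\ell\Delta = \ell\nu n/k$ gives $(1-1/n)^{\ell\Delta} = \exp(-\ell\nu(1+o(1))/k)$, which establishes (\ref{9262}). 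The two cases in (\ref{eq:simpli9262}) follow by plugging in: for $\ell/k \to \alpha$ the exponent tends to $-\nu\alpha$, and for $\ell/k \to 0$ the first-order Taylor expansion $1-e^{-x}=x(1+o(1))$ gives $\mathbb{E}[W^{(\mathcal{J})}] = (n\nu\ell/k)(1+o(1))$.

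For the concentration bound (\ref{2083}), I would view $W^{(\mathcal{J})}$ as a function of the $\ell\Delta$ independent uniform placements $\{T_{j,r}\}_{j \in \mathcal{J},\, r \in [\Delta]} \in [n]^{\ell\Delta}$ and verify the bounded-differences property with constant $1$. Specifically, if a single placement is changed from test $a$ to test $b$, the indicator $\mathbbm{1}\{a \in \mathcal{W}^{(\mathcal{J})}\}$ can only decrease (by at most $1$) while $\mathbbm{1}\{b \in \mathcal{W}^{(\mathcal{J})}\}$ can only increase (by at most $1$), so the net change lies in $\{-1,0,+1\}$ and $c_i = 1$ for each coordinate. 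McDiarmid's inequality then gives
\begin{equation*}
    \mathbb{P}\bigl(|W^{(\mathcal{J})} - \mathbb{E}[W^{(\mathcal{J})}]| \geq \varepsilon\bigr) \leq 2\exp\Bigl(-\frac{2\varepsilon^2}{\ell\Delta}\Bigr) = 2\exp\Bigl(-\frac{2\varepsilon^2 k}{\ell\nu n}\Bigr),
\end{equation*}
which is (\ref{2083}). Finally, (\ref{9263}) is obtained by substituting $\varepsilon = \delta n\sqrt{\ell/k}$ in the bound just derived, which yields the exponent $-2\delta^2 n/\nu$.

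None of the steps presents a genuine obstacle: the expectation calculation is a direct consequence of placement independence and a Taylor expansion, and the concentration is a textbook application of bounded differences once one observes the two-sided cancellation in the change-one-coordinate argument. The only care needed is in the $\log(1-1/n)$ expansion to ensure the $o(1)$ term in the exponent is uniform in $\ell$, which is straightforward because $\ell\Delta/n^2 = \ell\nu/(kn) = o(1)$.
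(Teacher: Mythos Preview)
Your proposal is correct and follows essentially the same approach as the paper: compute the expectation via linearity and the probability $(1-1/n)^{\ell\Delta}$ that a fixed test receives no placement from $\mathcal{J}$, apply McDiarmid's inequality with bounded differences $c_i=1$ over the $\ell\Delta$ independent placements, and then specialize with $\varepsilon=\delta n\sqrt{\ell/k}$. The paper's proof is terser on the bounded-differences verification, but the argument is the same.
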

\begin{proof}  
    We first prove (\ref{9262}). We can write 
    $ W^{(\mathcal{J})}=\sum_{i=1}^n \mathbbm{1}(\text{test }i\text{ contains some item from }\mathcal{J}),$  which leads to  
\begin{align} \mathbbm{E}W^{(\mathcal{J})}&=n\cdot\PP(\text{test }i\text{ contains some item from }\mathcal{J})
    \\&=n\Big[1-\Big(1-\frac{1}{n}\Big)^{\ell\Delta}\Big] 
    \\&= n\Big[1-\exp\Big(\frac{\ell\nu n}{k}\log\big(1-\frac{1}{n}\big)\Big)\Big]\\
    &=n\Big[1-\exp\Big(-\frac{\ell\nu(1+o(1))}{k}\Big)\Big],\label{9264}
\end{align} 
where we use $\log(1-\frac{1}{n})=-\frac{1}{n}(1+o(1))$ in (\ref{9264}). Thus, (\ref{9262}) follows, and to obtain (\ref{eq:simpli9262})  we substitute $\frac{\ell}{k}=\alpha+o(1)$ if $\frac{\ell}{k}\to\alpha\in(0,1]$, or $1-\exp(-\beta)=\beta(1+o(1))$ as $\beta \to 0$ if $\frac{\ell}{k}=o(1)$.

Next, we establish (\ref{2083}).  Because the $\ell\Delta$ placements of items in $\mathcal{J}$ are independent and changing one placement changes the quantity $W^{(\mathcal{J})}$ by at most one,   McDiarmid's inequality \cite{mcdiarmid1989method} gives $$\PP\big(|W^{(\mathcal{J})}-\mathbbm{E}W^{(\mathcal{J})}|\geq \varepsilon\big)\leq 2\exp\Big(-\frac{2\varepsilon^2}{\ell\Delta}\Big)$$ for any $\varepsilon>0$,
which yields (\ref{2083}) by substituting $\ell\Delta = \frac{\nu\ell n}{k}$. 
Finally, we show (\ref{9263}) by using (\ref{2083}). To this end, we take $\varepsilon =\delta  n\sqrt{\frac{\ell}{k}}$ in (\ref{2083}), and note that the event in (\ref{9263}) holds true if and only if $|W^{(\mathcal{J})}-\mathbbm{E}W^{(\mathcal{J})}| < \delta  n\sqrt{\frac{\ell}{k}}$.  
\end{proof}

Finally, the following lemma characterizes the probability mass function for the random variable indicating how many tests from a certain subset of $[n]$ contain items from a certain subset of $[p]$.

 \begin{lem}\label{lem77}
     Suppose that the $n$ tests are divided into two parts  consisting of $n_1$ tests and $n_2$ tests, with the first part and the second part respectively indexed by $\mathcal{N}_1\subset [n]$ and $\mathcal{N}_2\subset [n]$, where $|\mathcal{N}_1|=n_1$, $|\mathcal{N}_2|=n_2$, and $n_1+n_2=n$. Given $\mathcal{J}\subset [p]$ with $|\mathcal{J}|=\ell$, consider the $\ell\Delta = \frac{\ell\nu n}{k}$ placements of the $\ell$ items in $\mathcal{J}$, and let   $M$ be the number of tests in $\mathcal{N}_2$ that contain some item from $\mathcal{J}$.  Then for any $0\leq m\leq \ell\Delta$, we have \begin{equation}\label{2897}
       \mathbb{P}(M=m)\leq \exp \left(-\ell\Delta \Big[D\Big(\frac{m}{\ell\Delta}\Big\|\frac{n_2}{n}\Big)+\frac{\ell\Delta}{4n_1}\Big]\right),
    \end{equation}
    where the probability is with respect to the randomness in placing each item from $\mathcal{J}$ in $\Delta = \frac{\nu n}{k}$ tests chosen uniformly at random with replacement.
    
\end{lem}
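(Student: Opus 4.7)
The plan is to bound $\PP(M=m)$ by splitting into regimes determined by the auxiliary binomial $M' \sim \mathrm{Bin}(\ell\Delta, n_2/n)$ counting the number of placements that land in $\mathcal{N}_2$, noting that $M \le M'$ always. For the upper-tail regime $m \ge n_2\ell\Delta/n$, the inclusion $\{M = m\} \subseteq \{M' \ge m\}$ combined with the standard Chernoff bound (see Lemma~\ref{binoconcen}) immediately yields $\PP(M=m) \le \exp(-\ell\Delta D(m/(\ell\Delta)\|n_2/n))$, which is already consistent with the claim (the additional correction term only weakens the bound further).

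For the lower-tail regime $m < n_2\ell\Delta/n$, I would use a direct combinatorial union bound: the event $\{M \le m\}$ implies the existence of some $T \subseteq \mathcal{N}_2$ with $|T|=m$ such that every one of the $\ell\Delta$ i.i.d.~uniform placements lies in $\mathcal{N}_1 \cup T$. Summing over the $\binom{n_2}{m}$ choices of $T$ and using independence of placements yields
\begin{equation}
    \PP(M = m) \;\le\; \binom{n_2}{m}\left(\frac{n_1+m}{n}\right)^{\ell\Delta}, \label{eq:plan_union}
\end{equation}
and a sanity check with $m=0$ gives equality, with both sides equal to $(n_1/n)^{\ell\Delta}$.

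Converting \eqref{eq:plan_union} into the claimed KL form is where the main work lies. Using $\binom{n_2}{m} \le n_2^m/m!$ with Stirling's lower bound $m! \ge (m/e)^m$, factoring $(n_1+m)/n = (n_1/n)(1+m/n_1)$, and invoking the algebraic identity $-\ell\Delta D(m/(\ell\Delta)\|n_2/n) = m\log(n_2\ell\Delta/(mn)) + (\ell\Delta-m)\log(n_1\ell\Delta/((\ell\Delta-m)n))$, the leading factor $\exp(-\ell\Delta D(\cdot\|\cdot))$ should emerge. The $(\ell\Delta)^2/(4n_1)$-type correction is expected to arise from a quadratic-order expansion of $\log(1+m/n_1)$ applied to the factor $(1+m/n_1)^{\ell\Delta}$. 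I expect the main obstacle to be the fine algebraic bookkeeping: matching the precise denominator $n_1$ and the constant $\tfrac14$ in the correction term, while keeping the Stirling and Taylor remainders aligned and reconciling the combinatorial $\binom{n_2}{m}$-based structure with the $\ell\Delta$-indexed binomial KL divergence.
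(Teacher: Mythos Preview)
Your upper-tail argument via $M\le M'$ and Chernoff is fine. The gap is in the lower-tail regime: the union bound
\[
\PP(M\le m)\;\le\;\binom{n_2}{m}\Big(\frac{n_1+m}{n}\Big)^{\ell\Delta}
\]
is valid but \emph{too weak} to yield the claimed KL bound. Writing $L=\ell\Delta$, $p=n_2/n$, $q=n_1/n$, the paper's final inequality is $\PP(M=m)\le(1+m/n_1)^{L-m}\binom{L}{m}p^mq^{L-m}$, and the ratio of your bound to this one simplifies to
\[
\frac{\binom{n_2}{m}\,q^{L}(1+m/n_1)^{L}}{(1+m/n_1)^{L-m}\binom{L}{m}p^mq^{L-m}}
=\frac{\binom{n_2}{m}}{\binom{L}{m}}\cdot\frac{(n_1+m)^m}{n_2^m}
\;\approx\;\Big(\frac{n_1}{L}\Big)^{m}.
\]
In the regimes where the lemma is actually invoked one has $n_1=\Theta(n)$ and $L=\ell\Delta=o(n)$, so this ratio diverges. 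No amount of Stirling or Taylor bookkeeping will close an $\exp(m\log(n_1/L))$ gap; the shortfall is structural, not algebraic. Intuitively, your union over $\binom{n_2}{m}$ size-$m$ subsets $T$ massively overcounts: whenever the placements hit fewer than $m$ distinct tests in $\mathcal{N}_2$, a huge number of padded $T$'s all succeed.

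The paper avoids this by computing $\PP(M=m)$ \emph{exactly} and only then bounding. Conditioning on the number $u$ of placements landing in $\mathcal{N}_1$, the remaining $L-u$ placements must hit \emph{exactly} $m$ distinct tests in $\mathcal{N}_2$, which brings in a Stirling number of the second kind:
\[
\PP(M=m)=\frac{\binom{n_2}{m}m!}{n^{L}}\sum_{u=0}^{L-m}\binom{L}{u}n_1^{u}\,\genfrac{\{}{\}}{0pt}{}{L-u}{m}.
\]
Bounding $\genfrac{\{}{\}}{0pt}{}{m+t}{m}\le\binom{m+t}{m}m^{t}$ and using $\binom{L}{m+t}\binom{m+t}{m}=\binom{L}{m}\binom{L-m}{t}$ collapses the sum to $\binom{L}{m}n_1^{L-m}(1+m/n_1)^{L-m}$, producing exactly the $\binom{L}{m}$ (not $\binom{n_2}{m}$) factor that matches the binomial pmf and hence the KL divergence $D(m/L\|p)$. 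The correction $(1+m/n_1)^{L-m}\le\exp(m(L-m)/n_1)\le\exp(L^2/(4n_1))$ then falls out directly. (Note also that the paper's own derivation ends with $\exp\big(\tfrac{L^2}{4n_1}-LD(\cdot\|\cdot)\big)$, i.e.\ the correction \emph{weakens} the bound; the ``$+$'' inside the brackets in the lemma statement is evidently a sign slip, which you seem to have interpreted correctly.)
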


\begin{proof}  Our argument builds on \cite[Sec. IV-C]{Joh16}, which characterizes a similar distribution (for different purposes) under the near-constant weight design. Our Lemma \ref{lem77} generalizes the one therein in two ways. First, we consider the placement of $\ell$ items, while \cite[Lemma 4]{Joh16} only concerns the case that $\ell=1$. Second, our statement is more general since we leave $\mathcal{N}_1$ and $\mathcal{N}_2$ unspecified, while \cite[Lemma 4]{Joh16} concerns a more specific choice.  As was done in \cite{Joh16}, we make use of the {\it Stirling numbers of the second kind}, defined as $\stirlingii{n}{k}:= \frac{1}{k!}\sum_{j=0}^k(-1)^{k-j}\binom{k}{j}j^n$, e.g., see \cite[P. 204]{lieb1968concavity}.
    
    %
    The probability of the event that exactly $u$ placements fall in $\mathcal{N}_1$ is $\PP\big(\mathrm{Bin}(\ell\Delta, \frac{n_1}{n})=u\big)$. Conditioning on this event, we need to calculate the probability that the remaining $\ell\Delta -u$ placements are at $m$ distinct tests in $\mathcal{N}_2$. Evidently, we require $\ell\Delta -u \geq m$, or equivalently $u\leq \ell\Delta -m$, in order to have $M=m$. Note that there are overall $n_2^{\ell\Delta -u}$ possibilities for performing $\ell\Delta-u$ placements in $n_2$ tests. By a standard counting argument, we can choose these $m$ distinct tests with $\binom{n_2}{m}$ ways, then $\stirlingii{\ell\Delta -u}{m}$ ways of arranging the $\ell\Delta -u$ placements into these $m$ unlabeled bins (which represent the $m$ tests) such that none of them are empty, and finally $m!$ different labelings of the bins. These findings give 
    \begin{align}\label{15.1}
   \PP(M=m) &= \sum_{u=0}^{\ell\Delta -m}\PP\big(\mathrm{Bin}\big(\ell\Delta,\frac{n_1}{n}\big)=u\big)\cdot \PP(\ell\Delta -u \text{ placements  occupy}\text{ $m$ tests in }\mathcal{N}_2)\\
&= \sum_{u=0}^{\ell\Delta -m}\binom{\ell\Delta}{u}\Big(\frac{n_1}{n}\Big)^u \Big(\frac{n_2}{n}\Big)^{\ell\Delta-u}\binom{n_2}{m} \cdot\frac{ \stirlingii{\ell\Delta -u}{m}m!}{n_2^{\ell\Delta-u}}
\\&=\frac{\binom{n_2}{m}m!}{n^{\ell\Delta}}  \cdot\underbrace{\sum_{u=0}^{\ell\Delta -m}\binom{\ell\Delta}{u}n_1^u \stirlingii{\ell\Delta -u}{m}}_{:= \Phi}.
    \end{align}
We bound $\Phi$ as follows:
    \begin{align}\label{15.2}
        \Phi&  =  \sum_{t=0}^{\ell\Delta -m}\binom{\ell\Delta}{m+t}n_1^{\ell\Delta - m -t}
        \stirlingii{m+t}{m}\\
        &\label{eq:15.2ii} \leq \sum_{t=0}^{\ell\Delta -m}\binom{\ell\Delta}{m+t}\binom{t+m}{m}m^tn_1^{\ell\Delta - m -t}\\
        &\label{eq:15.2iii} = \binom{\ell\Delta}{m}n_1^{\ell\Delta-m}\sum_{t=0}^{\ell\Delta-m}\binom{\ell\Delta-m}{t}\Big(\frac{m}{n_1}\Big)^t\\& \label{eq:15.2iv}=   \binom{\ell\Delta}{m}n_1^{\ell\Delta -m}\Big(1+\frac{m}{n_1}\Big)^{\ell\Delta-m}
        \\ \label{eq:15.2v}&:= C\binom{\ell\Delta}{m}n_1^{\ell\Delta -m}, 
    \end{align}
where (\ref{15.2}) follows from a change of the variable $t = \ell\Delta - m -u$, (\ref{eq:15.2ii}) follows from $\stirlingii{t+m}{m}\leq \binom{t+m}{m}m^t$ \cite{rennie1969stirling}, (\ref{eq:15.2iii}) follows from $\binom{\ell\Delta}{m+t}\binom{t+m}{m}=\binom{\ell\Delta}{m}\binom{\ell\Delta-m}{t}$, (\ref{eq:15.2iv}) follows from $\sum_{t=0}^{\ell\Delta-m}\binom{\ell\Delta-m}{t}\big(\frac{m}{n_1}\big)^t= \big(1+\frac{m}{n_1}\big)^{\ell\Delta-m}$, and in (\ref{eq:15.2v}), $C:= (1+\frac{m}{n_1})^{\ell\Delta -m}$ is bounded by \begin{equation}\begin{aligned}\label{Cbound}
    C &= \exp\Big([\ell\Delta-m]\log\Big(1+\frac{m}{n_1}\Big)\Big)\leq \exp\Big(\frac{m[\ell\Delta -m]}{n_1}\Big)\leq \exp\Big(\frac{\ell^2\Delta^2}{4n_1}\Big).
    \end{aligned}
\end{equation}
Substituting (\ref{eq:15.2v}) back into (\ref{15.1}), we obtain
    \begin{align}
        \PP\big(M=m\big)&\leq C\binom{\ell\Delta}{m}n_1^{\ell\Delta-m}\frac{\binom{n_2}{m}m!}{n^{\ell\Delta}}
        \\&\label{eq:369i}\leq C\binom{\ell\Delta}{m}\frac{n_1^{\ell\Delta-m}n_2^m}{n^{\ell\Delta}}\\\label{eq:369ii}
        & = \exp\Big(\frac{\ell^2\Delta^2}{4n_1}\Big)\PP\Big(\mathrm{Bin}\big(\ell\Delta,\frac{n_2}{n}\big)=m\Big)\\
        & \label{eq:369iii}\leq  \exp\Big(\frac{\ell^2\Delta^2}{4n_1}-\ell\Delta \cdot D\Big(\frac{m}{\ell\Delta}\Big\|\frac{n_2}{n}\Big)\Big), 
    \end{align}
where (\ref{eq:369i}) follows from $\binom{n_2}{m}m!\leq n_2^m$, (\ref{eq:369ii}) follows from (\ref{Cbound}), and in (\ref{eq:369iii}) we apply the Chernoff bound (see (\ref{eq:chernoff1}) in Lemma \ref{binoconcen}).  This completes the proof.  
\end{proof}

\section{High-$\ell$ Achievability Analysis for Both Designs} \label{app:high_both}

\subsection{Initial Results}

With the information-theoretic tools from Section \ref{sec:it_tools} in place (recalling in particular the information density $\imath^n(\Xvdif; \Yv | \Xveq )$ and mutual information $I_{\ell}^n$ with $\ell = |\sdif|$), we proceed to state an initial non-asymptotic bound from \cite{Sca15b} that we use as a starting point.  Recalling that we already split the analysis into $\ell \le \frac{k}{\log k}$ and $\ell > \frac{k}{\log k}$ (and here we consider the latter), we define $\ell_{\min} = 1 + \lfloor\frac{k}{\log k}\rfloor$ for notational convenience. 

From \cite[Thm.~4]{Sca15b},\footnote{More precisely, this is the adapted version for approximate recovery in \cite[App.~D]{Sca15b}, because as noted in Appendix \ref{app:roadmap}, error events for incorrect sets within distance $\frac{k}{\log k}$ of the true defective set are considered separately (in our analysis of the the low-$\ell$ regime) so are not considered here.} we know that by a suitable choice of $\{\gamma_{\ell}\}_{\ell = \ell_{\min}}^{k}$, the error probability is upper bounded as follows for any $\delta_1 > 0$:
\begin{equation}
    \PP({\rm err}) \le \PP\bigg[ \bigcup_{(\sdif,\seq)\,:\,|\sdif| \ge \ell_{\min}} \bigg\{ \imath^n(\Xvdif; \Yv | \Xveq ) \le \log {{p-k} \choose |\sdif|} + \log\bigg(\frac{k}{\delta_1}{k \choose |\sdif|}\bigg) \bigg\} \bigg] + \delta_1. \label{eq:info_density}
\end{equation}
As a result, for any $\delta_2 \in (0,1)$, if the number of tests $n$ is large enough to ensure that
\begin{equation}
   \max_{\ell=\ell_{\min},\dotsc,k} \frac{ \log{{p-k} \choose \ell} + \log\big(\frac{k}{\delta_1}{k \choose \ell}\big) }{ I_\ell^n (1-\delta_{2}) } \le 1, \label{eq:final_ach}
\end{equation}
and if each information density satisfies a (one-sided) concentration bound of the form
\begin{equation}
     \PP\big[ \imath^n(\Xvdif; \Yv | \Xveq) \le (1 - \delta_{2})I_\ell^n \big] \le \psi_{\ell}(n,\delta_{2}), \label{eq:psi_ach}
\end{equation}
for some functions $\{\psi_{\ell}\}_{\ell=1}^{k}$, then we have
\begin{equation}
    \PP({\rm err}) \le \sum_{\ell=\ell_{\min}}^k{k \choose \ell}\psi_\ell(n,\delta_{2}) + \delta_1. \label{eq:nonasymp_ach}
\end{equation}
Note that while \cite{Sca15b} focuses on the Bernoulli design, the analysis up to this point does not use any specific properties of that design except symmetry with respect to re-ordering items, so it remains valid for the near-constant weight design.  The only minor difference above compared to \cite{Sca15b} is that we use the full mutual information $I_{\ell}^n$ instead of simplifying it to $n I_{\ell}$ for some single-test counterpart $I_{\ell}$ (which can be done for the Bernoulli design but not in general).

We can take $\delta_1 \to 0$ sufficiently slowly to have an asymptotically negligible impact in \eqref{eq:final_ach}, so the key step in ensuring $\PP({\rm err}) \to 0$ is showing that the summation in \eqref{eq:nonasymp_ach} approaches 0.  We will take $\delta_2$ to be a constant arbitrarily close to zero, as was done in \cite{Sca15b},\footnote{For low $\ell$ values it should instead be taken as a parameter to be optimized \cite{Sca15b}, but we are only using these techniques for the high-$\ell$ regime.} and we will show that the condition \eqref{eq:final_ach} simplifies to a requirement on $n$ corresponding to the first term in \eqref{eq:ber_threshold}. 

Naturally, to simplify \eqref{eq:final_ach} we need to characterize the behavior of $I_{\ell}^n$.  For the Bernoulli design, the following asymptotic characterization was given in \cite{Sca15b}.

\begin{lem} \label{lem:mi_bern}
   {\rm (Mutual Information for the Bernoulli Design, \cite[Prop. 5]{Sca15b})} For the noisy group testing problem under Bernoulli design with i.i.d. $\mathrm{Bernoulli}(\frac{\nu}{k})$ entries, consider arbitrary sequences of $k\to\infty$ and $\ell\in\{1,...,k\}$ (both indexed by $p$). 
    If $\frac{\ell}{k}=o(1)$, then \begin{equation}\label{12.7}
        I_\ell^n=\left(\frac{n\nu e^{-\nu}\ell}{k}(1-2\rho)\log\frac{1-\rho}{\rho}\right)\big(1+o(1)\big).
    \end{equation}
    Moreover, if $\frac{\ell}{k}\to\alpha\in(0,1]$, then \begin{equation}\label{2077}
        I_\ell^n=ne^{-(1-\alpha)\nu}\big(H_2(e^{-\alpha\nu}\star\rho)-H_2(\rho)\big)\big(1+o(1)\big).
    \end{equation}
\end{lem}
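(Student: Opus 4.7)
The plan is to exploit the fact that under the Bernoulli design the rows of $\Xv$ are i.i.d., so that $I_\ell^n = n\cdot I_\ell$ where $I_\ell := I(X_{s_{\mathrm{dif}}};Y\mid X_{s_{\mathrm{eq}}})$ is the single-test conditional mutual information. From here the computation reduces to evaluating two one-dimensional binary entropies.

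First I would decompose $I_\ell = H(Y\mid X_{s_{\mathrm{eq}}}) - H(Y\mid X_{s_{\mathrm{dif}}},X_{s_{\mathrm{eq}}})$. The second term equals $H_2(\rho)$: given all defective placements the noiseless outcome is determined, so $Y$ is that bit XORed with a $\mathrm{Bernoulli}(\rho)$ noise. For the first term, I would condition on the event $\mathcal{E}$ that at least one item from $s_{\mathrm{eq}}$ is included in the test, which has probability $q_{k-\ell}:=1-(1-\nu/k)^{k-\ell}$. On $\mathcal{E}$ the noiseless outcome is $1$ regardless of $X_{s_{\mathrm{dif}}}$, so $Y\mid\mathcal{E}\sim\mathrm{Bernoulli}(1-\rho)$ contributes $H_2(\rho)$. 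On $\mathcal{E}^c$ the noiseless outcome is $1$ iff some item of $s_{\mathrm{dif}}$ is in the test, which happens with probability $q_\ell:=1-(1-\nu/k)^\ell$, so $Y\mid\mathcal{E}^c\sim\mathrm{Bernoulli}(q_\ell\star\rho)$. Putting this together gives the clean formula
\begin{equation}
    I_\ell=\Big(1-\frac{\nu}{k}\Big)^{k-\ell}\Big[H_2(q_\ell\star\rho)-H_2(\rho)\Big].\label{eq:mi_exact}
\end{equation}

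For the dense case $\ell/k\to\alpha\in(0,1]$, I would simply substitute the limits $(1-\nu/k)^{k-\ell}\to e^{-(1-\alpha)\nu}$ and $q_\ell\to 1-e^{-\alpha\nu}$ into \eqref{eq:mi_exact}, together with the identity $(1-e^{-\alpha\nu})\star\rho = e^{-\alpha\nu}\star\rho$ (from $a\star b=(1-a)\star(1-b)$ applied with $b=\rho$ swapped appropriately, or directly from the definition), yielding \eqref{2077}. For the sparse case $\ell/k=o(1)$, I would use $(1-\nu/k)^{k-\ell}=e^{-\nu}(1+o(1))$ and $q_\ell = \ell\nu/k\cdot(1+o(1))\to 0$, so $q_\ell\star\rho = \rho + q_\ell(1-2\rho)$ is a vanishing perturbation of $\rho$. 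A first-order Taylor expansion of $H_2$ around $\rho$, using $H_2'(\rho)=\log\frac{1-\rho}{\rho}$, gives $H_2(q_\ell\star\rho)-H_2(\rho) = q_\ell(1-2\rho)\log\frac{1-\rho}{\rho}\,(1+o(1))$, and multiplying by $e^{-\nu}(1+o(1))$ and then by $n$ recovers \eqref{12.7}.

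There is no real obstacle here: the single-test MI is a finite algebraic expression and the only subtlety is ensuring the Taylor remainder is genuinely $o(q_\ell)$ uniformly, which follows because $\rho\in(0,1/2)$ is a fixed interior point of the domain of $H_2$ and so $H_2$ is $C^\infty$ in a neighborhood. (If stated, one could replace the first-order approximation by a second-order one to confirm the remainder is $O(q_\ell^2)=o(q_\ell)$.) The asymptotic equivalence $n\cdot I_\ell = I_\ell^n$ in the hypothesis is exact, not approximate, because the Bernoulli design produces independent rows, so no tensorization loss is incurred.
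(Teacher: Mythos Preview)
The paper does not prove this lemma; it is quoted from \cite[Prop.~5]{Sca15b} and used as a black box. Your argument is the natural i.i.d.\ proof and is essentially correct: the tensorization $I_\ell^n = nI_\ell$, the exact single-test formula \eqref{eq:mi_exact}, and the two asymptotic regimes all go through.

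One small correction: the identity $(1-e^{-\alpha\nu})\star\rho = e^{-\alpha\nu}\star\rho$ is false as written (check $\alpha=1$, $\rho=0.1$), and likewise $q_\ell\star\rho$ equals $(1-\rho)-q_\ell(1-2\rho)$, not $\rho+q_\ell(1-2\rho)$. What you actually need in both places is the symmetry $H_2(x)=H_2(1-x)$, which gives $H_2(q_\ell\star\rho)=H_2(\rho+q_\ell(1-2\rho))$ and $H_2((1-e^{-\alpha\nu})\star\rho)=H_2(e^{-\alpha\nu}\star\rho)$; after that substitution your Taylor expansion at $\rho$ and the limit for the dense case are valid. This is a notational slip, not a gap in the argument.

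It may be worth noting that the paper's proof of the near-constant-weight analogue (Lemma~\ref{lem:mi_ncc}, Appendix~\ref{sec:pf_mi_ncc}) is much more involved precisely because the rows there are not i.i.d., so one cannot tensorize and must instead bound $H(\Yv_2\mid\Xveq)$ above by subadditivity and below by conditioning on the Hamming weight $V$. Your Bernoulli-specific shortcut sidesteps all of that.
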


We will show that the same result holds for the near-constant weight design in Lemma \ref{lem:mi_ncc} below.

\subsection{Bernoulli Design}

As we have mentioned earlier, the high-$\ell$ analysis of the Bernoulli design comes directly from \cite{Sca15b} with only minor changes.  We thus only provide a brief outline for the unfamiliar reader, citing the relevant results therein.

The following concentration term was shown to be valid (i.e., \eqref{eq:psi_ach} holds) in \cite[Prop.~3]{Sca15b}:
\begin{equation}
    \psi_{\ell}(n,\delta_2) = 2\exp\bigg( - \frac{n(\delta_2 I_{\ell})^2}{ 4(8+\delta_2 I_{\ell}) } \bigg), \label{eq:psi_bern}
\end{equation}
where $I_{\ell} = I_{\ell}^n / n$ is the mutual information associated with a single test.  
Under this choice, it is shown in \cite[Prop.~7(ii)]{Sca15b} that $\sum_{\ell > \frac{k}{\log k}} {k \choose \ell} \psi_\ell(n,\delta_{2}) \to 0$ as long as $n = \Omega(k \log \frac{p}{k})$ (regardless of the implied constant).  Moreover, in  \cite[Sec.~C.2]{Sca15b} it is shown for $\nu = \log 2$ that the maximum in \eqref{eq:final_ach} is asymptotically attained by $\ell=k$, and that the limiting behavior exactly corresponds to the first term in \eqref{eq:ber_threshold}.  

A slight difference in our work is that we need to handle general $\nu > 0$, rather than only $\nu = \log 2$, with the non-trivial step being to show that the maximum in \eqref{eq:final_ach} is still asymptotically attained by $\ell=k$.\footnote{The proof of $\sum_{\ell > \frac{k}{\log k}} {k \choose \ell} \psi_\ell(n,\delta_{2}) \to 0$ is not impacted by changing $\nu$, since doing so only amounts to changing constant factors in the exponent in \eqref{eq:psi_bern}.}  Fortunately, this follows from another well-known result in the literature \cite[Thm.~3]{Mal80} (see also \cite[Lemma 2]{Yav20}) that we now proceed to outline.


First note that for $\ell > \frac{k}{\log k}$ and $k=o(p)$, the second term in the numerator of \eqref{eq:final_ach} is asymptotically negligible compared to the first.  We further have $\log {p-k \choose \ell} = \big(\ell\log\frac{p}{\ell}\big)(1+o(1)) = \big(\ell\log\frac{p}{k}\big)(1+o(1))$, so the maximization over $\ell$ simplifies to maximizing $\frac{\ell}{I_{\ell}^n}$, or equivalently minimizing $\frac{I_{\ell}^n}{\ell}$.  The above-mentioned results in \cite[Thm.~3]{Mal80} and \cite[Lemma 2]{Yav20} directly state that $\ell=k$ minimizes $\frac{I_{\ell}^n}{\ell}$ (for a broader class of noise models that includes ours), as desired.

Having established that $\sum_{\ell > \frac{k}{\log k}} {k \choose \ell} \psi_\ell(n,\delta_{2}) \to 0$ and that $\ell=k$ asymptotically attains the maximum in \eqref{eq:final_ach}, the desired threshold on $n$ follows by writing the numerator therein as $\big(k\log\frac{p}{k}\big)(1+o(1))$ and the denominator as $n(H_2(e^{-\nu} \star \rho) - H_2(\rho))(1+o(1)) \times (1-\delta_2)$ (see \eqref{2077} with $\alpha = 1$), and recalling that $\delta_2$ can be arbitrarily small.  This establishes the first term in \eqref{eq:ber_threshold}.

\subsection{Near-Constant Weight Design}

For the near-constant weight design, we adopt the same choices of $\delta_1$ and $\delta_2$ as those for the Bernoulli design.  Notice that the study of the condition \eqref{eq:final_ach} has no direct relation to the choice of $\psi_{\ell}$.  Thus, to simplify \eqref{eq:final_ach}, we can apply the exact same analysis as the Bernoulli case as long as we can show that the mutual information terms $I_{\ell}^n$ have the same asymptotic behavior; this is done in the following lemma, whose proof is given in Appendix \ref{sec:pf_mi_ncc}.
    
\begin{lem} \label{lem:mi_ncc}
    {\rm (Mutual Information, Near-Constant Weight Design)}
  For the noisy group testing problem under the near-constant weight design with $\Delta = \frac{\nu n}{k}$, consider sequences of sparsity levels $k\to \infty$ and $\ell\in \{1,...,k\}$ (both indexed by $p$). If $\frac{\ell}{k}=o(1)$, then \begin{equation}\label{1976}
        I_\ell^n=\left(\frac{n\nu e^{-\nu}\ell}{k}(1-2\rho)\log\frac{1-\rho}{\rho}\right)\big(1+o(1)\big).
    \end{equation}
    Moreover, if $\frac{\ell}{k}\to\alpha\in(0,1]$, then \begin{equation}\label{2077nc}
        I_\ell^n=ne^{-(1-\alpha)\nu}\big(H_2(e^{-\alpha\nu}\star\rho)-H_2(\rho)\big)\big(1+o(1)\big).
    \end{equation}
\end{lem}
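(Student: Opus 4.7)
The plan is to analyze
$$I_\ell^n = H(\Yv \mid \Xveq) - H(\Yv \mid \Xv_s),$$
and observe that $H(\Yv \mid \Xv_s) = nH_2(\rho)$ exactly, because given the full test matrix $\Xv_s$ the outcomes are independent $\Bernoulli(\rho)$-flips of a deterministic noiseless vector. The lemma then reduces to determining $H(\Yv \mid \Xveq)$ to leading order; the two regimes $\ell/k \to 0$ and $\ell/k \to \alpha \in (0,1]$ in the statement will emerge from this asymptotic evaluation, the first being obtained by linearizing the second around $\alpha = 0$ (one can check that the two expressions are consistent in this sense).

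For the upper bound on $H(\Yv \mid \Xveq)$, subadditivity gives $H(\Yv \mid \Xveq) \le \sum_i H(Y_i \mid \Xveq)$. A short calculation using the fact that each item of $\sdif$ is placed independently in $\Delta = \nu n/k$ tests uniformly at random with replacement shows that, for a test $i$ containing no item from $\seq$, $\PP(Y_i = 1 \mid \Xveq, i \notin W^{(\seq)}) = (e^{-\alpha\nu} \star \rho)(1+o(1))$, whereas for a test containing some item from $\seq$, $Y_i = 1 \oplus Z_i$ has entropy $H_2(\rho)$. Combining these per-test marginals with Lemma \ref{lem44}, which concentrates $W^{(\seq)}$ at $(1 - e^{-(1-\alpha)\nu})n(1+o(1))$, produces an upper bound matching the claimed asymptotic.

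For the lower bound the challenge is that, given $\Xveq$, the outcomes $\{Y_i\}_{i \notin W^{(\seq)}}$ are \emph{not} conditionally independent: each $\sdif$-item has exactly $\Delta$ total placements, which couples the number of its placements inside and outside $W^{(\seq)}$. I would condition additionally on the auxiliary variable $L = (M_1^*, \ldots, M_\ell^*)$, where $M_j^*$ is the number of placements of item $j \in \sdif$ that fall inside $W^{(\seq)}$. Since $H(\Yv \mid \Xveq) \ge H(\Yv \mid \Xveq, L)$ by non-negativity of conditional mutual information, it suffices to lower bound the latter; conditioned on $\Xveq$ and $L$, the placements of each $\sdif$-item within $W^{(\seq)}$ and within its complement are independent and uniform (with replacement, but with negligibly many collisions since $\Delta = \Theta(\log k)$ and $n = \Theta(k \log k)$).

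The main obstacle is then to establish $H(\Yv_2 \mid \Xveq, L) = n_2 H_2(e^{-\alpha\nu}\star \rho)(1+o(1))$, where $\Yv_2$ collects the outcomes of tests outside $W^{(\seq)}$ and $n_2 = n - W^{(\seq)}$. The plan is to couple the joint law of $\Yv_2$ (conditioned on $\Xveq$ and $L$) with its Bernoulli-design analogue on $n_2$ tests, in which each $\sdif$-item is placed in each test independently with probability $\nu/k$, via a Poisson-approximation argument in the spirit of Lemma \ref{lem:pos_apro} combined with the concentration estimates of Appendix \ref{app:nc_lemms}. Showing $o(1)$ total-variation distance between the two joint laws translates to a negligible entropy gap, and transferring the Bernoulli-design mutual information result (Lemma \ref{lem:mi_bern}) then produces the matching lower bound. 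In the $\ell/k \to 0$ regime, a careful linearization of the resulting expression around $\alpha = 0$ yields the first formula of the lemma, while in the $\ell/k \to \alpha \in (0,1]$ regime one directly obtains the second.
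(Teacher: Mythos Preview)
Your decomposition $I_\ell^n = H(\Yv\mid\Xveq) - nH_2(\rho)$ and the subadditivity upper bound are exactly what the paper does, so that half is fine.

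The lower bound plan, however, has a genuine gap. You propose to couple the conditional law of $\Yv_2$ (given $\Xveq$ and your auxiliary $L$) to its Bernoulli-design analogue and claim that ``$o(1)$ total-variation distance between the two joint laws translates to a negligible entropy gap.'' This is not correct in the precision you need. The standard entropy-continuity bound gives
\[
|H(P)-H(Q)| \le d_{\mathrm{TV}}(P,Q)\,\log|\mathcal{X}| + H_2\bigl(d_{\mathrm{TV}}(P,Q)\bigr),
\]
and here $|\mathcal{X}| = 2^{n_2}$ with $n_2 = \Theta(n)$. So $d_{\mathrm{TV}}=o(1)$ only yields an $o(n)$ entropy error. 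That suffices when $\ell/k\to\alpha>0$ (the target is $\Theta(n)$), but in the regime $\ell/k\to 0$ the leading term in \eqref{1976} is $\Theta(\ell n/k)$, and you would need $d_{\mathrm{TV}}=o(\ell/k)$ to make the coupling useful. Your proposal does not explain how to obtain anything close to that, and the Poisson-approximation lemma you cite controls TV only at the $O(k^{\xi}P_1)=O(k^{\xi-1})$ level for a single multinomial, not for the full joint law of $\Yv_2\in\{0,1\}^{n_2}$. Relatedly, ``linearizing around $\alpha=0$'' is not a substitute for a direct argument in the $\ell/k\to 0$ regime: you need the $(1+o(1))$ to be uniform there, which is precisely what the TV bound fails to give.

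The paper avoids all of this with a one-line symmetry observation: conditioned on $\Xveq$, the law of $\Yv_2$ is invariant under permutations of the $n_2$ tests (the placements of $\sdif$-items are uniform over $[n]$). Hence, conditioning further on $V=\#\{i:Y_2^{(i)}=1\}$, the vector $\Yv_2$ is \emph{uniform} over the $\binom{n_2}{V}$ sequences of that weight, so
\[
H(\Yv_2\mid\Xveq) \;\ge\; H(\Yv_2\mid\Xveq,V) \;=\; \EE\Bigl[\log\binom{n_2}{V}\Bigr].
\]
Concentration of $n_2$ (Lemma~\ref{lem44}) and of $V$ around $n_2(e^{-\alpha\nu}\star\rho)$ (Hoeffding, since $V\mid M$ is a sum of independent Bernoullis) then gives the lower bound directly, with enough precision in both regimes. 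For $\ell/k\to 0$ the paper additionally keeps the $H(V\mid\Xveq)$ term (which is $\frac{1}{2}\log n_2 + O(1)$) to compensate the $\Theta(\log n_2)$ slack in Stirling, and computes $\EE[V]$ exactly rather than via concentration. This exchangeability trick is both simpler and sharper than any coupling to the Bernoulli design.
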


From this lemma and the above discussion, we deduce that \eqref{eq:final_ach} again leads to the first term in \eqref{eq:ach_ncc}.  It only remains to show that $\sum_{\ell > \frac{k}{\log k}} {k \choose \ell} \psi_\ell(n,\delta_{2}) \to 0$.

While we cannot directly use \eqref{eq:psi_bern}, the proof of \cite[Prop.~7(ii)]{Sca15b} reveals that we do not precisely need \eqref{eq:psi_bern}, but rather, the term inside the exponent can be scaled be any finite constant factor.  Since $I_{\ell} = O(1)$ and $\delta_2 = \Theta(1)$, the denominator in \eqref{eq:psi_bern} is dominated by the constant term (to within a constant factor), and thus, what we seek is a tail bound of the form $e^{-\Theta(n I_{\ell}^2)} = e^{-\Theta((I_{\ell}^n)^2 / n)}$.  We provide this in the following lemma, whose proof is given in Appendix \ref{sec:pf_conc_ncc}.

\begin{lem}  \label{nc-theta1} {\rm (Concentration Bound for $\frac{\ell}{k}\to \alpha$, Near-Constant Weight Design)}
For the noisy group testing problem under near-constant design, consider sequences $k\to\infty$ and $\ell\geq 1$ (indexed by $p$) such that $\frac{\ell}{k}\to\alpha\in [0,1]$. For any given $\delta$ smaller than a quantity depending on $(\alpha,\nu ,\rho)$, the following holds for sufficiently large $p$ and any $(\sdif,\seq)$ with $|\sdif|=\ell$: 
\begin{equation}
    \label{2907}\PP\Big(\imath^n(\Yv|\Xveq,\Xvdif)\leq (1-\delta)I_\ell^n\Big)\leq  \exp\Big(-\frac{C_{\rho,\nu,\alpha}(I_\ell^n)^2 \delta^2}{n}\Big),
\end{equation}
where $C_{\rho,\mu,\alpha}>0$ is a constant depending on  $(\rho,\nu,\alpha)$, and $I_\ell^n$ satisfies (\ref{1976}) when $\alpha =0$, or (\ref{2077nc}) when $\alpha\in (0,1]$.
\end{lem}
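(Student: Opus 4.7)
The plan is to apply McDiarmid's bounded differences inequality to the natural independent randomness of the near-constant weight design: the $k\Delta$ placements $\{T_{j,m} : j \in s,\, m \in [\Delta]\}$ (each uniform on $[n]$, mutually independent across $j$ and $m$) together with the $n$ noise bits $\{Z^{(i)}\}_{i \in [n]}$. The information density $\imath^n$ is a deterministic function of these $k\Delta + n = (1+\nu)n = \Theta(n)$ inputs with $\EE[\imath^n] = I_\ell^n$, so once I show that each input has bounded-difference constant $c_i = O\bigl(\log\tfrac{1-\rho}{\rho}\bigr)$, I obtain $\sum_i c_i^2 = \Theta(n)$ and hence $\PP(\imath^n \le (1-\delta) I_\ell^n) \le \exp\bigl(-\Omega(\delta^2 (I_\ell^n)^2 / n)\bigr)$, matching the target with the explicit constant $C_{\rho,\nu,\alpha}$ read off from the calculation.

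The main technical work is verifying the bounded differences. I would decompose $\imath^n = \log \PP(\Yv \mid \Xv_s) - \log \PP(\Yv \mid \Xveq)$ and observe that the first term factorizes across tests with each summand in $\{\log \rho,\,\log(1-\rho)\}$. The second term does \emph{not} factorize under the near-constant design (the rows of $\Xvdif$ are correlated by its near-constant column weights), but the following sub-lemma suffices: flipping a single coordinate $Y^{(i_0)}$, or altering a single row of $\Xveq^{(i_0)}$, changes $\log \PP(\Yv \mid \Xveq)$ by at most $\log \tfrac{1-\rho}{\rho}$. To prove this, I would write
\[
\PP(\Yv \mid \Xveq) = \EE_{\Xvdif}\Bigl[\prod_{i=1}^n \PP\bigl(Z^{(i)} = Y^{(i)} \oplus f_i(\Xv_s)\bigr)\Bigr],
\]
where $f_i$ denotes the noiseless outcome of test $i$, and observe that the perturbation multiplies the integrand by a pointwise-in-$\Xvdif$ factor lying in $[\rho/(1-\rho),\,(1-\rho)/\rho]$, which then pulls through the expectation.

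Given the sub-lemma, the full bounded differences follow directly: perturbing $Z^{(i_0)}$ flips at most one coordinate of $\Yv$, so both terms of $\imath^n$ shift by at most $\log\tfrac{1-\rho}{\rho}$; perturbing a placement $T_{j,m}$ alters at most two rows of $\Xv_s$ and thereby at most two coordinates of $\Yv$, giving a total shift bounded by a constant multiple of $\log\tfrac{1-\rho}{\rho}$. The main obstacle is the sub-lemma itself: in the Bernoulli design $\log \PP(\Yv \mid \Xveq) = \sum_i \log \PP(Y^{(i)} \mid \Xveq^{(i)})$ splits over tests and the one-coordinate Lipschitz bound is immediate, whereas under near-constant weights the tests are genuinely correlated given $\Xveq$. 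The rescue is the observation that $\Yv$ and $\Xveq$ enter $\PP(\Yv \mid \Xveq)$ only through the i.i.d.\ noise factors, while the correlated $\Xvdif$ appears symmetrically in numerator and denominator of the resulting likelihood ratio and can simply be averaged out, preserving the pointwise bound.
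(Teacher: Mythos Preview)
Your approach is correct and is in fact cleaner than the paper's. The paper does not apply McDiarmid directly to $\imath^n$; instead it conditions on $\Xveq$, splits $\Yv$ into the parts $\Yv_1,\Yv_2$ corresponding to tests with and without items from $\seq$, reduces to $\imath^n=\log\Psf(\Yv_2\mid\Xvdif)-\log\Psf(\Yv_2)$, and then handles each piece separately: the first term factorizes over tests and is controlled by Hoeffding; for the second, it uses the symmetry identity $\Psf(\yv_2)=\Psf(V=v)/\binom{n_2}{v}$ (where $V$ is the number of ones in $\Yv_2$) together with separate concentration for $n_2$, $M$ (tests hit by $\sdif$ but not $\seq$), and $V$, each via McDiarmid or Hoeffding, before combining. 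This yields several nested high-probability events and a union bound.

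Your route sidesteps all of that bookkeeping by treating $\imath^n$ as a single function of the $(1+\nu)n$ independent inputs and invoking McDiarmid once. The key step the paper does not exploit is exactly your sub-lemma: writing $\PP(\Yv\mid\Xveq)$ as an average over $\Xvdif$ of a product of noise likelihoods, and observing that a one-coordinate perturbation of $\Yv$ or of a row of $\Xveq$ multiplies the integrand pointwise by a factor in $[\rho/(1-\rho),(1-\rho)/\rho]$, which passes through the expectation. This is the correct replacement for the row-by-row factorization that holds under the Bernoulli design but not here, and it is what makes the direct bounded-differences argument go through. A minor bonus of your argument is that the resulting constant depends only on $(\rho,\nu)$ and not on $\alpha$, and the bound holds for all $\delta>0$ rather than only sufficiently small $\delta$. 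The paper's longer route does have the side benefit that its intermediate estimates (the high-probability values of $n_2$, $M$, $V$) are reused elsewhere, notably in the proof of the mutual-information asymptotics in Lemma~\ref{lem:mi_ncc} and in the converse analysis, so the two proofs are complementary in that sense.
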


The proof is given in Appendix \ref{sec:pf_conc_ncc}.  With this result in place, the rest of the analysis again follows that of the Bernoulli design to establish the first term in \eqref{eq:NC_threshold}.

\section{High-$\ell$ Converse Analysis for Both Designs} \label{app:high_conv}

Recall that the thresholds in (\ref{eq:ber_threshold}) and (\ref{eq:NC_threshold}) contain two terms; this appendix concerns only the first, which corresponds to the high-$\ell$ regime.

\subsection{Bernoulli Design}

For the Bernoulli design, the first term in \eqref{eq:ber_threshold} was already given in \cite[Thm.~7 \& App.~C]{Sca15b}, and mainly comes down to studying the lower tail probability of the information density $\imath^n(\Xv_s,\Yv)$ (i.e., \eqref{12.2} with $\seq = \emptyset$, corresponding to $\ell = k$).  Here establishing concentration is relatively straightforward due of the i.i.d.~nature of the Bernoulli design.  We omit any further details, since (i) they are already known from \cite{Sca15b}, and (ii) for the near-constant weight design (analyzed below) we will follow a similar structure and we will give more detail for it.

\subsection{Near-Constant Weight Design}

Unlike the Bernoulli design, the first term in \eqref{eq:NC_threshold} does not readily follow from \cite{Sca15b}, since their analysis was specific to the Bernoulli design.  However, it turns out that we can get the desired result by combining certain tools from \cite{Sca15b} with a fairly simple analysis of the information density. 

We start with the following non-asymptotic bound from \cite[Thm.~5]{Sca15b}, which only relies on the design being symmetric with respect to re-ordering items:
\begin{equation}
    \PP({\rm err}) \ge \PP\bigg( \imath^n(\Xv_s,\Yv) \le \log\Big( \delta_1 {p \choose k} \Big) \bigg) - \delta_1 \label{eq:conv_init}
\end{equation}
for any $\delta_1 > 0$, where we implicitly condition on the defective set being $S = s = [k]$ (without loss of generality), and $\imath^n(\Xv_s,\Yv) = \log\frac{\PP(\Yv|\Xv_s)}{\PP(\Yv)}$ is a shorthand for the information density \eqref{eq:info_density} with $\seq = \emptyset$ (and $\sdif = [k]$).  Hence, its mean is $I_k^n$ (i.e., $I_{\ell}^n$ from \eqref{eq:mi} with $\ell = k$), which behaves as $n(H_2(e^{-\nu}\star\rho)-H_2(\rho))(1+o(1))$ by Lemma \ref{lem:mi_ncc}.

The main step of the proof is to show the following one-sided concentration bound for $\imath^n$ with arbitrarily small $\delta_2 > 0$:
\begin{equation}
    \PP\big( \imath^n(\Xv_s,\Yv) \le (1+\delta_2) I_k^n \big) \to 1. \label{eq:conv_goal}
\end{equation}
To do so, we write $\imath^n(\Xv_s,\Yv) = \log\PP(\Yv|\Xv_s) - \log\PP(\Yv)$ and handle the two terms separately.  For the first term, we note that given $\Xv_s$, the sequence $\Yv$ has independent entries depending on the ${\rm Bernoulli}(\rho)$ noise, from which a simple concentration argument (e.g., Hoeffding's inequality) gives $\log\PP(\Yv|\Xv_s) = -nH_2(\rho)(1+o(1))$ with probability approaching one.  For the second term, we decompose $\PP(\Yv) = \PP(V)\PP(\Yv|V)$, where $V$ is the number of 1s in $\Yv$.  We will utilize two useful lemmas regarding $V$, the first of which is the following concentration bound.

\begin{lem} \label{lem:conc_V}
    {\em (Concentration of $V$)} Under the preceding setup and notation, we have with probability approaching one (as $p \to \infty$) that $V = (e^{-\nu}\star\rho+o(1))n$.
\end{lem}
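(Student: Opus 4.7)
The plan is to decompose $V$ using the noiseless test outcomes, handle the randomness of the test design via the concentration bound already established in Lemma \ref{lem44}, and then handle the independent noise via standard binomial concentration.

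First, let $W^{(s)}$ denote the number of tests containing at least one item from the defective set $s$, in the notation of Lemma \ref{lem44}. This is precisely the number of \emph{noiseless} positive tests. Writing $F_{+}$ for the number of noise-flips among these $W^{(s)}$ noiseless-positive tests and $F_{-}$ for the number of flips among the remaining $n - W^{(s)}$ noiseless-negative tests, we have the deterministic decomposition $V = (W^{(s)} - F_{+}) + F_{-}$. Since $\Zv$ is independent of $\Xv$ and has i.i.d.\ $\Bernoulli(\rho)$ entries, conditional on $\Xv$ we have $F_{+} \sim \Bi(W^{(s)}, \rho)$ and $F_{-} \sim \Bi(n - W^{(s)}, \rho)$, independently of each other.

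Next, I would invoke Lemma \ref{lem44} with $\mathcal{J} = s$ (so $\ell = k$), which gives $\EE W^{(s)} = (1 - e^{-\nu}) n (1+o(1))$ and, for any $\delta > 0$, $W^{(s)} = \EE W^{(s)} + \delta' n$ for some $|\delta'| < \delta$ with probability at least $1 - 2\exp(-2\nu^{-1}\delta^2 n)$. Choosing $\delta = \delta(p) \to 0$ sufficiently slowly (for instance, $\delta = (\log n)^{-1}$), this yields $W^{(s)} = (1 - e^{-\nu}) n + o(n)$ with probability $1 - o(1)$.

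Finally, conditional on $W^{(s)}$, Hoeffding's inequality applied to the two independent binomials gives $F_{+} = W^{(s)} \rho + o(n)$ and $F_{-} = (n - W^{(s)}) \rho + o(n)$ with probability $1 - o(1)$. Combining these with the previous paragraph via a union bound,
\begin{equation*}
    V = W^{(s)}(1 - \rho) + (n - W^{(s)}) \rho + o(n) = (1 - e^{-\nu})(1 - \rho) n + e^{-\nu} \rho \cdot n + o(n) = (e^{-\nu} \star \rho) n + o(n),
\end{equation*}
which is the desired claim. There is no significant obstacle here; the only mild subtlety is coordinating the decay rates in the two high-probability events so that the union bound still yields $1 - o(1)$, but this is routine since both probabilities decay faster than any polynomial once $\delta$ is chosen to decay slowly enough.
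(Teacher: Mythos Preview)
Your proposal is correct and follows essentially the same approach as the paper: the paper's proof simply refers to equation (\ref{855}) in the proof of Lemma~\ref{nc-theta1}, which (specialized to $\seq=\emptyset$, $\alpha=1$) uses Lemma~\ref{lem44} to concentrate $W^{(s)}$ and then Hoeffding's inequality on the conditional binomial $(V\mid M=m)\sim\Bi(m,1-\rho)+\Bi(n-m,\rho)$, exactly as you do. Your decomposition $V=W^{(s)}-F_{+}+F_{-}$ is just a cosmetic rewriting of the same two-binomial structure.
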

\begin{proof}
    See (\ref{855}) in the proof of Lemma \ref{nc-theta1} in Appendix \ref{app:high_ncc}.  Note that (\ref{855}) considers general $(\seq,\sdif)$ and defines $V$ to be the number of 1s in tests that contain no items from $\seq$, but this reduces to the above definition when $\seq = \emptyset$ (and thus $\alpha = 1$ in (\ref{855})).
\end{proof}

Now observe that for any fixed $v$ satisfying the high-probability condition $v = (e^{-\nu}\star\rho+o(1))n$, and any corresponding $\yv$ whose number of $1$s is $v$, we have by symmetry that $\PP(\yv|v) = \frac{1}{{n \choose v}}$, which in turn behaves as $e^{n(H_2( e^{-\nu}\star\rho ) + o(1))}$.  

Combining the preceding two findings with the fact that $I_k^n = n(H_2(e^{-\nu}\star\rho)-H_2(\rho))(1+o(1))$ (Lemma \ref{lem:mi_ncc}), we see that  in order to establish \eqref{eq:conv_goal}, it only remains to show that the impact of the remaining term $\log P(V)$ is asymptotically negligible; namely, it should scale as $o(n)$ to be insignificant compared to the leading $\Theta(n)$ terms.  This is stated in the following lemma, which is proved in Appendix \ref{app:high_ncc}.

\begin{lem} \label{lem:P_V}
    {\em (Behavior of $P(V)$)} Under the preceding setup and notation, we have with probability approaching one (as $p \to \infty$) that $P(V) = e^{-o(n)}$.
\end{lem}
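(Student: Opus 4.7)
The plan is to condition on $\Xv_s$, decompose $V$ into a deterministic shift plus a difference of two independent binomials coming from the noise, lower-bound the resulting conditional pmf via binomial anti-concentration, and finally invoke Lemma~\ref{lem:conc_V} to conclude that the random $V$ itself lies in the good range with probability $1-o(1)$. Let $N_1$ be the number of tests containing at least one item from $s$ (a function of $\Xv_s$), and $N_0:=n-N_1$. Since the noise $\Zv$ is independent of $\Xv_s$ and has i.i.d.~$\Bernoulli(\rho)$ entries, conditioning on $\Xv_s$ gives $V=N_1+W_0-W_1$ with independent $W_0\sim\Bi(N_0,\rho)$ and $W_1\sim\Bi(N_1,\rho)$. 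By Lemma~\ref{lem44} applied with $\mathcal{J}=s$, the event $\mathscr{B}:=\{N_0=(e^{-\nu}+o(1))n,\;N_1=(1-e^{-\nu}+o(1))n\}$ holds with probability $1-o(1)$, and on $\mathscr{B}$ both $N_0$ and $N_1$ are $\Theta(n)$.

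Next, set $\mu:=(e^{-\nu}\star\rho)n$. For any $v$ with $|v-\mu|\le\eta n$, I would take $w_1^\star:=\lfloor\rho N_1\rfloor$ and $w_0^\star:=v-N_1+w_1^\star$, observing that on $\mathscr{B}$ the identity $(1-\rho)N_1+\rho N_0=\mu+o(n)$ forces $w_0^\star-\rho N_0=(v-\mu)+o(n)$. Applying the binomial anti-concentration bound~\eqref{eq:anti2} to both factors in
\[
\PP(V=v\mid\Xv_s)\;\ge\;\PP(W_1=w_1^\star)\,\PP(W_0=w_0^\star)
\]
and using the quadratic Taylor expansion of $D(\cdot\|\rho)$ around $\rho$ (valid since the relevant deviations are $o(1)$) gives $N_1 D(w_1^\star/N_1\|\rho)=O(1/N_1)$ and $N_0 D(w_0^\star/N_0\|\rho)\le C\eta^2 n+o(n)$ for some constant $C=C_{\rho,\nu}>0$. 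Hence, on $\mathscr{B}$,
\[
\PP(V=v\mid\Xv_s)\;\ge\;\frac{c_{\rho,\nu}}{n}\,e^{-C\eta^2 n},
\]
and marginalising over $\Xv_s$ using $\PP(\mathscr{B})=1-o(1)$, we obtain $\PP(V=v)\ge \frac{c_{\rho,\nu}}{n}e^{-C\eta^2 n}(1-o(1))$ uniformly for all $v$ with $|v-\mu|\le\eta n$.

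Finally, given any $\delta>0$, choosing $\eta>0$ small enough that $C\eta^2\le\delta/2$ yields $\PP(V=v)\ge e^{-\delta n}$ for every $v$ with $|v-\mu|\le\eta n$; by Lemma~\ref{lem:conc_V} the random $V$ itself satisfies $|V-\mu|\le\eta n$ with probability $1-o(1)$, so $\PP(V)\ge e^{-\delta n}$ with probability $1-o(1)$. Since $\delta>0$ was arbitrary, $\PP(V)=e^{-o(n)}$ with probability $1-o(1)$, which is the claim. The only mildly subtle point is that $\PP(V)$ denotes the marginal pmf of $V$ evaluated at its own random value; this is handled cleanly by combining the \emph{deterministic} pointwise lower bound on $v\mapsto\PP(V=v)$ from the second step with the concentration of $V$ from Lemma~\ref{lem:conc_V}, and no design-specific input beyond Lemma~\ref{lem44} is used, so the same argument covers both the Bernoulli and near-constant weight designs.
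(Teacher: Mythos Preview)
Your proof is correct and follows essentially the same approach as the paper's own proof in Appendix~\ref{sec:pf_P_V}: both condition on the placements $\Xv_s$ (equivalently on $M=N_1$, the number of tests containing a defective), express $V$ as a sum/shift of two independent binomials arising from the noise, pick values near the binomial means that sum to the target $v$, and apply binomial anti-concentration (Lemma~\ref{binoconcen}) together with concentration of $N_1$ and $V$ to conclude $\PP(V=v)\ge e^{-o(n)}$ on a high-probability event. The only cosmetic difference is that the paper writes $V=B_1+B_2$ with $B_1\sim\Bi(M,1-\rho)$, $B_2\sim\Bi(n-M,\rho)$, whereas you write $V=N_1+W_0-W_1$ with $W_i\sim\Bi(N_i,\rho)$; these are the same decomposition up to the substitution $B_1=N_1-W_1$, $B_2=W_0$.
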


Intuitively, this result holds because $V$ is produced via binomial random variables (due to the independent noise variables); it can readily be deduced from anti-concentration bounds (Lemma \ref{binoconcen}) that for $U \sim {\rm Bin}(N,q)$   with $N \to \infty$ and constant $q \in (0,1)$, we have that $P(U=u) \ge e^{-o(N)}$ whenever $u=Nq(1+o(1))$ (which holds with high probability).  Since the full details are slightly more technical without being much more insightful than this intuition, we defer the formal proof of Lemma \ref{lem:P_V} to Appendix \ref{sec:pf_P_V}.

Combining the above findings, we have established that \eqref{eq:conv_goal} holds for arbitrarily small $\delta_2$.  Moreover, in \eqref{eq:conv_init}, we can take $\delta_1$ to decay with $p$ sufficiently slowly such that $\log\big(\delta_1 {p \choose k}\big) = \big(k \log \frac{p}{k}\big)(1+o(1))$ (i.e., the impact of $\delta_1$ is asymptotically negligible), and comparing this $\big(k \log \frac{p}{k}\big)(1+o(1))$ term with the term $I_k^n = n(H_2(e^{-\nu}\star\rho)-H_2(\rho))(1+o(1))$ in \eqref{eq:conv_goal}, we deduce the desired threshold on $n$ given by the first term in \eqref{eq:NC_threshold}.

\section{Proofs of Technical Results in the High-$\ell$ Analysis for the Near-Constant Weight Design} \label{app:high_ncc}

Recall that each item is placed in $\Delta:=\frac{\nu n}{k}$ tests for some $\nu=\Theta(1)$, and that the information density $\imath^n(\Xvdif;\Yv|\Xveq)$ is defined in (\ref{12.2}). 
Throughout much of the analysis in this appendix, it will be useful to condition on $\Xveq$ (or its specific realization $\xveq$ depending on the context), and we accordingly adopt the shorthand $\Psf(\cdot)=\PP(\cdot|\Xveq)$ (or $\Psf(\cdot)=\PP(\cdot|\xveq)$).  

Given $\Xveq$, we split $\Yv$ into $\Yv_1$ and $\Yv_2$, where $\Yv_1$ represents the results of tests corresponding to $X_{\seq}^{(i)} \neq 0$, and $\Yv_2$ represents the remaining results corresponding to $X_{\seq}^{(i)} = 0$.  Due to the ``OR'' operation in the group testing model, $\Yv_1$ is completely determined by the noise in its tests, which implies that $\Yv_1$ and $\Yv_2$ are conditionally independent given $\Xveq$. Therefore, we have \begin{align}
\label{788}    \imath^n(\Yv|\Xveq,\Xvdif)&=\log\frac{\PP(\Yv|\Xveq,\Xvdif)}{\PP(\Yv|\Xveq)}\\
    &=\log \frac{\Psf(\Yv_1| \Xvdif)\Psf(\Yv_2|\Xvdif)}{\Psf(\Yv_1)\Psf(\Yv_2)} 
    \\&=\log\frac{\Psf(\Yv_2|\Xvdif)}{\Psf(\Yv_2)}, \label{eq:final_i_ele}
\end{align}
where (\ref{eq:final_i_ele}) follows from $\Psf(\Yv_1|\Xvdif)=\Psf(\Yv_1)$ due to the fact that $\Yv_1$ only depends on the noise. We let $n_i$ be the length of $\Yv_i$ ($i=1,2$), which is deterministic given $\Xveq$.

\subsection{Proof of Lemma \ref{nc-theta1} (Concentration Bound for $\frac{\ell}{k}\to\alpha\in[0,1]$)} \label{sec:pf_conc_ncc}

We consider the fixed defective set $S = s = [k]$ without loss of generality, since the design is symmetric with respect to re-ordering items.

\subsubsection*{The case $\alpha\in (0,1]$}

\noindent
\textbf{Concentration behavior of various quantities:} We first consider the case $\alpha \in (0,1]$ and establish (\ref{2907}).
Recall that we let $n_i$ be the length of $\Yv_i$ for $i=1,2$. Let $M$   denote the number of tests corresponding to ``$X_{\seq}^{(i)}=0, X_{\sdif}^{(i)} \ne 0$'',\footnote{Here, $M$ represents the random variable, while we will use lowercase letter $m$ for its specific value or actual realization; we will use a similar convention for other variables below, such as $V$ and $v$.}
meaning that $n_2-M$ is the number of tests corresponding to ``$X_{\seq}^{(i)}=0,X_{\sdif}^{(i)}=0$'', or equivalently  $X^{(i)}_s=0$.  This definition of $M$ coincides with $M_{\sdif}$ in the notation introduced above Lemma \ref{lem:restate}, but we omit the subscript to lighten notation, since we only consider one particular choice of $\sdif$ here.


Note that $\frac{|\seq|}{k}=\frac{k-\ell}{k}=1-\frac{\ell}{k} $, and $\Yv_1$ corresponds to the tests where ``$X_{\seq}^{(i)} \neq 0$'' so that we can write $n_1=W^{(\seq)}$ in the notation of Lemma \ref{lem44}. Thus, given $\delta>0$, (\ref{9262}) and (\ref{9263}) in Lemma \ref{lem44} give that with probability at least $1-2\exp\big(-2\nu^{-1}\delta^2n\big)$, we have
\begin{equation}
\begin{aligned}
    n_1&=\Big(1-e^{-\nu(1-\frac{\ell}{k}) (1+o(1))}+\delta'\sqrt{1-\frac{\ell}{k}}\Big)n,\text{ for some } |\delta'|<\delta,\label{2966}
    \end{aligned}
\end{equation} 
which implies  
\begin{equation}\label{27n2}
    n_2=n-n_1= \Big(e^{-\nu(1-\frac{\ell}{k})(1+o(1))}-\delta'\sqrt{1-\frac{\ell}{k}}\Big)n,\text{ for some } |\delta'|<\delta.
\end{equation} 
We note that (\ref{27n2}) holds for all $\ell=1,\dotsc,k$. For the case of $\frac{\ell}{k} \to \alpha \in (0,1]$, we  obtain from (\ref{27n2})  that the event \begin{equation}\label{eq:n2update}
    n_2 = \big(e^{-\nu(1-\alpha)}+o(1)-\delta'\big)n,\text{ for some } |\delta'|<\delta.
\end{equation}
holds with probability at least $1-2\exp(-2\nu^{-1}\delta^2n)$. 


Next, we seek to obtain a similar concentration result for $M$. Note that $n-(n_2-M)$ represents the number of tests where $X_s^{(i)}\neq 0$, so we can write $n-(n_2-M)=W^{(s)}$. Hence, (\ref{9262}) and (\ref{9263}) in
 Lemma \ref{lem44} yield that  with probability at least $1-2\exp(-2\nu^{-1}\delta^2 n )$, we have
\begin{equation}
    n-(n_2-M) = (1-e^{-\nu}+o(1)+\delta'')n,~\text{for some }|\delta''|<\delta. \label{eq:sappear}
\end{equation}
Then, (\ref{eq:n2update}) and (\ref{eq:sappear}) imply
\begin{equation}\label{27m}
M= (e^{-\nu(1-\alpha)}-e^{-\nu}+o(1)+\delta^{(3)})n,~\text{ for some }    |\delta^{(3)}|< 2\delta.
\end{equation}
In the subsequent analysis, we will suppose that (\ref{eq:n2update}) and (\ref{27m}) hold, which is justified via a simple union bound for the multiple high-probability events.

Let $V$ be the number of positive tests in $\Yv_2$. Conditioned on some $\Xv=\xv$ (and hence on some $M=m$), the randomness of the noise gives $(V|M=m) \sim \mathrm{Bin}(m,1-\rho)+\mathrm{Bin}(n_2-m,\rho)$, which in turn can be interpreted as a sum of $n_2$ independent (but non-identical) Bernoulli variables. 
  This allows us to utilize Hoeffding's inequality \cite[Thm. 2.8]{Bou13} with respect to the randomness of the noise 
to obtain that, for any $t\geq 0$, 
\begin{equation}\label{eq:HoeffV}
    \Psf\Big(\big|V-\big[M(1-\rho)+(n_2-M)\rho\big]\big|\geq t\Big) \leq 2\exp\Big(-\frac{2t^2}{n_2}\Big).
\end{equation}
We take $t=n_2 \delta $ 
to obtain that $|V-[M(1-\rho)+(n_2-M)\rho]|< n_2\delta$, or equivalently, $V=(1-2\rho)M+(\rho+\delta^{(4)})n_2$ for some $|\delta^{(4)}|<\delta$, with probability at least $1-2\exp\big(-2n_2\delta^2\big)$. Substituting the high-probability values of $M$ and $n_2$ in (\ref{eq:n2update}) and (\ref{27m}), when $\delta<\frac{1}{4}e^{-\nu(1-\alpha)}$, with probability at least $1-2\exp(-e^{-\nu(1-\alpha)}\delta^2n)$ we have
\begin{align}\label{8555}
        V&= (1-2\rho)\big(e^{-\nu(1-\alpha)}-e^{-\nu}+o(1)+\delta^{(3)}\big)n+(\rho+\delta^{(4)})\big(e^{-\nu(1-\alpha)}+o(1)-\delta'\big)n\\
        &\label{eq:finalele_ii}=ne^{-\nu(1-\alpha)}\Big((1-2\rho)(1-e^{-\nu\alpha})+\rho+o(1)+c_{1,\rho,\nu,\alpha}\delta\Big)\\&= ne^{-\nu(1-\alpha)}\big(\rho\star e^{-\nu\alpha}+o(1)+c_{1,\rho,\nu,\alpha}\delta\big), \label{855}
\end{align}
where in (\ref{eq:finalele_ii}) we introduce $c_{1,\rho,\nu,\alpha}$ that is bounded by an absolute constant depending on $(\rho,\nu,\alpha)$,\footnote{Similarly, we will use $c_{2,\rho,\nu,\alpha}$, $c_{3,\rho,\nu,\alpha}$, etc.~to denote constants depending on $(\rho,\nu,\alpha)$.} 
in (\ref{855})  we make the observation $(1-2\rho)(1-e^{-\nu\alpha})+\rho=\rho\star e^{-\nu\alpha}$ 
(recalling that $a \star b = ab+(1-a)(1-b)$).

Recall that (\ref{eq:final_i_ele}) gives $\imath^n(\Yv|\Xveq,\Xvdif)=\log \Psf(\Yv_2|\Xvdif)-\log \Psf(\Yv_2)$, and in both terms we implicitly condition on $\Xveq$. To establish the concentration bound for $\imath^n(\Yv|\Xveq,\Xvdif)$, the main work is to bound these two terms separately. 

\textbf{Bounding $\log \Psf(\Yv_2)$:} To handle the term $\Psf(\Yv_2)$, consider a fixed realization $\yv_2$ of $\Yv_2$.  This implies that the realization $v$ of $V$ is also fixed (since it is a function of $\yv_2$), and we suppose that it satisfies the high-probability condition  (\ref{855}). 
By the symmetry of the test design with respect to re-ordering the test indices, all $\binom{n_2}{v}$ possibilities of $\Yv_2$ having weight $v$ occur with equal probability, and hence,
we have $\Psf(\yv_2)=\frac{\Psf(v)}{\binom{n_2}{v}}$. Then, applying\footnote{This step may appear loose, but the idea is that if we were to consider all $v$ and take the dominant one, it would have a ``large'' value of $\PP(v)$ anyway, i.e., not decaying fast enough to impact the final result.} $\Psf(v)\leq 1$ and substituting (\ref{eq:n2update}) and (\ref{855}), 
we obtain 
\begin{equation}
    \Psf(\yv_2) \leq \frac{1}{\binom{ne^{-\nu(1-\alpha)}(1+o(1)+c_{2,\rho,\nu,\alpha}\delta)}{ne^{-\nu(1-\alpha)}(e^{-\alpha\nu}\star\rho+o(1)+c_{1,\rho,\nu,\alpha}\delta)}}. \label{eq:Py2_init}
\end{equation}
From Lemma \ref{lem:coeff_bounds}, we have $\binom{N}{\beta N}\geq \frac{\exp(NH_2(\beta))}{\sqrt{2N}}=\exp(N[H_2(\beta)+o(1)])$ for constant $\beta \in (0,1)$, which implies that if $\delta$ is smaller than some constant depending on $(\alpha,\nu,\rho)$ such that $\frac{e^{-\alpha\nu}\star\rho+o(1)+c_{1,\rho,\nu,\alpha}\delta}{1+o(1)+c_{2,\rho,\nu,\alpha}\delta}$ is sufficiently close to $e^{-\alpha\nu}\star\rho$, we can upper bound the right-hand side of \eqref{eq:Py2_init} by $\exp\big(-[1+o(1)+c_{3,\rho,\nu,\alpha}\delta]ne^{-\nu(1-\alpha)} H_2(e^{-\alpha\nu}\star\rho) \big)$ for some constant $c_{3,\rho,\nu,\alpha}$.  Since this holds for any realization of $\yv_2$ whose corresponding $v$ value satisfies (\ref{855}), we conclude that
\begin{equation}\label{1987}
    \Psf(\Yv_2)\leq \exp\Big(-[1+o(1)+c_{3,\rho,\nu,\alpha}\delta]ne^{-\nu(1-\alpha)} H_2(e^{-\alpha\nu}\star\rho) \Big)
\end{equation}
with probability at least $1-2\exp(-e^{-\nu(1-\alpha)}\delta^2 n)$ (over the randomness of $\Yv_2$).

\textbf{Bounding $\log \Psf(\Yv_2|\Xvdif)$:}
Next, we consider $\Psf(\Yv_2|\Xvdif)$. 
Here the entries of $\Yv_2$ are  conditionally independent, since given $(\Xveq,\Xvdif)$ they only depend on the noise variables (which are independent across tests). This allows us to decompose 
\begin{equation}\begin{aligned}\label{888}
    \log \Psf(\Yv_2|\Xvdif)=\log \prod_{i=1}^{n_2}\Psf(Y_2^{(i)}|X_{\sdif}^{(i)})=\sum_{i=1}^{n_2}\log \Psf(Y_2^{(i)}|X_{\sdif}^{(i)}),
    \end{aligned}
\end{equation}
where without loss of generality we index the tests in $\Yv_2$ from $1$ to $n_2$. 
Note that (\ref{888}) indicates that $\log\Psf(\Yv_2|\Xvdif)$ is a sum of $n_2$ independent random variables, which have a common expected value $-H_2(\rho)$ and are bounded by $|\log \rho|$ (since for any $\rho \in \big(0,\frac{1}{2}\big)$ we have $|\rho\log\rho+(1-\rho)\log(1-\rho)|\le \rho|\log\rho|+(1-\rho)|\log\rho|=|\log\rho|$), so using Hoeffding's inequality \cite[Thm. 2.8]{Bou13} with respect to the noise randomness gives the following for any $t>0$:
\begin{equation}\label{eq:27n22}
    \Psf\Big(\big|\log \Psf(\Yv_2|\Xvdif)+n_2H_2(\rho)\big|\geq t\Big)\leq2 \exp\Big(-\frac{2t^2}{n_2\log^2\rho}\Big).
\end{equation}
Moreover, we use (\ref{eq:27n22}) and let $t=\delta n_2 H_2(\rho)$ to obtain that $|\log\Psf(\Yv_2|\Xvdif)+n_2H_2(\rho)|\leq \delta n_2H_2(\rho)$ with probability at least $1-2\exp\big(-2\delta^2(\log\rho)^{-2}(H_2(\rho))^2n_2\big)$. Further substituting (\ref{eq:n2update}), when $\delta<\frac{1}{4}e^{-\nu(1-\alpha)}$, we obtain that
\begin{equation}\label{1990}
    \log\Psf(\Yv_2|\Xvdif)=-ne^{-\nu(1-\alpha)}H_2(\rho)\big(1+c_{4,\rho,\nu,\alpha}\delta\big)
\end{equation}
for some $c_{4,\rho,\nu,\alpha} = \Theta(1)$, with probability at least $1-2\exp\big(-\delta^2 (\log \rho)^{-2}(H_2(\rho))^2 e^{-\nu(1-\alpha)}n\big)$. 

\textbf{Putting the pieces together:}
We have shown that (\ref{1987}) and (\ref{1990}) hold with probability at least $1-\exp\big(-C_{\rho,\nu,\alpha}\delta^2n\big)$ for some   $C_{\rho,\nu,\alpha}>0$ depending on $(\rho,\nu,\alpha)$ only (here we suppose that $\delta$ is given and $n$ is large enough, so that any leading constant before $\exp(-C_{\rho,\nu,\alpha}\delta^2 n)$ can be absorbed into the exponent).   Now 
 we start with (\ref{eq:final_i_ele}) and proceed as  
\begin{align}\label{2976}
   &\imath^n(\Yv|\Xveq,\Xvdif) =\log\Psf(\Yv_2|\Xvdif)-\log\Psf(\Yv_2)\\\label{29761}&\geq -ne^{-\nu(1-\alpha)}H_2(\rho)(1+c_{4,\rho,\nu,\alpha}\delta)+[1+o(1)+c_{3,\rho,\nu,\alpha}\delta]ne^{-\nu(1-\alpha)}H_2(e^{-\alpha\nu}\star\rho)\\\label{29762}
    &\geq (1-o(1))I_\ell^n+\big(c_{3,\rho,\nu,\alpha}H_2(e^{-\alpha\nu}\star\rho)-c_{4,\rho,\nu,\alpha}H_2(\rho)\big)\delta n e^{-\nu(1-\alpha)}\\ \label{29763.5}
    &\geq (1-o(1))I_\ell^n-C_{1,\rho,\nu,\alpha}n\delta  \\ 
    &= \Big( 1-o(1)- \frac{C_{1,\rho,\nu,\alpha}n\delta}{I_\ell^n}\Big)I_\ell^n, \label{29763}
\end{align}
where in (\ref{29761}) we substitute (\ref{1987}) and (\ref{1990}),  in (\ref{29762}) we recall $I_\ell^n$ given in (\ref{2077nc}) in Lemma \ref{lem:mi_ncc}, and in \eqref{29763.5} $C_{1,\rho,\nu,\alpha}$ is a constant depending on  $(\rho,\nu,\alpha)$. Thus, given any $\delta_1>0$, we can set the above $\delta$ as $\delta=\frac{I_\ell^n\delta_1}{2nC_{1,\rho,\nu,\alpha}}$, and we obtain $\imath^n(\Yv|\Xveq,\Xvdif)\geq (1-o(1)-\frac{\delta_1}{2})I_\ell^n$  with probability at least $1-4\exp\big(-\frac{C_{\rho,\nu,\alpha}'(I_\ell^n)^2\delta_1^2}{n}\big)$ for some  $C'_{\rho,\nu,\alpha}>0$ only depending on $(\rho,\nu,\alpha)$, as desired.

\subsubsection*{The case $\alpha=0$}

\noindent
It remains to deal with the case of $\frac{\ell}{k}\to 0$, for which the proof is generally similar to the case $\alpha\in(0,1]$ above but needs some minor modifications.   We focus on the differences and avoid repeating most near-identical steps.

To get started, we note that (\ref{27n2}) remains valid when $\alpha=0$, which yields that with probability at least $1-2\exp(-2\nu^{-1}\delta^2 n)$, we have
\begin{equation}
    \label{eq:n2tozero}
    n_2 =\big (e^{-\nu(1-\frac{\ell}{k})}+o(1)-\delta'\big)n,~\text{for some }|\delta'|<\delta.
\end{equation}
Next, we note that (\ref{eq:sappear}) remains valid,  which together with (\ref{eq:n2tozero}) yields the following with probability at least $1-2\exp(-2\nu^{-1}\delta^2 n)$:  
\begin{equation}\label{eq:Mvalidtozero}
    M = \big(e^{-\nu(1-\frac{\ell}{k})}-e^{-\nu}+o(1)+\delta^{(3)}\big)n,~\text{for some }|\delta^{(3)}|< 2\delta.
\end{equation}
Moreover, via the randomness of noise, Hoeffding's inequality still gives (\ref{eq:HoeffV}), in which we set $t=n_2\delta$, substitute (\ref{eq:n2tozero}) and (\ref{eq:Mvalidtozero}), and then      perform some simple algebra (as in (\ref{8555})--(\ref{855})) to obtain that \begin{equation}
    \label{eq:Vdaluetozero}
    V = ne^{-\nu(1-\frac{\ell}{k})}\big(\rho\star e^{-\frac{\nu\ell}{k}}+o(1)+c_{1,\rho,\nu}\delta\big)
\end{equation}
holds with probability at least $1-2\exp(-e^{-\nu}\delta^2 n)$, 
where $c_{1,\rho,\nu}$ is a quantity bounded by a constant only depending on $(\rho,\nu)$.  By a union bound, the events (\ref{eq:n2tozero}), (\ref{eq:Mvalidtozero}) and (\ref{eq:Vdaluetozero}) collectively hold with probability at least $1-\exp(-C_{\rho,\nu}\delta^2 n)$ for some $C_{\rho,\nu}$ depending on $(\rho,\nu)$. 

Then, it is not hard to check the arguments for bounding $\log \Psf(\Yv_2)$ and $\log \mathsf{P}(\Yv_2|\Xvdif)$ directly carry over here, which give the following (corresponding to the earlier (\ref{1987}) and (\ref{1990})):
\begin{gather}
    \log\Psf(\Yv_2) \leq -[1+o(1)+c_{3,\rho,\nu}\delta]ne^{-\nu(1-\frac{\ell}{k})} H_2(e^{-\frac{\nu\ell}{k}}\star\rho),\label{2977} \\
    \log \Psf(\Yv_2|\Xvdif)=-ne^{-\nu(1-\frac{\ell}{k})}H_2(\rho) \big(1+c_{4,\rho,\nu}\delta\big),\label{2978}
\end{gather}
which hold with probability at least $1-\exp\big(-C'_{\rho,\nu}\delta^2 n\big)$ for some $C'_{\rho,\nu}>0$ depending on $(\rho,\nu)$. 
Starting with (\ref{eq:final_i_ele}), we  can proceed as in (\ref{2976})--(\ref{29763}) and obtain 
\begin{align} \label{2979}\imath^n(\Yv|\Xveq,\Xvdif)&=\log\Psf(\Yv_2|\Xvdif)-\log\Psf(\Yv_2)\\&\label{29791}\ge ne^{-\nu(1-\frac{\ell}{k})}\big[H_2(e^{-\frac{\nu \ell}{k}}\star\rho)-H_2(\rho)\big]-n C_{5,\rho,\nu}\delta (1+o(1))\\
   \label{29792}&\ge (1-o(1))I_\ell^n-2nC_{5,\rho,\nu}\delta 
   \\&=\Big(1-o(1)-\frac{2nC_{5,\rho,\nu}\delta}{I_\ell^n}\Big)I_\ell^n,
\end{align}
where in (\ref{29791}) we use (\ref{2977}) and (\ref{2978}), and in (\ref{29792}) we apply Lemma \ref{calculation} below with
$C_{5,\rho,\nu}$ being some constant depending on $(\rho,\nu)$.  Thus, given any $\delta_1>0$, we can take the above $\delta$ as $\delta=\frac{I_\ell^n\delta_1}{4nC_{5,\rho,\nu}}$ to ensure $\imath^n(\Yv|\Xveq,\Xvdif)\geq (1-o(1)-\frac{\delta_1}{2})I_\ell^n$, which holds with probability at least $1-4\exp\big(-\frac{C'_{\rho,\nu}(I_{\ell}^n)^2\delta_1^2}{n}\big)$. The proof of Lemma \ref{nc-theta1} is now complete. 


In the final steps of the above analysis, we used the following.

\begin{lem}
    \label{calculation}
    If $\frac{\ell}{k}\to 0$, then we have 
    \begin{equation}
        ne^{-\nu(1-\frac{\ell}{k})}\big[H_2(e^{-\frac{\nu\ell}{k}}\star\rho)-H_2(\rho)\big]\geq (1-o(1))I_\ell^n, \label{eq:calculation}
    \end{equation}
    where $I_\ell^n$ satisfies (\ref{1976}).
\end{lem}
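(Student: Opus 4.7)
\medskip

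\noindent\textbf{Proof plan for Lemma \ref{calculation}.} The plan is to show that in fact the inequality holds with equality up to a $(1+o(1))$ factor, by Taylor-expanding both sides of the claimed inequality around the small parameter $\epsilon := \nu\ell/k$, which tends to $0$ because $\ell/k \to 0$. This is a purely analytic calculation with no probabilistic content, so the challenge, such as it is, is bookkeeping the error terms.

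First I would expand the argument of the outer $H_2$. Using $a \star b = ab + (1-a)(1-b)$ and $e^{-\epsilon} = 1 - \epsilon + O(\epsilon^2)$, a short calculation yields
\begin{equation}
    e^{-\epsilon}\star\rho = \rho + \epsilon(1-2\rho) + O(\epsilon^2).
\end{equation}
Next, I would Taylor-expand the binary entropy function around $\rho \in (0,\tfrac{1}{2})$. Since $H_2'(\rho) = \log\frac{1-\rho}{\rho}$ and $H_2''$ is bounded on a neighbourhood of $\rho$, we obtain
\begin{equation}
    H_2(\rho + t) - H_2(\rho) = t\log\frac{1-\rho}{\rho} + O(t^2)
\end{equation}
uniformly for small $t$. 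Substituting $t = \epsilon(1-2\rho) + O(\epsilon^2)$ gives
\begin{equation}
    H_2(e^{-\epsilon}\star\rho) - H_2(\rho) = \epsilon(1-2\rho)\log\tfrac{1-\rho}{\rho} + O(\epsilon^2) = \tfrac{\nu\ell}{k}(1-2\rho)\log\tfrac{1-\rho}{\rho}\,(1+o(1)),
\end{equation}
since $\epsilon = \nu\ell/k \to 0$.

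Finally I would handle the prefactor. Because $\ell/k \to 0$, we have $e^{-\nu(1-\ell/k)} = e^{-\nu}\cdot e^{\nu\ell/k} = e^{-\nu}(1+o(1))$. Multiplying the two expansions,
\begin{equation}
    ne^{-\nu(1-\ell/k)}\bigl[H_2(e^{-\nu\ell/k}\star\rho) - H_2(\rho)\bigr] = \frac{n\nu e^{-\nu}\ell}{k}(1-2\rho)\log\tfrac{1-\rho}{\rho}\,(1+o(1)),
\end{equation}
which by (\ref{1976}) equals $(1+o(1)) I_\ell^n$, and in particular is at least $(1-o(1))I_\ell^n$. The main thing to watch is that the $O(\epsilon^2)$ error in the $H_2$ expansion, once multiplied by $ne^{-\nu}$, becomes $O(n(\ell/k)^2)$, which is genuinely $o(n\ell/k) = o(I_\ell^n)$ as required; no other step presents any real difficulty.
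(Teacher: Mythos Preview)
Your proposal is correct and follows essentially the same approach as the paper: expand $e^{-\nu\ell/k}\star\rho$ as $\rho + (1-2\rho)\frac{\nu\ell}{k}(1+o(1))$, Taylor-expand $H_2$ around $\rho$ using $H_2'(\rho)=\log\frac{1-\rho}{\rho}$, and combine with $e^{-\nu(1-\ell/k)} = e^{-\nu}(1+o(1))$ to match the right-hand side of (\ref{1976}). Your bookkeeping of the $O(\epsilon^2)$ error being $o(n\ell/k)$ is slightly more explicit than the paper's, but the argument is the same.
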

\begin{proof}
    Since $\frac{\ell}{k}\to 0$, we have $ne^{-\nu(1-\frac{\ell}{k})}= ne^{-\nu}(1+o(1))$.  In view of the desired inequality \eqref{eq:calculation} and the asymptotics of $I_{\ell}^n$ in (\ref{1976}), some simple algebra reveals that it suffices to prove \begin{equation}\label{eq:suffice1}
       H_2(e^{-\frac{\nu\ell}{k}}\star\rho)-H_2(\rho)\geq (1-o(1))\frac{\nu\ell(1-2\rho)}{k}\log\frac{1-\rho}{\rho}. 
    \end{equation}
    To this end, by using $e^{-\frac{\nu\ell}{k}}=1-(1+o(1))\frac{\nu\ell}{k}$, we first write  
        \begin{align}\label{2981}
            e^{-\frac{\nu\ell}{k}}\star\rho&= \rho e^{-\frac{\nu\ell}{k}}+(1-\rho)\big(1-e^{-\frac{\nu\ell}{k}}\big)\\
            &=\rho\Big(1-(1+o(1))\frac{
            \nu\ell 
            }{k}\Big)+(1-\rho)(1+o(1))\frac{\nu\ell}{k}\\&=\rho +(1-2\rho+o(1))\frac{\nu\ell}{k}.
        \end{align}
   Then, we only need to apply a Taylor expansion to $H_2(e^{-\frac{\nu\ell}{k}}\star\rho)-H_2(\rho)$ with $H'_2(\rho)=\log\frac{1-\rho}{\rho}$ to obtain the desired bound (\ref{eq:suffice1}). 
\end{proof}

\subsection{Proof of Lemma \ref{lem:mi_ncc} (Mutual Information for the Near-Constant Weight Design)} \label{sec:pf_mi_ncc}

We are interested in $I_\ell^n = \mathbbm{E}(\imath^n(\Xvdif;\Yv|\Xveq))$ when $\Xv$ is generated via the near-constant column design with parameter $\nu$ (i.e., $\Delta = \frac{\nu n}{k}$ tests per item).  We define \begin{equation}\label{eq:def_con_eq}
    I(\Xvdif;\Yv|\Xveq=\xveq) := \mathbbm{E}_{\Xvdif,\Yv}(\imath^n(\Xvdif;\Yv|\Xveq=\xveq))
\end{equation}
where the expectation is taken with respect to the randomness of $\Xvdif,\Yv$ with  a given $\Xveq = \xveq$. Further averaging over $\Xveq$ will provide the quantity $I_\ell^n$ of interest, i.e., \begin{equation}
    I_\ell^n = \mathbbm{E}_{\Xveq}\big[I(\Xvdif;\Yv|\Xveq=\xveq)\big].\label{eq:fur_exp}
\end{equation} 
    Recall that $|\sdif|=\ell$ and $|\seq|=k-\ell$. We first consider $I(\Xvdif;\Yv|\Xveq=\xveq)$ for fixed $\xveq$, which we expand as (see (\ref{12.2})) 
 \begin{align}
	I(\Xvdif;\Yv|\Xveq=\xveq) &= H(\Yv|\Xveq=\xveq) - H(\Yv | \Xvdif, \Xveq=\xveq)\\&= H(\Yv|\Xveq=\xveq) -nH_2(\rho),\label{eq:con_on_eq_1}
 \end{align}
where (\ref{eq:con_on_eq_1}) holds because with the full conditioning on $\Xv_s$, the only remaining uncertainty is the i.i.d.~Bernoulli($\rho$) noise. It  remains to bound $ H(\Yv|\Xveq=\xveq).$ 
Given $\Xveq=\xveq$, let $\Yv_1$ be the collection of outcomes of tests containing at least one item from $\seq$, let $\Yv_2$ be the remaining test outcomes, and let the corresponding lengths of $\Yv_1$ and $\Yv_2$ be $n_1$ and $n_2$, respectively.  
Note also that given $\Xveq=\xveq$, $\Yv_1$ only depends on the corresponding $n_1$ noise terms, so the vectors $\Yv_1$ and $\Yv_2$ are conditionally independent.  Thus, we get
\begin{align}
	H(\Yv|\Xveq=\xveq) &= H(\Yv_1|\Xveq=\xveq) + H(\Yv_2|\Xveq=\xveq)\\&= n_1 H_2(\rho) + H(\Yv_2|\Xveq=\xveq),\label{eq:con_depen}
 \end{align}
where (\ref{eq:con_depen}) is again because given $\Xveq=\xveq$, $\Yv_1$ only depends on the randomness of the $n_1$ noise variables. Substituting (\ref{eq:con_depen}) into (\ref{eq:con_on_eq_1}) yields 
\begin{equation}
	I(\Xvdif;\Yv|\Xveq=\xveq) = H(\Yv_2|\Xveq=\xveq) - n_2 H_2(\rho). \label{eq:mi_equals}
\end{equation}
In addition, note that we can upper bound
\begin{equation}
	H(\Yv_2|\Xveq=\xveq) \le n_2 H(Y' | \Xveq=\xveq) \label{eq:H_ub}
\end{equation}
where $Y'$ is an arbitrary single test in $\Yv_2$ (e.g., the first one).  This follows by sub-additivity of entropy \cite[Thm.~2.5.1 \& 2.6.5]{Cov01} and the fact that each test in $\Yv_2$ has the same probability of being positive by symmetry.

\subsubsection*{The case $\frac{\ell}{k} \to \alpha > 0$}

We use \eqref{eq:H_ub}
to upper bound the conditional entropy $H(\Yv_2|\Xveq= \xveq)$. Without noise, a given test in $\Yv_2$ would be negative if and only if it does not contain any item from $\sdif$. Thus,
the conditional probability of a single test in $\Yv_2$ being negative \emph{without noise} would be $\big(1-\frac{1}{n}\big)^{\ell\Delta}=\big(1-\frac{1}{n}\big)^{\frac{\ell\nu n}{k}} = e^{-\alpha\nu(1+o(1))}$. After noise, the probability of a single test being positive becomes $e^{-\alpha \nu(1+o(1))} \star \rho$ (recalling the notation $a\star b= ab+(1-a)(1-b)$).  Thus, the conditional entropy of a single test is 
$H_2(e^{-\alpha \nu(1+o(1))} \star \rho)$, and overall (\ref{eq:H_ub}) yields  
\begin{align}
    H(\Yv_2|\Xveq=\xveq) \le n_2 H_2(e^{-\alpha \nu(1+o(1))} \star \rho).
\end{align}
Substituting this into (\ref{eq:mi_equals}), we find that \begin{align}\label{eq:be_averaged}
    I(\Xvdif;\Yv|\Xveq = \xveq)\leq n_2(H_2(e^{-\alpha\nu}\star\rho)-H_2(\rho))(1+o(1)).
\end{align} Since (\ref{9262}) in Lemma \ref{lem44} gives $\mathbbm{E}(n_2) = n-\mathbbm{E}(n_1) = n-(1-e^{-\nu(1-\alpha)}+o(1))n = (e^{-\nu(1-\alpha)}+o(1))n$, by further averaging (\ref{eq:be_averaged}) over $\Xveq$ as in (\ref{eq:fur_exp}), we obtain \begin{align}
    I_\ell^n\leq ne^{-\nu(1-\alpha)}(H_2(e^{-\alpha\nu}\star\rho)-H_2(\rho))(1+o(1)).
\end{align}

For the lower bound, we define $V$ to be the number of positive tests in $\Yv_2$ and 
first count $(n_2,V)$ as in first part of the proof of Lemma \ref{nc-theta1} (Appendix \ref{sec:pf_conc_ncc}). In particular, for any given $\delta>0$, (\ref{eq:n2update})   states that   \begin{align}
    n_2= \big(e^{-\nu(1-\alpha)}+o(1)-\delta' \big)n,~\text{for some }|\delta'|<\delta
\end{align}
with probability at least $1-2\exp(-2\nu^{-1}\delta^2n)$; on this event, (\ref{855}) further gives with probability at least $1-2\exp(-e^{-\nu(1-\alpha)}\delta^2n)$ that
\begin{align}
    V = ne^{-\nu(1-\alpha)}\big(\rho\star e^{-\nu\alpha}+o(1)+c_{\rho,\nu,\alpha}\delta\big),~\text{for some }c_{\rho,\nu,\alpha}\leq c_1(\rho,\nu,\alpha),
\end{align}  
where $c_1(\rho,\nu,\alpha)$ is a constant depending on $(\rho,\nu,\alpha)$. 
Thus, we can set $\delta = n^{-1/4}=o(1)$ to obtain that the two events 
\begin{align}\label{eq:A1_for_n2}
    &\mathscr{A}_1=\big\{n_2=ne^{-\nu(1-\alpha)}(1+o(1))\big\},\\
    &\mathscr{A}_2=\big\{V=ne^{-\nu(1-\alpha)}(\rho\star e^{-\nu\alpha})(1+o(1))\big\},\label{eq:A2_for_V}
\end{align}
hold with probability at least $\PP(\mathscr{A}_1)\geq 1-2\exp(-2\nu^{-1}\sqrt{n})=1-o(1)$ and $\PP(\mathscr{A}_2)\geq 1-2\exp(-e^{-\nu(1-\alpha)}\sqrt{n})=1-o(1)$. 
Since conditioning reduces entropy \cite[Thm.~2.6.5]{Cov01}, conditioning on $V$ gives 
\begin{align}
	H(\Yv_2|\Xveq=\xveq) &\ge H(\Yv_2 | \Xveq = \xveq, V)\\&= \sum_{v=0}^{n_2}\PP(V=v)H(\Yv_2| \Xveq = \xveq,V=v)\\
 &\ge \PP(\mathscr{A}_2)H\big(\Yv_2| \Xveq = \xveq,V=ne^{-\nu(1-\alpha)}(\rho\star e^{-\nu\alpha})(1+o(1))\big)\label{eq:remove1}\\
 &\geq (1-o(1))\log \binom{n_2}{ne^{-\nu(1-\alpha)}(\rho\star e^{-\nu\alpha})(1+o(1))},\label{eq:H_lb}
\end{align}
where in (\ref{eq:remove1}) we only count the summands in which $V=ne^{-\nu(1-\alpha)}(\rho\star e^{-\nu\alpha})(1+o(1))$ as specified in the event $\mathscr{A}_2$ (\ref{eq:A2_for_V}), and (\ref{eq:H_lb}) 
holds because given $V=v$ (and $\xveq$), $\Yv_2$ is uniformly distributed over the $\binom{n_2}{v}$ possibilities of $n_2$-dimensional  $\{0,1\}$-valued vector with $v$ $1$'s. Substituting (\ref{eq:H_lb}) into (\ref{eq:mi_equals}) yields 
\begin{equation}
    \label{eq:lowerentropy}
    I(\Xvdif;\Yv|\Xveq=\xveq)\geq (1-o(1))\log \binom{n_2}{ne^{-\nu(1-\alpha)}(\rho\star e^{-\nu\alpha})(1+o(1))}-n_2H_2(\rho).
\end{equation}
 To evaluate $I_\ell^n$, we further 
 average over $\Xveq$ as in (\ref{eq:fur_exp}) to obtain  that
     \begin{align}
         I_\ell^n& \geq \mathbbm{E}_{\Xveq}\big(I(\Xvdif;\Yv|\Xveq=\xveq)\big)\\ 
         \label{eq:removeiii}&\geq (1-o(1))\log\binom{ne^{-\nu(1-\alpha)}(1+o(1))}{ne^{-\nu(1-\alpha)}(\rho\star e^{-\nu\alpha})(1+o(1))} -\big(ne^{-\nu(1-\alpha)}(1+o(1))\big)H_2(\rho)\\
         &= ne^{-\nu(1-\alpha)}(H_2(\rho\star e^{-\nu\alpha})-H_2(\rho))(1+o(1)),\label{eq:removeii} 
     \end{align}
where (\ref{eq:removeiii}) follows by using the non-negativity of mutual information to sum only over $\xveq$ satisfying the event $\mathscr{A}_1$ (\ref{eq:A1_for_n2}) and then using (\ref{eq:lowerentropy}) along with $n_2= ne^{-\nu(1-\alpha)}(1+o(1))$ (under $\mathscr{A}$), and in (\ref{eq:removeii}) we apply
  $\log{N \choose \beta N} = N H_2(\beta)(1+o(1))$ (see Lemma \ref{lem:coeff_bounds}). Since this matches the upper bound we derived above, (\ref{2077nc}) follows. 

\subsubsection*{The case $\frac{\ell}{k} \to 0$}

In order to upper bound $H(\Yv_2|\Xveq = \xveq)$, we again use \eqref{eq:H_ub}, but we now trivially use the fact that the probability of a particular test containing an item from $\sdif$ is at most $\frac{\nu \ell}{k} \to 0$ (a union bound of $\frac{\nu n \ell}{k}$ events having probability $\frac{1}{n}$ each).  Hence, we have 
\begin{equation}
	\PP[Y' = 1] \le \frac{\nu \ell}{k} (1-\rho) + \bigg(1-\frac{\nu \ell}{k}\bigg)\rho = \rho + \frac{\nu \ell}{k} (1-2\rho).
\end{equation}
The corresponding entropy upper bound is
\begin{equation}
	H(Y'|\Xveq=\xveq)\leq H_2\bigg( \rho + \frac{\nu \ell}{k} (1-2\rho) \bigg) = H_2(\rho) + \frac{\nu \ell}{k} (1-2\rho) \log\Big(\frac{1-\rho}{\rho}\Big)(1+o(1)),
\end{equation}
where we use a Taylor expansion with $H'_2(\rho) =  \log\frac{1-\rho}{\rho}$ in   the last step.  Combining with \eqref{eq:mi_equals} and \eqref{eq:H_ub} gives
\begin{equation}
	I(\Xvdif;\Yv|\Xveq=\xveq) \leq n_2 \frac{\nu \ell}{k} (1-2\rho) \log\Big(\frac{1-\rho}{\rho}\Big)(1+o(1)). \label{eq:I_ub}
\end{equation}
Note that $\frac{|\seq|}{k}= 1-\frac{\ell}{k}\to 1$, so (\ref{9262}) in Lemma \ref{lem44} gives \begin{align}
    \mathbbm{E}(n_2)=n-\mathbbm{E}(n_1)=n-\mathbbm{E}(W^{(\seq)})=n-(1-e^{-\nu}+o(1))n=ne^{-\nu}(1+o(1)).
\end{align} Thus, by averaging over $\Xveq$ in (\ref{eq:I_ub}), we obtain $I_\ell^n\leq \frac{n\nu e^{-\nu}\ell}{k}(1-2\rho)\log(\frac{1-\rho}{\rho})(1+o(1))$.

It remains to prove a matching lower bound. 
To this end, we write
\begin{align}
	H(\Yv_2|\Xveq=\xveq)&=H(\Yv_2,V|\Xveq=\xveq)\label{eq:H_next1}\\\label{eq:H_next2}&=H(\Yv_2|\Xveq=\xveq,V)+H(V|\Xveq=\xveq)
    \\&\ge   \sum_{v=0}^{n_2} \PP(V=v) H(\Yv_2|\Xveq=\xveq,V=v)+\Big(\frac{1}{2}+o(1)\Big)\log (n_2), \label{eq:H_next}
 \end{align}
where (\ref{eq:H_next1}) is because $V$ is deterministic given $\Yv_2$, (\ref{eq:H_next2}) holds by the chain rule for entropy \cite[Thm.~2.5.1]{Cov01}, and (\ref{eq:H_next}) holds due to the following argument showing that the entropy of $V$ (when given $\Xv_{\seq}=\xv_{\seq}$) is at least $\big(\frac{1}{2}+o(1)\big) \log(n_2)$:
\begin{itemize}[itemsep=0ex]
    \item Since conditioning reduces entropy \cite[Thm.~2.6.5]{Cov01}, we can lower bound the entropy of $V$ by a conditional entropy of $V$ given the \emph{noiseless} test results (still conditioned on the fixed $\xveq$).
    \item Given any fixed realization of noiseless test results, we have $V = B_1 + B_2$, where $B_1,B_2$ are independent binomial random variables with probability parameters $\rho,1-\rho$, and count parameters $n',n''$ satisfying $n'+n''=n_2$.  In particular, $\max\{n',n''\} \ge \frac{n_2}{2}$.
    \item Again using that conditioning reduces entropy, we deduce that the resulting entropy for $V$ (with the conditioning described above) is lower bounded by the higher of the two entropies associated with $B_1$ and $B_2$.  Specifically, after conditioning on $B_2$ the uncertainty in $V$ reduces to that in $B_1$ alone, and vice versa.
    \item The deisred claim then follows from the fact that Binomial$(N,q)$ has entropy $\frac{1}{2}\log(2\pi e N q(1-q)) + O(1/N)$ as $N \to \infty$ \cite{Ade10}, which simplifies to $\big(\frac{1}{2}+o(1)\big) \log N$ when $q$ is constant.  (Here we use $q \in \{\rho,1-\rho\}$ and $N = \max\{n',n''\} \ge \frac{n_2}{2}$, which we established above.)
\end{itemize}
As before, we seek to identify high-probability events that simplify the further analysis of \eqref{eq:H_next}. Fortunately, what we need has already been shown in the proof of Lemma \ref{nc-theta1}, as we now detail.

We first quantify $n_2$.  By setting $\delta = \Theta\big( \sqrt{\frac{\ell}{k}} \big)$ in (\ref{eq:n2tozero}) in the proof of Lemma \ref{nc-theta1}, the event \begin{align}
    \mathscr{A}_3=\big\{n_2=ne^{-\nu}(1+o(1))\big\}\label{eq:A3_for_n2}
\end{align} holds with probability at least $1-2\exp(-\frac{C_{*,1} \ell n}{k})$, where $C_{*,1}$ can be chosen arbitrarily large as a function of $(\nu,\rho)$.  Similarly, regarding $V$, we set $\delta = \Theta\big( \sqrt{\frac{\ell}{k}} \big)$ in \eqref{eq:Vdaluetozero} in the proof of Lemma \ref{nc-theta1}, that the event \begin{align}
    \label{eq:A4_for_V}
    \mathscr{A}_4 = \big\{V = \rho n e^{-\nu}(1+o(1))\big\}
\end{align} 
holds with probability at least $1-2\exp(-\frac{C_{*,2}\ell n}{k})$, where we can set $C_{*,2}$ to be arbitrarily large as a function of $(\nu,\rho)$. 


We argue that it suffices to consider $\Xveq = \xveq$ such that $n_2= ne^{-\nu}(1+o(1))$, as given in the high-probability event $\mathscr{A}_1$. Recall that our goal is to characterize $I(\Xvdif;\Yv|\Xveq=\xveq)$ in (\ref{eq:def_con_eq}) and then average it over $\Xveq$ (as in (\ref{eq:fur_exp})) to characterize $I_{\ell}^n$.  Since $\Yv$ is binary-valued and has length $n$, we always have $I(\Xvdif;\Yv|\Xveq=\xveq) \le n \log 2 = O(n)$, which means that the total contribution to $I_{\ell}^n$ from the low-probability event $\mathscr{A}_1^c$ is bounded by at most the following:
\begin{equation}
    O(n)\exp\Big(-\frac{C_{*,1}\nu^{-1}\ell n}{k}\Big) =\exp\big(O(\log k)-C_{*,1}\nu^{-1}\ell\cdot \Omega(\log k) \big) =o\Big(\frac{\ell n}{k}\Big), \label{eq:bad_n2_contri}
\end{equation}
where the first equality holds since $n=\Theta(k \log k)$ (see Section \ref{sec:n_scaling}), and the second equality follows by taking $C_{*,1}$ to be sufficiently large. Thus, the total contribution under $\mathscr{A}_1^c$ does not affect the desired bound (\ref{1976}).

For specific $\Xveq = \xveq$ (with $n_2= ne^{-\nu}(1+o(1))$  as in  the event $\mathscr{A}_1$), we  define $\mathcal{V}_{\xveq}$ as the  corresponding set of $V$ values belonging to the event $\mathscr{A}_2$ (all satisfying $V=\rho ne^{-\nu}(1+o(1))$). Then, we obtain from (\ref{eq:H_next}) that 
\begin{align}
    H(\Yv_2|\Xveq=\xveq)&\geq \sum_{v\in \mathcal{V}_{\xveq}}\PP(V=v) H(\Yv_2|\Xveq=\xveq,V=v)+\Big(\frac{1}{2}+o(1)\Big)\log(n_2) 
    \\\label{eq:star1}
    &\geq \sum_{v\in \mathcal{V}_{\xveq}}\PP(V=v)\log \binom{n_2}{v}+\Big(\frac{1}{2}+o(1)\Big)\log(n_2)
    \\\label{eq:star2}
    & \geq \sum_{v\in \mathcal{V}_{\xveq}}\PP(V=v) \Big[n_2 H_2\Big(\frac{v}{n_2}\Big)-\frac{1}{2}\log (n_2)-\log(2 )\Big]+\Big(\frac{1}{2}+o(1)\Big)\log(n_2)
    \\\label{eq:star3}
    &=\sum_{v\in \mathcal{V}_{\xveq}}\PP(V=v)n_2 H_2\Big(\frac{v}{n_2}\Big)+o\Big(\frac{\ell n}{k}\Big),
\end{align}
where (\ref{eq:star1}) holds because  $\Yv_2|\{V=v\}$ follows a uniform distribution over $\binom{n_2}{v}$ possibilities, in (\ref{eq:star2}) we use $\log\binom{N}{\beta N} \geq NH_2(\beta)-\frac{1}{2}\log N-\log(2)$ (see Lemma \ref{lem:coeff_bounds}), and in (\ref{eq:star3}) we perform some rearrangement to cancel out the terms of $\frac{1}{2}\log(n_2)$ and then write $o(\log n_2)=o(\log n)=o(\log k)=o(\frac{\ell n}{k})$, which can be seen analogously to (\ref{eq:bad_n2_contri}). Combining with (\ref{eq:mi_equals}), it follows that 
\begin{align}
    &I(\Xvdif;\Yv|\Xveq=\xveq) \geq \sum_{v\in \mathcal{V}_{\xveq}}\PP(V=v)\Big[n_2 H_2\Big(\frac{v}{n_2}\Big)-n_2H_2(\rho)\Big]\\
    &~~~~~~-\Big(\sum_{v\in [n_2]\setminus \mathcal{V}_{\xveq}}\PP(V=v)\Big) n_2H_2(\rho)+o\Big(\frac{\ell n}{k}\Big):=\mathcal{T}_1-\mathcal{T}_2+o\Big(\frac{\ell n}{k}\Big).
    \end{align}
By using $n=\Theta(k\log k)$ and choosing $C_{*,2}$ large enough, analogously to (\ref{eq:bad_n2_contri}), we obtain  \begin{align}\label{eq:anal1}
    |\mathcal{T}_2|\leq \PP(\mathscr{A}_2^c)n_2H_2(\rho)&\leq 2\exp\Big(-\frac{C_{*,2}\ell n}{k}\Big)nH_2(\rho)\\&=O\Big(\exp\big(O(\log k)-C_{*,2}\ell\cdot \Omega(\log k)\big)\Big)=o\Big(\frac{\ell n}{k}\Big),\label{eq:anal2}
\end{align} 
which only amounts to a lower-order term in the desired result (\ref{1976}).

Thus, by further averaging over $\Xveq$ as in (\ref{eq:fur_exp}), we only need to prove $\mathbbm{E}(\mathcal{T}_1)$   is lower bounded by the right-hand side of (\ref{1976}). Since we are considering $n_2=ne^{-\nu}(1+o(1))$ and $v=\rho ne^{-\nu}(1+o(1))$ (see the high-probability events in \eqref{eq:A1_for_n2}--\eqref{eq:A2_for_V}), a Taylor expansion with derivative $H'_2(\rho)=\log\frac{1-\rho}{\rho}$ gives 
   \begin{align}
       H_2\Big(\frac{v}{n_2}\Big)-H_2(\rho)= \log\Big(\frac{1-\rho}{\rho}\Big)\Big(\frac{v}{n_2}-\rho\Big)(1+o(1)).
   \end{align}
We substitute this into $\mathcal{T}_1$, then decompose $\sum_{v\in \mathcal{V}_{\xveq}}=\sum_{v=0}^{n_2}-\sum_{v\in [n_2]\setminus \mathcal{V}_{\xveq}}$
to write $\mathcal{T}_1$ as 
\begin{align}
    &\mathcal{T}_1 = \sum_{v=0}^{n_2}\PP(V=v)\Big[\log\Big(\frac{1-\rho}{\rho}\Big)\big(v-n_2\rho\big)(1+o(1))\Big] \nonumber \\&~~~~~~~~~- \sum_{v\in [n_2]\setminus \mathcal{V}_{\xveq}}\PP(V=v)\Big[\log\Big(\frac{1-\rho}{\rho}\Big)\big(v-n_2\rho\big)(1+o(1))\Big]:=\mathcal{T}_{11}-\mathcal{T}_{12}.
    \end{align}
Note that $\mathcal{T}_{12}$ is again insignificant, since analogously to (\ref{eq:bad_n2_contri}) and (\ref{eq:anal2}) we have
    \begin{align}
        |\mathcal{T}_{12}|\leq O(n)\PP(\mathscr{A}_2^c)\leq O(n)\exp\Big(-\frac{C_{*,2}\ell n}{k}\Big)=o\Big(\frac{\ell n}{k}\Big).
    \end{align} 
    For $\mathcal{T}_{11}$, we write
    \begin{align}
    \mathcal{T}_{11}&=\mathbbm{E}_{V|\Xveq=\xveq} \Big[\log\Big(\frac{1-\rho}{\rho}\Big)\big(V-n_2\rho\big)(1+o(1))\Big]
    \\
    &=\log\Big(\frac{1-\rho}{\rho}\Big)\big(\mathbbm{E}_{V|\Xveq=\xveq}(V)-n_2\rho\big)(1+o(1)).\label{eq:T11simply}
\end{align}
Thus, it suffices to show $\mathbbm{E}[\mathcal{T}_{11}]$ is lower bounded by the right-hand side of (\ref{1976}), and now we 
  calculate $\EE_{V|\Xveq=\xveq}[V]$. While $V$ represents the number of $1$s in the noisy outcomes $\Yv_2$, it is useful to also define $U$ as the number of  tests in $\Yv_2$ that would be positive in the absence of noise (i.e., the tests corresponding to $\Xv_{\seq}=0,\Xv_{\sdif}\neq 0$).  Then, conditioned on $U=u$, the effect of noise gives 
    \begin{align}
    \EE_{V|\Xveq=\xveq,U=u}[V]&=\EE\big[\mathrm{Bin}(u,1-\rho)+\mathrm{Bin}(n_2-u,\rho)\big]\\&=u(1-\rho)+(n_2-u)\rho\\&=(1-2\rho)u+\rho n_2.
    \end{align}
By averaging over $U$, we obtain  \begin{equation}\label{eq:everUfir}
    \EE_{V|\Xveq=\xveq}[V]=(1-2\rho)\mathbbm{E}_{U|\Xveq=\xveq}[U]+\rho n_2.
\end{equation} 
To characterize $\mathbbm{E}_{U|\Xveq=\xveq}[U]$, we note that $U$ represents the number of tests (among those corresponding to $\Yv_2$) that get at least one placement from overall $\ell\Delta = \frac{\ell\nu n}{k}$ placements of the $\ell$ items in $\sdif$.  
To understand the resulting expectation, we envision performing the placements sequentially, and note that for a given placement to increment $U$ by one, it suffices that (i) the placement is into one of the $n_2$ tests that form $\Yv_2$, \emph{and} (ii) the placement is not into the same test as any of the previous placements.  Thus, we have
\begin{align}
 \mathbbm{E}[U|\Xveq=\xveq]&\ge\mathbbm{E}\Big[\sum_{i=1}^{\ell\Delta}\mathbbm{1}(\text{the $i$-th placement satisfies the preceding two conditions})\Big] 
 \\\label{eq:392i}&\ge\sum_{i=1}^{\ell\Delta}\frac{n_2-i+1}{n} 
 \\&= \frac{n_2\ell\Delta}{n}-\frac{(\ell\Delta-1)\ell\Delta}{2n}\\&\geq  \frac{\ell\nu n_2}{k}\Big(1-\frac{\ell\nu n}{2kn_2}\Big),\label{eq:lowerEU}
\end{align}
where (\ref{eq:392i}) holds since at most $i-1$ out of the $n_2$ tests are ruled out by the previous $i-1$ placements, and (\ref{eq:lowerEU}) uses $\Delta = \frac{\nu n}{k}$.
%
Substituting (\ref{eq:everUfir}) and (\ref{eq:lowerEU}) into \eqref{eq:T11simply} gives 
    \begin{align}
        \mathcal{T}_{11}&\geq (1+o(1))\log\Big(\frac{1-\rho}{\rho}\Big)(1-2\rho)\frac{\ell\nu n_2}{k}\Big(1-\frac{\ell\nu n}{2kn_2}\Big)\\
        &=\left(\frac{n\nu e^{-\nu}\ell}{k}(1-2\rho)\log\frac{1-\rho}{\rho}\right)\big(1+o(1)\big),  \label{eq:T11_bound}
    \end{align}
where (\ref{eq:T11_bound}) holds because we are considering $n_2=ne^{-\nu}(1+o(1))$ (as explained earlier in (\ref{eq:bad_n2_contri})).  

We have now established that the right-hand side of \eqref{eq:T11_bound} is a lower bound for $I_\ell^n$ (with other terms such as $\mathcal{T}_{12}$ being factored into the $o(1)$ part).  This matches the upper bound on $I_\ell^n$ derived above, and the proof is complete.

\subsection{Proof of Lemma \ref{lem:P_V} (Characterization of $\PP(V)$)} \label{sec:pf_P_V}

We specialize some of the developments from the proof of Lemma \ref{nc-theta1} to the case that $\seq = \varnothing$, $\sdif = s$, and $\ell = k$.  Hence, in the notation used therein, we have $\Yv_1=\varnothing$ and $\Yv_2=\Yv$, $M$ denotes the number of tests that would be positive if there were no noise, and $V$ represents for the number of positive tests in the noisy results (i.e., the number of 1s in $\Yv$).  Our goal is to show that $\PP(v)=e^{-o(n)}$ for all $v$ within a high-probability set (i.e., $V$ lies in that set with $1-o(1)$ probability).  Since we trivially have $\PP(v) \le 1$, it suffices to only lower bound $\PP(v)$.

By setting $\delta= n^{-1/4}=o(1)$ in (\ref{27m}), we have with $1-o(1)$ probability that
\begin{equation}\label{eq:M_val1}
    M = (1-e^{-\nu}+o(1))n.
\end{equation}
Moreover, by setting $\delta=n^{-1/4}=o(1)$ in (\ref{855}), we have with $1-o(1)$ probability that
\begin{equation}\label{eq:V_val1}
    V = (\rho\star e^{-\nu}+o(1))n,
\end{equation}
We will consider values $M=m$ and $V=v$ that satisfy both of these high-probability events.

Give $M=m$, we have that $V$ follows a sum of two independent binomial variables: $(V|M=m) = B_1 + B_2$, where $(B_1|M=m) \sim \mathrm{Bin}(m,1-\rho)$ and $(B_2|M=m) \sim \mathrm{Bin}(n-m,\rho)$.  Given fixed $m$ and $v$ values satisfying \eqref{eq:M_val1}--\eqref{eq:V_val1}, we can find $b_1,b_2$ with $b_1+b_2=v$ such that
\begin{equation}
    b_1 = n(1-\rho)(1-e^{-\nu}) (1+o(1)), \quad b_2 = n\rho e^{-\nu} (1+o(1)), \label{eq:b_choices}
\end{equation}
i.e., such that the two are asymptotically close to the mean values of $\mathrm{Bin}(m,1-\rho)$ and $\mathrm{Bin}(n-m,\rho)$, and also sum to $v$.  Letting $\mathcal{M}_{\rm typical}$ be a high-probability set of $m$ values that all satisfy \eqref{eq:M_val1}, we can then lower bound $\PP(V = v)$ as follows:
\begin{align}
    \PP(V = v) & \geq \sum_{m \in \mathcal{M}_{\rm typical}} \PP(M=m) \PP(B_1=b_1|M=m) \PP(B_2=b_2|M=m). 
    \label{eq:divide2}
\end{align}
We now observe that $\PP(B_1 = b_1|M=m)$ and $\PP(B_2 = b_2|M = m)$ both scale as $e^{-o(n)}$; this directly follows from the anti-concentration of binomial random variables (Lemma \ref{binoconcen}) and $b_1,b_2$ being arbitrarily close to the respective means (see \eqref{eq:b_choices}).  Since $\mathcal{M}_{\rm typical}$ is a high-probability set of $m$ values, we deduce from \eqref{eq:divide2} that $\PP(V = v) \ge e^{-o(n)}$, which completes the proof.



\bibliographystyle{myIEEEtran}
\bibliography{JS_References}

\end{appendix}
\end{document}